\documentclass[12pt]{article}

\usepackage{amsmath, amssymb, amsthm}
\usepackage{graphicx}
\usepackage{booktabs}
\usepackage{float}
\usepackage{geometry}
\usepackage{natbib}
\usepackage{setspace}
\usepackage{hyperref}
\usepackage{caption}
\usepackage{microtype}
\usepackage{multirow}
\usepackage{algorithm}
\usepackage{algpseudocode}

\numberwithin{equation}{section}

\theoremstyle{plain}
\newtheorem{theorem}{Theorem}[section]
\newtheorem{assumption}{Assumption}
\newtheorem{lemma}{Lemma}[section]

\newtheorem{corollary}{Corollary}[section]

\theoremstyle{remark}
\newtheorem{remark}{Remark}[section]

\newcommand{\assMixing}{Assumption~S1}
\newcommand{\assRegularity}{Assumption~S2}
\newcommand{\assKernel}{Assumption~S3}
\newcommand{\assPreEst}{Assumption~S4}

\newcommand{\assLocal}{Assumption~S6}

\title{\textbf{Uniform Inference for Parameters Identified by
Conditional Quantile Restrictions}}

\author{%
  Xuqing Lin
  \and
  Xiaojun Song\thanks{Corresponding author.
  Email address: \texttt{sxj@gsm.pku.edu.cn}.}
}

\date{
  Guanghua School of Management, Peking University
  \\[0.5em]
  September 21, 2026
}

\begin{document}
\maketitle

\begin{abstract}
Many structural and dynamic economic models imply that key parameters are identified by conditional quantile restrictions. Building on the exponential-weighting approach of \citet{bierens1990} and penalized maximum statistics for conditional moment restrictions developed by \citet{chen2025}, we develop a framework for uniform inference on such parameters. We transform the conditional restriction into a continuum of unconditional moment conditions and construct an adaptive $\ell_1$-penalized supremum statistic that aggregates evidence across quantile indices. The penalty regularizes maximization over the weighting direction. Under the stated conditions, adaptive penalty selection asymptotically preserves or improves worst-case local power relative to the unpenalized test over a prespecified set of alternatives. The improvement is strict whenever a positive candidate penalty raises the population maximin criterion above its value at zero. We extend the theory to settings with pre-estimated nuisance parameters, characterizing the additional terms induced by the plug-in step in the limiting process. We derive an analytically corrected variance estimator that accounts for this estimation uncertainty and establish the validity of a Gaussian multiplier bootstrap for null inference and its shifted counterpart for analysis of local power. Monte Carlo comparisons favor the proposed CvM--KS aggregation scheme for size control and show that adaptive penalization can improve empirical power relative to the unpenalized test.
\\[1em]
\textbf{Keywords:} Conditional quantile restriction, parameter inference, adaptive penalization, local power, pre-estimation effect, multiplier bootstrap.
\end{abstract}

\section{Introduction}
\label{sec:introduction}

Conditional quantiles describe heterogeneity and asymmetry that conditional
means do not capture. Since \citet{koenker1978}, they have become standard in
empirical finance and macroeconomics. \citet{koenker2005} gives a general
treatment. In time series, dynamic conditional quantile models and quantile
autoregressions allow persistence and adjustment to vary across the conditional
distribution \citep{koenkerxiao2006}. Empirical work with these models often
requires tests and confidence sets for a parameter function over a range of
quantiles. We study this problem when the parameter is identified by the conditional
moment restriction
\begin{equation}\label{eq:cond_moment}
    \mathbb{E}\big[\mathbf{1}\{Y_t \le m(I_{t-1}, \theta_0(\alpha))\} - \alpha \mid I_{t-1}\big] = 0 \quad \forall \alpha \in \mathcal{T},
\end{equation}
where $Y_t$ is the response, $I_{t-1}$ is the conditioning information, and
$m(\cdot,\theta)$ is a parametric conditional quantile function. We test a
prespecified value $\bar\theta(\alpha)$, at one quantile or uniformly over
$\alpha\in\mathcal T\subset(0,1)$. The indicator makes the restriction
nonsmooth, while conditioning on $I_{t-1}$ generates infinitely many
unconditional implications.

Uniform inference for quantile regression processes is well developed.
\citet{koenkerxiao2002} use a Khmaladze transformation,
\citet{chernozhukov2005} use subsampling, and \citet{chen2026fixed} develop
fixed-smoothing inference for time-series quantile regression that is uniform
over quantile levels. Related work allows structural change in regression
quantiles \citep{okaqu2011}. These papers provide inference for
quantile-indexed coefficient processes. We ask whether a fixed parameter
restriction satisfies the conditional quantile implications, with the
resulting test available for inversion.

There is also a large literature on specification tests for conditional
quantiles and conditional distributions. Kernel methods provide omnibus tests
of parametric quantile restrictions \citep{zheng1998}. ICM and marked-process
methods test dynamic and linear quantile specifications over conditioning
directions or quantile indices \citep{escanciano2010,escancianogoh2014}.
Other approaches estimate the conditional moment by series regression
\citep{horvath2022}, permit flexible quantile-dependent and nonlinear effects
\citep{kutzker2024}, or project out estimated nuisance components in partially
linear quantile models \citep{song2025}. \citet{rothe2013} study a broader
class of conditional distributional models. These contributions focus on
overall model adequacy or on different estimands and aggregation rules. They
also make clear that estimation effects cannot generally be ignored.

Our construction starts from the exponential-weight ICM transformation of
\citet{bierens1990}. We index the studentized process jointly by the quantile
level $\alpha$ and an exponential-weight direction $\gamma\in\Gamma$, integrate
its square over $\alpha$, and maximize over $\gamma$. The resulting CvM--KS
statistic targets a parameter restriction while retaining a rich search over
the conditioning information. Its null distribution is non-pivotal, so we use
a Gaussian multiplier bootstrap. Recent ICM work also pursues pivotal and
computationally inexpensive statistics, including directionally targeted power
improvements \citep{jiang2026}. That work and the present statistic address
different calibration goals.

The closest methodological benchmark is the penalized generalized Bierens
maximum statistic of \citet{chen2025}, which selects tuning parameters by a
data-dependent max--min rule for general conditional moment restrictions. We
adapt that approach to local power to a process indexed by both quantiles and
conditioning directions. The $\ell_1$ penalty enters inside the directional
supremum, and a shifted bootstrap estimates the maximin criterion over a
prespecified collection of local directions. Because the candidate set
contains $\lambda=0$, the known-parameter selector has no lower maximin local
power than the unpenalized procedure under the stated conditions. Strict
improvement requires a positive penalty to raise the population maximin
criterion. The result does not claim pointwise power dominance.

We also derive the effect of estimating nuisance parameters under the null. The
limiting process includes a term from the influence function of the restricted
quantile estimator. This term also enters the variance correction and the
multiplier bootstrap. This gives null and local-alternative theory for the
plug-in statistic. The argument is related to classical empirical-process
results with estimated parameters \citep{durbin1973}, and is formulated for the
quantile-indexed, penalized ICM statistic used here. The Monte Carlo section
compares aggregation orders under pre-estimation while treating the resulting
rankings as specific to the reported designs.

The remainder of the paper is organized as follows.
Section~\ref{sec:test_stat} introduces the test statistic and its asymptotic null theory.
Section~\ref{sec:bootstrap_power} describes multiplier-bootstrap inference and analysis of local power.
Section~\ref{sec:pre_estimation} develops the pre-estimation extension.
Sections~\ref{sec:simulation} and \ref{sec:empirical} report Monte Carlo and empirical results.
Section~\ref{sec:conclusion} concludes. Proofs are collected in the Supplementary Appendix.

\section{Test Statistics and Asymptotic Theory}
\label{sec:test_stat}

This section introduces the penalized Bierens-type statistic for conditional
quantile restrictions and the objects used for inference. The formal
regularity conditions are stated in the Supplementary Appendix. Under these
conditions, we establish weak convergence of the empirical process and the
resulting null limit distribution.

\subsection{The Penalized Bierens Test Statistic}

We are interested in testing whether the true parameter function $\theta_0(\cdot)$ equals a prespecified value $\bar{\theta}(\cdot)$:
\[
    H_0: \theta_0(\alpha) = \bar{\theta}(\alpha) \;\text{ for all }\; \alpha \in \mathcal{T}
    \qquad \text{vs.} \qquad
    H_1: \theta_0(\alpha) \neq \bar{\theta}(\alpha) \;\text{ for some }\; \alpha \in \mathcal{T}.
\]
In Sections~\ref{sec:test_stat}--\ref{sec:bootstrap_power},
$\bar{\theta}(\cdot)$ is a fixed, nonrandom null-imposed curve, such as
$\bar{\theta}(\cdot)\equiv0$ in a significance test or a fixed candidate in a
confidence-set inversion. A null-imposed curve estimated from the same sample is
covered only after the contribution of its influence function is included, as in
Section~\ref{sec:pre_estimation}. Testing $H_0$ via \eqref{eq:cond_moment} is
challenging because it corresponds to a continuum of unconditional moment
conditions. ICM methods address this via weighting and integration. Following the
exponential weighting approach of \citet{bierens1990} and the generalized
penalization framework of \citet{chen2025}, we transform the conditional
restriction into an equivalent continuum of unconditional ones using the weighting
function $w(I_{t-1},\gamma)=\exp(\gamma^\top\Phi(I_{t-1}))$. Here,
$\Phi:\mathbb{R}^{d_I}\to\mathbb{R}^{d_I}$ is a one-to-one transformation chosen
so that $\exp(\gamma^\top\Phi(I_{t-1}))$ has finite moments uniformly over
$\gamma\in\Gamma$ (e.g., identity for well-behaved covariates, or a bounded map
such as componentwise $\tan^{-1}(\cdot)$ for heavy-tailed covariates). For
notational simplicity, below we write $\exp(\gamma^\top I_{t-1})$ for
$\exp\{\gamma^\top\Phi(I_{t-1})\}$. Occurrences of $I_{t-1}$ in the conditional
quantile function $m(I_{t-1},\theta)$ continue to refer to the original
conditioning information.

Define the empirical process indexed by the quantile level $\alpha$ and the
weighting direction $\gamma$:
\begin{equation}
    M_n(\alpha, \gamma)
    :=
    \frac{1}{n}\sum_{t=1}^{n}
    \exp(\gamma^{\top} I_{t-1})
    \big[\mathbf{1}\{Y_t \le m(I_{t-1}, \bar{\theta}(\alpha))\} - \alpha\big],
    \label{eq:Mn}
\end{equation}
where $\gamma \in \Gamma$, a compact subset of $\mathbb{R}^{d_I}$ containing the origin. To studentize the process, we scale $M_n(\alpha,\gamma)$ by its null standard deviation $s_n(\alpha,\gamma)$, which is available in closed form under \eqref{eq:cond_moment}. The normalization term is given by the sample analog of the theoretical variance:
\begin{equation}
    s_n^2(\alpha, \gamma)
    :=
    \frac{\alpha(1-\alpha)}{n}\sum_{t=1}^{n}\exp(2\gamma^{\top} I_{t-1}).
    \label{eq:sn}
\end{equation}

The closed-form expression in \eqref{eq:sn} exploits the martingale difference
sequence (MDS) property of the centered indicator $\mathbf{1}\{Y_t \le m(I_{t-1},
\bar{\theta}(\alpha))\} - \alpha$ under $H_0$ (\assMixing(iii)).
This filtration-level MDS condition is stronger than the conditional restriction
in \eqref{eq:cond_moment} when $I_{t-1}$ contains less information than
$\mathcal F_{t-1}$. This condition is maintained here to eliminate cross-lag covariances.
Specifically, the conditional variance of this term given $I_{t-1}$ equals
$\alpha(1-\alpha)$ regardless of $I_{t-1}$, so the $\mathcal{F}_{t-1}$-conditional
variance of each summand in $\sqrt{n}\, M_n(\alpha, \gamma)$ is
$\alpha(1-\alpha)\exp(2\gamma^\top I_{t-1})$, and \eqref{eq:sn} is its sample
analog.

A more general alternative is the sample-second-moment estimator
\begin{equation}
    \tilde{s}_n^2(\alpha, \gamma)
    := \frac{1}{n}\sum_{t=1}^n \exp(2\gamma^\top I_{t-1})
    \big[\mathbf{1}\{Y_t \le m(I_{t-1}, \bar{\theta}(\alpha))\} - \alpha\big]^2,
    \label{eq:sn-robust}
\end{equation}
constructed directly from the squared summands of $\sqrt{n}\, M_n(\alpha, \gamma)$.
Under the MDS property, the cross-terms in the long-run variance vanish, so
$\tilde{s}_n^2(\alpha, \gamma)$ and $s_n^2(\alpha, \gamma)$ share the same
probability limit $\sigma^2(\alpha, \gamma) = \alpha(1-\alpha)\,
E[\exp(2\gamma^\top I_{t-1})]$, and either can be used to studentize the empirical
process. We use $s_n^2$ in \eqref{eq:sn} because it is simple and stable in
finite samples. The same first-order results hold for $\tilde{s}_n^2$ when its
sample-second-moment analog is also used in the bootstrap.

The standardized process is
\[
Q_n(\alpha, \gamma)
:=
\frac{\sqrt{n}|M_n(\alpha, \gamma)|}{s_n(\alpha, \gamma)}.
\]
To conduct inference uniformly over \(\alpha\in\mathcal{T}\), we aggregate evidence
across quantile indices through a fixed probability measure \(\Pi\) on
\(\mathcal T\). Throughout the theoretical analysis, \(\mathcal T\) is a compact
subset of \((0,1)\), and \(\Pi\) is nonrandom. For instance, when
\(\mathcal T=[\underline\alpha,\overline\alpha]\), \(\Pi\) may be the uniform
probability measure on this interval. The finite-grid implementation is obtained
as the special case \(\Pi_A=A^{-1}\sum_{j=1}^A\delta_{\alpha_j}\), where
\(\mathcal T_A=\{\alpha_1,\ldots,\alpha_A\}\).
For a continuous $\Pi$, the integrated process is understood through a tight
Borel version whose $\alpha$-sections are $\Pi$-measurable, as formalized in
Lemma~\ref{lem:gaussian-quantile-crossing}. Measurability is automatic for
$\Pi_A$.

The standardized process is aggregated by a CvM-type functional over \(\alpha\)
and a penalized supremum over \(\gamma\in\Gamma\). We define the
\textbf{penalized Bierens statistic} as
\begin{equation}
    T_{n,\Pi}(\lambda)
    :=
    \sup_{\gamma \in \Gamma}
    \left[
        \int_{\mathcal T} Q_n^2(\alpha,\gamma)\,d\Pi(\alpha)
        -
        \lambda\|\gamma\|_1
    \right].
    \label{eq:Tn}
\end{equation}
For the fixed-penalty asymptotic results below, \(\lambda\ge0\) is treated as a
deterministic index. The data-dependent selector \(\widehat\lambda\) is introduced
separately in Section~\ref{subsec:calibration}. When no confusion arises, we write
\(T_n(\lambda)\) for \(T_{n,\Pi}(\lambda)\). The statistic combines CvM
aggregation over \(\alpha\) with a penalized supremum over \(\gamma\). The
penalty discourages weighting directions with large \(\ell_1\)-norms.

\begin{remark}[Finite-grid implementation and aggregation schemes]
\label{rem:alternative-schemes}
Two clarifications about the aggregation in \eqref{eq:Tn} are in order.

First, the formulation in \eqref{eq:Tn} unifies continuous and finite-grid
aggregations over the quantile index. If \(\Pi\) is the uniform probability
measure on \([\underline\alpha,\overline\alpha]\), then
\[
    \int_{\mathcal T} Q_n^2(\alpha,\gamma)\,d\Pi(\alpha)
    =
    \frac{1}{\overline\alpha-\underline\alpha}
    \int_{\underline\alpha}^{\overline\alpha}
    Q_n^2(\alpha,\gamma)\,d\alpha.
\]
If instead
\(\Pi_A=A^{-1}\sum_{j=1}^A\delta_{\alpha_j}\), then
\[
    T_{n,\Pi_A}(\lambda)
    =
    \sup_{\gamma\in\Gamma}
    \left[
        \frac1A\sum_{j=1}^A Q_n^2(\alpha_j,\gamma)
        -
        \lambda\|\gamma\|_1
    \right],
\]
which is exactly the finite-grid statistic used in the numerical implementation.
Thus the finite-grid statistic can be interpreted as a quadrature version of the
continuous CvM aggregation.
Its asymptotic result is exact for the chosen nodes, but it detects only
departures visible on that grid and does not by itself establish the continuum
completeness property.

For comparison with alternative aggregation schemes, we also write
\[
    T_{n,\Pi}^{CvM\text{-}KS}(\lambda)
    :=
    T_{n,\Pi}(\lambda),
\]
emphasizing that the proposed statistic applies CvM integration over
\(\alpha\) before taking the KS supremum over \(\gamma\).

For comparison, the numerical experiments use two alternative functionals.
Writing \(\eta\ge0\) for an optional penalty on the quantile index, define
\begin{equation*}
    T_n^{KS\text{-}KS}(\eta,\lambda)
    :=\sup_{\alpha\in\mathcal T,\gamma\in\Gamma}
    \left[|Q_n(\alpha,\gamma)|-\eta|\alpha|-\lambda\|\gamma\|_1\right].
\end{equation*}
At \(\eta=\lambda=0\), this is the fully supremum-based counterpart.
The other comparator maximizes over \(\gamma\) at each quantile before
squaring and integrating:
\begin{equation*}
    T_{n,\Pi}^{KS\text{-}CvM}(\lambda)
    :=\int_{\mathcal T}
    \left\{\sup_{\gamma\in\Gamma}
    \left[|Q_n(\alpha,\gamma)|-\lambda\|\gamma\|_1\right]\right\}^{2}
    d\Pi(\alpha).
\end{equation*}
These are the definitions used in the numerical experiments. The tables suppress
the penalty arguments. The penalty enters before squaring in the second
comparator, so penalized comparisons reflect both aggregation and penalty
placement. The main statistic \(T_{n,\Pi}\) and its theory are unchanged.
The implementation choices, including \(\eta\), are stated in
Section~\ref{sec:simulation}.

We do not separately consider the fully integrated CvM--CvM aggregation, which
would replace the outer supremum over \(\gamma\) by an integral. The reason is
that the result on local separation in Theorem~\ref{thm:power-enhance_main}
differentiates a value function at its maximizing \(\gamma\). It therefore
applies directly to statistics that retain a supremum over \(\gamma\), rather
than to an integral over \(\gamma\).

\(T_{n,\Pi}(\lambda)\) is used in \eqref{eq:Tn} as the main statistic because it
has rejection rates relatively close to nominal levels in the linear
pre-estimation design reported in Section~\ref{subsec:aggregation-comparison}.
This comparison is specific to the reported simulations.
\end{remark}

\subsection{Asymptotic Null Distribution}

The assumptions for the null theory are stated in the Supplementary Appendix.

The following theorem gives the null limit of the empirical process and test
statistic.

\begin{theorem}[Asymptotic null distribution]
\label{thm:null_main}
Suppose Assumptions~S1--S2 hold. Under the null hypothesis $H_0: \theta_0(\alpha) = \bar{\theta}(\alpha)$ for all $\alpha \in \mathcal{T}$, the following results hold:

\begin{enumerate}
    \item The empirical process converges weakly to a zero-mean Gaussian process:
    \begin{equation*}
    \sqrt{n} M_n(\alpha, \gamma) \rightsquigarrow \mathcal{M}(\alpha, \gamma) \quad \text{in } \ell^{\infty}(\mathcal{T} \times \Gamma).
    \end{equation*}
    The covariance kernel of the limiting process $\mathcal{M}(\alpha, \gamma)$ is given by:
    \begin{equation*}
    \mathrm{Cov}\big(\mathcal{M}(\alpha_1, \gamma_1), \mathcal{M}(\alpha_2, \gamma_2)\big)
    = \big(\min(\alpha_1, \alpha_2) - \alpha_1 \alpha_2\big)
    E\!\left[\exp\!\big((\gamma_1 + \gamma_2)^{\top} I_{t-1}\big)\right].
    \end{equation*}

    \item The variance estimator converges uniformly in probability:
    \begin{equation*}
    \sup_{\alpha\in \mathcal{T}, \gamma\in \Gamma}
    \Big|
    s_n^2(\alpha, \gamma)
    - \sigma^2(\alpha, \gamma)
    \Big|
    \overset{p}{\longrightarrow} 0,
    \end{equation*}
    where $\sigma^2(\alpha, \gamma) := \alpha(1-\alpha) E[\exp(2\gamma^{\top} I_{t-1})]$.

    \item The test statistic \(T_{n,\Pi}(\lambda)\) converges in distribution to
    \begin{equation*}
    T_{\Pi}(\lambda)
    :=
    \sup_{\gamma\in \Gamma}
    \Bigg[
    \int_{\mathcal T}
    \left( \frac{\mathcal{M}(\alpha, \gamma)}{\sigma(\alpha, \gamma)} \right)^2
    d\Pi(\alpha)
    - \lambda \|\gamma\|_1
    \Bigg].
    \end{equation*}
\end{enumerate}
\end{theorem}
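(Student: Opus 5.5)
The plan is to prove the three parts in order: part (i) by a functional martingale CLT, part (ii) by a uniform law of large numbers, and part (iii) by the continuous mapping theorem applied to the pair from (i)--(ii). Write $e_t(\alpha):=\mathbf 1\{Y_t\le m(I_{t-1},\bar\theta(\alpha))\}-\alpha$ and $g_t(\alpha,\gamma):=\exp(\gamma^\top I_{t-1})e_t(\alpha)$. Under $H_0$ and the MDS condition in \assMixing(iii), $\{g_t(\alpha,\gamma),\mathcal F_t\}$ is a martingale difference array for every index $(\alpha,\gamma)$.

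For part (i), first establish finite-dimensional convergence with the Cramér--Wold device and a martingale CLT, for example Hall--Heyde or McLeish. The conditional Lindeberg condition follows from the uniform moment bound on $\exp(\gamma^\top I_{t-1})$ over $\Gamma$ (\assRegularity), using moments slightly higher than second. For the conditional covariance, note that for $\alpha_1\le\alpha_2$,
\[
E[e_t(\alpha_1)e_t(\alpha_2)\mid\mathcal F_{t-1}]=\alpha_1-\alpha_1\alpha_2 .
\]
This holds provided $m(I_{t-1},\bar\theta(\alpha))$ is nondecreasing in $\alpha$. If monotonicity is not assumed, I would invoke the quantile-crossing lemma (Lemma~\ref{lem:gaussian-quantile-crossing}) or the relevant part of \assRegularity. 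In either case the identity gives
\[
\frac1n\sum_t g_t(\alpha_1,\gamma_1)g_t(\alpha_2,\gamma_2)\text{ conditional on }\mathcal F_{t-1}
=(\min(\alpha_1,\alpha_2)-\alpha_1\alpha_2)\,\frac1n\sum_t\exp((\gamma_1+\gamma_2)^\top I_{t-1}).
\]
An ergodic or mixing LLN (\assMixing) then delivers the stated kernel.

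The main obstacle is asymptotic tightness (stochastic equicontinuity) in $\ell^\infty(\mathcal T\times\Gamma)$. The class has two pieces, which I would handle separately.

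\begin{itemize}
\item In $\gamma$, the function is smooth: $|\exp(\gamma_1^\top I)-\exp(\gamma_2^\top I)|\le\|\gamma_1-\gamma_2\|\,\|I\|\exp(\bar c\|I\|)$, which gives a Lipschitz-in-parameter bound with an integrable envelope.
\item In $\alpha$, the indicator $\mathbf 1\{Y_t\le m(I_{t-1},\bar\theta(\alpha))\}$ is bounded and nonsmooth. Its $L^2$-increments are controlled by conditional density boundedness times $|m(\cdot,\bar\theta(\alpha_1))-m(\cdot,\bar\theta(\alpha_2))|$, which is small by Lipschitz continuity of $\bar\theta$ and $m$.
\end{itemize}

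Together these give bracketing numbers of polynomial order in the relevant $L^2$ norm. I would then apply a bracketing maximal inequality for martingale difference arrays: Freedman's inequality combined with chaining, in the style of Nishiyama or Bae--Jun--Levental. The one subtlety is that brackets must be controlled in the conditional-variance norm $n^{-1}\sum_t E[(\cdot)^2\mid\mathcal F_{t-1}]$. A uniform LLN over the brackets makes this random norm comparable to the deterministic $L^2$ norm. If the assumptions are instead stated as $\beta$-mixing, Doukhan--Massart--Rio bracketing is the fallback.

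For part (ii), $s_n^2(\alpha,\gamma)-\sigma^2(\alpha,\gamma)=\alpha(1-\alpha)\{n^{-1}\sum_t\exp(2\gamma^\top I_{t-1})-E\exp(2\gamma^\top I_{t-1})\}$. Since $\alpha(1-\alpha)\le1/4$, it suffices to prove a ULLN over compact $\Gamma$. This follows from the pointwise ergodic LLN, the Lipschitz bound above with an integrable envelope, and a finite-net argument. The same ULLN was already used in part (i).

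For part (iii), $\sigma^2$ is bounded away from zero on $\mathcal T\times\Gamma$ because $\mathcal T$ is compact in $(0,1)$ and $\exp(2\gamma^\top I)>0$. Hence by (i), (ii) and Slutsky, $\sqrt n M_n/s_n\rightsquigarrow\mathcal M/\sigma$ in $\ell^\infty(\mathcal T\times\Gamma)$. Now consider the map
\[
\psi(z)=\sup_{\gamma\in\Gamma}\Big[\int_{\mathcal T}z(\alpha,\gamma)^2\,d\Pi(\alpha)-\lambda\|\gamma\|_1\Big].
\]
It satisfies $|\psi(z_1)-\psi(z_2)|\le\|z_1-z_2\|_\infty(\|z_1\|_\infty+\|z_2\|_\infty)$, so it is continuous on bounded sets of $\ell^\infty$. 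The only issue is measurability of $\alpha\mapsto z(\alpha,\gamma)^2$. For the limit this is handled by the tight Borel version with $\Pi$-measurable sections in Lemma~\ref{lem:gaussian-quantile-crossing}. For the sample process the sections are piecewise-continuous in $\alpha$, hence measurable. It is enough for $\psi$ to be continuous at every point of the support of the limit, so the continuous mapping theorem (extended version) yields $T_{n,\Pi}(\lambda)\rightsquigarrow T_\Pi(\lambda)$.
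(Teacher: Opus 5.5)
Your argument is correct, but it proves part (i) by a different route from the paper. The paper never uses martingale limit theory. It gets finite-dimensional convergence and tightness together from the $\beta$-mixing functional CLT of \citet{arconesyu1994} (Lemma~\ref{lem:mixing-fclt}). The hypotheses are checked as follows: Lemma~\ref{lem:VC} shows the weighted-indicator class is VC-subgraph, using the VC property of $\{m(\cdot,\theta)\}$ in \assRegularity(ii); the envelope $\exp(\bar\gamma\|I_{t-1}\|_2)$ lies in $L_q$; and the rate in \assMixing(i) is exactly the lemma's mixing condition. The MDS condition enters only to collapse the long-run covariance to its lag-zero term.

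You instead use the null MDS structure throughout: a martingale CLT for the fidis and martingale bracketing with chaining for equicontinuity. Dependence then enters only through ergodic LLNs, and you need neither the mixing rate nor the VC property of $m$. The price is a heavier tightness step. Brackets must be controlled in the random conditional-variance norm, and the unbounded exponential weights must be truncated, or handled by a conditional Lindeberg condition on the envelope, before Freedman-type bounds apply. The paper's route is shorter given the off-the-shelf lemma, and it is not tied to $H_0$: the same lemma is reused in Theorem~\ref{thm:consistency}, where the scores are no longer martingale differences. Your parts (ii) and (iii) coincide with the paper's argument: an ergodic ULLN over compact $\Gamma$, then the bound $|\psi(z_1)-\psi(z_2)|\le(\|z_1\|_\infty+\|z_2\|_\infty)\|z_1-z_2\|_\infty$ and the continuous mapping theorem.

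One correction on part (i). Lemma~\ref{lem:gaussian-quantile-crossing} concerns the distribution function of the limit statistic crossing its quantile strictly. It has nothing to do with non-crossing of conditional quantile curves, so it cannot stand in for monotonicity. You do not need a fallback anyway. Monotonicity is part of \assMixing(ii), and the covariance identity does not even require it. Write $m_j:=m(I_{t-1},\bar\theta(\alpha_j))$. Because the conditional CDF is nondecreasing, \assMixing(iii) gives $P\{Y_t\le\min(m_1,m_2)\mid\mathcal F_{t-1}\}=\min(\alpha_1,\alpha_2)$ directly. Likewise, your $\alpha$-brackets need neither the density bound nor Lipschitz continuity of $\bar\theta$. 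Under $H_0$, $E[(\mathbf 1\{Y_t\le m_2\}-\mathbf 1\{Y_t\le m_1\})^2\mid\mathcal F_{t-1}]=|\alpha_2-\alpha_1|$ exactly, and this is what makes the conditional-variance bracketing norm explicit in your approach.
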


\subsection{Consistency under Global Alternatives}

We next study consistency against fixed alternatives. Define the population analog of the studentized moment:
\begin{equation}
    h(\alpha, \gamma, \theta)
    := \frac{E\big[\exp(\gamma^\top I_{t-1})
    \big(\mathbf{1}\{Y_t \le m(I_{t-1}, \theta(\alpha))\} - \alpha\big)\big]}
    {\sigma(\alpha, \gamma)},
    \label{eq:h-pop}
\end{equation}
where $\sigma(\alpha, \gamma) = \sqrt{\alpha(1-\alpha)\, E[\exp(2\gamma^\top I_{t-1})]}$ is the null standard deviation.

The exponential weighting function inherits the completeness property established
by \citet{bierens1990}: under \(H_0\), \(h(\alpha,\gamma,\bar\theta)=0\) for all
\((\alpha,\gamma)\in\mathcal T\times\Gamma\). Under a fixed alternative, a sufficient condition for consistency is that \(h(\cdot,\gamma,\bar\theta)\) has a positive \(L^2(\Pi)\) norm for some \(\gamma\in\Gamma\). This condition is stated in the following theorem.

\begin{theorem}[Consistency under fixed alternatives]
\label{thm:consistency_main}
Suppose Assumptions~S1--S2 hold. If
\[
    \sup_{\gamma\in\Gamma}
    \int_{\mathcal T}
    h^2(\alpha,\gamma,\bar\theta)
    d\Pi(\alpha)
    >0,
\]
then for any fixed \(\lambda \geq 0\),
\[
    T_{n,\Pi}(\lambda) \xrightarrow{p} +\infty.
\]
\end{theorem}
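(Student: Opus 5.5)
The plan is to bound the supremum in \eqref{eq:Tn} from below by its value at a single well-chosen direction, and then show that this one value diverges. By hypothesis there exists \(\gamma^*\in\Gamma\) with
\[
c^*:=\int_{\mathcal T}h^2(\alpha,\gamma^*,\bar\theta)\,d\Pi(\alpha)>0 .
\]
Since the supremum dominates the objective at \(\gamma^*\), we have for every fixed \(\lambda\ge0\)
\[
T_{n,\Pi}(\lambda)\ \ge\ \int_{\mathcal T}Q_n^2(\alpha,\gamma^*)\,d\Pi(\alpha)-\lambda\|\gamma^*\|_1 .
\]
The penalty term is a finite constant because \(\Gamma\) is compact. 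It therefore suffices to show that
\[
n^{-1}\int_{\mathcal T}Q_n^2(\alpha,\gamma^*)\,d\Pi(\alpha)\ \xrightarrow{p}\ c^*>0 .
\]

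We write \(M_n(\alpha,\gamma^*)=\mu(\alpha,\gamma^*)+\{M_n(\alpha,\gamma^*)-\mu(\alpha,\gamma^*)\}\), where
\[
\mu(\alpha,\gamma)=E\bigl[\exp(\gamma^\top I_{t-1})\bigl(\mathbf 1\{Y_t\le m(I_{t-1},\bar\theta(\alpha))\}-\alpha\bigr)\bigr].
\]
The key step is a uniform law of large numbers,
\[
\sup_{\alpha\in\mathcal T}\bigl|M_n(\alpha,\gamma^*)-\mu(\alpha,\gamma^*)\bigr|\xrightarrow{p}0 .
\]
Under a fixed alternative the centered indicator is no longer a martingale difference, so the Gaussian limit of Theorem~\ref{thm:null_main}(i) cannot be invoked directly. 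Instead we would use the stationarity and mixing part of Assumption~S1 together with the moment and regularity conditions of Assumption~S2. These give a pointwise ergodic LLN. The argument is then upgraded to uniformity in \(\alpha\) by a bracketing argument: the class \(\{\mathbf 1\{y\le m(x,\bar\theta(\alpha))\}:\alpha\in\mathcal T\}\) is monotone or Lipschitz-indexed under S2, and the weight \(\exp(\gamma^{*\top}I_{t-1})\) is integrable. Alternatively, the same maximal inequality used for the tightness part of Theorem~\ref{thm:null_main}(i) yields an \(O_p(n^{-1/2})\) bound on the centered process, which is even stronger. We expect this to be the main obstacle, since the null theory relied on the MDS structure and we must check that the mixing-based empirical-process bounds do not use it.

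Next we handle the denominator. Theorem~\ref{thm:null_main}(ii) is a statement about \(s_n^2\) alone and does not depend on the null, since \(s_n^2\) involves only \(I_{t-1}\). Hence \(s_n^2(\alpha,\gamma^*)\to\sigma^2(\alpha,\gamma^*)\) uniformly in probability. The limit is bounded away from zero on \(\mathcal T\), because \(\mathcal T\) is compact in \((0,1)\) and \(E[\exp(2\gamma^{*\top}I_{t-1})]>0\).

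Combining the two uniform limits with the continuous mapping theorem gives
\[
n^{-1}Q_n^2(\alpha,\gamma^*)=\frac{M_n^2(\alpha,\gamma^*)}{s_n^2(\alpha,\gamma^*)}\ \longrightarrow\ h^2(\alpha,\gamma^*,\bar\theta)
\]
uniformly in \(\alpha\), in probability. Since \(\Pi\) is a probability measure and the integrands are uniformly bounded with probability approaching one, integrating yields \(n^{-1}\int Q_n^2\,d\Pi\to c^*\). Finally, on an event of probability tending to one,
\[
T_{n,\Pi}(\lambda)\ge nc^*/2-\lambda\sup_{\gamma\in\Gamma}\|\gamma\|_1\to\infty,
\]
which proves the claim.
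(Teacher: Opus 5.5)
Your proof is correct. It is a leaner variant of the paper's argument rather than the same one.

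The paper proves the exact limit
\[
n^{-1}T_{n,\Pi}(\lambda)\xrightarrow{p}\sup_{\gamma\in\Gamma}\int_{\mathcal T}h^2(\alpha,\gamma,\bar\theta)\,d\Pi(\alpha).
\]
This requires uniform convergence of $M_n$ and $s_n^2$ over all of $\mathcal T\times\Gamma$, plus continuity of the integral--supremum functional. You evaluate the objective at a single direction $\gamma^*$. Such a direction exists because a supremum of nonnegative numbers is positive only if some term is, so no attainment is needed. As a result you need uniformity in $\alpha$ alone and only a one-sided bound. Neither the supremum over $\Gamma$ nor the continuity of the functional ever enters. In return, the paper's route gives the sharper conclusion that $T_{n,\Pi}(\lambda)$ grows at exactly rate $n$ with constant $\sup_\gamma\int h^2\,d\Pi$, and that the penalty is asymptotically irrelevant.

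For the step you flagged as the main obstacle, the paper uses exactly your second option. Lemma~\ref{lem:VC} is proved for unrestricted $\theta\in\Theta$, so the class indexed by $\bar\theta(\alpha)$ is VC-subgraph with an $L_q$ envelope. Lemma~\ref{lem:mixing-fclt} needs only stationarity, $\beta$-mixing and the entropy/envelope conditions; the MDS property affects only the covariance kernel. Together they give an $O_p(n^{-1/2})$ uniform bound without using the null.

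Your first option, bracketing via monotonicity of $\alpha\mapsto m(x,\bar\theta(\alpha))$, is not available under a fixed alternative. Assumption~S1(ii) asserts monotonicity for $\theta_0$, not for $\bar\theta$. The monotonicity of the $\bar\theta$-curve in Assumption~S2(iv)(4) is tied to $\bar\theta$ being the true conditional quantile curve, which fails here. So rely on the VC route.
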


To see this, note that
\[
    \frac{1}{n}\,T_{n,\Pi}(\lambda)
    =
    \sup_{\gamma \in \Gamma}
    \left[
        \int_{\mathcal T}
        \frac{M_n^2(\alpha,\gamma)}
        {s_n^2(\alpha,\gamma)}
        d\Pi(\alpha)
        -
        \frac{\lambda}{n}\|\gamma\|_1
    \right].
\]
Under the alternative, the uniform law of large numbers yields
\[
    \sup_{\alpha\in\mathcal T,\gamma\in\Gamma}
    \left|
        \frac{M_n^2(\alpha,\gamma)}
        {s_n^2(\alpha,\gamma)}
        -
        h^2(\alpha,\gamma,\bar\theta)
    \right|
    \xrightarrow{p}0,
\]
while \(\lambda/n\to0\). Hence, by the continuous mapping theorem,
\[
    \frac{1}{n}\,T_{n,\Pi}(\lambda)
    \xrightarrow{p}
    \sup_{\gamma\in\Gamma}
    \int_{\mathcal T}
    h^2(\alpha,\gamma,\bar\theta)
    d\Pi(\alpha)
    >0.
\]
Since the limit is a positive constant, \(T_{n,\Pi}(\lambda)\to+\infty\) in
probability.

If \(\Pi\) has full support on \(\mathcal T\) and
\(\alpha\mapsto h(\alpha,\gamma,\bar\theta)\) is continuous, the condition above
is implied by the existence of a pair \((\alpha',\gamma')\) such that
\(h(\alpha',\gamma',\bar\theta)\neq0\). In that case, the discrepancy persists on
a neighborhood of \(\alpha'\), which has positive \(\Pi\)-measure.

\section{Bootstrap Inference and Local Power Analysis}
\label{sec:bootstrap_power}

\subsection{Multiplier Bootstrap Implementation}

Because the covariance kernel of $\mathcal M$ depends on the data-generating
process, $T_{n,\Pi}(\lambda)$ is non-pivotal. We use a Gaussian multiplier
bootstrap. Conditional on the data, the bootstrap matches the contemporaneous
covariance of the scores. The MDS assumption makes the cross-lag covariances
zero. The procedure does not re-estimate the model in each replication.

Let $\{\omega_t\}_{t=1}^n$ be i.i.d.\ standard normal multipliers,
independent of the original sample $\mathcal Z_n$. We define the bootstrap
empirical process as
\begin{equation}
    M_{n*}(\alpha, \gamma)
    :=
    \frac{1}{n} \sum_{t=1}^{n} \omega_t \exp(\gamma^{\top} I_{t-1})
    \big[\mathbf{1}\{Y_t \le m(I_{t-1}, \bar{\theta}(\alpha))\} - \alpha\big].
    \label{eq:Mn_star}
\end{equation}
The corresponding bootstrap variance estimator is given by:
\begin{equation}
    s_{n*}^{2}(\alpha, \gamma)
    :=
    \frac{\alpha(1-\alpha)}{n}\sum_{t=1}^{n}\omega_t^2\,\exp(2\gamma^{\top} I_{t-1}).
    \label{eq:sn_star}
\end{equation}
Mimicking the construction of the original test, the standardized bootstrap process
is defined as \(Q_{n*}(\alpha,\gamma):=\sqrt n\,|M_{n*}(\alpha,\gamma)|/
s_{n*}(\alpha,\gamma)\). The penalized bootstrap statistic uses the same
aggregation scheme as \(T_{n,\Pi}(\lambda)\):
\begin{equation}
    T_{n*,\Pi}(\lambda)
    :=
    \sup_{\gamma \in \Gamma}
    \left[
        \int_{\mathcal T} Q_{n*}^2(\alpha,\gamma)\,d\Pi(\alpha)
        -
        \lambda\|\gamma\|_1
    \right].
    \label{eq:Tn_star}
\end{equation}

The following theorem establishes the conditional weak convergence needed for bootstrap critical values to be asymptotically valid under $H_0$.

\begin{theorem}[Multiplier bootstrap validity]
\label{thm:bootstrap_main}
Suppose Assumptions~S1--S2 hold. Under the null hypothesis
$H_0:\theta_0(\alpha)=\bar{\theta}(\alpha)$ for all $\alpha\in\mathcal T$,
conditional on the data $\mathcal Z_n$, the following results hold in
probability:
\begin{enumerate}
    \item The bootstrap process converges weakly to the same limiting Gaussian process as the original empirical process:
    \begin{equation*}
        \sqrt{n} M_{n*}(\alpha,\gamma) \rightsquigarrow^* \mathcal{M}(\alpha,\gamma) \quad \text{in } \ell^\infty(\mathcal T\times\Gamma).
    \end{equation*}
    \item The bootstrap variance estimator is uniformly consistent:
    \begin{equation*}
        \sup_{\alpha\in\mathcal{T},\gamma\in\Gamma} \big|\, s_{n*}^2(\alpha,\gamma) - \sigma^2(\alpha, \gamma) \,\big| \xrightarrow{p^*} 0.
    \end{equation*}
    \item The bootstrap test statistic consistently approximates the null distribution:
    \begin{equation*}
        T_{n*,\Pi}(\lambda)
        \rightsquigarrow^*
        \sup_{\gamma \in \Gamma}
        \Bigg[
        \int_{\mathcal T}
        \left(
            \frac{\mathcal{M}(\alpha, \gamma)}
            {\sigma(\alpha, \gamma)}
        \right)^2
        d\Pi(\alpha)
        -
        \lambda\|\gamma\|_1
        \Bigg].
    \end{equation*}
\end{enumerate}
Here, $\rightsquigarrow^*$ and $\xrightarrow{p^*}$ denote weak convergence and convergence in probability conditional on the data, respectively.
\end{theorem}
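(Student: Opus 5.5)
The argument follows the standard multiplier-bootstrap route: treat the data as fixed, establish conditional Gaussianity of \(\sqrt n M_{n*}\) with a covariance that converges to the kernel of Theorem~\ref{thm:null_main}, then obtain parts 2 and 3 by a conditional law of large numbers and the continuous mapping theorem. Write \(e_t(\alpha):=\mathbf{1}\{Y_t\le m(I_{t-1},\bar\theta(\alpha))\}-\alpha\) and \(g_t(\alpha,\gamma):=\exp(\gamma^\top I_{t-1})e_t(\alpha)\). Given \(\mathcal Z_n\), the process \(\sqrt n M_{n*}(\alpha,\gamma)=n^{-1/2}\sum_t\omega_t g_t(\alpha,\gamma)\) is exactly a centered Gaussian process with covariance
\[
\hat C_n\big((\alpha_1,\gamma_1),(\alpha_2,\gamma_2)\big)=\frac1n\sum_{t=1}^n g_t(\alpha_1,\gamma_1)g_t(\alpha_2,\gamma_2).
\]

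\textbf{Part 1.} Two things are needed.

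\emph{(a) Covariance convergence.} I will show \(\hat C_n\to C\) in probability, uniformly over \((\mathcal T\times\Gamma)^2\). Pointwise, the ergodic/mixing LLN in \assMixing{} gives \(\hat C_n\to E[g_t(\alpha_1,\gamma_1)g_t(\alpha_2,\gamma_2)]\). The product \(e_t(\alpha_1)e_t(\alpha_2)\) has conditional mean \(\min(\alpha_1,\alpha_2)-\alpha_1\alpha_2\) given \(I_{t-1}\) under \(H_0\). This is because the indicators are nested in \(\alpha\): the quantile curve \(m(I_{t-1},\bar\theta(\alpha))\) is monotone in \(\alpha\) since it equals the true conditional quantile. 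The pointwise limit is therefore the kernel of Theorem~\ref{thm:null_main}. Uniformity comes from a Glivenko--Cantelli argument for the class \(\{g_t(\alpha_1,\gamma_1)g_t(\alpha_2,\gamma_2)\}\). This class is a product of Lipschitz-in-\(\gamma\) exponentials, with envelope \(\sup_\Gamma\exp(2\gamma^\top I_{t-1})\) integrable by \assRegularity{}, and monotone indicators in \(\alpha\). Bracketing with finite bracketing numbers, as already used for Theorem~\ref{thm:null_main}, yields the uniform LLN.

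\emph{(b) Conditional asymptotic equicontinuity.} I will bound the conditional Gaussian increments through the intrinsic semimetric \(\hat\rho_n^2=\hat C_n(u,u)+\hat C_n(v,v)-2\hat C_n(u,v)\), which converges uniformly to the population \(\rho\) by (a). Dudley's entropy integral with respect to \(\hat\rho_n\) then gives
\[
E^*\sup_{\rho(u,v)<\delta}\big|\sqrt n M_{n*}(u)-\sqrt n M_{n*}(v)\big|\to0
\]
as \(\delta\downarrow0\), in probability. The covering numbers of \(\mathcal T\times\Gamma\) under \(\hat\rho_n\) are controlled by the same bracketing numbers. Finite-dimensional convergence is immediate from (a), since the process is exactly Gaussian conditionally. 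Combining the two gives \(\rightsquigarrow^*\) in the bounded-Lipschitz sense of van der Vaart--Wellner, Thm.~2.9.6 style arguments, adapted to dependent data because conditionally only \(\omega_t\) is random.

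\textbf{Part 2.} Write
\[
s_{n*}^2-s_n^2=\frac{\alpha(1-\alpha)}{n}\sum_t(\omega_t^2-1)\exp(2\gamma^\top I_{t-1}).
\]
Conditionally, this has mean zero and variance at most \(2n^{-2}\sum_t\exp(4\gamma^\top I_{t-1})=o_p(1)\), given a fourth-moment-type bound in \assRegularity{}. Uniformity over \(\gamma\) follows from a chaining or Lipschitz argument using \(|\partial_\gamma\exp(2\gamma^\top I)|\le 2\|I\|\exp(2\gamma^\top I)\) on compact \(\Gamma\) together with a finite net. Combining this with Theorem~\ref{thm:null_main}(2) gives the claim.

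\textbf{Part 3.} Since \(\sigma^2\) is bounded away from zero on \(\mathcal T\times\Gamma\) (\(\mathcal T\) is compact in \((0,1)\)), the map
\[
(f,s)\mapsto\sup_\gamma\Big[\int f^2/s^2\,d\Pi-\lambda\|\gamma\|_1\Big]
\]
is continuous at \((\mathcal M,\sigma)\) on \(\ell^\infty\times\{s\ge c>0\}\). It is Lipschitz on bounded sets, and the measurability of the \(\alpha\)-sections is handled as in Lemma~\ref{lem:gaussian-quantile-crossing}. A conditional continuous mapping theorem, with Slutsky applied to the pair from Parts 1--2, then delivers the limit.

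\textbf{Main obstacle.} The main obstacle is Part 1(b), specifically uniformity over \(\alpha\). The indicator class is not smooth in \(\alpha\), so \(\hat\rho_n\)-entropy must be controlled via bracketing using continuity of the conditional distribution of \(Y_t\) (\assRegularity{}), and this has to be done uniformly in probability over data realizations. I would first try to reduce to the population bracketing bound through \(\sup|\hat\rho_n^2-\rho^2|\to_p0\) from part (a).
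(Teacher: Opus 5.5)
Your architecture matches the paper's: exact conditional Gaussianity, finite-dimensional convergence from a covariance law of large numbers with the nested-indicator kernel, conditional equicontinuity from a Gaussian entropy bound in the random semimetric $\hat\rho_n$ combined with $\sup|\hat\rho_n^2-\rho^2|\to_p0$, then a multiplier law of large numbers for $s_{n*}^2$ and continuous mapping. The gap is in the step you yourself flag, the entropy control in Part~1(b).

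Dudley's bound needs $\int_0^{C\delta}\sqrt{\log N(\epsilon,\mathcal T\times\Gamma,\hat\rho_n)}\,d\epsilon$ to be small. That requires $\hat\rho_n$-covering numbers at \emph{every} scale down to zero, and population $L_2(P)$ brackets do not supply them. A bracket $[l,u]$ bounds the $\hat\rho_n$-radius only by $\|u-l\|_{P_n,2}$. A law of large numbers controls these widths for the finitely many brackets at a fixed $\epsilon$, not uniformly as $\epsilon\downarrow0$.

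Your fallback reduction fails for the same reason. With $\eta_n:=\sup|\hat\rho_n^2-\rho^2|$ you only get $\hat\rho_n\le\rho+\sqrt{\eta_n}$. So $\rho$-covers say nothing about $\hat\rho_n$-covers at scales below $\sqrt{\eta_n}$, which is precisely the lower end of the entropy integral. A pure bracketing route would need a separate small-scale argument that you have not given.

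The missing ingredient is uniform entropy, which is how the paper closes this step. By Lemma~\ref{lem:VC}, $\mathcal G_0$ is VC-subgraph with an $L_q$ envelope $F$. Hence $N(\epsilon\|F\|_{Q,2},\mathcal G_0,L_2(Q))\le(A/\epsilon)^v$ for every finitely discrete $Q$, with $A,v$ not depending on $Q$. Taking $Q=P_n$ and using $P_nF^2=O_p(1)$ bounds the $\hat\rho_n$-entropy integral by a multiple of $\int_0^{C\delta}\sqrt{\log(A/\epsilon)}\,d\epsilon$ at all scales at once. Your uniform semimetric convergence is then needed only to place $\rho$-balls of radius $\delta$ inside $\hat\rho_n$-balls of radius $2\delta$ with probability approaching one; this is Lemma~\ref{lem:conditional-gaussian-multiplier}.

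A smaller point in Part~2: Assumption~S2 gives only an $L_q$ envelope with $q>2$. So no fourth moment of $\exp(2\gamma^\top I_{t-1})$ is available when $\Phi$ is the identity and $q<4$. Your pointwise bound still holds. With $F_2(I):=\sup_{\gamma\in\Gamma}\exp(2\gamma^\top I)$, one has $n^{-2}\sum_tF_2(I_{t-1})^2\le\{n^{-1}\max_{t\le n}F_2(I_{t-1})\}\{n^{-1}\sum_tF_2(I_{t-1})\}=o_p(1)$ under $EF_2<\infty$ and stationarity. The paper uses a truncation argument instead.
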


The following corollary states explicitly the additional continuity condition
needed to translate bootstrap distributional consistency into a limit for the
rejection probability.

\begin{corollary}[Fixed-penalty size control]
\label{cor:fixed-lambda-size_main}
Fix a deterministic $\lambda\ge0$, and let
$c_{n,1-\tau,\Pi}^*(\lambda)$ be the ideal conditional bootstrap
$(1-\tau)$-quantile. Under Theorem~\ref{thm:bootstrap_main}, if the distribution of
$T_\Pi(\lambda)$ is continuous at its $(1-\tau)$-quantile, then
\[
P\{T_{n,\Pi}(\lambda)>c_{n,1-\tau,\Pi}^*(\lambda)\}\to\tau.
\]
The same conclusion holds for the pre-estimated statistic under
Theorem~\ref{thm:boot-pre_main}, using its plug-in bootstrap critical value and
the corresponding continuity condition.
\end{corollary}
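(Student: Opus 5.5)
The argument is the standard one that converts conditional weak convergence of the bootstrap statistic, together with unconditional weak convergence of the test statistic, into convergence of the rejection probability.

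\emph{Step 1 (limits of the two statistics).} Write $F(x):=P\{T_\Pi(\lambda)\le x\}$ and let $c_{1-\tau}:=\inf\{x:F(x)\ge1-\tau\}$. By Theorem~\ref{thm:null_main}(3), $T_{n,\Pi}(\lambda)\rightsquigarrow T_\Pi(\lambda)$. By Theorem~\ref{thm:bootstrap_main}(3), the conditional law of $T_{n*,\Pi}(\lambda)$ converges weakly in probability to the same law $F$. Equivalently, $\sup_{h\in BL_1}|E^*h(T_{n*,\Pi}(\lambda))-Eh(T_\Pi(\lambda))|\to0$ in outer probability.

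\emph{Step 2 (consistency of the bootstrap quantile).} Let $F_n^*$ denote the conditional distribution function of $T_{n*,\Pi}(\lambda)$. Weak convergence implies $F_n^*(x)\to F(x)$ in probability at every continuity point $x$ of $F$. This follows by sandwiching indicators between bounded Lipschitz functions and letting the Lipschitz constant grow.

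To pass from distribution functions to quantiles, I would show that $F$ is strictly increasing at $c_{1-\tau}$. Suppose instead that $F$ were flat on some interval $[c_{1-\tau},c_{1-\tau}+\delta]$. Since $T_\Pi(\lambda)$ is a supremum of integrated squares of a nondegenerate Gaussian process, its law has no such gaps above the lower end of its support; this would follow from Tsirelson-type results on the distribution of suprema of Gaussian-chaos functionals, or from convexity of the map $x\mapsto \sup_\gamma[\int x^2\,d\Pi-\lambda\|\gamma\|_1]$ applied to a Gaussian vector.

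If establishing this gap-freeness turns out to be delicate, I would avoid it altogether. Instead I would argue with quantiles at nearby levels $c_{1-\tau\pm\epsilon}$ chosen at continuity points. Using $F_n^*(x)\to F(x)$ there, one obtains $c_{1-\tau-\epsilon}\le c^*_{n,1-\tau,\Pi}(\lambda)\le c_{1-\tau+\epsilon}'$ with probability tending to one. This sandwich is enough for the final step, because only continuity of $F$ at $c_{1-\tau}$ is then needed. I expect this quantile-inversion step to be the main obstacle.

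\emph{Step 3 (rejection probability).} From Step 2, for any $\epsilon>0$ choose continuity points $x_-<c_{1-\tau}<x_+$ with $F(x_+)-F(x_-)<\epsilon$, which is possible by continuity of $F$ at $c_{1-\tau}$. Then
\[
P\{T_{n,\Pi}(\lambda)>x_+\}-o(1)\le P\{T_{n,\Pi}(\lambda)>c^*_{n,1-\tau,\Pi}(\lambda)\}\le P\{T_{n,\Pi}(\lambda)>x_-\}+o(1).
\]
By Step 1 and the portmanteau theorem, the outer terms converge to $1-F(x_\pm)$. Both of these lie within $\epsilon$ of $\tau$, because continuity at $c_{1-\tau}$ gives $F(c_{1-\tau})=1-\tau$. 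Letting $\epsilon\downarrow0$ yields the claim. (Alternatively, one can invoke Lemma~23.3 of van der Vaart, 1998, directly: a quantile consistent in probability combined with a continuous limit at that quantile.)

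The pre-estimated case is verbatim the same argument. The only changes are that $T_\Pi(\lambda)$ is replaced by the plug-in limit and Theorem~\ref{thm:bootstrap_main} by Theorem~\ref{thm:boot-pre_main}, together with the corresponding continuity condition.
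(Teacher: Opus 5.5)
\textbf{Gap: the quantile-inversion step.} Your outline matches the paper's: bootstrap quantile consistency, then $T_{n,\Pi}(\lambda)\rightsquigarrow T_\Pi(\lambda)$ and continuity of $F$ at $c_{1-\tau}$. However, the step you single out as the main obstacle is never carried out, and the fallback you propose for bypassing it is wrong.

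In Step~3, the inequality $P\{T_{n,\Pi}(\lambda)>x_+\}-o(1)\le P\{T_{n,\Pi}(\lambda)>c^*_{n,1-\tau,\Pi}(\lambda)\}$ needs $P\{c^*_{n,1-\tau,\Pi}(\lambda)>x_+\}\to0$. From $F_n^*(x_+)\to_p F(x_+)$ you get this only when $F(x_+)>1-\tau$. The lower side is automatic, because $F(x)<1-\tau$ for every $x<c_{1-\tau}$ by definition of the quantile. So you need continuity points $x_+$ with $F(x_+)>1-\tau$ and $F(x_+)$ arbitrarily close to $1-\tau$. That is a strict-increase property of $F$ at its quantile, and it does not follow from continuity at $c_{1-\tau}$. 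The sandwich at levels $1-\tau\pm\epsilon$ does not supply it, since $F(c'_{1-\tau+\epsilon})$ need not be near $1-\tau$.

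For a general limit law your claim is in fact false. Suppose $F$ is continuous at $c_{1-\tau}$, equals $1-\tau$ on $[c_{1-\tau},d)$, and has an atom at $d$. Set $F_n^*:=(1-1/n)F$ on $(-\infty,d)$ and $F_n^*:=F$ on $[d,\infty)$. Then $F_n^*$ converges weakly to $F$, but its $(1-\tau)$-quantile is $d$. With $T_{n,\Pi}(\lambda)\sim F$, the rejection probability is $1-F(d)<\tau$. Your appeal to a quantile that is consistent in probability presupposes the same missing step.

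\textbf{How the paper closes it.} What rules this out is the Gaussian structure. The paper isolates it in Lemma~\ref{lem:gaussian-quantile-crossing}, which its proof of the corollary invokes directly. The argument runs as follows.
\begin{itemize}
\item The support $S$ of the tight centered Gaussian element $Q=\mathcal M/\sigma$ (taken in a separable subspace) is a closed linear subspace. This follows from $(Q_1+Q_2)/\sqrt2\stackrel{d}{=}Q$ and $-Q\stackrel{d}{=}Q$ for independent copies $Q_1,Q_2$. Hence $S$ is connected.
\item The functional $\phi_\lambda(f)=\sup_{\gamma}\{\int f^2d\Pi-\lambda\|\gamma\|_1\}$ is continuous, since $|\phi_\lambda(f)-\phi_\lambda(g)|\le(\|f\|_\infty+\|g\|_\infty)\|f-g\|_\infty$. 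So the support of $T_\Pi(\lambda)$ is the interval $\overline{\phi_\lambda(S)}$.
\item Continuity of $F$ at $c_{1-\tau}$, with $0<1-\tau<1$, prevents $c_{1-\tau}$ from being an endpoint of that interval. The left endpoint would carry an atom of mass $1-\tau$, and the right endpoint would give $F(c_{1-\tau})=1$.
\item Hence every neighborhood of $c_{1-\tau}$ has positive mass, and $F(c_{1-\tau}-\epsilon)<1-\tau<F(c_{1-\tau}+\epsilon)$ for all $\epsilon>0$.
\end{itemize}
With this strict crossing, your Steps~2--3 go through and yield $c^*_{n,1-\tau,\Pi}(\lambda)\to_p c_{1-\tau}$. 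You need neither Tsirelson-type results nor convexity: connectedness of the Gaussian support and continuity of $\phi_\lambda$ suffice. The same argument applies to the tight centered Gaussian limit $\widetilde{\mathcal M}/\tilde\sigma$ in the plug-in case.
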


\subsection{Local Power Analysis}
\label{subsec:local-power}

To provide a theoretical foundation for selecting the penalty parameter
\(\lambda\), we study the behavior of the statistic under a
sequence of Pitman local alternatives. We parameterize the local sequence as
\begin{equation}
    H_{1,n}:\quad
    \theta_{n,B}(\alpha)
    =
    \theta_0(\alpha)
    -
    \frac{B(\alpha)}{\sqrt n},
    \qquad \alpha\in\mathcal T,
    \label{eq:local-alt-theta}
\end{equation}
 where \(B(\cdot)\in\mathcal B\) is a bounded deterministic function satisfying
\assLocal, which ensures that \(\theta_{n,B}(\cdot)\) defines a valid
conditional quantile model for all sufficiently large \(n\). Let \(P_{n,B}\) and \(E_{n,B}\) denote probability and expectation under the local sequence \eqref{eq:local-alt-theta}. The data are generated under \(P_{n,B}\), but the implemented statistic is still evaluated at the null-imposed value \(\theta_0(\alpha)=\bar\theta(\alpha)\).

Define
\[
Z_{nt}(\alpha,\gamma)
:=
\exp(\gamma^\top I_{t-1})
\Big[
    \mathbf 1\{Y_t\le m(I_{t-1},\theta_0(\alpha))\}
    -
    \alpha
\Big],
\]
and let
\[
    M_{n,0}^{B}(\alpha,\gamma)
    :=
    \frac1n\sum_{t=1}^n Z_{nt}(\alpha,\gamma)
\]
denote the null-evaluated empirical process under \(P_{n,B}\). The following
identity separates the random fluctuation from the deterministic local mean
shift:
\begin{equation}
\label{eq:local-centered-decomp}
\sqrt n\,M_{n,0}^{B}(\alpha,\gamma)
=
\mathbb G_{n,B}(\alpha,\gamma)
+
\frac1{\sqrt n}\sum_{t=1}^n E_{n,B}Z_{nt}(\alpha,\gamma),
\end{equation}
where
\[
\mathbb G_{n,B}(\alpha,\gamma)
:=
\frac1{\sqrt n}\sum_{t=1}^n
\Big\{
Z_{nt}(\alpha,\gamma)
-
E_{n,B}Z_{nt}(\alpha,\gamma)
\Big\}.
\]
Thus the expectation inside the centered empirical process is not an additional
drift term. It only centers the stochastic fluctuation. The local alternative
enters through the second term in \eqref{eq:local-centered-decomp}. To see how
the drift arises, note that under $H_{1,n}$ the local true $\alpha$-quantile is
\[
m(I_{t-1},\theta_{n,B}(\alpha)) \approx m(I_{t-1},\theta_0(\alpha)) -
n^{-1/2}\nabla_\theta m(I_{t-1},\theta_0(\alpha))^\top B(\alpha).
\]
Thus $F_{Y_t\mid I_{t-1}}\!\big(m(I_{t-1},\theta_0(\alpha))\big)$ exceeds
$\alpha$ by approximately
\[
n^{-1/2}f_{Y_t\mid I_{t-1}}\!\big(m(I_{t-1},\theta_0(\alpha))\big)
\nabla_\theta m(I_{t-1},\theta_0(\alpha))^\top B(\alpha),
\]
uniformly over $(\alpha,\gamma)$. Taking the conditional expectation of $Z_{nt}$ under
$P_{n,B}$ and then the unconditional expectation therefore yields the drift
formula below. A formal derivation is given in the Supplementary Appendix. The local
drift expansion derived in the Supplementary Appendix gives
\begin{equation}
\label{eq:local-mean-expansion-main}
\frac1{\sqrt n}\sum_{t=1}^n E_{n,B}Z_{nt}(\alpha,\gamma)
=
d_B(\alpha,\gamma)+o(1),
\end{equation}
uniformly in \((\alpha,\gamma)\), where
\begin{equation}
    d_B(\alpha,\gamma)
    :=
    E\!\left[
    \exp(\gamma^\top I_{t-1})
    f_{Y_t|I_{t-1}}\!\left(m(I_{t-1},\theta_0(\alpha))\right)
    \nabla_\theta m(I_{t-1},\theta_0(\alpha))^\top B(\alpha)
    \right].
    \label{eq:local-drift}
\end{equation}
Under the local-stability and stochastic-equicontinuity conditions in
\assLocal, the centered process \(\mathbb G_{n,B}\) has the same first-order
Gaussian limit as under the null. Consequently,
\begin{equation}
\label{eq:local-expansion}
    \sqrt n\,M_{n,0}^{B}(\alpha,\gamma)
    \rightsquigarrow
    \mathcal M(\alpha,\gamma)+d_B(\alpha,\gamma)
    \qquad\text{in }\ell^\infty(\mathcal T\times\Gamma),
\end{equation}
where \(\mathcal M\) is the zero-mean Gaussian process in
Theorem~\ref{thm:null_main}.

To characterize the limiting distribution of the statistic, define
\[
    \sigma(\alpha,\gamma)
    :=
    \sqrt{\alpha(1-\alpha)E[\exp(2\gamma^\top I_{t-1})]},
\]
and set
\begin{equation}
    Q(\alpha,\gamma)
    :=
    \frac{\mathcal M(\alpha,\gamma)}{\sigma(\alpha,\gamma)},
    \qquad
    R_B(\alpha,\gamma)
    :=
    \frac{d_B(\alpha,\gamma)}{\sigma(\alpha,\gamma)}.
    \label{eq:QR_def}
\end{equation}
By the continuous mapping theorem,
\begin{equation}
    T_{n,\Pi}(\lambda)
    \rightsquigarrow
    T_{\Pi}(\lambda;B)
    :=
    \sup_{\gamma\in\Gamma}
    \Bigg[
        \int_{\mathcal T}
        \big(Q(\alpha,\gamma)+R_B(\alpha,\gamma)\big)^2
        d\Pi(\alpha)
        -
        \lambda\|\gamma\|_1
    \Bigg].
    \label{eq:limit_local}
\end{equation}

We next examine how penalization changes the separation between the null and
local-alternative limit experiments. The following theorem gives a sufficient
condition.

\begin{theorem}[Effect of penalization on local separation]
\label{thm:power-enhance_main}
For simplicity of exposition, suppose the local drift direction is constant,
\(B(\alpha)\equiv B_0\). Define the unpenalized limit objective
\begin{equation*}
    J_\Pi(\gamma;B)
    :=
    \int_{\mathcal T}
    \big(Q(\alpha,\gamma)+R_B(\alpha,\gamma)\big)^2
    d\Pi(\alpha).
\end{equation*}
Assume that for both \(B=0\) and some \(B_0\neq0\), there exists a unique
maximizer \(\tilde\gamma_\Pi(B)\) of \(J_\Pi(\gamma;B)\) over the compact set
\(\Gamma\) almost surely. If
\begin{equation*}
    \|\tilde\gamma_\Pi(0)\|_1-
    \|\tilde\gamma_\Pi(B_0)\|_1>0
\end{equation*}
holds almost surely, then, for almost every realization of the Gaussian
process, there exists a \(\delta(Q,B_0)>0\) depending on its realization such that
for every \(0<\lambda<\delta(Q,B_0)\), the penalized limit statistic increases
the separation
\[
    T_\Pi(\lambda;B_0)-T_\Pi(\lambda;0)
\]
relative to the unpenalized benchmark \(\lambda=0\).
\end{theorem}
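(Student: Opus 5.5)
The argument works pathwise, for a fixed realization of the Gaussian process on the probability-one event where both maximizers \(\tilde\gamma_\Pi(0)\) and \(\tilde\gamma_\Pi(B_0)\) exist uniquely and the norm inequality holds. For each \(B\in\{0,B_0\}\), define the value function
\[
V_B(\lambda):=T_\Pi(\lambda;B)=\sup_{\gamma\in\Gamma}\big[J_\Pi(\gamma;B)-\lambda\|\gamma\|_1\big].
\]
It is a supremum of affine functions of \(\lambda\), hence convex and nonincreasing. The claim will follow from a right-derivative computation at \(\lambda=0\). The envelope (Danskin) theorem will give
\[
V_B'(0^+)=-\|\tilde\gamma_\Pi(B)\|_1 .
\]
Therefore the separation \(D(\lambda):=V_{B_0}(\lambda)-V_0(\lambda)\) has right derivative
\[
D'(0^+)=\|\tilde\gamma_\Pi(0)\|_1-\|\tilde\gamma_\Pi(B_0)\|_1>0 .
\]
Consequently \(D(\lambda)>D(0)\) for all \(\lambda\in(0,\delta)\), with \(\delta=\delta(Q,B_0)\) depending on the realization.

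The first step is to verify the hypotheses of Danskin's theorem. We need continuity of \(\gamma\mapsto J_\Pi(\gamma;B)\) on the compact set \(\Gamma\). This holds on a probability-one event, because the limit process \(\mathcal M\) of Theorem~\ref{thm:null_main} has a tight version with continuous sample paths in \(\ell^\infty(\mathcal T\times\Gamma)\). Moreover, \(\sigma\) is continuous and bounded away from zero on \(\mathcal T\times\Gamma\), since \(\mathcal T\) is compact in \((0,1)\). Finally, \(d_B\) is continuous under the moment conditions. Dominated convergence in \(\alpha\) then makes \(J_\Pi(\cdot;B)\) continuous. The map \(\lambda\mapsto J_\Pi(\gamma;B)-\lambda\|\gamma\|_1\) is linear with derivative \(-\|\gamma\|_1\), which is jointly continuous in \((\gamma,\lambda)\).

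Rather than citing Danskin abstractly, I would prove the one-sided derivative directly, since this is the main obstacle. Write \(\gamma^\star=\tilde\gamma_\Pi(B)\) and let \(\gamma_\lambda\) be any maximizer of the penalized objective. The lower bound comes from plugging in \(\gamma^\star\):
\[
V_B(\lambda)\ge V_B(0)-\lambda\|\gamma^\star\|_1 .
\]
The upper bound comes from \(J_\Pi(\gamma_\lambda;B)\le V_B(0)\):
\[
V_B(\lambda)\le V_B(0)-\lambda\|\gamma_\lambda\|_1 .
\]
Combining the two gives
\[
-\|\gamma_\lambda\|_1\ge \frac{V_B(\lambda)-V_B(0)}{\lambda}\ge-\|\gamma^\star\|_1 .
\]
It remains to show \(\gamma_\lambda\to\gamma^\star\) as \(\lambda\downarrow0\). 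The penalized objective converges uniformly in \(\gamma\) to \(J_\Pi(\cdot;B)\), because \(\|\gamma\|_1\) is bounded on the compact \(\Gamma\). Any limit point of \(\gamma_\lambda\) is therefore a maximizer of \(J_\Pi(\cdot;B)\), and uniqueness identifies it as \(\gamma^\star\). This is the argmax-continuity argument where the uniqueness assumption is essential. Continuity of the \(\ell_1\) norm then yields \(V_B'(0^+)=-\|\gamma^\star\|_1\).

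To finish, fix \(\varepsilon=D'(0^+)/2>0\). By the definition of the right derivative there is a \(\delta>0\) with
\[
D(\lambda)-D(0)\ge \varepsilon\lambda>0 \quad\text{for all } 0<\lambda<\delta,
\]
so \(T_\Pi(\lambda;B_0)-T_\Pi(\lambda;0)\) strictly exceeds its value at \(\lambda=0\). Since \(\delta\) is built from the realized paths, it is a random quantity, as the statement indicates. Measurability of \(\delta\) is not needed, because the conclusion is asserted pathwise on a probability-one event.
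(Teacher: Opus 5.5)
Your proof is correct and follows the same approach as the paper. Both compute the right derivative $-\|\tilde\gamma_\Pi(B)\|_1$ of each value function $\lambda\mapsto T_\Pi(\lambda;B)$ at $\lambda=0$ via Danskin's theorem under uniqueness of the maximizer, take the difference to get a strictly positive right derivative of the separation, and conclude from the definition of the right derivative. The difference is one of detail: you prove the Danskin step directly (sandwich bound plus argmax continuity) and verify continuity of $J_\Pi(\cdot;B)$, whereas the paper cites the theorem and asserts continuity.
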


Theorem~\ref{thm:power-enhance_main} compares the null and local-alternative
value functions along the same realization of the Gaussian process. A small
penalty increases their difference when the unpenalized null maximizer has a
larger \(\ell_1\)-norm than the alternative maximizer. The result is pathwise and
does not compare rejection probabilities because the critical value also depends
on \(\lambda\). The maximin result below compares rejection probabilities
using the critical value corresponding to each penalty.

\begin{remark}[Scope of the results on local power]
\label{rem:strict-gain_main}
The restriction \(B(\alpha)\equiv B_0\) is used only in
Theorem~\ref{thm:power-enhance_main}. The general local alternative allows
\(B\) to vary with \(\alpha\), and the selector is evaluated over a
prespecified class \(\mathcal B\). The norm condition in that theorem is a
path-dependent sufficient condition for increasing the separation between the
null and local-alternative limit objectives. Because it does not account for
the critical value, which varies with the penalty, it does not by itself imply
a strict power gain. Because \(0\in\Lambda\), Theorem~\ref{thm:maximin-no-loss_main}
establishes \(W(\lambda^*)\ge W(0)\).
A strict improvement requires
\(W(\lambda)>W(0)\) for some \(\lambda\in\Lambda\), or equivalently
\(\lambda^*>0\) when the theorem's maximizer is unique. The shifted
bootstrap estimates this criterion over the chosen directions, while primitive
sufficient conditions for strict improvement remain to be developed.
\end{remark}

\subsection{Calibration of \texorpdfstring{$\lambda$}{lambda} via Local Bootstrap}
\label{subsec:calibration}

The adaptive rule requires an estimate of the local drift
$d_B(\alpha,\gamma)$, given by
\begin{equation}
    d_B(\alpha,\gamma) = E\Big[\exp(\gamma^\top I_{t-1})\, f_{Y_t|I_{t-1}}\big(m(I_{t-1}, \theta_0(\alpha))\big)\, \nabla_\theta m(I_{t-1}, \theta_0(\alpha))^\top B(\alpha)\Big].
    \label{eq:target_drift}
\end{equation}
Evaluating \eqref{eq:target_drift} requires the conditional density at the
fitted quantile, which we estimate using the method described below.

\subsubsection{Conditional Density Estimation}

For the drift and variance corrections, we use the fitted-quantile density
estimator of \citet{escanciano2019}. It smooths over fitted quantile values
rather than over the full conditioning vector $I_{t-1}$.

Let $\{\alpha_j\}_{j=1}^J$ be i.i.d.\ draws from $\mathrm{Unif}[a_1,a_2]$, independent of the data, where $\mathcal{T}\subset(a_1,a_2)$ and the distance from $\mathcal T$ to the boundary of $[a_1,a_2]$ is positive. The conditional density estimator is defined as:
\begin{equation}
    \widehat{f}_h(I_{t-1}, \theta_0(\alpha)) := \frac{a_2-a_1}{J h} \sum_{j=1}^{J} K\left( \frac{m(I_{t-1}, \theta_0(\alpha)) - m(I_{t-1}, \theta_0(\alpha_j))}{h} \right),
    \label{eq:density_est}
\end{equation}
where $K(\cdot)$ is a symmetric kernel function and $h$ is the bandwidth.
The null-imposed quantile curve must be correctly specified on the auxiliary
interval, as required by \assRegularity(iv). This is a maintained condition
for density estimation in addition to the restriction tested on $\mathcal T$.
The following theorem gives the uniform consistency result used below.

\begin{theorem}[Uniform consistency and rate of $\widehat{f}_h$]
\label{thm:density-consistency_main}
Under Assumptions~S1--S2 and S3(a)--(c), if
$\mathcal{T}\subset(a_1,a_2)$ has positive distance from the boundary of
$[a_1,a_2]$, the bandwidth satisfies
$h \asymp (\frac{\log n}{n})^{1/5}$, and the simulation size $J \asymp n$,
then:
\[
    \sup_{\alpha \in \mathcal{T}}\max_{1\le t\le n} \Big| \widehat{f}_h(I_{t-1}, \theta_0(\alpha)) - f_{Y_t|I_{t-1}}\big(m(I_{t-1}, \theta_0(\alpha))\big) \Big| = o_p(1).
\]
\end{theorem}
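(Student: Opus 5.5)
The plan is to split the estimation error into a Monte Carlo (stochastic) part and a deterministic smoothing bias, working conditionally on the sample $\mathcal Z_n$ throughout. For fixed $t$, write $q_t(u):=m(I_{t-1},\theta_0(u))$ for $u\in[a_1,a_2]$. By \assRegularity(iv), the null-imposed curve is correctly specified on the auxiliary interval, so $q_t(u)$ is the true conditional $u$-quantile of $Y_t$ given $I_{t-1}$. Hence $F_{Y_t\mid I_{t-1}}(q_t(u))=u$, and $q_t$ is strictly increasing.

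The draws $\alpha_j$ are independent of the data. Therefore
\[
E\big[\widehat f_h(I_{t-1},\theta_0(\alpha))\mid\mathcal Z_n\big]
=\frac1h\int_{a_1}^{a_2}K\Big(\frac{q_t(\alpha)-q_t(u)}{h}\Big)\,du
=\frac1h\int_{q_t(a_1)}^{q_t(a_2)}K\Big(\frac{q_t(\alpha)-v}{h}\Big)f_{Y_t\mid I_{t-1}}(v)\,dv .
\]
The second equality uses the substitution $v=q_t(u)$, so that $u=F_{Y_t\mid I_{t-1}}(v)$ and $du=f_{Y_t\mid I_{t-1}}(v)\,dv$. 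This change of variables is the key identity: averaging over quantile levels turns smoothing in the quantile scale into ordinary kernel smoothing of the conditional density, without conditioning on $I_{t-1}$.

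\emph{Bias.} I will use the kernel and density conditions in \assKernel(a)--(c), namely that $K$ is symmetric with compact support (or sufficiently thin tails), and that $f_{Y\mid I}$ is bounded away from zero and has bounded second derivative uniformly in $I$ near the quantile curve. Because $\mathcal T$ is at positive distance from $\{a_1,a_2\}$ and $f$ is bounded above, the gaps $q_t(\alpha)-q_t(a_1)$ and $q_t(a_2)-q_t(\alpha)$ are bounded below uniformly in $t$ and $\alpha\in\mathcal T$. So for $h$ small the kernel window lies inside $[q_t(a_1),q_t(a_2)]$ and there are no boundary effects. A second-order Taylor expansion, with the first-order term removed by symmetry of $K$, then gives a bias of $O(h^2)$ uniformly in $t\le n$ and $\alpha\in\mathcal T$. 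This is $o(1)$ under $h\asymp(\log n/n)^{1/5}$.

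\emph{Stochastic part.} Conditionally on $\mathcal Z_n$, $\widehat f_h-E[\widehat f_h\mid\mathcal Z_n]$ is an average of $J$ i.i.d.\ centered terms. Each term is bounded by $C/h$ and has variance $O(1/h)$, the latter by the same change of variables. Bernstein's inequality gives, for each fixed $(t,\alpha)$, an exponential tail at deviation level $\epsilon_n=C\sqrt{\log n/(Jh)}$.

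To make the bound uniform, I will discretize $\mathcal T$ into a grid of mesh $n^{-2}$. The map $\alpha\mapsto K\big((q_t(\alpha)-q_t(u))/h\big)/h$ is Lipschitz with constant $O(h^{-2})$, using $K$ Lipschitz and $\partial_\alpha q_t=1/f$ bounded. So the discretization error is $O(n^{-2}h^{-2})=o(\epsilon_n)$. A union bound over the $n$ observations and the polynomially many grid points then yields
\[
\sup_{\alpha\in\mathcal T}\max_{t\le n}\big|\widehat f_h-E[\widehat f_h\mid\mathcal Z_n]\big|=O_p\big(\sqrt{\log n/(Jh)}\big)=o_p(1),
\]
since $J\asymp n$ and $nh/\log n\to\infty$. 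This bound holds conditionally, hence also unconditionally. Combining it with the bias bound proves the theorem.

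The main obstacle is uniformity over the random, sample-size-dependent index $t\le n$. The bias and variance constants must not depend on $I_{t-1}$. I will therefore need the lower bound on the conditional density, the smoothness constants, and the Lipschitz modulus of $\alpha\mapsto m(I_{t-1},\theta_0(\alpha))$ to hold uniformly over the support of $I_{t-1}$, or at least with probability approaching one over $\{I_{t-1}\}_{t\le n}$. This is where I expect the precise content of \assKernel(a)--(c) and \assRegularity\ to be used. If the constants are only moment-bounded, I would first truncate on the event $\max_t\|I_{t-1}\|\le c_n$ and track the polynomial growth in the union bound.
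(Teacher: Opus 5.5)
Your proposal is correct and follows the paper's route essentially step for step. The paper proves the result via Lemma~\ref{lem:unif_rate} applied to the oracle null curve, using the same ingredients you use: the change of variables $z=q_t(u)$, whose Jacobian is the inverse quantile slope and hence the conditional density; a second-order Taylor bias bound of order $h^2$ from the symmetric kernel; and Bernstein's inequality combined with an $O(h^{-2})$-Lipschitz discretization and a union bound over $t\le n$ and the grid. The uniform-in-$i$ density bounds in \assRegularity(iv) remove the need for the truncation you anticipated, and the only extra feature in the paper is that it takes the supremum jointly over $(\alpha,h)\in\mathcal T\times[a_J,b_J]$ to allow the data-driven bandwidth permitted by \assKernel(b). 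Your argument covers this by adding $h$ to the grid, since the summands are also $O(h^{-2})$-Lipschitz in $h$.
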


The estimator smooths over the scalar fitted-quantile index
$m(I_{t-1},\theta_0(\alpha))$ rather than over the full conditioning vector.
The kernel condition in \assKernel\ permits either a smooth compactly supported
kernel or a smooth kernel with exponential tail decay, including the Gaussian
kernel. The Supplementary Appendix establishes the more explicit bound
$O_p\{(\log n/n)^{2/5}\}$ by a Bernstein inequality and a covering argument.
Writing $a_{J_n}$ for the deterministic lower endpoint of the admissible
bandwidth interval in \assKernel(b), the canonical choice
$a_{J_n}\asymp(\log n/n)^{1/5}$ also gives
\[
n^{-1/2}a_{J_n}^{-2}
=n^{-1/10}(\log n)^{-2/5}\longrightarrow0
\]
This condition ensures that the error induced by substituting the estimated nuisance parameter into the density estimator is uniformly \(o_p(1)\).

\subsubsection{Bootstrap Validity under Local Alternatives}

Using the uniformly consistent density estimator \(\widehat f_h\), we
construct a shifted multiplier bootstrap process. It reproduces the noncentral
limit in \eqref{eq:local-expansion}: the multiplier component approximates the
centered Gaussian fluctuation \(\mathbb G_{n,B}\), while a deterministic sample
drift estimates \(d_B(\alpha,\gamma)\). Because
\(E^*(\omega_t\mid\mathcal Z_n)=0\), the multiplier term is already conditionally
centered and no explicit subtraction of \(E_{n,B}Z_{nt}\) is needed in the
bootstrap stochastic component.

Let \(\{\omega_t\}_{t=1}^n\) be i.i.d. standard normal multipliers independent of
the data. For a candidate local direction \(B\), define
\begin{align}
    M_{n*}(\alpha,\gamma;B)
    &:=
    \frac1n\sum_{t=1}^n
    \omega_t
    \exp(\gamma^\top I_{t-1})
    \big[\mathbf 1\{Y_t\le m(I_{t-1},\bar\theta(\alpha))\}-\alpha\big]
    \nonumber\\
    &\quad+
    \frac1n\sum_{t=1}^n
    \exp(\gamma^\top I_{t-1})
    \widehat f_h(I_{t-1},\bar\theta(\alpha))
    \frac{B(\alpha)^\top}{\sqrt n}
    \nabla_\theta m(I_{t-1},\bar\theta(\alpha)).
    \label{eq:shifted_M}
\end{align}
The second line is of order \(n^{-1/2}\) at the summand level. After multiplying
by \(\sqrt n\), it converges to the local drift \(d_B(\alpha,\gamma)\).

The bootstrap variance estimator is
\begin{equation}
    s_{n*,B}^2(\alpha,\gamma)
    :=
    \frac{\alpha(1-\alpha)}{n}
    \sum_{t=1}^n \omega_t^2\exp(2\gamma^\top I_{t-1}).
    \label{eq:shifted_s}
\end{equation}

\begin{remark}
If the original statistic is studentized by a sample second moment rather than
by the closed-form variance in \eqref{eq:shifted_s}, the shifted bootstrap
studentizer can be defined as the sample second moment of the shifted bootstrap
summands. The deterministic shift is \(O(n^{-1/2})\) at the summand level, so it
does not change the first-order variance limit.
\end{remark}

Define
\begin{equation}
    Q_{n*,B}(\alpha,\gamma)
    :=
    \frac{\sqrt n\,|M_{n*}(\alpha,\gamma;B)|}{s_{n*,B}(\alpha,\gamma)}.
    \label{eq:Qn_shifted}
\end{equation}
The shifted bootstrap statistic is
\begin{equation}
    T_{n*,B,\Pi}(\lambda)
    :=
    \sup_{\gamma\in\Gamma}
    \Bigg[
        \int_{\mathcal T} Q_{n*,B}^2(\alpha,\gamma)d\Pi(\alpha)
        -
        \lambda\|\gamma\|_1
    \Bigg].
    \label{eq:Tn_shifted}
\end{equation}

The shifted bootstrap must estimate the power criterion for every candidate
direction $B$, whether the observed data are generated under the null or under
one of the contiguous local sequences. We therefore distinguish the candidate
shift $B$ from the actual data-generating direction $B_0$, with $B_0=0$
denoting the null.

\begin{theorem}[Shifted bootstrap validity]
\label{thm:MB-local_main}
Suppose Assumptions~S1--S3 and S5--S6 hold. Let the bandwidth \(h_n\) satisfy the
conditions in Theorem~\ref{thm:density-consistency_main}. For any admissible
candidate direction $B$, under the null or any actual local sequence
$P_{n,B_0}$ covered by \assLocal, the following results hold conditional on
the data \(\mathcal Z_n\) in probability:
\begin{enumerate}
    \item The shifted bootstrap process replicates the noncentral limit process:
    \[
        \sqrt n\,M_{n*}(\alpha,\gamma;B)
        \rightsquigarrow^*
        \mathcal M(\alpha,\gamma)+d_B(\alpha,\gamma)
        \quad\text{in }\ell^\infty(\mathcal T\times\Gamma).
    \]

    \item The variance estimator remains uniformly consistent:
    \[
        \sup_{\alpha,\gamma}
        \big|s_{n*,B}^2(\alpha,\gamma)-\sigma^2(\alpha,\gamma)\big|
        \xrightarrow{p^*}0.
    \]

    \item The shifted bootstrap statistic converges to the corresponding
    noncentral limit:
    \[
        T_{n*,B,\Pi}(\lambda)
        \rightsquigarrow^*
        \sup_{\gamma\in\Gamma}
        \Bigg[
            \int_{\mathcal T}
            \left(
                \frac{\mathcal M(\alpha,\gamma)+d_B(\alpha,\gamma)}
                {\sigma(\alpha,\gamma)}
            \right)^2
            d\Pi(\alpha)
            -
            \lambda\|\gamma\|_1
        \Bigg].
    \]
\end{enumerate}
\end{theorem}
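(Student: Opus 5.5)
The plan is to split the shifted bootstrap process into its multiplier part and its deterministic drift part, and to treat each separately:
\[
\sqrt n\,M_{n*}(\alpha,\gamma;B)=\underbrace{\frac1{\sqrt n}\sum_{t=1}^n\omega_t Z_{nt}(\alpha,\gamma)}_{=:\mathbb G_{n*}(\alpha,\gamma)}+\underbrace{\frac1n\sum_{t=1}^n\exp(\gamma^\top I_{t-1})\widehat f_h(I_{t-1},\bar\theta(\alpha))\nabla_\theta m(I_{t-1},\bar\theta(\alpha))^\top B(\alpha)}_{=:\widehat d_B(\alpha,\gamma)}.
\]
Here $Z_{nt}$ is evaluated at $\bar\theta=\theta_0$ while the data are drawn from $P_{n,B_0}$. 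Part~1 then follows from two claims. First, $\mathbb G_{n*}\rightsquigarrow^*\mathcal M$ conditionally in probability. Second, $\sup_{\alpha,\gamma}|\widehat d_B-d_B|\to_p0$. A conditional Slutsky argument combines them, because adding a deterministic, data-measurable function that converges uniformly preserves conditional weak convergence in $\ell^\infty$.

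For the multiplier part, I would argue as in Theorem~\ref{thm:bootstrap_main}. Conditional on $\mathcal Z_n$, $\mathbb G_{n*}$ is a centered Gaussian process with covariance kernel $n^{-1}\sum_t Z_{nt}(\alpha_1,\gamma_1)Z_{nt}(\alpha_2,\gamma_2)$. Conditional weak convergence then needs two things. The first is uniform convergence in probability of this empirical covariance to $(\min(\alpha_1,\alpha_2)-\alpha_1\alpha_2)E[\exp((\gamma_1+\gamma_2)^\top I_{t-1})]$. The second is conditional asymptotic equicontinuity, which I would get from a Gaussian chaining (Dudley) bound in terms of the empirical $L^2$ semimetric.

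The difficulty is that the data now come from $P_{n,B_0}$ and not from the null. I would handle this with \assLocal, which gives the local-stability and stochastic-equicontinuity conditions under $P_{n,B_0}$. Writing $\mathbf 1\{Y_t\le m_0\}-\alpha$ as its $P_{n,B_0}$-centered version plus a conditional mean of order $n^{-1/2}$, the extra contribution to the empirical second moment is $O_p(n^{-1/2})$ uniformly. The ULLN under the local sequence (via S1 mixing and the contiguity in S5--S6) then yields the same limiting kernel as under the null. Finite-dimensional convergence is immediate from Gaussianity and this covariance convergence. Tightness uses the bracketing/VC structure of $\{\exp(\gamma^\top\cdot)\mathbf 1\{y\le m(\cdot,\bar\theta(\alpha))\}\}$ from S2.

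For the drift part, I would bound
\[
|\widehat d_B-d_B|\le\sup_{\alpha,t}|\widehat f_h-f|\cdot\frac1n\sum_t\exp(\gamma^\top I_{t-1})\|\nabla_\theta m\|\,\|B\|_\infty+\Big|\frac1n\sum_t g_t(\alpha,\gamma)-E g_t(\alpha,\gamma)\Big|,
\]
where $g_t:=\exp(\gamma^\top I_{t-1})f_{Y_t|I_{t-1}}(m_0)\nabla_\theta m^\top B(\alpha)$. The first term is $o_p(1)$ by Theorem~\ref{thm:density-consistency_main} together with the moment bounds in S2. The density theorem is stated under the null, and I would transfer it to $P_{n,B_0}$ by contiguity, since $o_p(1)$ statements are preserved under contiguous sequences. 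The second term is $o_p(1)$ by a ULLN over $\mathcal T\times\Gamma$, using compactness and Lipschitz continuity in $(\alpha,\gamma)$ supplied by S2 and S6. The fact that $g_t$ involves only the conditional law of $Y_t$ given $I_{t-1}$ keeps its distribution close to the null one.

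For Part~2, I would write $s^2_{n*,B}-\sigma^2=\alpha(1-\alpha)n^{-1}\sum(\omega_t^2-1)e^{2\gamma^\top I}+(s_n^2-\sigma^2)$. The second piece is handled by Theorem~\ref{thm:null_main}(2), transported by contiguity. For the first piece, conditionally on the data it has mean zero and variance $O(n^{-2}\sum e^{4\gamma^\top I})=O_p(n^{-1})$ pointwise. Uniformity follows from a Lipschitz-in-$\gamma$ bracketing over compact $\Gamma$, and $\alpha(1-\alpha)$ is bounded away from zero on $\mathcal T$.

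For Part~3, $\sigma$ is bounded away from zero, so $(\sqrt nM_{n*},s_{n*,B})\mapsto\sqrt nM_{n*}/s_{n*,B}$ converges jointly. The map $x\mapsto\sup_\gamma[\int x^2d\Pi-\lambda\|\gamma\|_1]$ is Lipschitz on bounded subsets of $\ell^\infty$, since $|\sup f-\sup g|\le\sup|f-g|$ and $\int x^2-\int y^2\le\|x-y\|_\infty(\|x\|_\infty+\|y\|_\infty)$. The conditional continuous mapping theorem, applied through the bounded-Lipschitz metric, then gives the result, with measurability handled as in Lemma~\ref{lem:gaussian-quantile-crossing}.

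The main obstacle is the conditional covariance and equicontinuity step under the actual local sequence $P_{n,B_0}$. There the scores are no longer exact MDS at $\theta_0$, and the density-estimator rate has to be transported, so the contiguity and stability conditions of \assLocal\ must be invoked carefully.
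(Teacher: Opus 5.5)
Your proposal is correct and follows essentially the same route as the paper. You split $\sqrt n\,M_{n*}(\cdot;B)$ into a conditionally centered multiplier process, handled by the null multiplier argument with the $O(n^{-1/2})$ local means and contiguity making $P_{n,B_0}$ harmless, plus a plug-in drift shown to converge uniformly to $d_B$ via the uniform density rate and a ULLN, and you then reuse the null studentizer argument and the continuous mapping theorem. One small point: your pointwise $O_p(n^{-1})$ conditional-variance bound in Part~2 uses a fourth moment of the weights, which Assumption~S5(iii) supplies, whereas the paper avoids that moment by truncation.
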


\subsubsection{Adaptive Penalty Selection Rule}

Using the shifted bootstrap approximation, we select the penalty parameter
$\lambda$ to maximize estimated power against the least favorable local
alternative in a prespecified class $\mathcal B$.

For theoretical purposes, let
\[
    \Lambda=[0,\bar\lambda]
\]
be a compact interval of candidate penalty values. Define
$\widehat W_n(\lambda)=\inf_{B\in\mathcal B}
\widehat{\mathcal R}_n(\lambda,B,\tau)$. For continuous search, use a measurable
approximate maximizer satisfying
\begin{equation}
    \widehat W_n(\widehat\lambda)
    \ge \sup_{\lambda\in\Lambda}\widehat W_n(\lambda)-\eta_n,
    \qquad \eta_n=o_p(1),\quad \eta_n\ge0,
    \label{eq:maxmin}
\end{equation}
with $\eta_n=0$ whenever a measurable exact maximizer is used. Here
\(\widehat{\mathcal R}_n(\lambda,B,\tau)\) denotes the shifted-bootstrap estimate of the rejection probability against the local direction \(B\), using the null critical value corresponding to the same penalty \(\lambda\).

In computation, \(\Lambda\) may be searched directly or approximated by a finite
grid \(\Lambda_G\). A fixed grid targets the grid maximizer. It approximates the
continuous population maximizer when the grid mesh tends to zero and the
population criterion is continuous. With finitely many bootstrap draws, the
criterion based on estimated rejection frequencies may be stepwise. We reuse the same
multiplier draws across penalties and directions to reduce simulation noise.
Algorithm~\ref{alg:adaptive-known} summarizes the procedure.

\begin{algorithm}[htbp]
\caption{Adaptive penalty selection without pre-estimation}
\label{alg:adaptive-known}
\begin{algorithmic}[1]
\Require Data \(\mathcal Z_n\), \(\Pi\), \(\Gamma\), search set
\(\mathcal S=\Lambda\) or \(\Lambda_G\), finite \(\mathcal B\), \(R\), \(\tau\),
and density-smoothing choices \((K,h,J,[a_1,a_2])\)
\State Estimate the conditional density by \eqref{eq:density_est} and, for each
\(B\in\mathcal B\), compute the sample drift
\(\widehat d_{n,B}=n^{-1}\sum_t e^{\gamma^\top I_{t-1}}
\widehat f_h(I_{t-1},\bar\theta(\alpha))
\nabla_\theta m(I_{t-1},\bar\theta(\alpha))^\top B(\alpha)\)
used in \eqref{eq:shifted_M}.
\State Draw \(R\) multiplier sequences
\(\Omega^{(r)}=\{\omega_t^{(r)}\}_{t=1}^n\) and reuse them throughout.
\For{each penalty \(\lambda\) requested by the search routine}
    \For{\(r=1,\ldots,R\)}
        \State Compute the null bootstrap statistic
        \(T_{n*,\Pi}^{(r)}(\lambda)\).
    \EndFor
    \State Set \(c_{n,1-\tau,\Pi}^*(\lambda)\) to their empirical
    \((1-\tau)\)-quantile.
    \For{each \(B\in\mathcal B\)}
        \State Compute \(T_{n*,B,\Pi}^{(r)}(\lambda)\), \(r=1,\ldots,R\),
        using the shifted bootstrap process.
        \State Set
        \(\widehat{\mathcal R}_n(\lambda,B,\tau)
        =R^{-1}\sum_{r=1}^R
        \mathbf 1\{T_{n*,B,\Pi}^{(r)}(\lambda)>
        c_{n,1-\tau,\Pi}^*(\lambda)\}\).
    \EndFor
    \State Set \(\widehat W_n(\lambda)=
    \min_{B\in\mathcal B}\widehat{\mathcal R}_n(\lambda,B,\tau)\).
\EndFor
\State On \(\Lambda_G\), select the smallest maximizer of \(\widehat W_n\).
For continuous search, use a measurable \(\eta_n\)-approximate maximizer,
\(\eta_n=o_p(1)\).
\State Using the original sample, compute \(T_{n,\Pi}(\widehat\lambda)\) and reject when it exceeds
\(c_{n,1-\tau,\Pi}^*(\widehat\lambda)\).
\end{algorithmic}
\end{algorithm}

For a grid search, the selection line in Algorithm~\ref{alg:adaptive-known}
reduces to
\begin{equation}
    \widehat{\lambda}_G
    =
    \arg\max_{\lambda \in \Lambda_G}
    \min_{B \in \mathcal{B}}
    \widehat{\mathcal{R}}_n(\lambda, B, \tau),
    \label{eq:maxmin-grid}
\end{equation}
again choosing the smallest maximizer. For a fixed finite grid, this rule targets
the grid optimum rather than the continuous optimum. Alternatively, the criterion
may be optimized directly over \([0,\bar\lambda]\) using a derivative-free method
such as particle swarm optimization (PSO).

The test rejects when \(T_{n,\Pi}(\widehat\lambda)\) exceeds
\(c_{n,1-\tau,\Pi}^*(\widehat\lambda)\). Here \(\widehat\lambda\) denotes the
continuous selector, while \(\widehat\lambda_G\) denotes the finite-grid
selector. The theoretical critical values are ideal conditional bootstrap
quantiles. Monte Carlo quantiles may be used when their simulation error is
negligible uniformly over the search set.

A sufficient condition for post-selection validity is that the selected
penalty converges to a deterministic limit. This condition follows, for example,
when the estimated maximin criterion converges uniformly to a population
criterion with a unique maximizer. The following corollary states the
continuous-\(\Lambda\) result and also applies to the plug-in statistic in
Section~\ref{sec:pre_estimation}.

\begin{corollary}[Validity for adaptive penalty selection]
\label{cor:adaptive-lambda_main}
Let \(\Lambda=[0,\bar\lambda]\) be compact. Suppose the relevant bootstrap approximation in Theorem~\ref{thm:bootstrap_main} holds uniformly over \(\lambda\in\Lambda\), and let \(c_{n,1-\tau,\Pi}^*(\lambda)\) denote the ideal conditional bootstrap critical value, or a Monte Carlo approximation with simulation error \(o_p(1)\) uniformly in \(\lambda\). Let \(c_{1-\tau,\Pi}(\lambda)\) be the corresponding \((1-\tau)\)-quantile of the null limit statistic. If
\[
    \widehat\lambda\xrightarrow{p}\lambda_0
    \qquad\text{for some deterministic }\lambda_0\in\Lambda,
\]
and the limiting distribution is continuous at \(c_{1-\tau,\Pi}(\lambda_0)\) with \(\lambda\mapsto c_{1-\tau,\Pi}(\lambda)\) continuous at \(\lambda_0\), then under \(H_0\),
\[
P\{T_{n,\Pi}(\widehat\lambda)>c_{n,1-\tau,\Pi}^*(\widehat\lambda)\}
\to\tau.
\]
The same conclusion holds for the pre-estimated statistic
\(\widehat T_{n,\Pi}\) under Theorem~\ref{thm:boot-pre_main}, with its
corresponding plug-in bootstrap critical value.
\end{corollary}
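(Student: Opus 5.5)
The plan is to reduce the adaptive-selection problem to a fixed-penalty problem at $\lambda_0$, in three steps: joint weak convergence of the statistic over $\lambda$, uniform convergence of the bootstrap critical values, and a final continuity argument at $\lambda_0$.

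\emph{Step 1: joint convergence of the statistic.} Write $\mathbb Q_n(\alpha,\gamma)=\sqrt n M_n(\alpha,\gamma)/s_n(\alpha,\gamma)$. By Theorem~\ref{thm:null_main}(1)--(2) and Slutsky's lemma in $\ell^\infty(\mathcal T\times\Gamma)$, $\mathbb Q_n\rightsquigarrow Q$. Here $\sigma$ is bounded away from zero because $\mathcal T$ is compact in $(0,1)$ and $\Gamma$ is compact.

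Define the map $\Psi:\ell^\infty(\mathcal T\times\Gamma)\to\ell^\infty(\Lambda)$ by
\[
\Psi(z)(\lambda)=\sup_{\gamma\in\Gamma}\Big[\int_{\mathcal T} z^2(\alpha,\gamma)\,d\Pi(\alpha)-\lambda\|\gamma\|_1\Big].
\]
For bounded $z$ this map is Lipschitz in the sup norm. Taking $\Pi$-measurability of sections as supplied by Lemma~\ref{lem:gaussian-quantile-crossing}, we also have
\[
|\Psi(z)(\lambda)-\Psi(z)(\lambda')|\le |\lambda-\lambda'|\sup_{\Gamma}\|\gamma\|_1 .
\]
The continuous mapping theorem then gives $T_{n,\Pi}(\cdot)\rightsquigarrow T_\Pi(\cdot)$ in $\ell^\infty(\Lambda)$. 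The limit has equicontinuous paths in $\lambda$, uniformly in the realization.

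Combining this with $\widehat\lambda\xrightarrow{p}\lambda_0$, we get $|T_{n,\Pi}(\widehat\lambda)-T_{n,\Pi}(\lambda_0)|\le C|\widehat\lambda-\lambda_0|=o_p(1)$ deterministically. Hence $T_{n,\Pi}(\widehat\lambda)\rightsquigarrow T_\Pi(\lambda_0)$. This step needs no joint-convergence assumption between $\widehat\lambda$ and the data.

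\emph{Step 2: uniform convergence of the critical values.} I will show
\[
\sup_{\lambda\in\Lambda}|c^*_{n,1-\tau,\Pi}(\lambda)-c_{1-\tau,\Pi}(\lambda)|\to0
\]
in probability, or at least convergence at $\widehat\lambda$. The hypothesis gives the bootstrap approximation uniformly in $\lambda$, which I read as conditional weak convergence of $T_{n*,\Pi}(\cdot)$ to $T_\Pi(\cdot)$ in $\ell^\infty(\Lambda)$. The same Lipschitz argument as in Step 1 applies to the bootstrap statistic, since the same functional $\Psi$ is applied to the bootstrap process.

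Convergence of quantiles at a point where the limiting distribution function is continuous and strictly increasing is standard. Applying it at the fixed $\lambda_0$ yields $c^*_{n,1-\tau,\Pi}(\lambda_0)\xrightarrow{p}c_{1-\tau,\Pi}(\lambda_0)$.

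To pass from $\lambda_0$ to $\widehat\lambda$, use the pathwise bound $|T_{n*,\Pi}(\lambda)-T_{n*,\Pi}(\lambda_0)|\le C|\lambda-\lambda_0|$. This bound holds for every multiplier draw, so it carries over to quantiles:
\[
|c^*_{n}(\lambda)-c^*_n(\lambda_0)|\le C|\lambda-\lambda_0|.
\]
Therefore $c^*_{n}(\widehat\lambda)\xrightarrow{p}c_{1-\tau,\Pi}(\lambda_0)$. Any Monte Carlo error that is $o_p(1)$ uniformly in $\lambda$ is absorbed here. The assumed continuity of $\lambda\mapsto c_{1-\tau,\Pi}(\lambda)$ at $\lambda_0$ is consistent with this route, which in effect delivers it.

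\emph{Step 3: conclusion.} By Slutsky, $T_{n,\Pi}(\widehat\lambda)-c^*_n(\widehat\lambda)\rightsquigarrow T_\Pi(\lambda_0)-c_{1-\tau,\Pi}(\lambda_0)$. The limiting distribution is continuous at $c_{1-\tau,\Pi}(\lambda_0)$, so the portmanteau theorem gives a rejection probability tending to $P\{T_\Pi(\lambda_0)>c_{1-\tau,\Pi}(\lambda_0)\}=\tau$.

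The pre-estimated case follows the same steps. The process $\mathbb Q_n$ is replaced by its plug-in analogue and its limit from Theorem~\ref{thm:boot-pre_main}, and the same functional $\Psi$ applies.

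The main obstacle is Step 2's quantile convergence at a random index. Continuity of the distribution at the quantile alone does not pin down the quantile; we also need the distribution function to be strictly increasing there, or we must argue through the rejection event directly. I would first try the direct route: for small $\epsilon>0$, bound the rejection probability from above by $P\{T_{n,\Pi}(\lambda_0)>c_{1-\tau,\Pi}(\lambda_0)-\epsilon\}$ and from below by the corresponding event with $+\epsilon$. This uses the Lipschitz bounds together with $P^*$-quantile sandwiching from the conditional weak convergence. Letting $\epsilon\downarrow0$ then invokes only continuity of the limiting distribution at $c_{1-\tau,\Pi}(\lambda_0)$, which is exactly the stated hypothesis.
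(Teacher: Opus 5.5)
Your architecture is the paper's. The deterministic bound $|T(\lambda)-T(\lambda')|\le C_\Gamma|\lambda-\lambda'|$, with $C_\Gamma=\sup_\Gamma\|\gamma\|_1$, applies to the sample statistic, to every bootstrap draw and hence to the conditional quantiles, and this reduces everything to the fixed value $\lambda_0$. Your Step~1 functional limit in $\ell^\infty(\Lambda)$ is not needed for that. The gap is the one you flag, and your proposed workaround does not close it.

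Write $c:=c_{1-\tau,\Pi}(\lambda_0)$ and let $F_{\lambda_0}$ be the distribution function of $T_\Pi(\lambda_0)$. The lower half of your sandwich needs $P\{c^*_{n,1-\tau,\Pi}(\lambda_0)\le c+\varepsilon\}\to1$. Conditional weak convergence alone guarantees this only when $F_{\lambda_0}(x)>1-\tau$ at some continuity point $x\in(c,c+\varepsilon)$, that is, strict crossing to the right of $c$. The left half, $c^*_n\ge c-\varepsilon$, is automatic from the definition of $c$ as a left-continuous inverse. Continuity at $c$ is then what makes the upper bound tend to $\tau$, but neither fact gives the right half.

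Without right crossing the conclusion can fail. Take a limit law $F$ with positive density on $(-\infty,c)$, continuous at $c=F^{-1}(1-\tau)$, flat on $[c,c+1)$, and with an atom of mass $m\in(0,\tau)$ at $c+1$. Let $T_n\sim F$, and obtain the bootstrap laws from $F$ by moving mass $1/n$ from $(c-1,c)$ into that atom. These laws converge weakly to $F$, yet their $(1-\tau)$-quantile is $c+1$, so the rejection probability is $\tau-m$. Your ``direct route'' therefore silently assumes the quantile consistency it was meant to avoid.

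The missing ingredient is the Gaussian structure of the limit, which the paper uses through Lemma~\ref{lem:gaussian-quantile-crossing}. Since $Q=\mathcal M/\sigma$ is a tight centered Gaussian element, its support is a closed linear subspace (from $(Q_1+Q_2)/\sqrt2\stackrel d=Q$ and $-Q\stackrel d=Q$), hence connected. The aggregation functional $\phi_{\lambda_0}$ is continuous, so the support of $T_\Pi(\lambda_0)$ is an interval. Continuity at $c$ together with $0<1-\tau<1$ places $c$ in the interior of that interval, so $F_{\lambda_0}(c-\varepsilon)<1-\tau<F_{\lambda_0}(c+\varepsilon)$ for every $\varepsilon>0$.

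With this strict crossing, $c^*_{n,1-\tau,\Pi}(\lambda_0)\xrightarrow{p}c$. Your Lipschitz transfer, plus the uniform Monte Carlo error, then gives $c^*_{n,1-\tau,\Pi}(\widehat\lambda)\xrightarrow{p}c$, and your Step~3 goes through. For the plug-in statistic the same argument applies to $\widetilde{\mathcal M}/\tilde\sigma$, using that $\tilde\sigma$ is bounded away from zero under Assumption~S4(iv).
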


The preceding corollary establishes size control once the selector stabilizes.
The next theorem gives sufficient conditions for this convergence and compares
maximin local power with the unpenalized test because the search set contains
\(\lambda=0\). Unlike Theorem~\ref{thm:power-enhance_main}, it compares rejection
probabilities using the null critical value corresponding to each penalty.

For \(\lambda\in\Lambda\), let
\[
    c_{1-\tau,\Pi}(\lambda)
    :=q_{1-\tau}\{T_\Pi(\lambda;0)\}
\]
and define the local asymptotic rejection probability
\[
    \mathcal R(\lambda,B,\tau)
    :=P\{T_\Pi(\lambda;B)>c_{1-\tau,\Pi}(\lambda)\}.
\]

\begin{theorem}[Maximin local power of the adaptive selector]
\label{thm:maximin-no-loss_main}
Let \(\Lambda=[0,\bar\lambda]\) and suppose \(0\in\Lambda\). Let
\(\mathcal B=\{B_1,\ldots,B_J\}\) be a finite, nonempty collection of local
directions. Suppose the conclusions of Theorems~\ref{thm:bootstrap_main} and
\ref{thm:MB-local_main} hold for every fixed \(\lambda\in\Lambda\) and every
\(B\in\mathcal B\), with stochastic equicontinuity of the processes
holding uniformly over this finite collection. Suppose also that
\[
\lim_{\varepsilon\downarrow0}
\max_{B\in\mathcal B\cup\{0\}}
\sup_{\lambda\in\Lambda}
P\!\left(
\left|T_\Pi(\lambda;B)-c_{1-\tau,\Pi}(\lambda)\right|
\le\varepsilon
\right)=0.
\]
Assume that the Monte Carlo errors in the estimated critical values and
rejection probabilities are \(o_p(1)\) uniformly in \(\lambda\), and define
\[
W(\lambda):=\min_{B\in\mathcal B}\mathcal R(\lambda,B,\tau).
\]
If \(W\) is continuous on \(\Lambda\) and has a unique maximizer
\(\lambda^*\), then
\[
\sup_{\lambda\in\Lambda,\,B\in\mathcal B}
\left|
\widehat{\mathcal R}_n(\lambda,B,\tau)
-\mathcal R(\lambda,B,\tau)
\right|\xrightarrow{p}0,
\qquad
\widehat\lambda\xrightarrow{p}\lambda^*.
\]
Moreover,
\[
\min_{B\in\mathcal B}\mathcal R(\lambda^*,B,\tau)
=\max_{\lambda\in\Lambda}
\min_{B\in\mathcal B}\mathcal R(\lambda,B,\tau)
\ge
\min_{B\in\mathcal B}\mathcal R(0,B,\tau).
\]
For every \(B\in\mathcal B\), under the corresponding contiguous local
sequence,
\[
P_{n,B}\!\left\{
T_{n,\Pi}(\widehat\lambda)
>c_{n,1-\tau,\Pi}^*(\widehat\lambda)
\right\}
\longrightarrow
\mathcal R(\lambda^*,B,\tau).
\]
Consequently, if \(\mathcal B=\{B^*\}\) is a singleton, then
\(\mathcal R(\lambda^*,B^*,\tau)\ge
\mathcal R(0,B^*,\tau)\). For a collection containing more than one direction,
the guarantee concerns worst-case local power and need not hold direction by
direction.
\end{theorem}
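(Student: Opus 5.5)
The argument has three stages: uniform consistency of the estimated rejection probabilities, argmax consistency of the selector, and a post-selection limit for the rejection probability. Each stage reduces to the fixed-$\lambda$ results already established plus a Lipschitz-in-$\lambda$ argument.

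\textbf{Stage 1: the penalized functional is Lipschitz in $\lambda$.} For any realization, the map $\lambda\mapsto\sup_\gamma[J(\gamma)-\lambda\|\gamma\|_1]$ is convex, nonincreasing, and $\bar\Gamma$-Lipschitz, where $\bar\Gamma:=\sup_{\gamma\in\Gamma}\|\gamma\|_1<\infty$ by compactness of $\Gamma$. This holds simultaneously for $T_{n*,\Pi}$, $T_{n*,B,\Pi}$, $T_{n,\Pi}$ and the limits $T_\Pi(\lambda;B)$.

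\textbf{Stage 2: uniform consistency of $\widehat{\mathcal R}_n$.}
\begin{enumerate}
\item Because $\mathcal B$ is finite and stochastic equicontinuity is uniform over it, Theorems~\ref{thm:bootstrap_main} and~\ref{thm:MB-local_main} give joint conditional weak convergence of the studentized null and shifted bootstrap processes in $\ell^\infty(\mathcal T\times\Gamma)$. These processes share the same multipliers.
\item By the continuous mapping theorem, this transfers to weak convergence of the processes $\lambda\mapsto T_{n*,\Pi}(\lambda)$ and $\lambda\mapsto T_{n*,B,\Pi}(\lambda)$ in $\ell^\infty(\Lambda)$. The penalized-supremum functional is Lipschitz from $\ell^\infty(\mathcal T\times\Gamma)$ into $\ell^\infty(\Lambda)$, because $\Pi$ is a probability measure and the squared process is bounded on compacts.
\item The quantiles are then uniformly consistent: $\sup_\lambda|c^*_{n,1-\tau,\Pi}(\lambda)-c_{1-\tau,\Pi}(\lambda)|\xrightarrow{p}0$. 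The proof inverts the conditional distribution functions. Pointwise continuity of the limit distribution at the quantile follows from the anti-concentration hypothesis with $B=0$. The Lipschitz property from Stage~1 together with a finite $\varepsilon$-net of $\Lambda$ upgrades pointwise convergence to uniform convergence, which also shows that $c_{1-\tau,\Pi}(\cdot)$ is continuous.
\item For uniform consistency of $\widehat{\mathcal R}_n$, sandwich the indicator $\mathbf 1\{T_{n*,B,\Pi}(\lambda)>c^*_n(\lambda)\}$ between indicators with thresholds shifted by $\pm\varepsilon$. On a finite net of $\Lambda$, conditional weak convergence controls the probabilities of the shifted events. The hypothesis $\lim_{\varepsilon\downarrow 0}\max_B\sup_\lambda P(|T_\Pi(\lambda;B)-c(\lambda)|\le\varepsilon)=0$ makes the sandwich gap vanish uniformly in $\lambda$. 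Adding the $o_p(1)$ Monte Carlo errors yields $\sup_{\lambda,B}|\widehat{\mathcal R}_n-\mathcal R|\xrightarrow{p}0$.
\end{enumerate}

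\textbf{Stage 3: argmax consistency and the maximin inequality.}
\begin{enumerate}
\item A minimum over finitely many $B$ is $1$-Lipschitz in the sup norm, so $\sup_\lambda|\widehat W_n-W|\xrightarrow{p}0$.
\item Since $W$ is continuous on the compact set $\Lambda$ with a unique maximizer $\lambda^*$, for each $\delta>0$ we have $\sup_{|\lambda-\lambda^*|\ge\delta}W<W(\lambda^*)$.
\item The standard argmax argument (van der Vaart, Thm 5.7 type) then applies. Combining the uniform convergence with the $\eta_n$-approximate maximization in \eqref{eq:maxmin} gives $\widehat\lambda\xrightarrow{p}\lambda^*$.
\item The displayed inequality $W(\lambda^*)\ge W(0)$ is immediate from $0\in\Lambda$.
\end{enumerate}

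\textbf{Stage 4: the post-selection limit under $P_{n,B}$.}
\begin{enumerate}
\item Under the contiguous sequence, \eqref{eq:limit_local} holds as process convergence in $\ell^\infty(\Lambda)$ by the same Lipschitz argument as in Stage~2. Combined with $\widehat\lambda\xrightarrow{p}\lambda^*$ and equicontinuity, this gives $T_{n,\Pi}(\widehat\lambda)\rightsquigarrow T_\Pi(\lambda^*;B)$.
\item For the critical value, write $c^*_n(\widehat\lambda)=c(\lambda^*)+o_p(1)$. This uses uniform consistency of $c^*_n$ and continuity of $c(\cdot)$. The uniform consistency of $c^*_n$ must hold under $P_{n,B}$, not only under $H_0$. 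This follows either by contiguity (Le Cam's third lemma), or directly, since Theorem~\ref{thm:MB-local_main} covers actual local sequences and the null bootstrap is the case with candidate shift $0$.
\item Slutsky's lemma and continuity of the law of $T_\Pi(\lambda^*;B)$ at $c(\lambda^*)$ then give the limit $\mathcal R(\lambda^*,B,\tau)$. The singleton case follows immediately.
\end{enumerate}

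\textbf{Main obstacle.} The hardest step is the upgrade from fixed-$\lambda$ to uniform-in-$\lambda$ conditional convergence of the bootstrap quantiles and rejection frequencies, which is carried out in Stage~2. The plan relies on the Lipschitz-in-$\lambda$ structure from Stage~1 plus the uniform anti-concentration hypothesis, used with a finite net and a bracketing argument. Some care is also needed to keep the conditional-in-probability statements measurable.
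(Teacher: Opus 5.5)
There is a genuine gap in Stage~2.3. The rest of your plan tracks the paper's proof closely: the pathwise $C_\Gamma$-Lipschitz bound in $\lambda$, finite nets of $\Lambda$, the anti-concentration sandwich for the rejection frequencies, the well-separated-maximum argument for $\widehat\lambda\xrightarrow{p}\lambda^*$, and the contiguity transfer.

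Inverting the conditional distribution functions gives $c^*_{n,1-\tau,\Pi}(\lambda)\xrightarrow{p}c_{1-\tau,\Pi}(\lambda)$ only if the null limit CDF $F_\lambda$ of $T_\Pi(\lambda;0)$ crosses the level $1-\tau$ \emph{strictly}. That is, you need $F_\lambda(c_{1-\tau,\Pi}(\lambda)+\varepsilon)>1-\tau$ for every $\varepsilon>0$. The $B=0$ case of the anti-concentration hypothesis gives only that $F_\lambda$ has no atom at $c_{1-\tau,\Pi}(\lambda)$. It does not exclude a flat stretch of $F_\lambda$ on some $[c_{1-\tau,\Pi}(\lambda),c_{1-\tau,\Pi}(\lambda)+\delta]$, and the hypothesis is fully compatible with such a stretch. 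If one exists, the bootstrap quantile can wander over it, so uniform quantile consistency fails. Your sandwich in Stage~2.4 and the Slutsky step in Stage~4 then fail as well, because $T_\Pi(\lambda;B)$ with $B\neq0$ may put positive mass on that interval. Nothing in Stages~1--2 (weak convergence, Lipschitz interpolation, nets) supplies this property.

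The missing idea is to exploit the Gaussian structure, which the paper does through Lemma~\ref{lem:gaussian-quantile-crossing}:
\begin{enumerate}
\item The support of the tight centered Gaussian element $\mathcal M/\sigma$ is a closed linear subspace, hence connected.
\item The penalized integral--supremum functional is continuous, so the support of $T_\Pi(\lambda;0)$ is an interval, and every open subinterval of it has positive probability.
\item There is no atom at the quantile and $0<1-\tau<1$. Hence $c_{1-\tau,\Pi}(\lambda)$ lies in the interior of the support, which gives strict crossing.
\end{enumerate}
You need this only at the finitely many net points. Your Lipschitz interpolation then yields the uniform statement, using that the limit quantiles inherit the $C_\Gamma$-Lipschitz bound directly from the limit statistics.

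Two smaller points. Transferring $o_p(1)$ statements to $P_{n,B}$ uses Le Cam's first lemma, not the third. You should also state explicitly that $\widehat\lambda\to\lambda^*$ in $P_{n,B}$-probability before invoking the Lipschitz bound in Stage~4.
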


The anti-concentration condition requires the probability that
\(T_\Pi(\lambda;B)\) lies near \(c_{1-\tau,\Pi}(\lambda)\) to vanish
uniformly over \(\lambda\) and \(B\) as the neighborhood shrinks. It is
satisfied, for example, if the distributions of \(T_\Pi(\lambda;B)\) have
densities that are uniformly bounded in a neighborhood of the corresponding
critical values. On a fixed finite penalty grid, pointwise continuity at each
critical value is sufficient.

\begin{remark}[Penalty selection on a fixed grid]
\label{rem:grid-no-loss_main}
Let $\Lambda_G$ be a fixed finite grid containing zero, and let
$\lambda_G^*$ maximize $W$ over $\Lambda_G$. Since zero is a candidate,
\[
W(\lambda_G^*)=\max_{\lambda\in\Lambda_G}W(\lambda)\ge W(0).
\]
If $\widehat W_n$ converges uniformly to $W$ on $\Lambda_G$ and
$\lambda_G^*$ is unique, the selected grid value converges to
$\lambda_G^*$. Thus its asymptotic minimum local power over
$\mathcal B$ is at least that of the unpenalized test. This comparison
does not require the grid to approximate the optimum over $\Lambda$.
\end{remark}

\section{Inference with Pre-Estimated Parameters}
\label{sec:pre_estimation}

We now allow the parameter vector to contain unknown nuisance components that
are estimated before the test statistic is constructed. Write
$\theta(\alpha)=(\theta_1(\alpha)^\top,\theta_2(\alpha)^\top)^\top$, where
$\theta_1(\alpha)$ is the component of interest and $\theta_2(\alpha)$ is a
nuisance parameter. We test the first component uniformly over the quantile
index:
\begin{equation}
    H_0: \theta_{10}(\alpha) = \bar{\theta}_1(\alpha), \quad \forall \alpha \in \mathcal{T},
\end{equation}
while treating the true nuisance parameter $\theta_{20}(\alpha)$ as unknown. Under this null hypothesis, we evaluate the conditional moment restriction by replacing $\theta_{20}(\alpha)$ with a preliminary $\sqrt{n}$-consistent estimator.

\subsection{Test Statistic with Nuisance Parameters}

Let
\[
m_{t\alpha}(\theta_1,\theta_2)
:=m(I_{t-1},\theta_1(\alpha),\theta_2(\alpha))
\]
denote the conditional quantile function, and write its nuisance gradient as
\[
g_{t\alpha}(\theta_1,\theta_2)
:=\nabla_{\theta_2}m(I_{t-1},\theta_1(\alpha),\theta_2(\alpha)).
\]

We estimate $\theta_{20}(\alpha)$ under the null restriction
$\theta_{10}(\alpha)=\bar{\theta}_1(\alpha)$. The theory requires the preliminary
estimator $\hat{\theta}_2(\alpha)$ to be $\sqrt n$-consistent and to admit a
uniform linear expansion in terms of its influence function. As an example, consider the
restricted moment estimator that solves
\begin{equation}
    \frac{1}{n} \sum_{t=1}^n g_{t\alpha}(\bar{\theta}_1(\alpha), \hat{\theta}_2(\alpha)) \, 
    \Big[ \mathbf{1}\big\{Y_t \le m_{t\alpha}(\bar{\theta}_1(\alpha), \hat{\theta}_2(\alpha))\big\} - \alpha \Big] = 0,
    \label{eq:theta2-est}
\end{equation}
where $g_{t\alpha}$ is the gradient defined above, and the equation should be
interpreted as the subgradient/first-order condition of the restricted quantile
objective. An alternative estimator may be used if it admits the required
uniform linear expansion and has a consistently estimable influence function.
The corrected-score conditions in \assPreEst\ must also hold.

We impose a uniform linear representation for the restricted quantile
estimator. Such representations are standard for quantile regression. See
\citet{gutenbrunner1992}, \citet{mukherjee1999}, and the dynamic-quantile
framework of \citet{escanciano2010}. Specifically, we assume that
$\hat\theta_2(\alpha)$ has the following expansion uniformly over $\mathcal T$:
\begin{equation}
    \sup_{\alpha \in \mathcal{T}} \left\| \sqrt{n}\big(\hat{\theta}_2(\alpha) - \theta_{20}(\alpha)\big) - \frac{1}{\sqrt{n}}\sum_{t=1}^n l_{t,\alpha}(\bar{\theta}_1, \theta_{20}) \right\| = o_p(1).
    \label{eq:theta2-IF}
\end{equation}
For the restricted quantile estimator in \eqref{eq:theta2-est}, the familiar
quantile-score expansion gives the influence function
\begin{equation}
    l_{t,\alpha} := -L_\alpha^{-1} \, g_{t\alpha}(\bar{\theta}_1, \theta_{20}) \, \Big[\mathbf{1}\{Y_t \le m_{t\alpha}(\bar{\theta}_1, \theta_{20})\} - \alpha\Big],
    \label{eq:IF-def}
\end{equation}
where $L_\alpha := E[g_{t\alpha}(\bar{\theta}_1, \theta_{20}) g_{t\alpha}^\top(\bar{\theta}_1, \theta_{20}) f_{Y_t|I_{t-1}}(m_{t\alpha}(\bar{\theta}_1, \theta_{20}))]$ is the expected Jacobian. At the true parameters, the influence function is a martingale difference under the maintained conditions. Hence it has mean zero and is orthogonal across time. In particular,
$E[l_{t,\alpha}\{\mathbf 1(Y_s\le m_{s\alpha})-\alpha\}]=0$ for
$t\ne s$.
In the linear quantile regression examples below, $g_{t\alpha}$ is the vector
of nuisance regressors, and $L_\alpha$ is their expected outer product weighted
by the conditional density. For the plug-in statistic, centering makes the standardized process
undefined at $\gamma=0$. We therefore take $\Gamma$ to be compact
and separated from zero.

Let $\bar{e}_n(\gamma) := \frac{1}{n}\sum_{s=1}^n \exp(\gamma^\top I_{s-1})$ denote the sample mean of the weights, and define the \textbf{demeaned weight function} by
\begin{equation}
    w_{t,n}(\gamma) := \exp(\gamma^\top I_{t-1}) - \bar{e}_n(\gamma).
    \label{eq:weight-def}
\end{equation}
The \textbf{plug-in empirical process} is then constructed using the estimated nuisance parameters and the demeaned weights:
\begin{equation}
    \widehat M_n(\alpha, \gamma) := \frac{1}{n}\sum_{t=1}^n w_{t,n}(\gamma) \Big[ \mathbf{1}\{Y_t\le m_{t\alpha}(\bar{\theta}_1, \hat{\theta}_2)\} - \alpha \Big].
    \label{eq:Mhat-def}
\end{equation}

By stochastic equicontinuity and a Taylor expansion of the smoothed moment condition, the plug-in process admits the following asymptotic expansion under $H_0$.
Let
\[
w_t(\gamma)
:=
\exp(\gamma^\top I_{t-1})
-
E\!\left[\exp(\gamma^\top I_{t-1})\right]
\]
denote the population-centered weight, and define the corresponding centered-weight null process
\[
M_n^c(\alpha,\gamma)
:=
\frac1n\sum_{t=1}^n
w_t(\gamma)
\Big[
\mathbf 1\{Y_t\le m_{t\alpha}(\theta_{10},\theta_{20})\}
-
\alpha
\Big].
\]
Furthermore, define
\[
A_2(\alpha,\gamma)
:=
E\!\left[
w_t(\gamma)
f_{Y_t\mid I_{t-1}}
\!\left(
m_{t\alpha}(\theta_{10},\theta_{20})
\right)
\nabla_{\theta_2}
m_{t\alpha}(\theta_{10},\theta_{20})
\right].
\]
Under $H_0$ and Assumptions~S1--S2 and S4--S5, we have, uniformly over
$(\alpha,\gamma)\in\mathcal T\times\Gamma$,
\begin{equation}
\sqrt{n}\,\widehat M_n(\alpha,\gamma)
=
\sqrt{n}\,M_n^c(\alpha,\gamma)
+
A_2(\alpha,\gamma)^\top
\sqrt{n}\{\widehat\theta_2(\alpha)-\theta_{20}(\alpha)\}
+
o_p(1).
\label{eq:Mhat-expansion}
\end{equation}
This expansion decomposes the plug-in process into a centered-weight Gaussian component and an additional term induced by nuisance-parameter estimation.

Define the population corrected score
\[
\Psi_t(\alpha,\gamma)
:=
w_t(\gamma)
\big[\mathbf 1\{Y_t\le m_{t\alpha}(\theta_{10},\theta_{20})\}-\alpha\big]
+l_{t,\alpha}^{\top}A_2(\alpha,\gamma).
\]
Under \assPreEst(iii), this score is serially uncorrelated, so its long-run
variance is
\(\tilde\sigma^2(\alpha,\gamma)=E[\Psi_t(\alpha,\gamma)^2]\).
We estimate this variance using corrected scores. Here and below,
\[
\widehat f_{Y_t\mid I_{t-1}}
\!\left(m_{t\alpha}(\bar\theta_1,\widehat\theta_2)\right)
\]
denotes the fitted-quantile density estimator \(\widehat f_h\) from
Section~\ref{sec:bootstrap_power}, evaluated along the restricted plug-in
quantile curve \(u\mapsto m_{tu}(\bar\theta_1,\widehat\theta_2)\). The
corresponding sample Jacobian is
\[
\widehat L_\alpha
:=
\frac1n\sum_{t=1}^n
\widehat f_{Y_t\mid I_{t-1}}
\!\left(m_{t\alpha}(\bar\theta_1,\widehat\theta_2)\right)
 g_{t\alpha}(\bar\theta_1,\widehat\theta_2)
 g_{t\alpha}(\bar\theta_1,\widehat\theta_2)^\top .
\]
Using this density estimate and Jacobian, define
\[
    \widehat l_{t,\alpha} := - \widehat L_\alpha^{-1} \, g_{t\alpha}(\bar{\theta}_1, \hat{\theta}_2) \, \Big[\mathbf{1}\{Y_t \le m_{t\alpha}(\bar{\theta}_1, \hat{\theta}_2)\} - \alpha\Big],
\]
\[
    \widehat A_{2,n}(\alpha,\gamma) := \frac{1}{n}\sum_{t=1}^n \widehat f_{Y_t\mid I_{t-1}}(m_{t\alpha}(\bar{\theta}_1,\hat{\theta}_2))\, w_{t,n}(\gamma)\, \nabla_{\theta_2} m\!\big(I_{t-1},\bar{\theta}_1(\alpha),\hat\theta_2(\alpha)\big).
\]

Define the \textbf{corrected variance estimator} as the sample second moment of
the estimated corrected scores:
\begin{equation}
\begin{split}
    \widehat s_n^2(\alpha, \gamma) := \frac{1}{n} \sum_{t=1}^n \Bigg( &
        w_{t,n}(\gamma) \Big[ \mathbf{1}\{Y_t\le m_{t\alpha}(\bar{\theta}_1, \hat{\theta}_2)\} - \alpha \Big] \\
        & + \widehat l_{t,\alpha}^\top \widehat A_{2,n}(\alpha,\gamma)
    \Bigg)^2.
\end{split}
\label{eq:shat-def}
\end{equation}

We define the standardized plug-in process as
\[
\widehat Q_n(\alpha, \gamma) := \frac{\sqrt{n}\,|\widehat M_n(\alpha, \gamma)|}{\widehat s_n(\alpha, \gamma)}.
\]
Finally, the \textbf{penalized plug-in test statistic} uses the
        same CvM--KS aggregation scheme:
\begin{equation}
    \widehat T_{n,\Pi}(\lambda)
    :=
    \sup_{\gamma \in \Gamma}
    \left[
        \int_{\mathcal T}
        \widehat Q_n^2(\alpha,\gamma)
        d\Pi(\alpha)
        -
        \lambda\|\gamma\|_1
    \right].
    \label{eq:That-def}
\end{equation}

The next theorem gives the null limit.

\begin{theorem}[Pre-estimation null limit]
\label{thm:null-pre_main}
Under $H_0$ and Assumptions~S1--S5, the following results hold:
\begin{enumerate}
    \item The plug-in process converges in
    $\ell^\infty(\mathcal T\times\Gamma)$:
    \[
        \sqrt n\,\widehat M_n(\alpha,\gamma)
        \rightsquigarrow
        \widetilde{\mathcal M}(\alpha,\gamma)
        :=\mathcal M^c(\alpha,\gamma)
        +A_2(\alpha,\gamma)^\top\mathcal G_{\theta_2}(\alpha).
    \]
    Here $(\mathcal M^c,\mathcal G_{\theta_2})$ is the joint Gaussian limit of
    $\big(\sqrt n\,M_n^c,\,
    n^{-1/2}\sum_{t=1}^n l_{t,\cdot}\big)$.
    The covariance kernel of $\widetilde{\mathcal M}$ is
    $E[\Psi_t(\alpha_1,\gamma_1)\Psi_t(\alpha_2,\gamma_2)]$.

    \item The corrected variance estimator is uniformly consistent for the total asymptotic variance:
    \[
        \sup_{\alpha \in \mathcal{T}, \gamma \in \Gamma}
        \big|\, \widehat s_n^2(\alpha, \gamma)
        - \tilde{\sigma}^2(\alpha, \gamma) \,\big|
        \xrightarrow{p} 0,
    \]
    where
    $\tilde{\sigma}^2(\alpha,\gamma)
    =E[\Psi_t(\alpha,\gamma)^2]
    =\mathrm{Var}\{\widetilde{\mathcal M}(\alpha,\gamma)\}$.

    \item The test statistic therefore converges in distribution to
    \[
        \widehat T_{n,\Pi}(\lambda)
        \rightsquigarrow
        \sup_{\gamma \in \Gamma}
        \Bigg[
            \int_{\mathcal T}
            \left(
                \frac{\widetilde{\mathcal{M}}(\alpha,\gamma)}
                {\tilde{\sigma}(\alpha,\gamma)}
            \right)^2
            d\Pi(\alpha)
            -
            \lambda\|\gamma\|_1
        \Bigg].
    \]
\end{enumerate}
\end{theorem}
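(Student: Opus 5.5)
The plan is to prove the three parts in order, taking the expansion \eqref{eq:Mhat-expansion} as given, since it is stated to hold under $H_0$ and Assumptions~S1--S2 and S4--S5.

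\textbf{Part 1.} Insert the uniform linear representation \eqref{eq:theta2-IF} into \eqref{eq:Mhat-expansion}. Because $A_2$ is uniformly bounded on $\mathcal T\times\Gamma$ (moment conditions on the weights and gradient, with $\Gamma$ compact and the density bounded), this gives
\[
\sqrt n\,\widehat M_n(\alpha,\gamma)=\frac{1}{\sqrt n}\sum_{t=1}^n\Psi_t(\alpha,\gamma)+o_p(1)
\]
uniformly. It then suffices to prove a functional CLT for the corrected-score process, which I would do in two steps.
\begin{enumerate}
\item \emph{Finite-dimensional convergence.} Use a martingale-difference CLT. Both $w_t(\gamma)[\mathbf 1\{\cdot\}-\alpha]$ and $l_{t,\alpha}$ are MDS with respect to $\mathcal F_{t}$: the first by \assMixing(iii) and the $\mathcal F_{t-1}$-measurability of $w_t$, the second by \assPreEst(iii). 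The conditional variances converge by the ergodic/mixing LLN.
\item \emph{Tightness.} Argue exactly as for $\sqrt n M_n$ in Theorem~\ref{thm:null_main}. The class $\{w_t(\gamma)[\mathbf 1\{Y_t\le m_{t\alpha}\}-\alpha]\}$ differs from the one used there only by a deterministic recentering, and $\{l_{t,\alpha}^\top A_2(\alpha,\gamma)\}$ is a finite-dimensional product of a process in $\alpha$, already tight by \eqref{eq:theta2-IF}, with a deterministic function that is uniformly continuous in $(\alpha,\gamma)$.
\end{enumerate}
Joint convergence of $(\sqrt n M_n^c,n^{-1/2}\sum l_{t,\cdot})$ then gives the representation $\widetilde{\mathcal M}=\mathcal M^c+A_2^\top\mathcal G_{\theta_2}$. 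Since the scores are serially uncorrelated, the covariance reduces to $E[\Psi_t\Psi_t]$.

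\textbf{Part 2.} Decompose $\widehat s_n^2-\tilde\sigma^2$ into an oracle term and estimation errors.
\begin{enumerate}
\item \emph{Oracle term.} $n^{-1}\sum\Psi_t^2-E\Psi_t^2$ is $o_p(1)$ uniformly by a ULLN over the Glivenko--Cantelli class of squared scores, using the envelope moment bounds.
\item \emph{Estimation errors.} The errors come from four replacements: $w_t$ by $w_{t,n}$, $\theta_{20}$ by $\hat\theta_2$ inside the indicator, $L_\alpha$ by $\widehat L_\alpha$, and $A_2$ by $\widehat A_{2,n}$. For the weights, $\sup_\gamma|\bar e_n-E e|=o_p(1)$. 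For the indicator, the $\sqrt n$ rate of $\hat\theta_2$ combined with stochastic equicontinuity handles the shift. For $\widehat L_\alpha$ and $\widehat A_{2,n}$, use the uniform density consistency of Theorem~\ref{thm:density-consistency_main}, together with the observation after it that the plug-in error from evaluating along $m_{tu}(\bar\theta_1,\hat\theta_2)$ is $o_p(1)$ because $n^{-1/2}a_{J_n}^{-2}\to0$.
\item \emph{Inverse Jacobian.} $L_\alpha$ has eigenvalues bounded away from zero uniformly (from S4), so $\widehat L_\alpha^{-1}\to L_\alpha^{-1}$ uniformly.
\end{enumerate}
Combining these with Cauchy--Schwarz bounds on the cross terms in the expanded square gives the claim.

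\textbf{Part 3.} Apply the continuous mapping theorem. I need $\inf_{\alpha,\gamma}\tilde\sigma^2(\alpha,\gamma)>0$, which is why $\Gamma$ is separated from zero: at $\gamma=0$ the centered weight vanishes and the score degenerates. I would take this positivity from S4; otherwise I would establish it from nondegeneracy of $w_t$ given $I_{t-1}$ on compact $\Gamma$ away from zero. Given positivity, the ratio $\widehat Q_n^2\to(\widetilde{\mathcal M}/\tilde\sigma)^2$ in $\ell^\infty$. The map $x\mapsto\sup_\gamma[\int x(\alpha,\gamma)\,d\Pi-\lambda\|\gamma\|_1]$ is Lipschitz in the sup norm, and measurability of the $\alpha$-sections is handled as in Lemma~\ref{lem:gaussian-quantile-crossing}.

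\textbf{Main obstacle.} I expect the hardest step to be the uniform control in Part 2 of the terms involving $\hat\theta_2$ inside the indicator jointly with the estimated density along the plug-in curve. The indicator is nonsmooth, so I would first try a bracketing/equicontinuity argument: for $\|\theta_2-\theta_{20}\|\le Cn^{-1/2}$ uniformly over $\alpha$, bound the conditional probability that the two indicators differ by $\sup f\cdot\|g_{t\alpha}\|Cn^{-1/2}$, then multiply by the exponential-moment envelope.
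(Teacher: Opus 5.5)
Your architecture matches the paper's. You reduce $\sqrt n\,\widehat M_n$ to $n^{-1/2}\sum_t\Psi_t$, prove a joint FCLT, treat $\widehat s_n^2$ by an oracle ULLN plus plug-in errors, and finish by continuous mapping using $\inf\tilde\sigma^2>0$ from \assPreEst(iv). But Part~1 has a genuine gap: you take \eqref{eq:Mhat-expansion} as a premise. In the paper that display is only announced in the main text. Its derivation is the first half of Step~1 of the proof of this very theorem, so assuming it leaves you only the easy implication from the expansion to the Gaussian limit. Three steps are needed to close it.

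(a) Sample centering must be negligible at the $\sqrt n$ scale. Since $w_{t,n}(\gamma)-w_t(\gamma)=-\{\bar e_n(\gamma)-E\exp(\gamma^\top I_{t-1})\}$ does not depend on $t$, $\sqrt n$ times the difference between the sample- and population-centered plug-in processes equals $-\sqrt n\{\bar e_n(\gamma)-E\exp(\gamma^\top I_{t-1})\}\cdot n^{-1}\sum_t\widehat r_{t\alpha}$, where $\widehat r_{t\alpha}:=\mathbf 1\{Y_t\le m_{t\alpha}(\bar\theta_1,\hat\theta_2)\}-\alpha$. The first factor is $O_p(1)$ uniformly in $\gamma$, so you need $\sup_\alpha|n^{-1}\sum_t\widehat r_{t\alpha}|=o_p(1)$. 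The paper proves $O_p(n^{-1/2})$ by splitting this average into the null quantile-score term, the population shift (bounded density plus root-$n$ rate), and a localized-equicontinuity remainder.

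(b) Let $\nu_n(\theta_2)$ be the centered empirical process of the population-centered-weight score at $\theta_2$. You need $\sup_{\alpha,\gamma}|\nu_n(\hat\theta_2)-\nu_n(\theta_{20})|=o_p(1)$. This follows from equicontinuity over the localized class in \assPreEst(ii), combined with exactly the $L_2(P)$ shrinkage you compute in your ``main obstacle'' paragraph. So that bound is needed here, at the $\sqrt n$ scale, not only in Part~2.

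(c) The population moment $\mu(\alpha,\gamma;\theta_2)=E[w_t(\gamma)\{F_{Y_t\mid I_{t-1}}(m_{t\alpha}(\bar\theta_1,\theta_2))-\alpha\}]$ must be Taylor-expanded around $\theta_{20}$. The Leibniz rule, using the bounded density derivatives in \assRegularity(iv) and the smoothness in Assumption~S5, identifies its gradient there as $A_2(\alpha,\gamma)$.

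The same circularity appears, more mildly, in Part~2. The main-text sentence you invoke, that $n^{-1/2}a_{J_n}^{-2}\to0$ makes the generated-curve density error negligible, is justified only inside this proof. The justification is a mean-value bound $C h_{J_n}^{-2}\sup_{u\in\mathcal A}\|\hat\theta_2(u)-\theta_{20}(u)\|\,G_t$ for an envelope $G_t$. This bound needs the expansion of $\hat\theta_2$ uniformly on the auxiliary interval $\mathcal A$, not just on $\mathcal T$. It also controls the density only after multiplying by the envelopes in $\widehat L_\alpha$ and $\widehat A_{2,n}$ and averaging over $t$, not as a maximum over $t$. The oracle part comes from Lemma~\ref{lem:unif_rate} applied to the curve $(\bar\theta_1,\theta_{20})$ under \assPreEst, not from Theorem~\ref{thm:density-consistency_main}, which concerns the known curve $\bar\theta$.

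Your localization for the indicator term is the paper's argument in Lemma~\ref{lem:pre-multiplier-replacement}, with two adjustments. Take $\delta_n=n^{-1/2}\log n$ so the localization event has probability tending to one. Add a uniform law for the localized class to pass from the population bound to the sample average uniformly in $(\alpha,\gamma)$. Your list of replacements should also include $g_{t\alpha}(\bar\theta_1,\theta_{20})\to g_{t\alpha}(\bar\theta_1,\hat\theta_2)$ inside $\widehat l_{t,\alpha}$.

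Where you depart from the paper legitimately is the limit step. The paper applies the Arcones--Yu mixing FCLT to the stacked class in \assPreEst(ii). That gives finite-dimensional convergence and tightness at once, but yields a long-run kernel that must be reduced via \assPreEst(iii). Your stationary-ergodic MDS CLT with Cram\'er--Wold plus componentwise tightness gives the contemporaneous kernel $E[\Psi_t(\alpha_1,\gamma_1)\Psi_t(\alpha_2,\gamma_2)]$ directly. Two citations need fixing for that route:
\begin{itemize}
\item The MDS property of $l_{t,\alpha}$ comes from \assMixing(iii) and predictability of $g_{t\alpha}$. It does not come from \assPreEst(iii), which gives only serial uncorrelatedness and cannot drive an MDS CLT.
\item Tightness of $n^{-1/2}\sum_t l_{t,\cdot}$ comes from \assPreEst(ii), not from \eqref{eq:theta2-IF}.
\end{itemize}
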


\subsection{Bootstrap Critical Values with Pre-Estimation}

To approximate the null distribution of \(\widehat T_{n,\Pi}(\lambda)\), we keep
\(\hat\theta_2(\alpha)\) fixed and bootstrap the corrected score in the expansion
above.

Let $\{\omega_t\}_{t=1}^n$ be i.i.d.\ standard normal multipliers independent of
the data. Define the bootstrap analog of the plug-in empirical process by
\begin{equation}
\begin{split}
\widehat M_{n*}(\alpha,\gamma)
:= \frac{1}{n}\sum_{t=1}^n \omega_t \Bigg\{ &
\big[\mathbf 1\{Y_t\le m_{t\alpha}(\bar{\theta}_1,\hat\theta_2)\}-\alpha\big] w_{t,n}(\gamma) \\
& + \widehat l_{t,\alpha}^\top\, \widehat A_{2,n}(\alpha,\gamma)
\Bigg\}.
\end{split}
\label{eq:Mhat-star-def}
\end{equation}
Here $\widehat l_{t,\alpha}$ and
$\widehat A_{2,n}(\alpha,\gamma)$ are the estimates defined above.

Define the bootstrap variance estimator by
\begin{equation}
\begin{split}
\widehat s_{n*}^2(\alpha,\gamma)
:= \frac{1}{n}\sum_{t=1}^n \omega_t^2 \Bigg\{ &
\big[\mathbf 1\{Y_t\le m_{t\alpha}(\bar{\theta}_1,\hat\theta_2)\}-\alpha\big] w_{t,n}(\gamma) \\
& + \widehat l_{t,\alpha}^\top\, \widehat A_{2,n}(\alpha,\gamma)
\Bigg\}^2.
\end{split}
\label{eq:shat-star-def}
\end{equation}

We define the standardized bootstrap process as
\[
\widehat Q_{n*}(\alpha,\gamma)
:=
\frac{\sqrt n|\widehat M_{n*}(\alpha,\gamma)|}
{\widehat s_{n*}(\alpha,\gamma)}.
\]
The penalized bootstrap statistic is then constructed using the same
aggregation scheme as the original test:
\begin{equation}
\widehat T_{n*,\Pi}(\lambda)
:=
\sup_{\gamma\in\Gamma}
\Bigg[
\int_{\mathcal T}
\widehat Q_{n*}^2(\alpha,\gamma)
d\Pi(\alpha)
-
\lambda\|\gamma\|_1
\Bigg].
\label{eq:That-star-def}
\end{equation}

\begin{theorem}[Bootstrap validity with pre-estimation]
\label{thm:boot-pre_main}
\leavevmode\par\noindent
Under $H_0$ and Assumptions~S1--S5, the following conditional bootstrap
results hold, with the process convergence in
$\ell^\infty(\mathcal T\times\Gamma)$:
\begin{enumerate}

\item The bootstrap process converges to the limiting process:
\[
\sqrt n\,\widehat M_{n*}(\alpha,\gamma)
\ \overset{*}{\rightsquigarrow}\
\widetilde{\mathcal M}(\alpha,\gamma),
\]
where $\widetilde{\mathcal M}(\alpha,\gamma)$ is the same limiting Gaussian process as in Theorem~\ref{thm:null-pre_main}.

\item The bootstrap variance estimator consistently estimates the total asymptotic variance:
\[
\sup_{\alpha,\gamma}
\Big|
\widehat s_{n*}^2(\alpha,\gamma)
- \tilde{\sigma}^2(\alpha, \gamma)
\Big|
\ \xrightarrow{p^*}\ 0.
\]

\item The bootstrap statistic converges to the null limit:
\[
\widehat T_{n*,\Pi}(\lambda)
\ \overset{*}{\rightsquigarrow}\
\sup_{\gamma\in\Gamma}
\Bigg[
\int_{\mathcal T}
\left(
    \frac{\widetilde{\mathcal M}(\alpha,\gamma)}
    {\tilde{\sigma}(\alpha,\gamma)}
\right)^2
d\Pi(\alpha)
-
\lambda\|\gamma\|_1
\Bigg].
\]

\end{enumerate}
Here \(\xrightarrow{p^*}\) and \(\overset{*}{\rightsquigarrow}\) denote convergence in probability and weak convergence conditional on the data, respectively.
\end{theorem}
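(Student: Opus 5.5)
The plan is to reduce the bootstrap statement to a conditional multiplier central limit theorem for the estimated corrected scores, and then reuse the continuous-mapping argument behind Theorem~\ref{thm:null-pre_main}. Write $\widehat\Psi_t(\alpha,\gamma)$ for the summand in braces in \eqref{eq:Mhat-star-def}. Then $\sqrt n\,\widehat M_{n*}=n^{-1/2}\sum_t\omega_t\widehat\Psi_t$ and $\widehat s_{n*}^2=n^{-1}\sum_t\omega_t^2\widehat\Psi_t^2$.

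I would split
\[
\sqrt n\,\widehat M_{n*}(\alpha,\gamma)
=\frac1{\sqrt n}\sum_{t=1}^n\omega_t\Psi_t(\alpha,\gamma)
+\frac1{\sqrt n}\sum_{t=1}^n\omega_t\{\widehat\Psi_t(\alpha,\gamma)-\Psi_t(\alpha,\gamma)\}.
\]
The first term is a Gaussian multiplier process built on the population corrected score. Conditional on $\mathcal Z_n$, it is a centered Gaussian process with covariance $n^{-1}\sum_t\Psi_t(\alpha_1,\gamma_1)\Psi_t(\alpha_2,\gamma_2)$. By the uniform law of large numbers under S1--S2 and S4, this covariance converges uniformly in probability to $E[\Psi_t(\alpha_1,\gamma_1)\Psi_t(\alpha_2,\gamma_2)]$, which is the kernel of $\widetilde{\mathcal M}$ in Theorem~\ref{thm:null-pre_main}.

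Conditional tightness would follow as in Theorem~\ref{thm:bootstrap_main}. Conditionally Gaussian processes are controlled by the intrinsic $L^2$ semimetric $\rho_n$ induced by the empirical second moments. I would bound its entropy with the same bracketing or VC arguments used for the null process; these apply because $\Psi_t$ is a Lipschitz-in-$\gamma$ weight times an indicator class in $\alpha$, plus a smooth finite-dimensional term $l_{t,\alpha}^\top A_2(\alpha,\gamma)$. The uniform convergence of $\rho_n$ to the population semimetric then gives conditional weak convergence in probability, in the bounded-Lipschitz sense, to $\widetilde{\mathcal M}$.

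The main obstacle is the second term, the estimation error in the scores. Conditionally on $\mathcal Z_n$ it is a centered Gaussian process, so by a maximal inequality (Dudley or Borell--TIS, combined with the entropy bound above) its conditional supremum is $O_p^*$ of
\[
\Delta_n:=\sup_{\alpha,\gamma}\Big(n^{-1}\sum_t\{\widehat\Psi_t-\Psi_t\}^2\Big)^{1/2}
\]
times a logarithmic entropy factor. It therefore suffices to show $\Delta_n\to0$ in probability, possibly with a rate. I would decompose $\widehat\Psi_t-\Psi_t$ into four pieces:
\begin{enumerate}
\item the weight error $w_{t,n}-w_t=E e^{\gamma^\top I}-\bar e_n(\gamma)$, which is uniformly $O_p(n^{-1/2})$;
\item the indicator change $\mathbf 1\{Y_t\le m_{t\alpha}(\bar\theta_1,\hat\theta_2)\}-\mathbf 1\{Y_t\le m_{t\alpha}(\bar\theta_1,\theta_{20})\}$, whose mean square is $O_p(n^{-1/2})$ by the bounded conditional density together with the $\sqrt n$-consistency implied by \eqref{eq:theta2-IF}, handled by the stochastic equicontinuity in S4--S5;
\item $\widehat L_\alpha^{-1}-L_\alpha^{-1}$;
\item $\widehat A_{2,n}-A_2$.
\end{enumerate}
Pieces 3 and 4 are $o_p(1)$ uniformly by Theorem~\ref{thm:density-consistency_main}, the substitution condition $n^{-1/2}a_{J_n}^{-2}\to0$ noted after it, continuity of $\nabla_{\theta_2}m$, and the invertibility of $L_\alpha$ assumed in S4. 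Since $l_{t,\alpha}$ has finite second moments uniformly, products of these errors with the scores vanish in mean square. If only $o_p(1)$ rather than a rate is available for the density piece, I would exploit the fact that this term enters as a finite-dimensional random coefficient: $n^{-1/2}\sum_t\omega_t l_{t,\alpha}$ is conditionally $O_p^*(1)$, and it is multiplied by an $o_p(1)$ coefficient, so no entropy factor is needed there.

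Part 2 follows from $n^{-1}\sum_t\omega_t^2\widehat\Psi_t^2-n^{-1}\sum_t\widehat\Psi_t^2\to0$. This holds by a conditional uniform law of large numbers: $E^*\omega_t^2=1$, and the class has finite entropy together with an envelope that has moments beyond the second. The remaining difference from $\tilde\sigma^2$ is part 2 of Theorem~\ref{thm:null-pre_main}. Part 3 then uses $\inf_{\alpha,\gamma}\tilde\sigma^2>0$, which holds because $\mathcal T$ is compact in $(0,1)$ and $\Gamma$ is separated from zero as assumed. With that lower bound, the map taking $(M,s)$ to $\sup_\gamma[\int M^2/s^2\,d\Pi-\lambda\|\gamma\|_1]$ is continuous at the limit (measurability for continuous $\Pi$ is handled by Lemma~\ref{lem:gaussian-quantile-crossing}), and a conditional continuous mapping theorem for bounded-Lipschitz convergence in probability gives the result.
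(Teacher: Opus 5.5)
Your proposal is correct and follows essentially the same route as the paper. Both arguments split the multiplier process into the oracle part $n^{-1/2}\sum_t\omega_t\Psi_t$, handled by conditional Gaussian fidi convergence plus entropy-based tightness, and a feasible-minus-oracle part. For that second part, both bound $\sup_{\alpha,\gamma}n^{-1}\sum_t(\widehat\Psi_t-\Psi_t)^2$ term by term (weights, indicator change, Jacobian and sensitivity errors) and pass the bound through a Gaussian entropy inequality; both then use the $(\omega_t^2-1)$ truncation argument for the studentizer and finish by continuous mapping. One justification needs correcting: $\inf_{\alpha,\gamma}\tilde\sigma^2>0$ does not follow from $\mathcal T\Subset(0,1)$ and $\Gamma$ being separated from zero, but it is imposed directly in Assumption S4(iv), so your argument stands. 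The reason it does not follow is that the corrected score factors as $\{\mathbf 1\{Y_t\le m_{t\alpha}(\bar\theta_1,\theta_{20})\}-\alpha\}\{w_t(\gamma)-A_2(\alpha,\gamma)^\top L_\alpha^{-1}g_{t\alpha}(\bar\theta_1,\theta_{20})\}$, which degenerates whenever $w_t(\gamma)$ is a linear combination of the nuisance gradient.
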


\subsection{Local Power with Pre-Estimation}

We now consider Pitman local alternatives for the parameter of interest:
\begin{equation}
\theta_{1n,B}(\alpha)
=
\theta_{10}(\alpha)
-
\frac{B(\alpha)}{\sqrt n},
\qquad \alpha\in\mathcal T,
\label{eq:pre-local-alt}
\end{equation}
where $B$ is an admissible bounded continuous direction in the tangent set of
\assLocal, so the local sequence remains a valid conditional-quantile model for
all sufficiently large $n$. The data are generated under
\eqref{eq:pre-local-alt}, but the implemented test is still constructed under
the null-imposed value \(\bar\theta_1(\alpha)=\theta_{10}(\alpha)\). Hence the
nuisance parameter is estimated under the null restriction by solving
\eqref{eq:theta2-est} with \(\theta_1(\alpha)\) fixed at \(\theta_{10}(\alpha)\).
We denote this restricted estimator under the local sequence by
\[
    \hat\theta_{2,0n}(\alpha):=\hat\theta_2(\theta_{10},\alpha).
\]
The null-evaluated plug-in process under the local sequence is
\[
\widehat M_{n,0}^{B}(\alpha,\gamma)
=
\frac1n\sum_{t=1}^n
w_{t,n}(\gamma)
\Big[
    \mathbf 1\{Y_t\le m_{t\alpha}(\theta_{10},\hat\theta_{2,0n})\}
    -
    \alpha
\Big].
\]

For notational clarity, write
\[
U_{nt}(\alpha,\gamma)
:=
 w_t(\gamma)
 \Big[
 \mathbf 1\{Y_t\le m_{t\alpha}(\theta_{10},\theta_{20})\}-\alpha
 \Big],
\]
and
\[
V_{nt}(\alpha)
:=
 g_{t\alpha}(\theta_{10},\theta_{20})
 \Big[
 \mathbf 1\{Y_t\le m_{t\alpha}(\theta_{10},\theta_{20})\}-\alpha
 \Big],
\]
where replacing the sample-centered weight \(w_{t,n}\) by its population version
\(w_t\) only changes the expansion by \(o_p(1)\). Define
\[
A_1(\alpha,\gamma)
:=
E\!\left[
 f_{Y_t|I_{t-1}}\!\left(m_{t\alpha}(\theta_{10},\theta_{20})\right)
 w_t(\gamma)
 \nabla_{\theta_1}m_{t\alpha}(\theta_{10},\theta_{20})
\right],
\]

and
\[
D_\alpha
:=
E\!\left[
 g_{t\alpha}(\theta_{10},\theta_{20})
 f_{Y_t|I_{t-1}}\!\left(m_{t\alpha}(\theta_{10},\theta_{20})\right)
 \nabla_{\theta_1}^{\top}m_{t\alpha}(\theta_{10},\theta_{20})
\right],
\qquad
H_\alpha:=-L_\alpha^{-1}D_\alpha.
\]

The corresponding local mean expansions derived in the Appendix give
\[
\frac1{\sqrt n}\sum_{t=1}^nE_{n,B}U_{nt}(\alpha,\gamma)
=
A_1(\alpha,\gamma)^\top B(\alpha)+o(1),
\]
and
\[
\frac1{\sqrt n}\sum_{t=1}^nE_{n,B}V_{nt}(\alpha)
=
D_\alpha B(\alpha)+o(1),
\]
uniformly in \((\alpha,\gamma)\). The restricted nuisance estimator satisfies
\begin{equation}
\label{eq:delta_2n_full}
\Delta_{2n}(\alpha)
:=
\sqrt n\{\hat\theta_{2,0n}(\alpha)-\theta_{20}(\alpha)\}
=
\mathbb L_{n,B}(\alpha)+H_\alpha B(\alpha)+o_p(1),
\end{equation}
where
\[
\mathbb L_{n,B}(\alpha)
:=
-L_\alpha^{-1}
\frac1{\sqrt n}\sum_{t=1}^n
\Big\{
V_{nt}(\alpha)-E_{n,B}V_{nt}(\alpha)
\Big\}.
\]
The first term is the centered nuisance-estimation fluctuation, while
\(H_\alpha B(\alpha)\) is the local bias induced by estimating the nuisance
component under the null restriction.

A Taylor expansion of the plug-in process around \((\theta_{10},\theta_{20})\)
then yields
\begin{equation}
\label{eq:pre-local-expansion}
\sqrt n\,\widehat M_{n,0}^{B}(\alpha,\gamma)
=
\mathbb G_{n,B}^{\mathrm{pre}}(\alpha,\gamma)
+
d_{\mathrm{pre},B}(\alpha,\gamma)
+o_p(1),
\end{equation}
where the centered Gaussian component is
\[
\mathbb G_{n,B}^{\mathrm{pre}}(\alpha,\gamma)
:=
\frac1{\sqrt n}\sum_{t=1}^n
\Big\{
U_{nt}(\alpha,\gamma)-E_{n,B}U_{nt}(\alpha,\gamma)
\Big\}
+
A_2(\alpha,\gamma)^\top\mathbb L_{n,B}(\alpha),
\]
and the deterministic local drift is
\begin{equation}
\label{eq:pre-local-drift}
d_{\mathrm{pre},B}(\alpha,\gamma)
:=
A_1(\alpha,\gamma)^\top B(\alpha)
+
A_2(\alpha,\gamma)^\top H_\alpha B(\alpha).
\end{equation}
The drift has two components: the direct term
\(A_1(\alpha,\gamma)^\top B(\alpha)\) and the additional term
\(A_2(\alpha,\gamma)^\top H_\alpha B(\alpha)\) induced by restricted nuisance
estimation. Under contiguity and the weak convergence assumptions in
Assumptions~S4--S6,
\begin{equation}
\label{eq:final_decomposition}
\sqrt n\,\widehat M_{n,0}^{B}(\alpha,\gamma)
\rightsquigarrow
\widetilde{\mathcal M}(\alpha,\gamma)
+
d_{\mathrm{pre},B}(\alpha,\gamma),
\end{equation}
where \(\widetilde{\mathcal M}\) is the zero-mean Gaussian process from
Theorem~\ref{thm:null-pre_main}. The deterministic drift
\(d_{\mathrm{pre},B}\) is the component that drives local power after
pre-estimation.

\subsection{Construction of the Shifted Bootstrap Process}
\label{subsec:shifted-bootstrap}

To select \(\lambda\) with pre-estimated nuisance parameters, the bootstrap must
approximate the noncentral limit in \eqref{eq:final_decomposition}. The multiplier
term reproduces the centered Gaussian component \(\widetilde{\mathcal M}\), while
a deterministic shift estimates \(d_{\mathrm{pre},B}(\alpha,\gamma)\).

Let \(\widehat d_{\mathrm{pre},n,B}(\alpha,\gamma)\) denote a uniformly
consistent sample analog of \(d_{\mathrm{pre},B}(\alpha,\gamma)\). Define
\[
\widehat A_{1,n}(\alpha,\gamma)
:=
\frac1n\sum_{t=1}^n
\widehat f_{Y_t|I_{t-1}}
\!\left(m_{t\alpha}(\bar\theta_1,\hat\theta_2)\right)
 w_{t,n}(\gamma)
 \nabla_{\theta_1}m_{t\alpha}(\bar\theta_1,\hat\theta_2),
\]
\[
\widehat D_{\alpha,n}
:=
\frac1n\sum_{t=1}^n
 g_{t\alpha}(\bar\theta_1,\hat\theta_2)
 \widehat f_{Y_t|I_{t-1}}
 \!\left(m_{t\alpha}(\bar\theta_1,\hat\theta_2)\right)
 \nabla_{\theta_1}^{\top}m_{t\alpha}(\bar\theta_1,\hat\theta_2),
\qquad
\widehat H_{\alpha,n}:=-\widehat L_\alpha^{-1}\widehat D_{\alpha,n}.
\]
We set
\begin{equation}
\label{eq:dpre-hat}
\widehat d_{\mathrm{pre},n,B}(\alpha,\gamma)
:=
\widehat A_{1,n}(\alpha,\gamma)^\top B(\alpha)
+
\widehat A_{2,n}(\alpha,\gamma)^\top\widehat H_{\alpha,n}B(\alpha),
\end{equation}
which is obtained from \eqref{eq:pre-local-drift} by replacing population
quantities with their sample analogues.

For compactness, define the estimated corrected score
\[
\widehat\Psi_{t,n}(\alpha,\gamma)
:=
\big[\mathbf 1\{Y_t\le m_{t\alpha}(\bar\theta_1,\hat\theta_2)\}-\alpha\big]
 w_{t,n}(\gamma)
+
\widehat l_{t,\alpha}^\top\widehat A_{2,n}(\alpha,\gamma).
\]
For a local direction \(B\), the shifted bootstrap process is
\begin{equation}
\label{eq:pre-local-bootstrap-M}
\widehat M_{n*,B}(\alpha,\gamma)
:=
\frac1n\sum_{t=1}^n
\left[
\omega_t\widehat\Psi_{t,n}(\alpha,\gamma)
+
\frac{\widehat d_{\mathrm{pre},n,B}(\alpha,\gamma)}{\sqrt n}
\right].
\end{equation}
The multiplier component is conditionally centered because \(E^*\omega_t=0\),
and the deterministic shift generates the local drift after multiplication by
\(\sqrt n\).

The shifted bootstrap variance estimator is
\begin{equation}
\label{eq:pre-local-bootstrap-s}
\widehat s_{n*,B}^2(\alpha,\gamma)
:=
\frac1n\sum_{t=1}^n
\left[
\omega_t\widehat\Psi_{t,n}(\alpha,\gamma)
+
\frac{\widehat d_{\mathrm{pre},n,B}(\alpha,\gamma)}{\sqrt n}
\right]^2.
\end{equation}
Define
\[
\widehat Q_{n*,B}(\alpha,\gamma)
:=
\sqrt n
\frac{|\widehat M_{n*,B}(\alpha,\gamma)|}
{\widehat s_{n*,B}(\alpha,\gamma)}.
\]
The shifted bootstrap statistic is
\begin{equation}
\widehat T_{n*,B,\Pi}(\lambda)
:=
\sup_{\gamma\in\Gamma}
\Bigg[
\int_{\mathcal T}\widehat Q_{n*,B}^2(\alpha,\gamma)d\Pi(\alpha)
-
\lambda\|\gamma\|_1
\Bigg].
\label{eq:That-star-local-def}
\end{equation}

As above, $B$ denotes the candidate shift used to estimate power, while
$B_0$ denotes the actual local data-generating direction. The case $B_0=0$ is the
null.

\begin{theorem}[Shifted multiplier bootstrap validity with pre-estimation]
\label{thm:boot-local-pre_main}
Suppose Assumptions~S1--S6 hold, and let $B$ be any admissible candidate
direction. Under $H_0$ or any local sequence $P_{n,B_0}$ covered by
\assLocal, the following conditional bootstrap results hold, with process
convergence in $\ell^\infty(\mathcal T\times\Gamma)$:
\begin{enumerate}
    \item The shifted process converges to the noncentral limit:
    \[
        \sqrt n\,\widehat M_{n*,B}(\alpha,\gamma)
        \overset{*}{\rightsquigarrow}
        \widetilde{\mathcal M}(\alpha,\gamma)
        +
        d_{\mathrm{pre},B}(\alpha,\gamma).
    \]

    \item The variance estimator remains uniformly consistent:
    \[
        \sup_{\alpha,\gamma}
        \big|\widehat s_{n*,B}^2(\alpha,\gamma)-\tilde\sigma^2(\alpha,\gamma)\big|
        \xrightarrow{p^*}0.
    \]

    \item The penalized shifted bootstrap statistic captures the noncentral
    distribution:
    \[
        \widehat T_{n*,B,\Pi}(\lambda)
        \overset{*}{\rightsquigarrow}
        \sup_{\gamma\in\Gamma}
        \Bigg[
            \int_{\mathcal T}
            \left(
                \frac{\widetilde{\mathcal M}(\alpha,\gamma)
                +d_{\mathrm{pre},B}(\alpha,\gamma)}
                {\tilde\sigma(\alpha,\gamma)}
            \right)^2
            d\Pi(\alpha)
            -
            \lambda\|\gamma\|_1
        \Bigg].
    \]
\end{enumerate}
\end{theorem}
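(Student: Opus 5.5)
The plan is to reduce Theorem~\ref{thm:boot-local-pre_main} to the centered bootstrap result, Theorem~\ref{thm:boot-pre_main}, plus a deterministic-shift argument. The reduction rests on the exact decomposition
\[
\sqrt n\,\widehat M_{n*,B}(\alpha,\gamma)
=
\sqrt n\,\widehat M_{n*}(\alpha,\gamma)
+\widehat d_{\mathrm{pre},n,B}(\alpha,\gamma).
\]
This holds because $n^{-1}\sum_t \widehat d_{\mathrm{pre},n,B}/\sqrt n = \widehat d_{\mathrm{pre},n,B}/\sqrt n$, which is $\mathcal Z_n$-measurable.

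The argument then has three parts.

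\textbf{(a) Centered part under the local data.} I first show that the conclusion $\sqrt n\,\widehat M_{n*}\overset{*}{\rightsquigarrow}\widetilde{\mathcal M}$ of Theorem~\ref{thm:boot-pre_main} also holds when the data come from $P_{n,B_0}$. Conditionally on the data, $\sqrt n\,\widehat M_{n*}$ is Gaussian with covariance kernel $n^{-1}\sum_t\widehat\Psi_{t,n}(\alpha_1,\gamma_1)\widehat\Psi_{t,n}(\alpha_2,\gamma_2)$. It therefore suffices to show two things. First, this empirical kernel converges uniformly in probability to $E[\Psi_t(\alpha_1,\gamma_1)\Psi_t(\alpha_2,\gamma_2)]$. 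Second, the intrinsic metric it induces makes the conditional process asymptotically equicontinuous. Both facts are uniform-in-probability statements about sample moments. Under $H_0$ they were established in Theorem~\ref{thm:boot-pre_main}. Contiguity of $P_{n,B_0}$ (\assLocal) transfers every $o_p(1)$ statement from the null to the local sequence. The perturbation of $\hat\theta_{2,0n}$ by $H_\alpha B_0/\sqrt n$ in \eqref{eq:delta_2n_full} keeps it $\sqrt n$-consistent, so the plug-in error bounds used there are unaffected.

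\textbf{(b) Drift part.} I show that $\sup_{\alpha,\gamma}|\widehat d_{\mathrm{pre},n,B}-d_{\mathrm{pre},B}|\to_p0$. I treat each ingredient $\widehat A_{1,n}$, $\widehat A_{2,n}$, $\widehat D_{\alpha,n}$ and $\widehat L_\alpha$ separately, in four steps:
\begin{itemize}
\item replace $\widehat f$ by the true density, using Theorem~\ref{thm:density-consistency_main} along the plug-in curve, where the condition $n^{-1/2}a_{J_n}^{-2}\to0$ controls the effect of substituting $\hat\theta_2$;
\item replace $\hat\theta_2$ by $\theta_{20}$, using the smoothness of $m$ and its gradients from \assRegularity;
\item replace $w_{t,n}$ by $w_t$;
\item apply a ULLN under the mixing condition \assMixing.
\end{itemize}
Invertibility of $L_\alpha$, uniformly on $\mathcal T$, then gives $\widehat L_\alpha^{-1}\to L_\alpha^{-1}$ uniformly. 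Because $B$ is bounded, the products converge uniformly as well. Combining (a) and (b), Slutsky's lemma for conditional weak convergence in $\ell^\infty(\mathcal T\times\Gamma)$ yields part 1.

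\textbf{(c) Variance and statistic.} Expanding the square in \eqref{eq:pre-local-bootstrap-s} gives
\[
\widehat s_{n*,B}^2=\widehat s_{n*}^2+2n^{-3/2}\widehat d\sum_t\omega_t\widehat\Psi_{t,n}+\widehat d^2/n.
\]
The last term is $O_p(n^{-1})$ uniformly, by (b). The cross term equals $2n^{-1}\widehat d\cdot\sqrt n\,\widehat M_{n*}$, which is $O_{p^*}(n^{-1})$ because the conditional Gaussian process is tight by (a). Part 2 then follows from part 2 of Theorem~\ref{thm:boot-pre_main}, again transferred by contiguity. For part 3, $\Gamma$ is separated from zero and $\tilde\sigma^2$ is bounded away from zero on $\mathcal T\times\Gamma$, so the map
\[
(x,s)\mapsto\sup_\gamma\Big[\int x^2/s^2\,d\Pi-\lambda\|\gamma\|_1\Big]
\]
is continuous at the limit. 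The measurability caveat of Lemma~\ref{lem:gaussian-quantile-crossing} applies for continuous $\Pi$. The conditional continuous mapping theorem then gives the stated noncentral limit.

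\textbf{Main obstacle.} The hardest step is (b): uniform consistency of the estimated drift, and in particular of $\widehat A_{2,n}^\top\widehat H_{\alpha,n}$, when the density estimator is evaluated along the estimated curve $m_{tu}(\bar\theta_1,\hat\theta_2)$ and the data come from the local law. I would first bound the density-substitution error by a mean-value argument in the kernel. This gives a bound of order $h^{-2}\sup_{t,u}|m_{tu}(\hat\theta_2)-m_{tu}(\theta_{20})|=O_p(n^{-1/2}h^{-2})$, uniformly in $t$ and $u$. I would then invoke the bandwidth condition to make this $o_p(1)$, and use contiguity to carry the density bound of Theorem~\ref{thm:density-consistency_main} over to $P_{n,B_0}$.
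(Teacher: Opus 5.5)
Your proposal is correct and follows essentially the same route as the paper's proof. The shared steps are: the exact split of $\sqrt n\,\widehat M_{n*,B}$ into the centered multiplier process plus the data-measurable shift; contiguity to transfer the null-law multiplier replacement and oracle multiplier results to $P_{n,B_0}$; an error decomposition of $\widehat A_{1,n}$, $\widehat A_{2,n}$, $\widehat D_{\alpha,n}$, $\widehat L_\alpha$ with uniform invertibility giving $\widehat H_{\alpha,n}\to H_\alpha$; the three-term expansion of $\widehat s_{n*,B}^2$; and the continuous mapping theorem.

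One correction to your ``main obstacle'' step concerns the density-substitution bound, which you claim uniformly in $t$. It should not be claimed that way, because $\sup_u|m_{tu}(\hat\theta_2)-m_{tu}(\theta_{20})|$ carries the (typically unbounded) gradient envelope $G_t$ of $m$ in $\theta_2$. Under only $r>4$ moments, $\max_{t\le n}G_t$ need not be $O_p(1)$, so $O_p(n^{-1/2}h^{-2})$ uniformly in $t$ is not justified by Assumption~S3(d).

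As in the paper, you only need the envelope-weighted average bound
\[
C a_{J_n}^{-2}\|\hat\theta_2-\theta_{20}\|_{\infty,\mathcal A}\,n^{-1}\sum_t G_tH_t=O_p(n^{-1/2}a_{J_n}^{-2}),
\]
where $H_t$ envelopes the factors multiplying the density. This bound suffices for every sample average entering the drift and Jacobian estimators.
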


\subsection{Adaptive Selection of the Penalty Parameter}
\label{subsec:lambda-selection}

We select $\lambda$ by the same estimated minimum local power criterion as in
Section~\ref{subsec:calibration}. The difference is that the null and shifted
bootstrap processes are adjusted for nuisance pre-estimation.
Algorithm~\ref{alg:adaptive-pre} gives the full procedure.

\begin{algorithm}[htbp]
\caption{Adaptive penalty selection with pre-estimated nuisance parameters}
\label{alg:adaptive-pre}
\begin{algorithmic}[1]
\Require Data \(\mathcal Z_n\), \(\Pi\), \(\Gamma\), search set
\(\mathcal S=\Lambda\) or \(\Lambda_G\), finite \(\mathcal B\), \(R\), and \(\tau\)
\State Estimate the restricted nuisance curve \(\widehat\theta_2(\alpha)\)
under \(H_0\).
\State Estimate the fitted-quantile density and construct
\(\widehat L_\alpha\), \(\widehat l_{t,\alpha}\),
\(\widehat A_{2,n}(\alpha,\gamma)\), and
\(\widehat\Psi_{t,n}(\alpha,\gamma)\).
\State Draw \(R\) multiplier sequences and reuse them for all \(\lambda\)
and \(B\).
\For{each penalty \(\lambda\) requested by the search routine}
    \For{\(r=1,\ldots,R\)}
        \State Compute the null bootstrap statistic
        \(\widehat T_{n*,\Pi}^{(r)}(\lambda)\) from
        \eqref{eq:Mhat-star-def}.
    \EndFor
    \State Set \(\widehat c_{n,1-\tau,\Pi}(\lambda)\) to their empirical
    \((1-\tau)\)-quantile.
    \For{each \(B\in\mathcal B\)}
        \State Construct \(\widehat d_{\mathrm{pre},n,B}\) and compute
        \(\widehat T_{n*,B,\Pi}^{(r)}(\lambda)\), \(r=1,\ldots,R\).
        \State Set
        \[
        \widehat{\mathcal R}_n(\lambda,B,\tau)
        =\frac{1}{R}\sum_{r=1}^R
        \mathbf 1\!\left\{
        \widehat T_{n*,B,\Pi}^{(r)}(\lambda)>
        \widehat c_{n,1-\tau,\Pi}(\lambda)
        \right\}.
        \]
    \EndFor
    \State Set
    \(\widehat W_n(\lambda)=
    \min_{B\in\mathcal B}\widehat{\mathcal R}_n(\lambda,B,\tau)\).
\EndFor
\State On \(\Lambda_G\), select the smallest maximizer
\(\widehat\lambda_G\) of \(\widehat W_n\).
For continuous search, use a measurable \(\eta_n\)-approximate maximizer
\(\widehat\lambda\), where \(\eta_n=o_p(1)\).
\State Using the original sample, compute
\(\widehat T_{n,\Pi}(\widehat\lambda)\) and reject when it exceeds
\(\widehat c_{n,1-\tau,\Pi}(\widehat\lambda)\).
For a grid search, use \(\widehat\lambda_G\) in place of
\(\widehat\lambda\).
\end{algorithmic}
\end{algorithm}

\section{Monte Carlo Simulations}
\label{sec:simulation}

This section reports finite-sample rejection rates for \(T_{n,\Pi_A}\) and the
two alternative aggregation schemes. The main design is a Gaussian AR(2) model.
Additional designs are reported in the Supplementary Appendix.

\subsection{Data-Generating Process and Quantile Specification}

The data are generated from a second-order autoregressive process (AR(2)) with Gaussian innovations:
\begin{equation}
Y_t = \mu_0 + \mu_1 Y_{t-1} + \mu_2 Y_{t-2} + \sigma \varepsilon_t, \quad t = 1, \dots, n,
\end{equation}
where $\{\varepsilon_t\}$ are i.i.d.\ $\mathcal N(0,1)$.

The conditional $\alpha$-quantile of $Y_t$ given
$I_{t-1}=(Y_{t-1},Y_{t-2})^\top$ is
\begin{equation}
m(I_{t-1}, \theta_0(\alpha)) = \beta_0(\alpha) + \beta_1(\alpha) Y_{t-1} + \beta_2(\alpha) Y_{t-2}.
\label{eq:quantile_model}
\end{equation}
The quantile-specific coefficient vector
$\theta_0(\alpha)=(\beta_0(\alpha),\beta_1(\alpha),\beta_2(\alpha))^\top$ is
given by
\[
\beta_0(\alpha) = \mu_0 + \sigma \Phi^{-1}(\alpha), \quad \beta_1(\alpha) = \mu_1, \quad \beta_2(\alpha) = \mu_2,
\]
where $\Phi^{-1}(\cdot)$ denotes the inverse cumulative distribution function of the standard normal distribution. Note that under this location-shift model, only the intercept $\beta_0(\alpha)$ varies across quantiles, while the autoregressive coefficients remain constant.

The simulation uses
\[
(\mu_0, \mu_1, \mu_2, \sigma)^\top = (0.5, 0.4, -0.2, 1.0)^\top.
\]
Power curves use nominal level \(\tau=0.10\). The size tables also report
\(\tau=0.05\). All reported simulation exercises use 1,000 Monte Carlo
replications and 1,000 multiplier draws per replication. Without
pre-estimation, the statistic is evaluated at nine equally spaced quantile
nodes on \([0.10,0.90]\). With pre-estimation, it is evaluated at seven
equally spaced test nodes on \([0.20,0.80]\), while the restricted quantile
curve is estimated on 100 equally spaced dense-grid nodes on \([0.15,0.85]\).
Sample-size labels refer to the generated series length.

\paragraph{Implementation of the Adaptive Test}
The main statistic uses the discrete measure
\(\Pi_A=A^{-1}\sum_{j=1}^A\delta_{\alpha_j}\) and a finite weighting grid
\(\Gamma_G\). Thus the quantile aggregation and directional search are
evaluated over the reported nodes. Here \(\Gamma_G\) is the grid of weighting
directions \(\gamma\), and is distinct from both the quantile nodes and the
reference class \(\mathcal B\). Writing \(E_m[a,b]\) for \(m\) equally spaced
points including the endpoints, the known-parameter main power exercise uses
\(\Gamma_G=E_5[-1,1]^2\), a \(5\times5\) grid. For the pre-estimation power
exercise and the corresponding size comparisons, the origin is removed:
\[
\Gamma_G^{\mathrm{pre}}
=E_5[-1,1]^2\setminus\{(0,0)^\top\}.
\]
This 24-point grid excludes the degenerate centered score at \(\gamma=0\).
The penalty is selected from
\(\{0,0.1,\ldots,1\}\). This grid contains the unpenalized value zero and
targets the grid maximizer in Algorithms~\ref{alg:adaptive-known}
and~\ref{alg:adaptive-pre}. The reference class used to select the penalty is
\(\mathcal B=\{(0.5,0.5,0.5)^\top\}\) for Figure~\ref{fig:power_no_pre},
in coordinates \((\beta_0(\alpha),\mu_1,\mu_2)\), and
\(\mathcal B=\{1\}\) in the \(\mu_2\) direction for
Figure~\ref{fig:power_pre} and Panel B of Table~\ref{tab:size_comparison}.
These reference directions are distinct from the fixed alternatives varied
along the power-curve axes. Common multiplier draws are reused across
candidate penalties within each replication. In the nonlinear simulations
reported in the Supplementary Appendix, the known-parameter runs use
\(\Gamma_G=E_{10}[-1,1]^2\), the penalty grid
\(\{0,0.1,\ldots,2.0\}\), and the base reference vector
\((0.2,0.2,0.2,0.2)^\top\) in coordinates
\((\beta_0(\alpha),\mu_1,\rho,\mu_2)\). The pre-estimation runs use the
same weighting grid, the penalty grid \(\{0,0.1,\ldots,0.9\}\), and
\(\mathcal B=\{1\}\) in the \(\rho\) direction. For the optional KS--KS
quantile-index penalty in the pre-estimation size comparisons, \(\eta\) is
searched over \(\{0,0.05,\ldots,0.25\}\).

\paragraph{Density Estimation in the Pre-Estimation Scenarios}
For implementation, the reported pre-estimation simulations estimate the
restricted quantile curve on the 100-point dense grid over \([0.15,0.85]\) and use linear
interpolation, with endpoint extrapolation when required, to evaluate the curve
at the test and auxiliary indices. The density routine draws \(J=1{,}000\)
auxiliary quantile indices from \([0.01,0.99]\) and applies a Gaussian kernel.
\subsection{Finite-Sample Power without Pre-Estimation
\texorpdfstring{($n=500$)}{(n=500)}}

We first consider a benchmark with no pre-estimated nuisance parameter: the
null-imposed quantile curve is treated as fixed. In this Gaussian design, the
full-curve null for
\(\theta_0(\alpha)=(\beta_0(\alpha),\mu_1,\mu_2)^\top\), where
\(\beta_0(\alpha)=\mu_0+\sigma\Phi^{-1}(\alpha)\), is equivalent to the joint
null for the four DGP primitives
\((\mu_0,\mu_1,\mu_2,\sigma)^\top\). The sample size is $n=500$. We report
rejection frequencies under fixed alternatives by perturbing one DGP primitive
at a time over $\delta\in[-0.2,0.2]$. In Figure~\ref{fig:power_no_pre}, the solid line shows the
adaptive test, the dashed line shows the unpenalized test, and the shaded area
marks parameter values for which the adaptive test rejects more often.

\begin{figure}[htbp]
    \centering
    \includegraphics[width=1.0\textwidth]{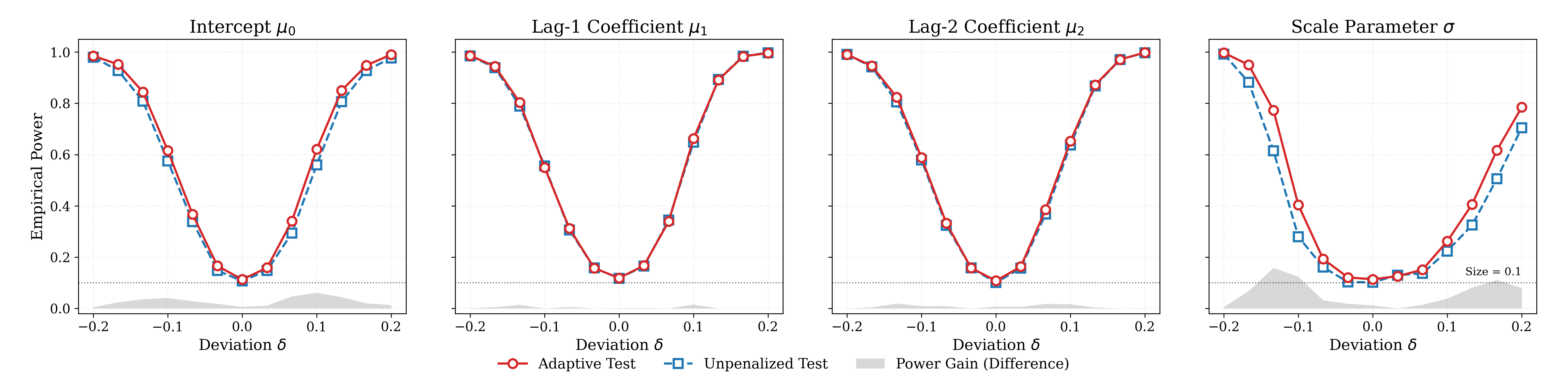}
    \caption{Finite-sample rejection frequencies without pre-estimation
    ($n=500$, nominal level $0.10$). The four panels vary one DGP primitive at a
    time from its null value. The gray shaded area marks parameter values for
    which the adaptive statistic has a higher rejection frequency than the
    unpenalized statistic.}
    \label{fig:power_no_pre}
\end{figure}

Figure~\ref{fig:power_no_pre} shows that, for $n=500$, the adaptive test has higher empirical rejection probabilities for several intermediate deviations. The gray areas mark regions in which the adaptive statistic rejects more often than the unpenalized statistic.

\subsection{Finite-Sample Power with Pre-Estimation
\texorpdfstring{($n=1000$)}{(n=1000)}}

Next, we evaluate the test's performance when a quantile-specific nuisance
function must be estimated. We focus on testing the lag-2 coefficient \(\mu_2\).
Under \(H_0:\mu_2=\mu_{2,0}=-0.2\), the nuisance function is
\[
\theta_2(\alpha)
=
\big(\beta_0(\alpha),\beta_1(\alpha)\big)^\top.
\]
At each quantile, it is estimated by restricted quantile regression of
\(Y_t-\mu_{2,0}Y_{t-2}\) on \((1,Y_{t-1})^\top\). In the Gaussian location-shift
DGP, \(\beta_0(\alpha)=\mu_0+\sigma\Phi^{-1}(\alpha)\), so \(\mu_0\) and
\(\sigma\) are absorbed into the quantile-specific intercept rather than
estimated separately at a fixed \(\alpha\). The reported sample size is
\(n=1000\).

The restricted model is estimated on 100 equally spaced points in
\([0.15,0.85]\). The statistic is evaluated at seven equally spaced points in
\([0.20,0.80]\). The reported critical values use the plug-in bootstrap
implementation described above. The studentized search grid is
\[
\Gamma_G^{\mathrm{pre}}
=E_5[-1,1]^2\setminus\{(0,0)^\top\},
\]
which contains 24 nonzero directions. The origin is excluded because the
centered score is degenerate at \(\gamma=0\), and no variance is formed there.
The same 24-point grid is used for the observed and multiplier-bootstrap
statistics.

\begin{figure}[htbp]
    \centering
    \includegraphics[width=0.8\textwidth]{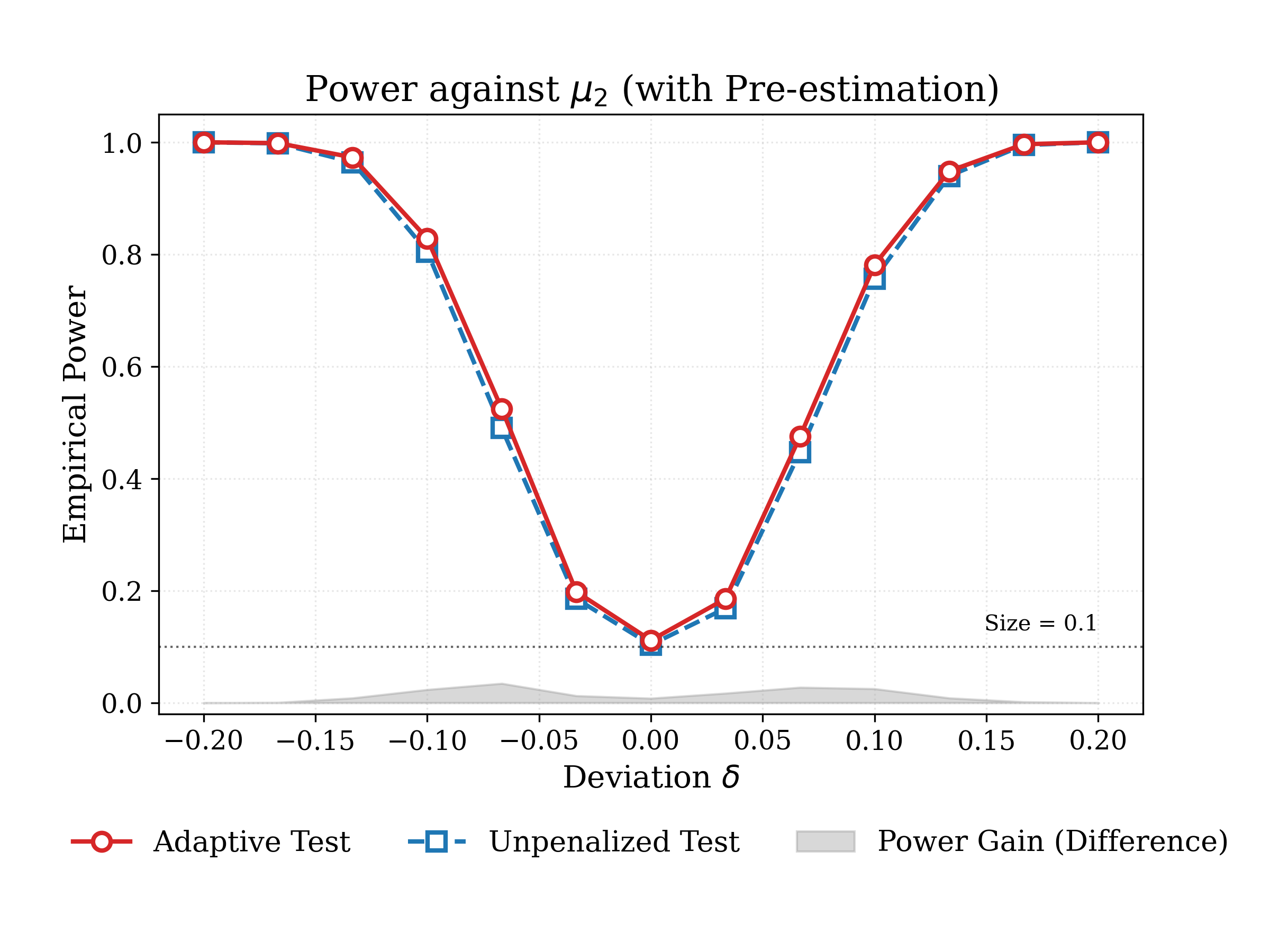}
    \caption{Finite-sample rejection frequencies with pre-estimation
    ($n=1000$, nominal level $0.10$). The test concerns
    $H_0:\mu_2=-0.2$, while the quantile-specific nuisance coefficients
    $\{\beta_0(\alpha),\beta_1(\alpha)\}$ are estimated by restricted quantile
    regression on the 100-point dense grid. The gray area marks parameter values
    for which the adaptive statistic has a higher rejection frequency than the
    unpenalized statistic.}
    \label{fig:power_pre}
\end{figure}

Figure~\ref{fig:power_pre} reports rejection frequencies for deviations in
$\mu_2$. In the shaded region, the adaptive statistic rejects more often than
the unpenalized statistic in the reported design with $n=1000$.

Table~\ref{tab:size_comparison} concerns the linear AR(2) design.
Section~S5.3 of the Supplementary Appendix considers a nonlinear model with a
signed-power transformation. In that design, the pre-estimation correction has
substantial finite-sample distortion at $n=200$, with rejection rates moving
closer to the nominal level as $n$ increases. The same simulations also show
that the relative performance of the adaptive and unpenalized tests depends on
the hypothesis. When the full null quantile curve is tested without
pre-estimation, the unpenalized test has
slightly higher rejection frequencies for departures in $\rho$. When $\rho$ is
isolated through pre-estimation, the adaptive test rejects more often in the
reported design. These comparisons are design-specific and do not imply general
power dominance.

\subsection{Comparison of Aggregation Schemes under Pre-Estimation}
\label{subsec:aggregation-comparison}

We compare the size of \(T_{n,\Pi}^{CvM\text{-}KS}\) with
\(T_n^{KS\text{-}KS}\) and \(T_{n,\Pi}^{KS\text{-}CvM}\), defined in
Remark~\ref{rem:alternative-schemes}. Table~\ref{tab:size_comparison} reports
null rejection rates for the adaptive and unpenalized versions, both with and
without pre-estimated nuisance parameters.

\begin{table}[htbp]
    \centering
    \caption{Empirical Size Comparison of Different Aggregation Schemes}
    \label{tab:size_comparison}
    \resizebox{\textwidth}{!}{
    \begin{tabular}{l c ccc c ccc}
        \toprule
        \multirow{2}{*}{$n$} & \multirow{2}{*}{Nominal $\tau$} & \multicolumn{3}{c}{Adaptive Penalization (Adapt)} & & \multicolumn{3}{c}{Unpenalized Benchmark (Unpenal)} \\
        \cmidrule{3-5} \cmidrule{7-9}
        & & \textbf{$T_{n,\Pi}$ (Ours)} & $T_n^{KS\text{-}KS}$ & $T_{n,\Pi}^{KS\text{-}CvM}$ & & \textbf{$T_{n,\Pi}$ (Ours)} & $T_n^{KS\text{-}KS}$ & $T_{n,\Pi}^{KS\text{-}CvM}$ \\
        \midrule
        \multicolumn{9}{l}{\textbf{Panel A: Without Pre-estimation (Fixed Null Quantile Curve)}} \\
        \midrule
        200  & 0.10 & 0.100 & 0.132 & 0.118 & & 0.095 & 0.159 & 0.126 \\
             & 0.05 & 0.054 & 0.064 & 0.064 & & 0.044 & 0.083 & 0.065 \\
        500  & 0.10 & 0.109 & 0.152 & 0.122 & & 0.104 & 0.176 & 0.123 \\
             & 0.05 & 0.052 & 0.082 & 0.059 & & 0.056 & 0.098 & 0.066 \\
        1000 & 0.10 & 0.109 & 0.136 & 0.126 & & 0.109 & 0.153 & 0.117 \\
             & 0.05 & 0.057 & 0.065 & 0.062 & & 0.061 & 0.090 & 0.060 \\
        \midrule
        \multicolumn{9}{l}{\textbf{Panel B: With Pre-estimation (Estimated Nuisance Parameters)}} \\
        \midrule
        200  & 0.10 & \textbf{0.109} & 0.202 & 0.164 & & \textbf{0.104} & 0.162 & 0.131 \\
             & 0.05 & \textbf{0.046} & 0.104 & 0.085 & & \textbf{0.042} & 0.089 & 0.065 \\
        500  & 0.10 & \textbf{0.091} & 0.151 & 0.121 & & \textbf{0.081} & 0.153 & 0.102 \\
             & 0.05 & \textbf{0.040} & 0.083 & 0.065 & & \textbf{0.034} & 0.083 & 0.052 \\
        1000 & 0.10 & \textbf{0.096} & 0.128 & 0.126 & & \textbf{0.093} & 0.138 & 0.110 \\
             & 0.05 & \textbf{0.043} & 0.066 & 0.049 & & \textbf{0.041} & 0.083 & 0.056 \\
        \bottomrule
    \end{tabular}
    }
\end{table}

Panel A shows that the proposed \(T_{n,\Pi}\) statistic is close to the nominal
levels, whereas the two alternative aggregation schemes over-reject in several
cases even without pre-estimation. For example, at \(n=200\) and \(\tau=0.10\),
the adaptive rejection rates are \(0.100\), \(0.132\), and \(0.118\) for
\(T_{n,\Pi}\), \(T_n^{KS\text{-}KS}\), and
\(T_{n,\Pi}^{KS\text{-}CvM}\), respectively.

With pre-estimation, the two statistics that place a supremum earlier in the
aggregation order have larger rejection rates in this design. At \(n=200\) and
\(\tau=0.10\), the adaptive rejection rate is \(0.202\) for
\(T_n^{KS\text{-}KS}\) and \(0.164\) for
\(T_{n,\Pi}^{KS\text{-}CvM}\), compared with \(0.109\) for
\(T_{n,\Pi}^{CvM\text{-}KS}\). Because the comparator penalties also enter differently, these differences
do not isolate the effect of aggregation order alone.

In the linear AR(2) design, \(T_{n,\Pi}^{CvM\text{-}KS}\) is relatively close
to the nominal level in most reported comparisons. The nonlinear results in the
Supplementary Appendix show that this advantage does not eliminate finite-sample
distortion when the structural parameter is highly nonlinear.

\section{Empirical Study}\label{sec:empirical}

We apply the proposed penalized Bierens maximum statistic to the Growth-at-Risk
(GaR) framework of \citet{adrian2019vulnerable} to conduct uniform inference on
the association between financial conditions and the conditional distribution
of GDP growth. The original GaR analysis documents that tighter financial
conditions are especially informative about lower conditional quantiles of
future growth. Related work develops formal out-of-sample tests of conditional
quantile coverage and GaR model comparisons \citep{corradi2023outofsample}.
Our application addresses two in-sample restrictions: whether the NFCI
coefficient is uniformly zero and whether it can be represented by a common
constant over the quantile range under study.

\subsection{Data and Model Specification}

\subsubsection{Data}

Following \citet{adrian2019vulnerable}, we use U.S.\ macroeconomic data from the
Federal Reserve Economic Data (FRED) database. The files used for the reported
results were downloaded on March 6, 2026. The dependent variable is annualized
quarter-over-quarter real GDP growth (\texttt{A191RL1Q225SBEA}). The Chicago Fed
National Financial Conditions Index (NFCI) is observed weekly and converted to
quarterly frequency using the last weekly observation in each quarter. Higher
NFCI values correspond to tighter-than-average financial conditions. Current
GDP growth is included as a control for contemporaneous macroeconomic
conditions.

The dates index the quarter \(t\) of the regressors. The \(h=1\) sample
spans 1971:Q1--2025:Q3 with \(n=219\) observations. The \(h=4\) sample spans
1971:Q1--2025:Q1 with \(n=217\) observations.

\subsubsection{Model and Hypotheses}

Let \(y_t\) denote annualized GDP growth at quarter \(t\), and define the
\(h\)-quarter-ahead average growth rate as
\[
Y_{t+h}=\frac1h\sum_{j=1}^{h}y_{t+j}.
\]
For each horizon $h\in\{1,4\}$, we consider the conditional quantile specification
\begin{equation}\label{eq:gar_model}
Q_{Y_{t+h}}(\alpha \mid \mathcal{I}_t)
=
\beta_0(\alpha)
+
\beta_1(\alpha)\,\mathrm{NFCI}_t
+
\beta_2(\alpha)\,y_t,
\end{equation}
where $\beta_1(\alpha)$ is the coefficient of primary interest. It measures the
association between current financial conditions and the $\alpha$-quantile of
the corresponding growth outcome.

We consider two null hypotheses within the maintained linear conditional-quantile
model. The first imposes a uniformly zero NFCI coefficient:
\begin{equation}\label{eq:H0_zero}
H_0^{(1)}:\;\beta_1(\alpha)=0,\qquad \forall\, \alpha\in\mathcal{T}.
\end{equation}
The second asks whether the effect of financial conditions is quantile-invariant over the range under study:
\begin{equation}\label{eq:H0_const}
H_0^{(2)}:\;\beta_1(\alpha)=c,\qquad \forall\, \alpha\in\mathcal{T},
\end{equation}
where $c$ is a common constant. Inverting $H_0^{(2)}$ over candidate values
$c$ yields a grid-based confidence interval for the common effect.

We also report a fixed benchmark equal to the estimated OLS slope, where the
OLS slope is the coefficient from the linear projection of the growth outcome on
NFCI and the controls. This benchmark corresponds to a location-shift
approximation in which financial conditions move the conditional distribution
by a common amount across quantiles. The estimated OLS coefficients are
$\hat\beta_1^{\mathrm{OLS}}=-1.255$ for $h=1$ and
$\hat\beta_1^{\mathrm{OLS}}=-0.658$ for $h=4$. Tests using these values are
descriptive fixed-benchmark comparisons. Common-constant confidence intervals
are obtained by inversion over fixed candidate values of $c$. Treating the
OLS benchmark as a same-sample random estimate would require adding its
influence function to the plug-in correction. In Table~\ref{tab:gar_results},
$c_{\mathrm{OLS}}$ denotes the reported numerical value treated in this fixed
manner.

\subsubsection{Implementation}

The empirical implementation uses discrete quadrature measures over two quantile
ranges. The main aggregation range covers the central part of the conditional
distribution,
\[
    \mathcal{T}_{\mathrm{main}}
    =
    \{\alpha_1,\ldots,\alpha_{100}\}
    \subset [0.10,0.90],
\]
with 100 equally spaced quantile levels and equal weights. Equivalently, the
implemented statistic uses
\[
    \Pi_{\mathrm{main}}
    =
    \frac{1}{100}\sum_{j=1}^{100}\delta_{\alpha_j}.
\]
This range is chosen to focus on the economically relevant central quantiles
while avoiding the most weakly estimated extremes. To study downside risk more
directly, we also consider a left-tail aggregation range
\[
    \mathcal{T}_{\mathrm{left}}
    =
    \{\alpha_1,\ldots,\alpha_{25}\}
    \subset [0.05,0.30],
    \qquad
    \Pi_{\mathrm{left}}
    =
    \frac{1}{25}\sum_{j=1}^{25}\delta_{\alpha_j}.
\]
A 200-point grid on \([0.05,0.95]\) is used to estimate the restricted
quantile curve. For implementation, the fitted curve is evaluated at the test
and auxiliary indices by linear interpolation, with endpoint extrapolation when
required. The density routine uses \(J=1{,}000\) auxiliary draws from
\([0.01,0.99]\) and a Gaussian kernel. The aggregation grids lie strictly
inside this auxiliary interval.

The exponential weighting function is
\[
w_{t,n}(\gamma)=\exp(\gamma^\top \tilde{I}_t)-\bar{e}_n(\gamma),
\]
where $\tilde{I}_t=(\widetilde{\mathrm{NFCI}}_t, \tilde{y}_t)^\top$. Each
component is demeaned and divided by its within-sample standard deviation.
The weighting-direction space is $\Gamma=[-4,4]^2$, discretized on a
$40\times40$ grid with 1{,}600 candidate values of $\gamma$.

Under both null hypotheses, the nuisance parameters
$(\beta_0(\alpha),\beta_2(\alpha))$ are estimated by restricted quantile
regression. The multiplier bootstrap includes the correction from the influence function in
Section~\ref{sec:pre_estimation}. The hypothesized value of $\beta_1$, either
$0$ under $H_0^{(1)}$ or a fixed constant $c$ under $H_0^{(2)}$, is held fixed
under the null and requires no additional correction. The row with
$c=\hat\beta_1^{\mathrm{OLS}}$ is the fixed-benchmark comparison described above.

We use $R = 2{,}000$ bootstrap multiplier sequences (reused across all candidate
$\lambda$), and approximate the theoretical interval $\Lambda=[0,2]$ by the grid
$\Lambda_G = \{0, 0.1, 0.2, \ldots, 2.0\}$. The reference class is the singleton
$\mathcal{B} = \{B^*\}$ with $B^*(\alpha) \equiv 0.50$ for adaptive penalty selection.

The GaR application has about 217 observations. The linear AR(2) simulation
gives reasonably accurate size at $n=200$, but it provides only limited guidance
for this application. The nonlinear results in the Supplementary Appendix show that
plug-in distortions can be larger in small samples. We therefore interpret
$p$-values near conventional thresholds cautiously.

\subsection{Uniform Test Results}\label{ssec:test_results}

Table~\ref{tab:gar_results} reports the test results under both quantile
aggregation measures for each horizon.

\begin{table}[ht]
\centering
\caption{Uniform Tests of the NFCI Coefficient in the GaR Model}\label{tab:gar_results}
\resizebox{\textwidth}{!}{
\begin{tabular}{lll ccccc ccccc}
\toprule
 & & & \multicolumn{5}{c}{Adaptive Penalized Test} & \multicolumn{5}{c}{Unpenalized Test ($\lambda=0$)} \\
\cmidrule(lr){4-8} \cmidrule(lr){9-13}
Grid & $h$ & Null
& $\hat{T}_n$ & $\hat{c}_n^*$ & Margin & $p$ & $\hat\lambda$
& $\hat{T}_n$ & $\hat{c}_n^*$ & Margin & $p$ & $\lambda$ \\
\midrule
\multicolumn{13}{l}{\textit{Panel A: Main aggregation range $[0.10,0.90]$}} \\[2pt]
$[0.10,0.90]$ & $h=1$ & Fixed benchmark: $\beta_1=c_{\mathrm{OLS}}$
& 3.978 & 4.245 & $-$0.267 & 0.124 & 0.6
& 4.738 & 5.163 & $-$0.424 & 0.130 & 0 \\
$[0.10,0.90]$ & $h=1$ & $H_0^{(1)}:\beta_1=0$
& 8.008 & 4.904 & 3.104 & 0.025 & 0.5
& 8.213 & 5.803 & 2.410 & 0.034 & 0 \\[3pt]
$[0.10,0.90]$ & $h=4$ & Fixed benchmark: $\beta_1=c_{\mathrm{OLS}}$
& 2.404 & 3.860 & $-$1.456 & 0.261 & 0.9
& 3.753 & 4.956 & $-$1.204 & 0.210 & 0 \\
$[0.10,0.90]$ & $h=4$ & $H_0^{(1)}:\beta_1=0$
& 10.118 & 3.640 & 6.479 & 0.002 & 1.9
& 11.001 & 5.703 & 5.298 & 0.007 & 0 \\[3pt]
\midrule
\multicolumn{13}{l}{\textit{Panel B: Left-tail aggregation range $[0.05,0.30]$}} \\[2pt]
$[0.05,0.30]$ & $h=1$ & Fixed benchmark: $\beta_1=c_{\mathrm{OLS}}$
& 3.962 & 6.244 & $-$2.283 & 0.313 & 0.0
& 3.962 & 6.244 & $-$2.283 & 0.313 & 0 \\
$[0.05,0.30]$ & $h=4$ & Fixed benchmark: $\beta_1=c_{\mathrm{OLS}}$
& 6.666 & 4.826 & 1.840 & 0.032 & 0.2
& 7.521 & 5.282 & 2.239 & 0.024 & 0 \\
\bottomrule
\multicolumn{13}{l}{\footnotesize Notes: Bootstrap replications $R=2{,}000$. Nominal level $\tau=0.10$. Margin $=\hat{T}_n-\hat{c}_n^*$. $p$-values are bootstrap $p$-values.}\\
\multicolumn{13}{l}{\footnotesize Here $c_{\mathrm{OLS}}$ is the reported OLS estimate treated as a fixed benchmark. Its same-sample estimation uncertainty is not included.}\\
\end{tabular}
}
\end{table}

The zero-effect null is rejected under the main quantile aggregation measure at
both reported horizons. For $h=1$, the adaptive and unpenalized $p$-values are
$0.025$ and $0.034$, respectively. For $h=4$, they are $0.002$ and $0.007$.
These are rejections of the zero-coefficient restriction within the maintained
linear GaR model using the reported bootstrap procedure.

Under the main quantile aggregation measure, the fixed OLS benchmark is not
rejected at either horizon: the adaptive $p$-values are $0.124$ for $h=1$ and
$0.261$ for $h=4$. The common-$c$ confidence interval reported below is also
nonempty at both horizons. Under the implemented nominal 10\% calibration, the composite
null in equation~\eqref{eq:H0_const} is therefore not rejected over
$[0.10,0.90]$. This conclusion does not establish that the coefficient is
constant. It states only that the reported procedure retains at least one
common value.

The left-tail rows have a narrower interpretation because no common-$c$
inversion is reported for $\mathcal{T}_{\mathrm{left}}$. They test only the
specific fixed OLS benchmark. That benchmark is not rejected for $h=1$, with
both $p$-values equal to $0.313$, and is rejected for $h=4$, with adaptive and
unpenalized $p$-values of $0.032$ and $0.024$. These results alone do not reject
the composite null that some other common constant fits the left tail.

The results are consistent with the qualitative left-tail pattern in
\citet{adrian2019vulnerable}, but the numerical conclusions are more limited. Over
the central range, the zero-effect null is rejected, whereas the existence of a
common negative effect is not rejected. Over the left tail, the $h=4$ test
rejects the particular OLS benchmark, not every possible constant effect.

\subsection{Common-Constant Confidence Intervals}\label{ssec:CI}

To complement the hypothesis tests, we construct a grid-based 90\% confidence
interval for the common constant by inverting $H_0^{(2)}$ over
$c\in\mathcal G_c:=\{-2,-1.95,\ldots,2\}$. Specifically, the reported set is
\[
\mathcal C_G=\{c\in\mathcal G_c:H_0(c)\text{ is not rejected}\},
\]
using the nominal 0.10 rule, where
\[
H_0(c):\beta_1(\alpha)=c\ \text{for all }\alpha\in\mathcal T_{\mathrm{main}}.
\]
Table~\ref{tab:gar_ci} reports the minimum and maximum retained grid values.
The accepted grid values are contiguous in the reported implementation, so
their endpoints define the displayed grid-based confidence intervals. These
intervals concern the common constant $c$ under $H_0(c)$. They are not
pointwise confidence intervals for the unrestricted coefficient path.

\begin{table}[ht]
\centering
\caption{Grid-Based 90\% Confidence Intervals for the Common Constant $c$}\label{tab:gar_ci}
\begin{tabular}{l cc cc}
\toprule
 & \multicolumn{2}{c}{Adaptive} & \multicolumn{2}{c}{Unpenalized} \\
\cmidrule(lr){2-3} \cmidrule(lr){4-5}
Horizon & Confidence interval & Width & Confidence interval & Width \\
\midrule
$h=1$ & $[-1.25,\,-0.35]$ & 0.90 & $[-1.25,\,-0.25]$ & 1.00 \\
$h=4$ & $[-0.75,\,-0.40]$ & 0.35 & $[-0.70,\,-0.35]$ & 0.35 \\
\bottomrule
\multicolumn{5}{l}{\footnotesize Notes: Grid inversion of nominal 10\% tests of}\\
\multicolumn{5}{l}{\footnotesize $H_0:\beta_1(\alpha)=c$ for all $\alpha\in\mathcal{T}_{\mathrm{main}}$. Grid: $c\in[-2,2]$, step $0.05$.}
\end{tabular}
\end{table}

The confidence intervals are nonempty and contain only negative common
constants. They concern $c$ under the maintained constant-coefficient
restriction and do not imply pointwise significance at every quantile.

At the one-quarter horizon, the adaptive confidence interval
$[-1.25,-0.35]$ is slightly narrower than its unpenalized counterpart
$[-1.25,-0.25]$. The OLS value $-1.255$ differs slightly from the displayed
grid endpoint $-1.25$ because the inversion uses a $0.05$ grid, whereas the OLS
value is tested separately. At $h=4$, both procedures deliver intervals of the
same width, although the adaptive interval $[-0.75,-0.40]$ is shifted somewhat
toward more negative values relative to the unpenalized interval
$[-0.70,-0.35]$.

\subsection{Visualizing the Quantile Coefficient Paths}

Figures~\ref{fig:beta_path_h1} and~\ref{fig:beta_path_h4} display the
unrestricted quantile regression estimates $\hat\beta_1(\alpha)$ together with
the common-$c$ confidence intervals for $h=1$ and $h=4$, respectively. The
horizontal shaded bands are confidence intervals for the common constant $c$.
They are not pointwise confidence bands for $\beta_1(\alpha)$.

Descriptively, the estimated coefficient path $\hat\beta_1(\alpha)$ is strongly
negative at low quantiles and moves closer to zero at upper quantiles at both
horizons. The $h=1$ path changes relatively sharply around
$\alpha\approx0.30$, whereas the $h=4$ path is smoother but still varies in the
lower quantiles.

The common-$c$ confidence intervals do not provide pointwise uncertainty bands
around $\hat\beta_1(\alpha)$. A visually varying coefficient path can coexist
with a nonempty confidence interval for a common constant. Whether a plotted
point lies inside or outside the horizontal interval has no pointwise coverage
interpretation.

\begin{figure}[htbp]
    \centering
    \includegraphics[width=0.78\textwidth]{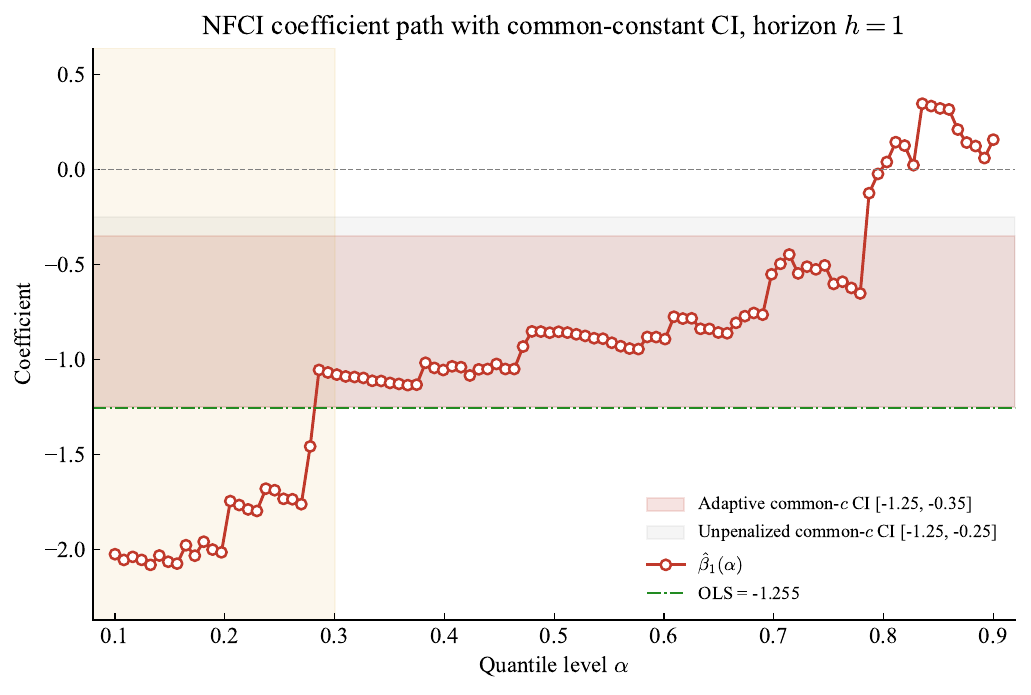}
    \caption{NFCI coefficient path $\hat\beta_1(\alpha)$ and common-$c$ confidence interval,
    $h=1$. The red curve plots unrestricted quantile regression estimates on
    $\alpha\in[0.10,0.90]$. Horizontal shaded bands show the grid-based 90\%
    confidence interval for the common constant.
    They are not pointwise confidence bands. The 
    dashed green line marks the OLS estimate $\hat\beta_1^{\mathrm{OLS}}=-1.255$. The light shaded region highlights the left-tail range $\alpha\le0.30$, which is
    the region of primary interest in the downside-risk analysis.}
    \label{fig:beta_path_h1}
\end{figure}

\begin{figure}[htbp]
    \centering
    \includegraphics[width=0.78\textwidth]{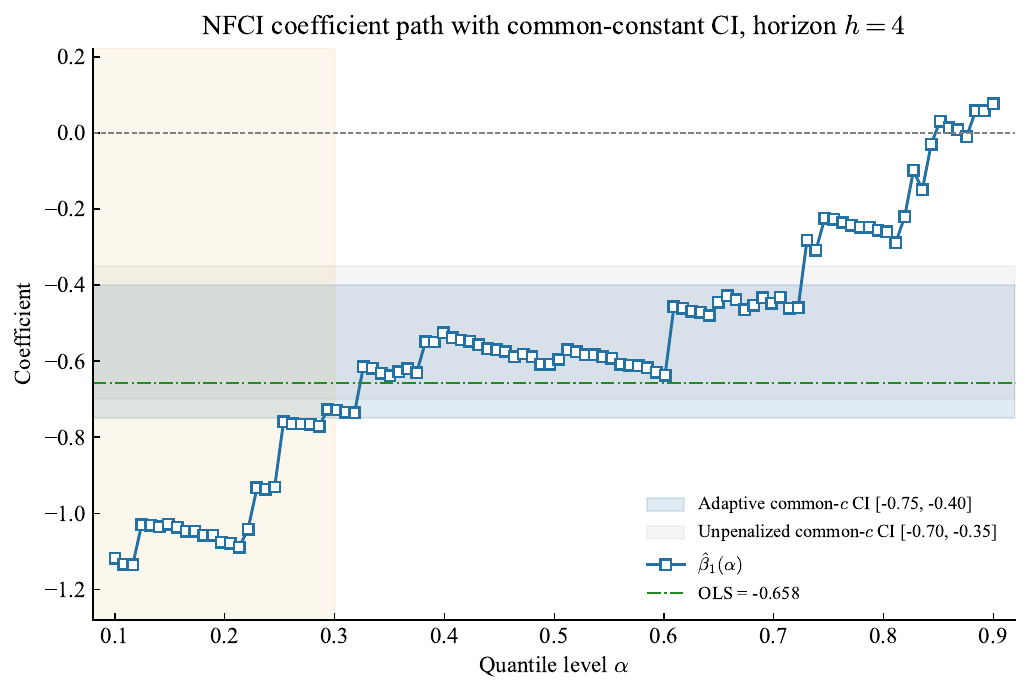}
    \caption{NFCI coefficient path $\hat\beta_1(\alpha)$ and common-$c$
    confidence interval for the average growth outcome over the next four
    quarters ($h=4$). The blue curve plots unrestricted quantile regression
    estimates on $\alpha\in[0.10,0.90]$. Horizontal shaded bands show the
    grid-based 90\% confidence interval for the common constant. It is not a
    pointwise confidence band. The dashed green line marks the OLS estimate
    $\hat\beta_1^{\mathrm{OLS}}=-0.658$.}
    \label{fig:beta_path_h4}
\end{figure}

\subsection{Discussion}

The results distinguish evidence that financial conditions matter from
evidence that their association with growth varies across quantiles. Over
$[0.10,0.90]$, the zero-effect restriction is rejected, while the common-$c$
confidence intervals are nonempty. The left-tail rows answer a narrower question:
they test the fixed OLS benchmark and do not rule out other common constants.

The adaptive and unpenalized procedures lead to the same qualitative
conclusions in this application. Their numerical differences are mixed: the
adaptive test has smaller $p$-values in some rows and larger $p$-values in
others, while its common-$c$ confidence interval is narrower for $h=1$ and has
the same width for $h=4$. These results do not indicate a uniform finite-sample
improvement from penalization.

\section{Conclusion}
\label{sec:conclusion}

This paper develops tests of prespecified parameter curves identified by
conditional quantile restrictions. Exponential weights turn the conditional
restriction into unconditional moments indexed by conditioning directions,
while CvM aggregation combines evidence across quantiles. An $\ell_1$ penalty
regularizes the directional supremum, and a shifted bootstrap selects the
penalty by an estimated criterion for maximin local power. For known parameters,
the selector has no lower maximin local power than the unpenalized test under
the stated conditions. When nuisance parameters are estimated under the null,
the limiting process contains the contribution of their influence functions. The
corrected variance estimator and multiplier bootstrap retain that term
throughout the feasible procedure.

The simulations show that finite-sample performance depends on the aggregation
order and the pre-estimation problem. In the linear AR(2) design, the proposed
CvM--KS statistic is relatively close to nominal size in most reported
comparisons. In the nonlinear design in the Supplementary Appendix, size
distortion remains substantial in small samples and decreases as $n$ grows.
The power rankings are likewise design-specific: the adaptive test need not
outperform the unpenalized test for every parameter or hypothesis.

The Growth-at-Risk application reports three distinct objects: zero-effect
tests, fixed-benchmark comparisons, and common-$c$ confidence intervals. The
zero-effect restriction is rejected at both horizons. The central-range
common-$c$ confidence intervals are nonempty and contain only negative values.
In the left tail, the four-quarter result rejects the fixed OLS benchmark. That
comparison does not rule out every possible common constant.

Letting the dimension of the conditioning vector grow with the sample size
would permit richer information sets and give the $\ell_1$ penalty a
direction-selection role. A sharper analysis of the maximin
selector---especially in Gaussian location-shift models with a closed-form
studentized drift---could also yield primitive conditions for strict power
gains. The present result establishes only the maximin power comparison described in
Remark~\ref{rem:strict-gain_main}.

Convolution-smoothed quantile regression offers a smooth alternative for the
nuisance-estimation step \citep{he2023smoothed}. Its interaction with the
fitted-density correction, and the possible use of a nuisance-orthogonal score,
require separate finite-sample analysis. For applications such as
Growth-at-Risk, one could instead place more weight on tail quantiles or use a
multiplicity-adjusted maximum over several aggregation ranges.

\clearpage
\phantomsection
\addcontentsline{toc}{section}{Appendix: Supplementary Material}
\section*{Appendix: Supplementary Material}

\renewcommand{\theHsection}{appendix.\arabic{section}}
\renewcommand{\theHequation}{appendix.\arabic{equation}}
\renewcommand{\theHtheorem}{appendix.\arabic{theorem}}
\renewcommand{\theHassumption}{appendix.\arabic{assumption}}
\renewcommand{\theHlemma}{appendix.\arabic{lemma}}
\renewcommand{\theHremark}{appendix.\arabic{remark}}
\renewcommand{\theHcorollary}{appendix.\arabic{corollary}}
\setcounter{figure}{0}
\setcounter{table}{0}
\renewcommand{\thefigure}{S\arabic{figure}}
\renewcommand{\thetable}{S\arabic{table}}
\renewcommand{\theHfigure}{appendix.\arabic{figure}}
\renewcommand{\theHtable}{appendix.\arabic{table}}


\setcounter{section}{0}
\setcounter{equation}{0}
\setcounter{theorem}{0}
\setcounter{assumption}{0}
\setcounter{lemma}{0}
\setcounter{remark}{0}
\setcounter{corollary}{0}

\renewcommand{\thesection}{S\arabic{section}}
\renewcommand{\theequation}{S\arabic{equation}}
\renewcommand{\thetheorem}{S\arabic{theorem}}
\renewcommand{\theassumption}{S\arabic{assumption}}
\renewcommand{\thelemma}{S\arabic{lemma}}
\renewcommand{\theremark}{S\arabic{remark}}
\renewcommand{\thecorollary}{S\arabic{corollary}}

\section*{S1. Additional Assumptions}

\begin{assumption}[Data-generating process]
\label{ass:mixing}
\begin{enumerate}
\item[(i)] $Z_t=(Y_t,I_{t-1})$ is strictly stationary and $\beta$-mixing, and
for some $q>2$,
\[
k^{\frac{q}{q-2}}(\log k)^{\frac{2(q-1)}{q-2}}\beta_Z(k)\to0.
\]
With $\mathcal F_{t-1}:=\sigma(Z_s:s\le t-1)$, $I_{t-1}$ is
$\mathcal F_{t-1}$-measurable.
\item[(ii)] $m(\cdot,\theta_0(\alpha))$ is nondecreasing in $\alpha$ a.s., and
for every $\alpha\in(0,1)$,
\[
P\{Y_t\le m(I_{t-1},\theta(\alpha))\mid I_{t-1}\}=\alpha
\quad\Longleftrightarrow\quad
\theta(\alpha)=\theta_0(\alpha).
\]
\item[(iii)] For every $\alpha\in\mathcal T$,
\[
E\!\left[\mathbf 1\{Y_t\le m(I_{t-1},\theta_0(\alpha))\}-\alpha
\mid\mathcal F_{t-1}\right]=0.
\]
\end{enumerate}
\end{assumption}

Assumption~\ref{ass:mixing}(iii) is imposed at the filtration level. It is
stronger than conditioning only on $I_{t-1}$ unless the two information sets
coincide, and it is what removes cross-lag covariance terms from the variance
and bootstrap calculations below.

\begin{assumption}[Regularity conditions]
\label{ass:regularity}
(i) $\Gamma\subset\mathbb R^{d_I}$ is compact with nonempty interior and
$0\in\Gamma$, and
$w(I_{t-1},\gamma)=\exp\{\gamma^\top\Phi(I_{t-1})\}$. Let
$\bar\gamma:=\sup_{\gamma\in\Gamma}\|\gamma\|_2$. If
\[
E\!\left[\exp\{q\bar\gamma\|I_{t-1}\|_2\}\right]<\infty,
\]
take $\Phi(I)=I$. Otherwise, $\Phi$ is a bounded Borel-measurable injection. In either case
the weight class has an $L_q$ envelope. Below, $I_{t-1}$ inside an exponential
weight denotes $\Phi(I_{t-1})$.

(ii) The parameter space $\Theta$ is compact. The class
$\{x\mapsto m(x,\theta):\theta\in\Theta\}$ is a permissible VC-subgraph class.

(iii) There exists a vector function $M(I_{t-1}, \theta(\alpha))$ such that for every fixed $k<\infty$,
\[
\sup_{\substack{\|\theta_1-\theta_2\|_2\leq\frac{k}{\sqrt{n}},\\ 1\leq t \leq n, \,\alpha \in \mathcal{T}}}
\left\{
\sqrt{n}\left|
\begin{aligned}
&m(I_{t-1},\theta_2(\alpha))-m(I_{t-1},\theta_1(\alpha))\\
&\quad-(\theta_2(\alpha)-\theta_1(\alpha))^\top
M(I_{t-1},\theta_1(\alpha))
\end{aligned}
\right|
\right\} = o_p(1).
\]

(iv) The conditional density satisfies:
\begin{enumerate}
    \item $\sup_{y,i}f(y\mid i)<\infty$.
    \item $\mathcal T\Subset(0,1)$ and there exist $c_0>0$ and $\delta_0>0$
    such that
    \[
    \inf_{\substack{i,\,\alpha\in\mathcal T,\,\vartheta\in\Theta:\\
    \|\vartheta-\theta_0(\alpha)\|_2\le\delta_0}}
    f(m(i,\vartheta)\mid i)\ge c_0.
    \]
    \item The first two derivatives of $f(y\mid i)$ in $y$ are uniformly bounded.
    \item For density-based results, for some $\varepsilon_0>0$,
    \[
    \mathcal N_{\varepsilon_0}(\mathcal T)
    :=\{u\in(0,1):\operatorname{dist}(u,\mathcal T)<\varepsilon_0\}
    \]
    supports the null-imposed curve $\bar\theta$ and
    \[
    P\{Y_t\le m(I_{t-1},\bar\theta(u))\mid I_{t-1}\}=u
    \quad\text{a.s.},\qquad
    u\in\mathcal N_{\varepsilon_0}(\mathcal T).
    \]
    Let $q_i^0(u):=m(i,\bar\theta(u))$. It is strictly increasing and $C^3$,
    with $\partial_u q_i^0$ uniformly bounded and bounded away from zero. The
    first two derivatives of $\partial_z\{(q_i^0)^{-1}(z)\}$ are uniformly
    bounded. Moreover,
    \[
    \partial_u q_i^0(u)
    =\frac{1}{f_{Y_t\mid I_{t-1}}
    (m(i,\bar\theta(u))\mid i)},
    \qquad u\in\mathcal N_{\varepsilon_0}(\mathcal T).
    \]
    Independently of the data,
    $U_j\stackrel{\mathrm{i.i.d.}}{\sim}U(\mathcal A)$,
    $j=1,\ldots,J$, where
    $\mathcal A=[a_1,a_2]$ is a fixed compact interval such that
    $\mathcal T\subset\operatorname{int}(\mathcal A)$ and
    $\mathcal A\subset\mathcal N_{\varepsilon_0}(\mathcal T)$. Write
    $L_{\mathrm{sim}}:=a_2-a_1$.
\end{enumerate}
\end{assumption}

\begin{remark}[Role of the neighborhood condition]
\label{rem:density-neighborhood}
The displayed null hypothesis in the main text remains a restriction on
$\mathcal T$. Assumption~\ref{ass:regularity}(iv)(4) extends only the maintained
quantile restriction used by the density smoother. The test process remains
indexed by $\mathcal T$.
\end{remark}

\begin{remark}[Bounded transformation of covariates]
\label{rem:bounded_transform}
Assumption~\ref{ass:regularity}(i) permits either the identity map under
exponential moments or a bounded one-to-one transform such as componentwise
arctangent. See \citet{bierens1990} and \citet{stinchcombe1998consistent}.
\end{remark}

\begin{assumption}[Kernel and bandwidth]
\label{ass:kernel}
(a) $K$ is nonnegative, symmetric, bounded, and $C^2$ with bounded
derivatives,
\[
\int K(u)\,du=1,
\qquad
\int u^2K(u)\,du=\mu_{2K}\in(0,\infty).
\]
It is either compactly supported or, for some $C,c>0$,
\[
|K^{(j)}(u)|\le C\exp(-c|u|),
\qquad j=0,1,2.
\]

(b) For deterministic $a_J\le b_J$,
$P(a_J\le h_J\le b_J)\to1$ and
\[
b_J \to 0, \qquad J a_J \to \infty, \qquad \frac{J a_J}{\log J} \to \infty.
\]

(c) Along $J=J_n$,
\[
\frac{J a_J}{\log n} \to \infty.
\]

\begin{samepage}
(d) For the pre-estimation results,
\begin{equation}
    n^{-1/2}a_{J_n}^{-2}\longrightarrow0.
    \label{eq:generated-parameter-bandwidth-rate}
\end{equation}
\end{samepage}
\end{assumption}

Throughout this appendix, we use the shorthand $m_{t\alpha}(\theta_1, \theta_2) := m(I_{t-1}, \theta_1(\alpha), \theta_2(\alpha))$ for the conditional quantile function evaluated at time $t$ and quantile level $\alpha$.

\begin{assumption}[Nuisance estimator]
\label{ass:pre-est-regularity}
Let $\mathcal N_{\varepsilon_0}(\mathcal T)$ and $\mathcal A$ be as in
Assumption~\ref{ass:regularity}(iv)(4). Under $H_0$, $\bar\theta_1$ and the
unique nuisance curve $\theta_{20}$ satisfy
\[
P\{Y_t\le m(I_{t-1},\bar\theta_1(u),\theta_{20}(u))\mid I_{t-1}\}=u
\quad\text{a.s.},\qquad
u\in\mathcal N_{\varepsilon_0}(\mathcal T).
\]
Define
\[
\bar q_i(u)
:=m(i,\bar\theta_1(u),\theta_{20}(u)).
\]
$\bar q_i$ is strictly increasing and $C^3$ on
$\mathcal N_{\varepsilon_0}(\mathcal T)$, uniformly in $i$, and
\[
0<c_q\le \partial_u \bar q_i(u)\le C_q<\infty.
\]
The first two derivatives of $\partial_z\{\bar q_i^{-1}(z)\}$ are
uniformly bounded, and
\[
\frac{1}{\partial_u \bar q_i(u)}
=f_{Y_t\mid I_{t-1}}\!\left(\bar q_i(u)\mid i\right),
\qquad u\in\mathcal N_{\varepsilon_0}(\mathcal T).
\]

Uniformly on $\mathcal A$, the restricted estimator satisfies
\[
\sup_{u\in\mathcal A}
\left\|
\sqrt{n}\big(\hat{\theta}_2(u) - \theta_{20}(u)\big)
- \frac{1}{\sqrt{n}}\sum_{t=1}^n l_{t,u}
\right\|=o_p(1),
\]
where
\[
l_{t,u}
=-L_u^{-1}g_{tu}(\bar\theta_1,\theta_{20})
\left[
\mathbf 1\{Y_t\le \bar q_{I_{t-1}}(u)\}-u
\right].
\]
\begin{itemize}
    \item[(i)] $E[l_{t,u}]=0$ and, for
    $J_u:=E[l_{t,u}l_{t,u}^\top]$,
    $\inf_{u\in\mathcal A}\lambda_{\min}(J_u)>0$. Also,
    \[
    E\Big[l_{t,\alpha} \big(\mathbf{1}\{Y_s \le m_{s\alpha}(\bar{\theta}_1, \theta_{20})\} - \alpha\big)\Big] = 0 \quad \text{if } t \neq s,
    \qquad \alpha\in\mathcal T.
    \]

    \item[(ii)] In $\ell^\infty(\mathcal A)$,
    $n^{-1/2}\sum_{t=1}^n l_{t,u}\rightsquigarrow
    \mathcal G_{\theta_2}(u)$, a zero-mean Gaussian process with kernel
    \[
    K_{\theta_2}(u_1,u_2)
    =\sum_{k\in\mathbb Z}
    \operatorname{Cov}\!\left(l_{0,u_1},l_{k,u_2}\right),
    \]
    with absolute uniform convergence on $\mathcal A^2$. The coordinate classes
    of $\{l_{t,u}:u\in\mathcal A\}$ and the centered raw-score class
    \[
    \left\{
    w_t(\gamma)
    [\mathbf 1\{Y_t\le m_{t\alpha}(\bar\theta_1,\theta_{20})\}-\alpha]:
    (\alpha,\gamma)\in\mathcal T\times\Gamma
    \right\}
    \]
    are jointly permissible and VC-subgraph with a common $L_q$ envelope. Their
    stacked empirical process converges jointly. The same polynomial covering
    bound and envelope apply to the localized score, gradient, and
    corrected-score classes with $\theta_2$ in a fixed neighborhood of
    $\theta_{20}$ on $\mathcal A$.

    \item[(iii)] Define
    \[
    \Psi_t(\alpha,\gamma)
    :=w_t(\gamma)\big[\mathbf 1\{Y_t\le
    m_{t\alpha}(\bar\theta_1,\theta_{20})\}-\alpha\big]
    +l_{t,\alpha}^\top A_2(\alpha,\gamma),
    \]
    where $w_t$ and $A_2$ are defined in the main text. Uniformly over index
    pairs,
    \[
    E[\Psi_t(\alpha,\gamma)\Psi_s(\alpha',\gamma')]=0
    \quad\text{for }t\ne s.
    \]
    \item[(iv)] Let
    \[
    L_u
    :=E\!\left[
    g_{tu}(\bar\theta_1,\theta_{20})
    g_{tu}(\bar\theta_1,\theta_{20})^\top
    f_{Y_t\mid I_{t-1}}\!\left(\bar q_{I_{t-1}}(u)\mid I_{t-1}\right)
    \right].
    \]
    Then $\inf_{u\in\mathcal A}s_{\min}(L_u)>0$. The centered-weight index set
    $\Gamma$ is compact and, for the pre-estimation results, replaces the
    index set containing the origin in Assumption~\ref{ass:regularity}(i).
    It satisfies
    \[
    \inf_{\gamma\in\Gamma}
    E\!\left[\left\{\exp(\gamma^\top I_{t-1})-E\exp(\gamma^\top I_{t-1})\right\}^2\right]>0
    \]
    and, for
    $\tilde\sigma^2(\alpha,\gamma):=E[\Psi_t(\alpha,\gamma)^2]$,
    \[
    \inf_{\alpha\in\mathcal T,\gamma\in\Gamma}\tilde\sigma^2(\alpha,\gamma)>0.
    \]
\end{itemize}
\end{assumption}

\begin{remark}[Orthogonality conditions]
\label{rem:orthogonality}
For the influence function in \eqref{eq:IF-def}, Assumption~\ref{ass:mixing}(iii) and
predictability imply parts (i) and (iii). A general nuisance estimator must
satisfy them separately. Otherwise, the corrected score requires a long-run
variance estimator.
\end{remark}

\begin{assumption}[Smoothness for the plug-in expansion]
\label{ass:nuisance-smooth}
\begin{itemize}
    \item[(i)] $m(i,\theta)$ is differentiable in $\theta$, and
    $\nabla_\theta m(i,\theta)$ is uniformly continuous in $(i,\theta)$.

    \item[(ii)] For some $q>2$,
    \[
    \sup_{\alpha\in\mathcal T}E\!\left[\sup_{\theta\in\Theta}\|\nabla_{\theta_1}m(I_{t-1},\theta)\|^q\right]<\infty,
    \qquad
    \sup_{\alpha\in\mathcal T}E\!\left[\sup_{\theta\in\Theta}\|g_{t\alpha}(\theta)\|^q\right]<\infty,
    \]
    where $g_{t\alpha}(\theta):=\nabla_{\theta_2}m(I_{t-1},\theta)$.

    \item[(iii)] Define
    \[
    \mathcal E_t
    :=1+\sup_{\gamma\in\Gamma}\exp\{|\gamma^\top I_{t-1}|\}
    +\sup_{\theta\in\Theta}\|\nabla_{\theta_1}m(I_{t-1},\theta)\|
    +\sup_{\alpha\in\mathcal T,\,\theta\in\Theta}
    \|g_{t\alpha}(\theta)\|.
    \]
    For some $r>4$, $E[\mathcal E_t^r]<\infty$. For density estimation, the same
    bound holds with $\mathcal T$ replaced by $\mathcal A$.

\end{itemize}
\end{assumption}

\begin{assumption}[Local alternatives and local stability]
\label{ass:local}
Let $\mathcal B$ be a set of bounded continuous tangent directions such that the
following local curves are admissible and nondecreasing in $\alpha$ for all
large $n$:
\[
    \theta_{n,B}(\alpha)=\theta_0(\alpha)-n^{-1/2}B(\alpha),
    \qquad \alpha\in\mathcal T,
\]
and the pre-estimation local sequence is
\[
    \theta_{1n,B}(\alpha)=\theta_{10}(\alpha)-n^{-1/2}B(\alpha),
    \qquad \alpha\in\mathcal T.
\]
Uniformly over $B\in\mathcal B$ and the stated index sets:
\begin{itemize}
    \item[(i)] The corresponding triangular-array probability measures
    \(P_{n,B}\) are contiguous to the null sequence.
    \item[(ii)] The conditional distribution is differentiable at the
    null-imposed quantile, its derivative is the conditional density appearing
    in Assumption~\ref{ass:regularity}(iv), and the first-order conditional-CDF
    expansions used in Lemma~\ref{lem:local-drift-expansions} have remainders
    that are \(o(n^{-1/2})\). The conditional densities along the local sequence
    converge uniformly to their null counterparts at these quantiles.
    \item[(iii)] After subtracting their local means, the ordinary score
    process and the stacked pre-estimation score process satisfy the same
    functional central limit theorems and stochastic equicontinuity conditions
    as under the null. Their covariance kernels, and the probability limits of
    the corresponding studentizers, differ from the null limits by \(o(1)\).
\end{itemize}
\end{assumption}

\begin{remark}[Scope of the local-stability condition]
Part (iii) is a high-level local-stability condition, not a consequence of
contiguity alone.
\end{remark}

\section*{S2. Preliminary Lemmas}

\begin{lemma}[Local drift expansions]
\label{lem:local-drift-expansions}
Suppose Assumptions~\ref{ass:mixing}--\ref{ass:regularity} and
\ref{ass:nuisance-smooth}--\ref{ass:local} hold. Under the ordinary local
sequence in Assumption~\ref{ass:local}, define
\[
Z_{nt}(\alpha,\gamma)
:=
\exp(\gamma^\top I_{t-1})
\big[\mathbf 1\{Y_t\le m(I_{t-1},\theta_0(\alpha))\}-\alpha\big].
\]
Then, uniformly over \((\alpha,\gamma)\in\mathcal T\times\Gamma\),
\[
\frac1{\sqrt n}\sum_{t=1}^nE_{n,B}Z_{nt}(\alpha,\gamma)
=
d_B(\alpha,\gamma)+o(1),
\]
where
\[
d_B(\alpha,\gamma)
:=
E\!\left[
\exp(\gamma^\top I_{t-1})
 f_{Y_t|I_{t-1}}\!\left(m(I_{t-1},\theta_0(\alpha))\right)
 \nabla_\theta m(I_{t-1},\theta_0(\alpha))^\top B(\alpha)
\right].
\]

If Assumption~\ref{ass:pre-est-regularity} also holds, then under the
pre-estimation local sequence define
\[
U_{nt}(\alpha,\gamma)
:=
 w_t(\gamma)
 \big[
 \mathbf 1\{Y_t\le m_{t\alpha}(\theta_{10},\theta_{20})\}-\alpha
 \big]
\]
and
\[
V_{nt}(\alpha)
:=
 g_{t\alpha}(\theta_{10},\theta_{20})
 \big[
 \mathbf 1\{Y_t\le m_{t\alpha}(\theta_{10},\theta_{20})\}-\alpha
 \big].
\]
Let
\[
A_1(\alpha,\gamma)
:=
E\!\left[
 f_{Y_t|I_{t-1}}\!\left(m_{t\alpha}(\theta_{10},\theta_{20})\right)
 w_t(\gamma)
 \nabla_{\theta_1}m_{t\alpha}(\theta_{10},\theta_{20})
\right],
\]
\[
A_2(\alpha,\gamma)
:=
E\!\left[
 f_{Y_t|I_{t-1}}\!\left(m_{t\alpha}(\theta_{10},\theta_{20})\right)
 w_t(\gamma)
 \nabla_{\theta_2}m_{t\alpha}(\theta_{10},\theta_{20})
\right],
\]
and
\[
D_\alpha
:=
E\!\left[
 g_{t\alpha}(\theta_{10},\theta_{20})
 f_{Y_t|I_{t-1}}\!\left(m_{t\alpha}(\theta_{10},\theta_{20})\right)
 \nabla_{\theta_1}^{\top}m_{t\alpha}(\theta_{10},\theta_{20})
\right],
\qquad
H_\alpha:=-L_\alpha^{-1}D_\alpha.
\]
Then, uniformly over the relevant index sets,
\[
\frac1{\sqrt n}\sum_{t=1}^nE_{n,B}U_{nt}(\alpha,\gamma)
=
A_1(\alpha,\gamma)^\top B(\alpha)+o(1),
\]
\[
\frac1{\sqrt n}\sum_{t=1}^nE_{n,B}V_{nt}(\alpha)
=
D_\alpha B(\alpha)+o(1).
\]
Moreover, the restricted nuisance estimator constructed under the null
restriction satisfies
\[
\Delta_{2n}(\alpha)
:=
\sqrt n\{\hat\theta_{2,0n}(\alpha)-\theta_{20}(\alpha)\}
=
\mathbb L_{n,B}(\alpha)+H_\alpha B(\alpha)+o_p(1),
\]
where
\[
\mathbb L_{n,B}(\alpha)
:=
-L_\alpha^{-1}
\frac1{\sqrt n}\sum_{t=1}^n
\big\{V_{nt}(\alpha)-E_{n,B}V_{nt}(\alpha)\big\}.
\]
Consequently, the deterministic drift in the pre-estimated local experiment,
written as \(\widetilde d_B\equiv d_{\mathrm{pre},B}\) in
\eqref{eq:pre-local-drift}, is
\[
\widetilde d_B(\alpha,\gamma)
=
A_1(\alpha,\gamma)^\top B(\alpha)
+
A_2(\alpha,\gamma)^\top H_\alpha B(\alpha).
\]
\end{lemma}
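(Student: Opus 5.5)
The plan is to prove each expansion by conditioning on $I_{t-1}$ (or $\mathcal F_{t-1}$) and applying the first-order conditional-CDF expansion in Assumption~\ref{ass:local}(ii). The nuisance-estimator representation then follows by linearizing the restricted moment equation.

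\emph{Ordinary drift.} Under $P_{n,B}$ the true $\alpha$-quantile is $m(I_{t-1},\theta_{n,B}(\alpha))$. Hence
\[
E_{n,B}\big[\mathbf 1\{Y_t\le m(I_{t-1},\theta_0(\alpha))\}\mid I_{t-1}\big]-\alpha
=F_{n,t}\big(m(I_{t-1},\theta_0(\alpha))\big)-F_{n,t}\big(m(I_{t-1},\theta_{n,B}(\alpha))\big).
\]
Assumption~\ref{ass:regularity}(iii) with $k=\sup_B\|B\|_\infty$ gives
\[
m(I_{t-1},\theta_0(\alpha))-m(I_{t-1},\theta_{n,B}(\alpha))=n^{-1/2}\nabla_\theta m^\top B(\alpha)+o(n^{-1/2}),
\]
uniformly in $t$ and $\alpha$. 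A mean-value step with the bounded density and its bounded derivative (Assumption~\ref{ass:regularity}(iv)) turns the CDF difference into $n^{-1/2}f_{Y_t\mid I_{t-1}}(m(I_{t-1},\theta_0(\alpha)))\nabla_\theta m^\top B(\alpha)$. The remainder is $o(n^{-1/2})$ by Assumption~\ref{ass:local}(ii), and the local densities converge uniformly to the null ones. Multiplying by $\exp(\gamma^\top I_{t-1})$ and taking expectations, stationarity of $I_{t-1}$ makes the $n$ summands identical. The factor $n^{-1/2}\cdot n\cdot n^{-1/2}=1$ then yields $d_B(\alpha,\gamma)+o(1)$.

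For uniformity in $(\alpha,\gamma)$, I would dominate the remainder by $\mathcal E_t^2\,o(1)$, using Assumption~\ref{ass:nuisance-smooth}(iii) and the $L_q$ envelope of the weights, with dominated convergence. The same argument with $w_t(\gamma)$ in place of the raw weight and only $\theta_1$ perturbed gives the $U_{nt}$ expansion, with $A_1$ as the drift. With $g_{t\alpha}$ as the weight it gives the $V_{nt}$ expansion, with $D_\alpha$ as the drift.

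\emph{Restricted estimator.} Write the estimating equation \eqref{eq:theta2-est} as $n^{-1/2}\sum_t g_{t\alpha}(\theta_{10},\hat\theta_{2,0n})[\mathbf 1\{Y_t\le m_{t\alpha}(\theta_{10},\hat\theta_{2,0n})\}-\alpha]=o_p(1)$ uniformly; the subgradient approximation error is $o_p(1)$ by the usual quantile-regression argument. Decompose its left side into three pieces:
\begin{itemize}
\item the score at $\theta_{20}$, which equals $n^{-1/2}\sum_t(V_{nt}-E_{n,B}V_{nt})+D_\alpha B(\alpha)+o(1)$ by the previous step;
\item the increment from $\theta_{20}$ to $\hat\theta_{2,0n}$ in the centered empirical process, which is $o_p(1)$ by the stochastic equicontinuity in Assumptions~\ref{ass:pre-est-regularity}(ii) and \ref{ass:local}(iii), once $\sqrt n$-consistency is known;
\item the compensator increment, which is $L_\alpha\Delta_{2n}(\alpha)+o_p(1)$ by the same CDF expansion, because at $\theta_{20}$ the conditional mean is already $O(n^{-1/2})$.
\end{itemize}
Solving the resulting linear equation with $\inf_u s_{\min}(L_u)>0$ gives $\Delta_{2n}=\mathbb L_{n,B}+H_\alpha B+o_p(1)$.

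\emph{Main obstacle.} The delicate point is the preliminary $\sqrt n$-consistency of $\hat\theta_{2,0n}$ uniformly in $\alpha$ under $P_{n,B}$. I would first get $\|\hat\theta_{2,0n}-\theta_{20}\|_\infty=o_p(1)$ by transferring null consistency through contiguity (Assumption~\ref{ass:local}(i)). I would then bootstrap it to $O_p(n^{-1/2})$ with the usual quadratic-minorant argument, using the eigenvalue bound on $L_\alpha$ and the density lower bound $c_0$.

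\emph{Drift formula.} Finally, insert the expansion of $\Delta_{2n}$ into the plug-in Taylor expansion \eqref{eq:Mhat-expansion}, which carries over to the local sequence by contiguity. The deterministic parts collect into $A_1^\top B+A_2^\top H_\alpha B$, which is the claimed $\widetilde d_B$.
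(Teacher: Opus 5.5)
Your proposal follows the paper's proof. A conditional-CDF Taylor expansion gives the $Z_{nt}$, $U_{nt}$ and $V_{nt}$ drifts, linearizing the restricted estimating equation gives $0=n^{-1/2}\sum_t\{V_{nt}-E_{n,B}V_{nt}\}+D_\alpha B(\alpha)+L_\alpha\Delta_{2n}(\alpha)+o_p(1)$, and substituting into the plug-in expansion yields $\widetilde d_B$; you also supply details the paper leaves implicit, namely the subgradient error, equicontinuity of the increment, and the negligible compensator cross term (negligible because the conditional mean at $\theta_{20}$ is already $O(n^{-1/2})$).

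One simplification: the step you flag as the main obstacle needs no consistency-then-quadratic-minorant argument, because the null influence-function expansion is an $o_p(1)$ statement about a fixed statistic and contiguity transfers it, and hence the root-$n$ rate, directly to $P_{n,B}$, as the paper does in the proof of Theorem~\ref{thm:boot-local-pre}; since $l_{t,\alpha}=-L_\alpha^{-1}V_{nt}(\alpha)$, combining this with your $V_{nt}$ drift even delivers the $\Delta_{2n}$ representation without re-linearizing.
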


\begin{proof}
For the ordinary local sequence, Assumption~\ref{ass:local} gives
\(\theta_{n,B}(\alpha)=\theta_0(\alpha)-n^{-1/2}B(\alpha)\). Hence
\[
m(I_{t-1},\theta_{n,B}(\alpha))
=
m(I_{t-1},\theta_0(\alpha))
-
\frac{1}{\sqrt n}
\nabla_\theta m(I_{t-1},\theta_0(\alpha))^\top B(\alpha)
+o(n^{-1/2})
\]
uniformly over the relevant indices. Since the statistic is evaluated at the
null-imposed quantile, a first-order Taylor expansion of the conditional
distribution around \(m(I_{t-1},\theta_0(\alpha))\) yields
\begin{align*}
E_{n,B}\{Z_{nt}(\alpha,\gamma)\mid I_{t-1}\}
&=\frac1{\sqrt n}\exp(\gamma^\top I_{t-1})
f_{Y_t|I_{t-1}}\!\left(m(I_{t-1},\theta_0(\alpha))\right)\\
&\quad\times
\nabla_\theta m(I_{t-1},\theta_0(\alpha))^\top B(\alpha)
+o(n^{-1/2}).
\end{align*}
Taking expectations and summing over \(t\) gives the stated expansion for
\(d_B\).

The pre-estimation expansions follow by the same conditional-CDF Taylor
argument applied to the null-evaluated score of interest and to the restricted
nuisance score. This gives the two local mean expansions for \(U_{nt}\) and
\(V_{nt}\). Expanding the restricted nuisance estimating equation around
\((\theta_{10},\theta_{20})\) gives
\[
0
=
\frac1{\sqrt n}\sum_{t=1}^n
\big\{V_{nt}(\alpha)-E_{n,B}V_{nt}(\alpha)\big\}
+
D_\alpha B(\alpha)
+
L_\alpha\Delta_{2n}(\alpha)
+o_p(1),
\]
which implies the stated representation of \(\Delta_{2n}\). Substituting this
representation into the first-order expansion of the plug-in empirical process
shows that the deterministic part is
\(A_1(\alpha,\gamma)^\top B(\alpha)+A_2(\alpha,\gamma)^\top H_\alpha B(\alpha)\),
which proves the asserted pre-estimation drift formula.
\end{proof}

\begin{lemma}
\label{lem:VC}
Under Assumption \ref{ass:regularity}(i)--(ii), the function class
\[
\mathcal{G} = \Big\{ (i,y) \mapsto \exp(\gamma^{\top}i) \big[ \mathbf{1}\{y \le m(i,\theta(\alpha))\} - \alpha \big] \;\Big|\; \gamma \in \Gamma, \alpha \in \mathcal{T}, \theta \in \Theta \Big\}
\]
is a VC-subgraph class.
\end{lemma}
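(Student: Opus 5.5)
We write each element of $\mathcal G$ as a product of two simpler functions and use the standard permanence properties of VC-subgraph classes (van der Vaart and Wellner, Lemma~2.6.18), together with products of uniformly bounded VC classes via covering numbers. The factors are
\[
\mathcal W=\{i\mapsto \exp(\gamma^\top i):\gamma\in\Gamma\},\qquad
\mathcal H=\{(i,y)\mapsto \mathbf 1\{y\le m(i,\vartheta)\}-\alpha:\ \vartheta\in\Theta,\ \alpha\in\mathcal T\}.
\]
Here the index $\theta(\alpha)$ is replaced by an arbitrary point $\vartheta\in\Theta$. This only enlarges the class, so it suffices to treat the enlarged class. Throughout, $\exp(\gamma^\top i)$ stands for $\exp\{\gamma^\top\Phi(i)\}$, as in Assumption~S2(i).

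\emph{Weight class.} The functions $i\mapsto\gamma^\top\Phi(i)$ form a subset of a $d_I$-dimensional vector space of functions, so they are VC-subgraph with index at most $d_I+2$. Composition with the fixed monotone map $\exp$ preserves the VC-subgraph property, so $\mathcal W$ is VC-subgraph. It has envelope $W(i)=\sup_{\gamma\in\Gamma}\exp(\gamma^\top\Phi(i))$, which lies in $L_q$ by Assumption~S2(i).

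\emph{Indicator class.} By Assumption~S2(ii), $\{i\mapsto m(i,\vartheta)\}$ is VC-subgraph. The sets $\{(i,y):y\le m(i,\vartheta)\}$ are therefore complements of subgraphs, so they form a VC class of sets, and their indicators form a VC class of functions. Subtracting the constant $\alpha\in\mathcal T$ means adding an element of the one-dimensional class of constants, which again yields a VC-subgraph class. It is uniformly bounded by $1$.

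\emph{Product step, the main obstacle.} The main difficulty is the product step. Products of VC-subgraph classes need not be VC-subgraph in general, so I would not rely on a generic product lemma. The first approach exploits the special structure: $\exp(\gamma^\top i)>0$ and $\mathbf 1\{\cdot\}-\alpha$ takes only the two values $1-\alpha$ and $-\alpha$. Hence the subgraph $\{(i,y,s): s<e^{\gamma^\top i}(\mathbf 1\{y\le m\}-\alpha)\}$ equals
\[
\bigl[\{y\le m(i,\vartheta)\}\cap\{s<(1-\alpha)e^{\gamma^\top i}\}\bigr]\ \cup\ \bigl[\{y> m(i,\vartheta)\}\cap\{s<-\alpha e^{\gamma^\top i}\}\bigr].
\]
Each set $\{s<c\,e^{\gamma^\top i}\}$ with $c\in\mathbb R$ is a subgraph of the class $\{c\,e^{\gamma^\top\Phi(i)}\}$. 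On the regions $s>0$ and $s\le 0$ (for fixed sign of $c$), this is equivalent to $\log|s|\lessgtr \log|c|+\gamma^\top\Phi(i)$, which is the subgraph of a finite-dimensional linear class in $(\log|c|,\gamma)$ and therefore VC. Since finite unions and intersections of VC classes of sets are VC (Lemma~2.6.17), the subgraphs of $\mathcal G$ form a VC class, which gives the claim. The positive and negative parts of $s$ together with the sign of $c$ split the argument into finitely many cases, and each case is handled by the same linear-class argument.

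The fallback, if a careful index count is wanted, is to bound uniform entropy directly: $\sup_Q N(\varepsilon\|W\|_{Q,2},\mathcal G,L_2(Q))\le \sup_Q N(\varepsilon/2\cdot\|W\|,\mathcal W)\,\sup_Q N(\varepsilon/2,\mathcal H)$, which is polynomial in $1/\varepsilon$. This polynomial covering bound is all that is used downstream (Theorem~\ref{thm:null_main}).
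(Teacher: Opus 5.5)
Your first argument for the product step is essentially the paper's proof. You enlarge the class to arbitrary $\vartheta\in\Theta$, split the subgraph over $\{y\le m(i,\vartheta)\}$ and its complement, rewrite each exponential threshold set as a log-linear halfspace on the relevant sign region of $s$ (for the $-\alpha$ piece the split is $s<0$ versus $s\ge 0$), and close under finite Boolean operations.

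The weight-class and indicator-class preliminaries are not needed for that argument. Also, adding a varying constant is not a general VC-subgraph permanence property; it is harmless here only because the functions are two-valued. The uniform-entropy fallback gives only polynomial covering numbers. That is weaker than the VC-subgraph property the lemma asserts, and it is VC-subgraph that Lemma~\ref{lem:mixing-fclt} is stated to require.
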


\begin{proof}
Consider the unrestricted class
\[
\widetilde{\mathcal G}
=
\left\{
(i,y)\mapsto
\exp(\gamma^\top i)
\left[\mathbf 1\{y\le m(i,\theta)\}-\alpha\right]:
\gamma\in\Gamma,\ \alpha\in\mathcal T,\ \theta\in\Theta
\right\}.
\]
Since the class of interest $\mathcal G$ is obtained from
$\widetilde{\mathcal G}$ by imposing the restriction $\theta=\theta(\alpha)$,
it suffices to show that $\widetilde{\mathcal G}$ is VC-subgraph.

Fix $\psi\in\widetilde{\mathcal G}$ and let
$S_\psi=\{(i,y,t):t<\psi(i,y)\}$ be its subgraph. Write
$A_\theta=\{(i,y):y\le m(i,\theta)\}$. On $A_\theta$,
$\psi(i,y)=(1-\alpha)\exp(\gamma^\top i)$, while on $A_\theta^c$,
$\psi(i,y)=-\alpha\exp(\gamma^\top i)$. Hence
\[
S_\psi
=
\big[(A_\theta\times\mathbb R)\cap C^+_{\gamma,\alpha}\big]
\cup
\big[(A_\theta^c\times\mathbb R)\cap C^-_{\gamma,\alpha}\big],
\]
where
$C^+_{\gamma,\alpha}
=
\{(i,y,t):t<(1-\alpha)\exp(\gamma^\top i)\}$
and
$C^-_{\gamma,\alpha}
=
\{(i,y,t):t<-\alpha\exp(\gamma^\top i)\}$.

By Assumption~\ref{ass:regularity}(ii), the class
$\{A_\theta:\theta\in\Theta\}$ is VC. Therefore
$\{A_\theta\times\mathbb R:\theta\in\Theta\}$ and its complement are also VC.
It remains only to verify that the two threshold classes
$\{C^+_{\gamma,\alpha}\}$ and $\{C^-_{\gamma,\alpha}\}$ are VC. Since
$\mathcal T$ is compactly contained in $(0,1)$, both $\log\alpha$ and
$\log(1-\alpha)$ are well defined and uniformly bounded.

For $C^+_{\gamma,\alpha}$, the inequality
$t<(1-\alpha)\exp(\gamma^\top i)$ is automatic on $\{t\le0\}$. On $\{t>0\}$,
it is equivalent to
$\log t-\gamma^\top i-\log(1-\alpha)<0$. Thus
$C^+_{\gamma,\alpha}$ is a finite union of $\{t\le0\}$ and the inverse image
of a finite-dimensional linear halfspace under the fixed transformation
$(i,y,t)\mapsto(i,\log t,1)$ on $\{t>0\}$. Hence
$\{C^+_{\gamma,\alpha}:\gamma\in\Gamma,\alpha\in\mathcal T\}$ is VC.

For $C^-_{\gamma,\alpha}$, the defining inequality cannot hold on
$\{t\ge0\}$. On $\{t<0\}$,
the inequality $t<-\alpha\exp(\gamma^\top i)$ is equivalent to
$\log\alpha+\gamma^\top i-\log(-t)<0$. Hence
$C^-_{\gamma,\alpha}$ is the inverse image of a finite-dimensional linear
halfspace under the fixed transformation
$(i,y,t)\mapsto(i,\log(-t),1)$ on $\{t<0\}$, and therefore
$\{C^-_{\gamma,\alpha}:\gamma\in\Gamma,\alpha\in\mathcal T\}$ is VC.

Since VC classes are closed under finite Boolean operations, the subgraph
class $\{S_\psi:\psi\in\widetilde{\mathcal G}\}$ is VC. Thus
$\widetilde{\mathcal G}$ is VC-subgraph. Since
$\mathcal G\subset\widetilde{\mathcal G}$, $\mathcal G$ is also
VC-subgraph.
\end{proof}

\begin{lemma}[Adapted from \citet{arconesyu1994}, Theorem 2.1]
\label{lem:mixing-fclt}
Let $X_1, X_2, \ldots$ be a stationary sequence in a Polish space with marginal distribution $P$, and let $\mathcal{F}$ be a class of functions in $L_2(P)$. Suppose there exists a $q > 2$ such that:
\begin{itemize}
    \item[(i)] The $\beta$-mixing coefficients satisfy
    \[
    \lim_{k \to \infty} k^{\frac{q}{q-2}} (\log k)^{\frac{2(q-1)}{q-2}} \, \beta_X(k) = 0;
    \]
    \item[(ii)] $\mathcal{F}$ is permissible, VC-subgraph, and has an envelope $F$ such that $P^* F^q < \infty$.
\end{itemize}
Then, the empirical process converges weakly:
\[
\mathbb{G}_n f := \frac{1}{\sqrt{n}} \sum_{t=1}^n (f(X_t) - P f) \rightsquigarrow \mathbb{G} f \quad \text{in } \ell^\infty(\mathcal{F}),
\]
where $\mathbb{G}$ is a tight, zero-mean Gaussian process with covariance kernel given by
\[
Cov(\mathbb{G} f, \mathbb{G} g) = \sum_{k=-\infty}^{\infty} Cov(f(X_0), g(X_k)) \quad \text{for all } f,g \in \mathcal{F}.
\]
\end{lemma}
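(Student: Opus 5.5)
The statement is Lemma~\ref{lem:mixing-fclt}, adapted from Theorem~2.1 of \citet{arconesyu1994}, so the plan is to check that its hypotheses reduce to theirs rather than to reprove the empirical-process limit from scratch. The standard route to weak convergence in $\ell^\infty(\mathcal F)$ needs two ingredients: convergence of the finite-dimensional distributions, and asymptotic uniform equicontinuity of $\mathbb G_n$ with respect to a semimetric $\rho$ under which $\mathcal F$ is totally bounded. Tightness of the limit then follows from the equicontinuity.

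For the finite-dimensional distributions, fix $f_1,\dots,f_k\in\mathcal F$. By the Cram\'er--Wold device it is enough to treat a scalar linear combination $g=\sum_j c_jf_j$. Since $g$ lies in $L_q(P)$ with $q>2$, Davydov's covariance inequality gives $|\mathrm{Cov}(g(X_0),g(X_k))|\lesssim \beta(k)^{1-2/q}\|g\|_q^2$. The rate in (i) makes $\beta(k)=o(k^{-q/(q-2)})$, so $\sum_k\beta(k)^{(q-2)/q}<\infty$. Hence the long-run covariance series converges absolutely and defines the stated kernel. A CLT for strongly mixing stationary sequences under this summability condition, for instance Ibragimov's or Doukhan's, then yields Gaussian limits with exactly that kernel.

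Equicontinuity is the main step. I would follow the Arcones--Yu blocking construction. Split $\{1,\dots,n\}$ into alternating big and small blocks of length $a_n$. Using Berbee's coupling, replace the big blocks by independent copies, at a total-variation cost of $(n/a_n)\beta(a_n)$, and show that the small blocks are negligible. The independent big-block sums can then be symmetrized. Conditionally on the data, the VC-subgraph property controls the uniform entropy: $N(\varepsilon\|F\|_{Q,2},\mathcal F,L_2(Q))\le C\varepsilon^{-V}$ for every finitely discrete $Q$. Combined with the $L_q$ envelope and a truncation of $F$ at level $M_n$, this gives a Dudley-type chaining bound of the form
\[
\limsup_n P^*\Big(\sup_{\rho(f,g)<\delta}|\mathbb G_n(f-g)|>\varepsilon\Big)\to0\quad(\delta\downarrow0).
\]
Permissibility supplies the measurability needed for symmetrization and Fubini arguments.

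The delicate part, and the one I expect to be the main obstacle, is balancing $a_n$ and the truncation level. We need $(n/a_n)\beta(a_n)\to0$, the truncated envelope contributions to vanish via $PF^q<\infty$, and the block-sum variances to be controlled. The exponent $k^{q/(q-2)}(\log k)^{2(q-1)/(q-2)}$ in (i) is exactly what makes a choice like $a_n\asymp n^{1/2}/\log n$ work. Rather than redo that calculation, I would cite Arcones and Yu's Theorem~2.1 directly, after verifying that conditions (i)--(ii) match their hypotheses, and note that the covariance formula follows from the finite-dimensional computation above.
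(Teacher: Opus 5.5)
Your proposal is correct and follows the paper's route: the paper gives no independent proof and rests entirely on Theorem~2.1 of \citet{arconesyu1994}, whose hypotheses are exactly (i)--(ii). Two side remarks in your sketch need fixing, although neither matters once you cite that theorem: first, $\beta(k)=o(k^{-q/(q-2)})$ alone gives only $\beta(k)^{(q-2)/q}=o(k^{-1})$, which is not summable, so you must keep the logarithmic factor in (i) to get $\beta(k)^{(q-2)/q}=o\bigl(k^{-1}(\log k)^{-2(q-1)/q}\bigr)$ with $2(q-1)/q>1$; second, condition (i) is equivalent to $(n/a_n)\beta(a_n)\to0$ along $a_n\log a_n\asymp n^{(q-2)/(2(q-1))}$, which reduces to your $a_n\asymp n^{1/2}/\log n$ only in the bounded-envelope limit $q\to\infty$.
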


\begin{lemma}[Adapted from \citet{neweymcfadden1994}, Lemma 2.4]
\label{lem:uwlln}
If $\{Z_t\}$ is strictly stationary and ergodic, $\Theta$ is compact, 
$a(\cdot,\theta)$ is measurable for each $\theta$, $a(Z_t,\theta)$ is
continuous in $\theta$ with probability one,
and there exists a function $d(Z)$ such that 
$\|a(Z_t,\theta)\|\le d(Z_t)$ for all $\theta\in\Theta$ 
and $\mathbb{E}[d(Z_t)]$ is finite, 
then
\[
\sup_{\theta\in\Theta}
\left|
\frac{1}{n}\sum_{t=1}^{n} a(Z_t,\theta)
-
\mathbb{E}[a(Z_t,\theta)]
\right|
\overset{p}{\longrightarrow} 0.
\]
\end{lemma}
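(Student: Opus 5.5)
We prove the uniform law of large numbers by the standard bracketing-by-continuity argument: finite coverings of $\Theta$ combined with the pointwise ergodic theorem. Since the conclusion will hold almost surely, convergence in probability follows.

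First, for $\theta\in\Theta$ and $\delta>0$ define the local modulus
\[
\Delta(Z_t,\theta,\delta):=\sup_{\theta'\in\Theta,\ \|\theta'-\theta\|<\delta}\big\|a(Z_t,\theta')-a(Z_t,\theta)\big\|.
\]
Measurability needs a remark. The hypothesis gives a single null event off which $\theta\mapsto a(Z_t,\theta)$ is continuous on all of $\Theta$. Off that event, the supremum may be taken over a countable dense subset of the ball, so $\Delta$ is measurable up to a null set. We have the bound $\Delta(Z_t,\theta,\delta)\le 2d(Z_t)$, which is integrable. Continuity gives $\Delta(Z_t,\theta,\delta)\to0$ a.s. as $\delta\downarrow0$. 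Dominated convergence then yields $E[\Delta(Z_t,\theta,\delta)]\to0$ for every fixed $\theta$. The same domination shows that $\theta\mapsto E[a(Z_t,\theta)]$ is well defined and continuous.

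Second, fix $\varepsilon>0$. For each $\theta$ choose $\delta_\theta$ with $E\Delta(Z_t,\theta,\delta_\theta)<\varepsilon$. The open balls $B(\theta,\delta_\theta)$ cover the compact set $\Theta$, so there is a finite subcover with centers $\theta_1,\dots,\theta_K$ and radii $\delta_1,\dots,\delta_K$. For $\theta\in B(\theta_j,\delta_j)$ we use the triangle inequality:
\[
\Big\|\tfrac1n\textstyle\sum_t a(Z_t,\theta)-E a(Z_t,\theta)\Big\|
\le \Big\|\tfrac1n\sum_t a(Z_t,\theta_j)-Ea(Z_t,\theta_j)\Big\|
+\tfrac1n\sum_t\Delta(Z_t,\theta_j,\delta_j)+E\Delta(Z_t,\theta_j,\delta_j).
\]

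Third, we apply Birkhoff's ergodic theorem to the finitely many integrable stationary ergodic sequences $a(Z_t,\theta_j)$ and $\Delta(Z_t,\theta_j,\delta_j)$, for $j=1,\dots,K$. These are measurable functions of the stationary ergodic process, hence themselves stationary and ergodic. The first term then tends to zero a.s., and the second tends to $E\Delta(Z_t,\theta_j,\delta_j)<\varepsilon$ a.s. Taking the maximum over $j$ and the supremum over $\theta$ gives
\[
\limsup_n\sup_{\theta\in\Theta}\Big\|\tfrac1n\textstyle\sum_t a(Z_t,\theta)-Ea(Z_t,\theta)\Big\|\le 2\varepsilon\quad\text{a.s.}
\]
Letting $\varepsilon\downarrow0$ along a countable sequence gives almost sure, and hence in-probability, convergence.

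The only delicate step is the first one. There we must ensure that the supremum defining $\Delta$ is measurable, and that the exceptional null set in the continuity hypothesis does not depend on $\theta$. We plan to handle both issues with the countable-dense-subset reduction described above; if needed, outer expectations can be used instead, at the cost of stating the result with outer probability.
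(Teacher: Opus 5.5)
Your proposal is correct and follows the paper's proof essentially line for line: the same local modulus dominated by $2d$, a finite cover of $\Theta$ with small expected modulus, the three-term triangle inequality, and Birkhoff's theorem applied to finitely many sequences, giving an almost-sure $\limsup$ of at most $2\varepsilon$. Your countable-dense-subset measurability remark is the same separability argument the paper invokes, and your reading of the continuity hypothesis as holding off a single null event is what that argument requires.
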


\begin{proof}
Write $P_nu=n^{-1}\sum_{t=1}^n u(Z_t)$ and $Pu=E[u(Z_t)]$.
For $\theta\in\Theta$ and $\delta>0$, let
\[
b_{\theta,\delta}(z)
:=
\sup_{\vartheta\in\Theta:\|\vartheta-\theta\|<\delta}
\|a(z,\vartheta)-a(z,\theta)\|.
\]
Separability of the compact parameter space makes this supremum measurable.
Almost-sure continuity and domination by $2d$ imply
$E b_{\theta,\delta}(Z_t)\downarrow0$ as $\delta\downarrow0$.
For any $\varepsilon>0$, compactness therefore yields centers
$\theta_1,\ldots,\theta_K$ and radii $\delta_j$ that cover $\Theta$ and satisfy
$E b_{\theta_j,\delta_j}(Z_t)<\varepsilon$. If $\vartheta$ belongs to the
$j$th neighborhood, then
\[
\|(P_n-P)a(\cdot,\vartheta)\|
\le
\|(P_n-P)a(\cdot,\theta_j)\|+P_n b_{\theta_j,\delta_j}
+P b_{\theta_j,\delta_j}.
\]
The ergodic theorem applied to this finite collection gives an almost-sure
limsup bounded by $2\varepsilon$. Letting $\varepsilon\downarrow0$ proves the
claim (and hence convergence in probability).
\end{proof}

\begin{lemma}[Conditional Gaussian multiplier convergence]
\label{lem:conditional-gaussian-multiplier}
Let
\[
g_{\alpha,\gamma}(Y_t,I_{t-1})
:=
\exp(\gamma^\top I_{t-1})
\{\mathbf 1[Y_t\le m(I_{t-1},\theta_0(\alpha))]-\alpha\},
\qquad
(\alpha,\gamma)\in\mathcal T\times\Gamma,
\]
and let $\mathcal G_0$ be the resulting class. Under
Assumptions~\ref{ass:mixing}--\ref{ass:regularity} and $H_0$, define
\[
\rho_n^2(g,h):=\frac1n\sum_{t=1}^n\{g(Z_t)-h(Z_t)\}^2,
\qquad
\rho^2(g,h):=E\{g(Z_t)-h(Z_t)\}^2.
\]
Then
\begin{equation}
\sup_{g,h\in\mathcal G_0}
|\rho_n^2(g,h)-\rho^2(g,h)|\xrightarrow{p}0.
\label{eq:sample-semimetric}
\end{equation}
Moreover, for i.i.d. $N(0,1)$ multipliers independent of the data,
\[
\mathbb G_n^*g:=\frac1{\sqrt n}\sum_{t=1}^n\omega_t g(Z_t)
\rightsquigarrow^*\mathcal M(g)
\quad\text{in }\ell^\infty(\mathcal G_0)
\]
conditionally on the data in probability, where $\mathcal M$ is the same
Gaussian limit as the centered original empirical process.
\end{lemma}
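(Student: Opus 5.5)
The plan has two parts: first prove the uniform convergence \eqref{eq:sample-semimetric} of the sample semimetric, then deduce conditional weak convergence of the Gaussian multiplier process from it. The key observation is that, conditionally on the data, $\mathbb G_n^*$ is \emph{exactly} a centered Gaussian process with covariance $\rho_n$-type kernel $n^{-1}\sum_t g(Z_t)h(Z_t)$. So no separate conditional CLT is needed. It suffices to show that (a) the conditional covariance converges uniformly to the covariance of $\mathcal M$, and (b) $\mathbb G_n^*$ is asymptotically uniformly $\rho$-equicontinuous in probability. Finite-dimensional convergence then follows from Gaussianity and (a), and tightness from (b).

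For \eqref{eq:sample-semimetric}, I would write $\{g-h\}^2 = g^2 - 2gh + h^2$ and show uniform convergence of $P_n(gh)$ to $P(gh)$ over $\mathcal G_0\times\mathcal G_0$. The product class $\{gh\}$ inherits the VC-subgraph property from Lemma~\ref{lem:VC}, since products of VC-subgraph classes with envelopes have polynomial covering numbers. Its envelope $F^2$ is integrable because $F$ has an $L_q$ envelope with $q>2$ by \assRegularity(i). Under $\beta$-mixing (hence ergodicity) from \assMixing(i), a Glivenko--Cantelli argument gives the result. Concretely, I would run the bracketing/finite-cover argument used in the proof of Lemma~\ref{lem:uwlln}, indexing by $(\alpha_1,\gamma_1,\alpha_2,\gamma_2)$. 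Continuity in $\gamma$ is immediate. Continuity in $\alpha$ holds only in $L_1$, because of the indicator jump; the modulus $E\sup_{|\alpha-\alpha'|<\delta}|\mathbf 1\{Y\le m(\alpha)\}-\mathbf 1\{Y\le m(\alpha')\}|F^2$ vanishes as $\delta\to0$ by the bounded density in \assRegularity(iv), continuity of $\alpha\mapsto m(I,\theta_0(\alpha))$, Hölder's inequality and dominated convergence. Under $H_0$ and \assMixing(iii), the MDS property gives $P(gh)$ equal to the long-run covariance $\sum_k \mathrm{Cov}(g(Z_0),h(Z_k))$ of Lemma~\ref{lem:mixing-fclt}, because the cross-lag terms vanish. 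Hence $P(gh)$ is exactly the kernel of $\mathcal M$ in Theorem~\ref{thm:null_main}.

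For conditional equicontinuity, I would use that conditionally on the data, $\mathbb G_n^*$ is sub-Gaussian with respect to $\rho_n$: $\mathrm{Var}^*(\mathbb G_n^*g-\mathbb G_n^*h)=\rho_n^2(g,h)$. Dudley's entropy bound then gives
\[
E^*\sup_{\rho_n(g,h)\le\delta}|\mathbb G_n^*g-\mathbb G_n^*h|\lesssim\int_0^{\delta}\sqrt{\log N(\epsilon\|F\|_{n},\mathcal G_0,L_2(P_n))}\,d\epsilon\cdot\|F\|_{n}.
\]
The uniform VC entropy bound makes the integral $O(\delta\sqrt{\log(1/\delta)})$ uniformly in the empirical measure, and $\|F\|_n=O_p(1)$ by the LLN. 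By \eqref{eq:sample-semimetric}, $\{\rho(g,h)\le\delta/2\}\subset\{\rho_n(g,h)\le\delta\}$ with probability tending to one. This yields asymptotic $\rho$-equicontinuity in probability. Total boundedness of $(\mathcal G_0,\rho)$ follows from the VC property. Combining these with the convergence of finite-dimensional Gaussian laws and the bounded-Lipschitz characterization (as in van der Vaart--Wellner, Thm.~2.9.6 style arguments) gives $\sup_{h\in BL_1}|E^*h(\mathbb G_n^*)-Eh(\mathcal M)|\to0$ in outer probability.

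I expect the main obstacle to be the uniform LLN for the product class, specifically the $\alpha$-direction. Lemma~\ref{lem:uwlln} as stated requires almost-sure continuity in the parameter, which fails for the indicator. I would therefore adapt its proof so that it only needs the expected local modulus $E b_{\theta,\delta}\to0$; that is all the argument actually uses, and it holds here by the density bound. Measurability of the suprema should follow from permissibility in \assRegularity(ii).
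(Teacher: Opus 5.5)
Your proposal is correct and follows the paper's proof essentially step for step: a uniform law for the squared-difference and product classes, then exact conditional Gaussianity to get finite-dimensional convergence from covariance convergence (with the MDS property identifying $P(gh)$ as the kernel of $\mathcal M$), then the Gaussian entropy bound with the uniform VC covering bound, $P_nF^2=O_p(1)$, and \eqref{eq:sample-semimetric} to pass from $\rho_n$-balls to $\rho$-balls. The only difference is in the uniform law: the paper truncates the envelope $4F^2$ and invokes finite bracketing from the VC entropy bound, whereas you build the brackets explicitly from the $L_1$-modulus in $(\alpha,\gamma)$. No real adaptation of Lemma~\ref{lem:uwlln} is needed, because its proof only uses continuity at each fixed index with probability one, and that holds here since $P\{Y_t=m(I_{t-1},\theta_0(\alpha))\}=0$ for each fixed $\alpha$.
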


\begin{proof}
Write $P_nu=n^{-1}\sum_{t=1}^n u(Z_t)$ and $Pu=E[u(Z_t)]$.
By Lemma~\ref{lem:VC}, $\mathcal G_0$ is a permissible VC-subgraph class with
envelope
\[
F(Z_t)=\exp\{\bar\gamma\|I_{t-1}\|_2\},
\qquad PF^q<\infty
\]
for the $q>2$ in Assumption~\ref{ass:mixing}. Standard VC closure rules imply
that
\[
\mathcal H:=\{(g-h)^2:g,h\in\mathcal G_0\}
\]
is VC-type, with envelope $4F^2\in L_{q/2}(P)\subset L_1(P)$. We first verify
the uniform law of large numbers needed for the random semimetric. For a fixed
truncation level $K$, the truncated class
$\{u\mathbf 1(4F^2\le K):u\in\mathcal H\}$ is uniformly bounded and retains
the finite-entropy VC property. The standard finite-bracketing consequence of
the VC entropy bound in \citet[Section~2.6]{vdvw1996}, followed by the ergodic
theorem for the finitely many bracket endpoints, gives a uniform law for this
truncated class. The omitted tail is bounded by
\[
P_n\!\left[4F^2\mathbf 1\{4F^2>K\}\right]
+P\!\left[4F^2\mathbf 1\{4F^2>K\}\right],
\]
which becomes arbitrarily small in probability as $K\to\infty$ by the ergodic
theorem and uniform integrability. This proves
$\sup_{u\in\mathcal H}|P_nu-Pu|=o_p(1)$ and hence
\eqref{eq:sample-semimetric}. The same argument applied to the product class
$\{gh:g,h\in\mathcal G_0\}$ gives uniform convergence of the sample covariance
kernel.

Conditional on the data, $\mathbb G_n^*$ is a centered Gaussian process with
canonical semimetric $\rho_n$. Thus, for every finite set of indices, its
conditional covariance matrix converges to that of $\mathcal M$, which proves
conditional finite-dimensional convergence. It remains to establish
conditional asymptotic equicontinuity. The VC bound holds uniformly over every
finitely discrete probability measure $Q$:
\[
N\!\left(\varepsilon\|F\|_{Q,2},\mathcal G_0,L_2(Q)\right)
\le (A/\varepsilon)^v,
\qquad 0<\varepsilon\le1,
\]
for constants $A,v$ independent of $Q$. Apply the Gaussian entropy inequality
\citep[Theorem~2.2.4]{vdvw1996} conditionally with $Q=P_n$. Since
$P_nF^2=O_p(1)$ and \eqref{eq:sample-semimetric} holds, the conditional expected
modulus over $\rho$-balls of radius $\delta$ is bounded, up to a universal
constant and an $o_p(1)$ term, by
\[
\int_0^{C\delta}\sqrt{\log(A/\varepsilon)}\,d\varepsilon,
\]
which tends to zero as $\delta\downarrow0$. Hence $\mathbb G_n^*$ is
conditionally asymptotically tight and equicontinuous. Combining this property
with finite-dimensional convergence proves the asserted conditional weak
convergence.
\end{proof}

We first state the smoothing result for a generic smooth reference curve. This
separates what the fitted-quantile calculation estimates from the additional
fact that, on the tested interval, the reference curve is the true conditional
quantile curve. For each \(\alpha\in\mathcal T\), the ultimate density target is
\[
f_{t\alpha}
:=
f_{Y_t\mid I_{t-1}}\!\big(m(I_{t-1},\theta_0(\alpha))\mid I_{t-1}\big).
\]
We use the fitted-quantile density approach of \citet{escanciano2019}. Let
$\mathcal A$, $L_{\mathrm{sim}}$, and the simulated indices $\{U_j\}$ be as
specified in Assumption~\ref{ass:regularity}(iv)(4). For target \(\alpha\), write
\[
\widehat f_h(i,\vartheta;\alpha)
:=
\frac{L_{\mathrm{sim}}}{J h_J}
\sum_{j=1}^{J}
K\!\left(
\frac{m(i,\vartheta(\alpha))-m(i,\vartheta(U_j))}{h_J}
\right).
\]
For a deterministic parameter curve $\vartheta^r$ on $\mathcal A$, write
\[
q_t^r(u):=m(I_{t-1},\vartheta^r(u)),
\qquad
s_t^r(\alpha):=\frac{1}{\partial_uq_t^r(\alpha)}.
\]
We write \(\widehat f_h(I_{t-1},\vartheta^r;\alpha)\) for the estimator
evaluated along this reference curve.

\begin{lemma}[Uniform rate along a smooth reference curve]
\label{lem:unif_rate}
Suppose Assumptions~\ref{ass:mixing}--\ref{ass:regularity} and
\ref{ass:kernel}(a)--(c) hold. Throughout this lemma, $h$ denotes a generic
bandwidth ranging over the admissible interval $[a_J, b_J]$ specified in
Assumption \ref{ass:kernel}(b).
Suppose, almost surely and uniformly in $t$, the reference curve $q_t^r$ is
three times continuously differentiable and strictly increasing on
$\mathcal A$, with
\[
0<c_r\le \partial_uq_t^r(u)\le C_r<\infty.
\]
Let
\[
\widetilde s_t^r(z):=\partial_z\{(q_t^r)^{-1}(z)\},
\]
and suppose its first two derivatives are uniformly bounded on the image of
$\mathcal A$.
Define a sequence $r_J\to\infty$, allowed to depend on $(J,n)$.
Let the stochastic convergence rate be:
\begin{equation}
u_J := \sqrt{\frac{r_J}{J a_J}}.
\label{eq:U}
\end{equation}

If the simulation size $J$ is large enough relative to $n$ and the bandwidth
covering entropy such that
\begin{equation}
r_J \ge C_0\{\log J+\log n+\log(1/a_J)\},
\qquad
\frac{r_J}{J a_J}\to0,
\label{eq:D}
\end{equation}
for a sufficiently large constant $C_0>0$, then the Monte Carlo estimator
$\widehat f_h$ satisfies
\begin{equation}
\sup_{h \in [a_J, b_J]} \sup_{\alpha \in \mathcal{T}} \max_{1 \le t \le n}
\left| \widehat{f}_h(I_{t-1}, \vartheta^r;\alpha) - s_t^r(\alpha) \right|
= O_p\big(b_J^2 + u_J\big).
\label{eq:R}
\end{equation}
If $\vartheta^r=\bar\theta$, Assumption~\ref{ass:regularity}(iv)(4) gives
$s_t^r(\alpha)=f_{t\alpha}$. If
$\vartheta^r=(\bar\theta_1,\theta_{20})$, the same identity follows from
Assumption~\ref{ass:pre-est-regularity}. Thus \eqref{eq:R} yields the desired
conditional-density rate for both the known-parameter and pre-estimation
oracle curves. To see the target identity directly, either assumption gives
$F_{Y_t\mid I_{t-1}=i}\{q_i^r(u)\}=u$ on the maintained neighborhood.
Differentiating with respect to $u$ yields
$f_{Y_t\mid I_{t-1}=i}\{q_i^r(u)\}\partial_u q_i^r(u)=1$. The derivative lower
bound then gives $s_i^r(u)=f_{Y_t\mid I_{t-1}=i}\{q_i^r(u)\}$.
\end{lemma}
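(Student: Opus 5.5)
We split the error into a deterministic bias and a Monte Carlo fluctuation. Writing $q=q_t^r$ and $\widehat f_h(\alpha)=\widehat f_h(I_{t-1},\vartheta^r;\alpha)$, set
\[
\bar f_h(\alpha):=E_U\widehat f_h(\alpha)
=\frac{1}{h}\int_{\mathcal A}K\!\left(\frac{q(\alpha)-q(u)}{h}\right)du,
\]
where the expectation is over the simulated indices $U_j$ only, with the data held fixed. Since the $U_j$ are independent of the data, conditional on $\mathcal Z_n$ the curves $q_t^r$ are fixed deterministic functions satisfying the stated smoothness bounds uniformly in $t$. All bounds will therefore be conditional, and uniformity over $t$ comes from the uniform constants.

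\emph{Bias.} Substitute $z=q(u)$, so that $du=\widetilde s^r(z)\,dz$, and then $z=q(\alpha)-hv$. This gives
\[
\bar f_h(\alpha)=\int K(v)\,\widetilde s^r(q(\alpha)-hv)\,\mathbf 1\{q(\alpha)-hv\in q(\mathcal A)\}\,dv .
\]
Because $\mathcal T\subset\operatorname{int}\mathcal A$ and $\partial_u q\ge c_r$, the point $q(\alpha)$ lies at distance at least $c_r\,\operatorname{dist}(\mathcal T,\partial\mathcal A)>0$ from the boundary of $q(\mathcal A)$. The truncation therefore costs nothing for a compactly supported $K$, and only an $O(e^{-c'/h})=o(h^2)$ term under exponential tails. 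We then Taylor expand $\widetilde s^r$ to second order around $q(\alpha)$. Symmetry of $K$ kills the linear term, and the bounded second derivative gives $|\bar f_h(\alpha)-\widetilde s^r(q(\alpha))|\le C h^2\le Cb_J^2$, uniformly in $h\in[a_J,b_J]$, $\alpha$, and $t$. Finally, $\widetilde s^r(q(\alpha))=1/\partial_u q(\alpha)=s_t^r(\alpha)$.

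\emph{Stochastic term.} For fixed $(t,\alpha,h)$, the quantity $\widehat f_h-\bar f_h$ is an average of $J$ i.i.d.\ centered terms bounded by $L_{\mathrm{sim}}\|K\|_\infty/h$. Their variance is at most $C/h$, which follows from $\int K^2(\cdot)\,du\le Ch$ by the same change of variables. Bernstein's inequality gives
\[
P\big(|\widehat f_h-\bar f_h|>x\big)\le 2\exp\!\left\{-\frac{cJhx^2}{1+x}\right\},
\]
and at $x=Mu_J\to0$ the bound is $\le 2\exp(-cM^2r_J)$. To pass to the supremum we take a finite grid: $n$ values of $t$, and grids in $\alpha$ and $h$ of mesh $\epsilon_J=(a_J/J)^{2}$, say. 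The grid has cardinality polynomial in $(n,J,1/a_J)$, so by \eqref{eq:D} with $C_0$ large the union bound is $\exp\{(C_1-cM^2)r_J\}\to0$. Between grid points we use a Lipschitz bound, the main point needing care: since $K'$ is bounded and $q$ is Lipschitz, $\partial_\alpha\widehat f_h$ and $\partial_h\widehat f_h$ are $O(a_J^{-3})$ uniformly. The same holds for $\bar f_h$. The discretization error is then $O(a_J^{-3}\epsilon_J)=o(u_J)$. Combining the two pieces yields \eqref{eq:R}.

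The identification $s_t^r=f_{t\alpha}$ is already shown in the statement by differentiating $F\{q(u)\}=u$. I expect the main obstacle to be the uniformity over the random bandwidth range together with $t\le n$. This is exactly where the entropy condition $r_J\gtrsim\log J+\log n+\log(1/a_J)$ enters, and where the exponential-tail kernel case needs the truncation remark above.
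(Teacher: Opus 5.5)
Your proposal is correct and follows the paper's proof essentially step for step. It uses the same bias/fluctuation split, and the same change of variables $z=q_t^r(u)$ with a second-order Taylor expansion, kernel symmetry and the exponential-tail truncation remark for the $O(b_J^2)$ bias. It also uses Bernstein's inequality with a Lipschitz discretization in $(\alpha,h)$ and a union bound over the grid and $t\le n$ for the $O_p(u_J)$ term.

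The differences are cosmetic. Your cruder $O(a_J^{-3})$ Lipschitz constant (the paper gets $a_J^{-2}$ from boundedness of $vK'(v)$) is offset by your finer mesh $(a_J/J)^2$. Your free constant $M$ in the threshold $Mu_J$ plays the role of the paper's large $C_0$.
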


\begin{proof}[Proof of Lemma \ref{lem:unif_rate}]
The objective is to derive a uniform bound for the estimator error over the parameter space $\mathcal{S}_J = \mathcal{T} \times [a_J, b_J]$ and across all time points $t \in \{1, \dots, n\}$.
Let $L=L_{\mathrm{sim}}$. Recall the estimator definition based on the simulation
sample in Assumption~\ref{ass:regularity}(iv)(4):
\[
\widehat{f}_h(I_{t-1}, \vartheta^r;\alpha) = \frac{L}{J h} \sum_{j=1}^{J} K\left( \frac{q_t^r(\alpha) - q_t^r(U_j)}{h} \right).
\]
The target generated by this reference curve is its inverse slope
$s_t^r(\alpha)=1/\partial_uq_t^r(\alpha)$.
We use the triangle inequality to decompose the total error into a deterministic bias term ($B_{t,\alpha,h}$) and a stochastic fluctuation term ($V_{t,\alpha,h}$):
\[
\begin{aligned}
B_{t,\alpha,h}
&:=\Big|E_{\mathcal U}\widehat f_h(I_{t-1},\vartheta^r;\alpha)
-s_t^r(\alpha)\Big|,\\
V_{t,\alpha,h}
&:=\Big|\widehat f_h(I_{t-1},\vartheta^r;\alpha)
-E_{\mathcal U}\widehat f_h(I_{t-1},\vartheta^r;\alpha)\Big|,\\
\Big|\widehat f_h(I_{t-1},\vartheta^r;\alpha)-s_t^r(\alpha)\Big|
&\le B_{t,\alpha,h}+V_{t,\alpha,h}.
\end{aligned}
\]

\paragraph{Step 1: Uniform bound on bias ($B_{t,\alpha,h}$)}
The expectation $E_{\mathcal{U}}[\cdot]$ is taken with respect to the simulation draws $U_j$. Since $U_j \sim U[a_1, a_2]$, the pdf is $1/L$.
\[
E_{\mathcal{U}}[\widehat{f}_h(I_{t-1}, \vartheta^r;\alpha)]
= \frac{L}{h} \int_{a_1}^{a_2} K\left( \frac{q_t^r(\alpha) - q_t^r(u)}{h} \right) \frac{1}{L} \, du.
\]
Let $y=q_t^r(\alpha)$ and make the change of variable $z=q_t^r(u)$.
Strict monotonicity and the inverse function theorem give the Jacobian
\[
du = \widetilde s_t^r(z)\,dz.
\]
If $K$ has compact support, the positive distance between $\mathcal T$ and the
endpoints of $\mathcal A$, together with $h\le b_J\to0$, places the kernel support
inside the pilot support uniformly for all large $J$. For a kernel with
exponentially decaying tails, the uniform lower bound on
$\partial_uq_t^r$ implies that the target in $y$-space is also uniformly
separated from the two pilot-support endpoints. Extending the
inverse-slope function beyond its compact image by a twice continuously
differentiable function with the same uniform bounds, and then extending the
truncated integral to the real line, introduces a remainder bounded by
$C\exp(-c_1/h)=o(h^2)$ uniformly in $(t,\alpha)$. Thus, up to this negligible
remainder, the integral becomes the standard kernel convolution
\[
E_{\mathcal{U}}[\widehat{f}_h] = \frac{1}{h} \int K\left( \frac{y - z}{h} \right) \widetilde s_t^r(z) \, dz.
\]
Let $\psi = \frac{z-y}{h}$, which implies $z = y + h\psi$ and $dz = h \, d\psi$.
\[
E_{\mathcal{U}}[\widehat{f}_h]
= \int_{\mathbb R} K(-\psi) \widetilde s_t^r(y + h\psi) \, d\psi+o(h^2).
\]
Since $K$ is symmetric (Assumption \ref{ass:kernel}), $K(-\psi) = K(\psi)$.
We apply a second-order Taylor expansion to $\widetilde s_t^r(y+h\psi)$ around $y$:
\[
\widetilde s_t^r(y + h\psi) = \widetilde s_t^r(y) + h\psi (\widetilde s_t^r)'(y) + \frac{1}{2}h^2\psi^2 (\widetilde s_t^r)''(\tilde{y}),
\]
where $\tilde{y}$ lies between $y$ and $y+h\psi$.
Substituting this expansion into the integral:
\begin{align*}
E_{\mathcal{U}}[\widehat{f}_h]
&= \widetilde s_t^r(y) \int K(\psi) d\psi
+ h (\widetilde s_t^r)'(y) \int \psi K(\psi) d\psi \\
&\quad + \frac{h^2}{2} \int \psi^2 K(\psi) (\widetilde s_t^r)''(\tilde{y}) d\psi.
\end{align*}
By Assumption \ref{ass:kernel}, $\int K = 1$ and $\int \psi K = 0$.
By the reference-curve conditions, the second derivative is uniformly bounded.
Write $\sup_{t,z}|(\widetilde s_t^r)''(z)|\le C_{s''}$.
Thus, the bias is bounded by:
\[
\Big| E_{\mathcal{U}}[\widehat{f}_h] - s_t^r(\alpha) \Big|
\le \frac{h^2}{2} C_{s''} \int \psi^2 K(\psi) d\psi
= \frac{1}{2} \mu_{2K} C_{s''} h^2,
\]
because $\widetilde s_t^r(q_t^r(\alpha))=s_t^r(\alpha)$.
Taking the supremum over $h \le b_J$, $\alpha \in \mathcal{T}$, and $t$:
\begin{equation}
\sup_{t, \alpha, h} B_{t,\alpha,h} = O(b_J^2).
\label{eq:bias_bound}
\end{equation}

\paragraph{Step 2: Pointwise bound on stochastic term ($V_{t,\alpha,h}$)}
We write $V_{t,\alpha,h} = |\Delta_t(\alpha,h)|$, where the centered empirical process is:
\[
\Delta_t(\alpha,h) = \frac{1}{J} \sum_{j=1}^J \left( Z_{j,t}(\alpha,h) - E[Z_{j,t}(\alpha,h)] \right),
\]
with $Z_{j,t}(\alpha,h) = \frac{L}{h} K\left( \frac{q_t^r(\alpha) - q_t^r(U_j)}{h} \right)$.
Conditional on the data $I_{t-1}$, $Z_{j,t}$ are i.i.d. random variables driven by the simulation draws.
We apply Bernstein's inequality by bounding the magnitude and variance of $Z_{j,t}$:

\begin{itemize}
    \item \textbf{Boundedness:} Since $\sup_u K(u) \le K_{\max}$ and the bandwidth satisfies $h \ge a_J$,
    \[
    |Z_{j,t}| \le \frac{L K_{\max}}{a_J} \equiv M_J.
    \]
    
    \item \textbf{Variance:}
    \[
    E[Z_{j,t}^2] = \int \left( \frac{L}{h} K(\dots) \right)^2 \frac{1}{L} du
    \le \frac{L}{h^2} \int K^2\left(\frac{y-z}{h}\right) \widetilde s_t^r(z) dz.
    \]
    Using the substitution $z = y + h\psi$, we get $dz = h d\psi$, and
    the inverse-slope function is uniformly bounded because
    $\partial_uq_t^r$ is bounded away from zero:
    \[
    E[Z_{j,t}^2] \le \frac{L}{h} \sup_{t,z}\widetilde s_t^r(z) \int K^2(\psi) d\psi \le \frac{C}{a_J}.
    \]
\end{itemize}

Bernstein's inequality states that for any $\epsilon > 0$:
\[
P\left( |\Delta_t| > \epsilon \right) \le 2 \exp\left( - \frac{\frac{1}{2} J \epsilon^2}{\text{Var}(Z) + \frac{1}{3} M_J \epsilon} \right).
\]
Set the threshold $\epsilon = u_J/2$, where $u_J = \sqrt{\frac{r_J}{J a_J}}$.
Substituting the bounds for variance and magnitude:
\[
\frac{J (u_J/2)^2}{C/a_J + \frac{1}{3}(L K_{\max}/a_J)(u_J/2)}
\asymp \frac{J \frac{r_J}{J a_J}}{\frac{1}{a_J}}
= r_J.
\]
Thus, for sufficiently large $J$, the pointwise probability is bounded by:
\begin{equation}
P\left( |\Delta_t(\alpha,h)| > \frac{u_J}{2} \right) \le 2 \exp( -c r_J ).
\label{eq:pointwise_prob}
\end{equation}

\paragraph{Step 3: Uniform bound via discretization}
The parameter space is $\mathcal{S}_J = \mathcal{T} \times [a_J, b_J] \subset \mathbb{R}^2$.
To control the supremum, we use a covering argument.
First, we establish the Lipschitz continuity of the map $(\alpha, h) \mapsto Z_{j,t}(\alpha, h)$.
Let $\phi(u, h) = \frac{1}{h} K(u/h)$. The gradients of $\phi$ are bounded by $C h^{-2}$.
Specifically for the derivative with respect to $\alpha$, we use the Chain Rule:
\[
\frac{\partial}{\partial \alpha} Z_{j,t}
= \frac{L}{h} K'\left( \frac{q_t^r(\alpha) - q_t^r(U_j)}{h} \right) \frac{1}{h} \frac{\partial q_t^r(\alpha)}{\partial \alpha}.
\]
By the upper derivative bound for the reference curve,
\[
\left| \frac{\partial q_t^r(\alpha)}{\partial \alpha} \right| \le C_r < \infty.
\]
This boundedness (which replaces the need for truncation) ensures that the Lipschitz constant $L_J$ depends only on the bandwidth lower bound:
\[
\sup_{(\alpha,h) \in \mathcal{S}_J} \left\| \nabla_{(\alpha,h)} Z_{j,t} \right\|
\le C a_J^{-2} \equiv L_J.
\]
Consequently, for any two points $w_1, w_2 \in \mathcal{S}_J$:
\[
|\Delta_t(w_1) - \Delta_t(w_2)| \le L_J \|w_1 - w_2\|.
\]

Now, construct a grid $\mathcal{G}_J$ covering $\mathcal{S}_J$ with mesh size $\delta_J$.
We choose $\delta_J$ such that the approximation error is negligible compared to $u_J$:
\[
L_J \delta_J \le \frac{u_J}{2} \implies \delta_J \asymp \frac{u_J}{L_J} \asymp u_J a_J^2.
\]
The covering number (cardinality of the grid) $N_J = |\mathcal{G}_J|$ is bounded by:
\[
N_J \asymp \frac{\text{Area}(\mathcal{S}_J)}{\delta_J^2}
\asymp \frac{1}{(u_J a_J^2)^2}
\asymp \frac{J a_J}{r_J a_J^4}
= \frac{J}{r_J a_J^3}.
\]
Taking logarithms gives the bound
\[
\log N_J
\le C\{\log J+\log(1/a_J)+\log r_J\}.
\]

We now apply the union bound over both the grid points $\mathcal{G}_J$ and the sample time points $t=1, \dots, n$:
write
\[
A_{n,J}:=\max_{t\le n}\max_{w_k\in\mathcal G_J}|\Delta_t(w_k)|,
\qquad
B_{n,J}:=\max_{t\le n}\sup_{\|w-w'\|\le\delta_J}
|\Delta_t(w)-\Delta_t(w')|.
\]
\begin{align*}
P\left( \sup_{w \in \mathcal{S}_J} \max_{t \le n} |\Delta_t(w)| > u_J \right)
&\le P(A_{n,J}+B_{n,J}>u_J)\\
&\le \sum_{t=1}^n\sum_{w_k\in\mathcal G_J}
P\left(|\Delta_t(w_k)|>\frac{u_J}{2}\right)
+P\left(L_J\delta_J>\frac{u_J}{2}\right).
\end{align*}
The second term is zero by our choice of $\delta_J$.
For the first term, we substitute the pointwise Bernstein bound \eqref{eq:pointwise_prob}:
\[
P\left( \dots \right) \le n \cdot N_J \cdot 2 \exp(-c r_J).
\]
To ensure convergence, we examine the log-probability:
\[
\log(2 n N_J) - c r_J \asymp \log n + \log N_J - c r_J.
\]
By condition~\eqref{eq:D}, $r_J/(J a_J)\to0$, so $u_J\to0$ and
$\log r_J$ is absorbed by $r_J$. The same condition, with $C_0$ chosen larger
than the constants in the covering and Bernstein bounds, therefore makes
$\log(2nN_J)-cr_J\to-\infty$. This proves the required union bound without
imposing an unstated polynomial-rate restriction on $a_J$.
Thus,
\[
\sup_{a_J \le h \le b_J} \sup_{\alpha \in \mathcal{T}} \max_{1 \le t \le n} |\Delta_t(\alpha, h)| = O_p(u_J).
\]

Combining the uniform bias bound \eqref{eq:bias_bound} and the uniform stochastic bound, we obtain:
\[
\sup_{(\alpha,h) \in \mathcal{S}_J} \max_{1 \le t \le n} |\widehat{f}_h - s_t^r(\alpha)|
\le \sup B_{t,\alpha,h} + \sup V_{t,\alpha,h}
= O(b_J^2 + u_J).
\]
This completes the proof.
\end{proof}

\begin{lemma}[Exponential-weight completeness]\label{lem:bierens}
Let Assumptions~\ref{ass:mixing}--\ref{ass:regularity} hold. Define
\[
\eta(\alpha,\gamma;\theta)
= \mathbb{E}\!\left[\exp(\gamma^\top \Phi(I_{t-1}))\big( \mathbf{1}\{Y_t \le m(I_{t-1},\theta(\alpha))\} - \alpha \big)\right],
\]
where $\Phi(\cdot)$ is the mapping specified in Assumption \ref{ass:regularity}(i).
Then, $\eta(\alpha,\gamma;\theta)=0$ for all $(\alpha,\gamma)\in\mathcal T\times\Gamma$ if $\theta(\cdot)=\theta_0(\cdot)$. Conversely, if the null restriction fails for some $\alpha\in\mathcal T$, then the exponential-weight completeness argument of \citet{bierens1990} implies $\eta(\alpha,\gamma;\theta)\neq0$ for Lebesgue-a.e. $\gamma$ in the nonempty interior of $\Gamma$. See also \citet{stinchcombe1998consistent}.
\end{lemma}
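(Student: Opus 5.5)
The argument splits into the two directions of Lemma~\ref{lem:bierens}, and almost all of the work is in the converse.

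\emph{Null direction.} Assume $\theta(\cdot)=\theta_0(\cdot)$. Fix $(\alpha,\gamma)$ and condition on $I_{t-1}$. Assumption~\ref{ass:mixing}(ii) (or (iii), followed by the tower property) gives $E[\mathbf 1\{Y_t\le m(I_{t-1},\theta_0(\alpha))\}-\alpha\mid I_{t-1}]=0$. The weight $\exp(\gamma^\top\Phi(I_{t-1}))$ is $I_{t-1}$-measurable, and by Assumption~\ref{ass:regularity}(i) it is integrable with an $L_q$ envelope. Pulling it out of the conditional expectation yields $\eta(\alpha,\gamma;\theta_0)=0$ for every $(\alpha,\gamma)$.

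\emph{Converse.} Suppose $\theta(\alpha_0)\neq\theta_0(\alpha_0)$ for some $\alpha_0\in\mathcal T$. Set
\[
g(I_{t-1}):=P\{Y_t\le m(I_{t-1},\theta(\alpha_0))\mid I_{t-1}\}-\alpha_0 .
\]
By the identification part of Assumption~\ref{ass:mixing}(ii), $g(I_{t-1})\neq0$ on a set of positive probability. Because $\Phi$ is a Borel injection, $\sigma(\Phi(I_{t-1}))=\sigma(I_{t-1})$, so we may write $g(I_{t-1})=\tilde g(X)$ with $X:=\Phi(I_{t-1})$ and $P\{\tilde g(X)\neq0\}>0$. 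The quantity to analyse is then
\[
\eta(\alpha_0,\gamma;\theta)=E[\exp(\gamma^\top X)\tilde g(X)],
\]
and $|\tilde g|\le1$.

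I would then invoke Bierens' argument. Consider the map $\gamma\mapsto E[e^{\gamma^\top X}\tilde g(X)]$. It is real-analytic on an open neighbourhood of $\Gamma$: in the identity case this follows from the exponential moment with $q>1$, and in the bounded case it is automatic. Suppose it vanished on a set of positive Lebesgue measure in $\operatorname{int}\Gamma$. Analyticity would force it to vanish on a whole open ball $U$. Decompose $\tilde g=\tilde g^+-\tilde g^-$ and define the finite measures $\mu_\pm(dx):=\tilde g^\pm(x)P_X(dx)$. Vanishing on $U$ means $\mu_+$ and $\mu_-$ have equal moment generating functions on $U$. Shifting by the tilt $e^{\gamma_1^\top x}$ for some $\gamma_1\in U$, the tilted measures have MGFs that are finite and equal in a neighbourhood of the origin. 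MGF uniqueness (equivalently, analytic continuation to the characteristic function) then gives $\mu_+=\mu_-$. Hence $\tilde g=0$ $P_X$-a.s., which is a contradiction. So the zero set has Lebesgue measure zero, which is exactly the a.e.\ conclusion.

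\emph{Main obstacle.} The delicate step is justifying analyticity and differentiation under the integral near the boundary of $\Gamma$ in the identity case. I would handle it by working only on compact subsets of the interior. There the envelope condition $E\exp\{q\bar\gamma\|I\|\}<\infty$ with $q>2$ provides slack: for $\gamma$ in the interior we can dominate $|x|^k e^{\gamma^\top x}$ by a multiple of $e^{(\bar\gamma+\epsilon)\|x\|}$. That domination yields a convergent power series locally and gives the MGF uniqueness needed above.
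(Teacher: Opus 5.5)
Your proposal is correct and follows the paper's proof essentially step for step. Both arguments treat $\eta(\alpha,\cdot;\theta)$ as the bilateral Laplace transform of the signed measure induced by $r_\alpha(I_{t-1})$ through $\Phi$, and both use the same ingredients to show that its zero set in $\operatorname{int}(\Gamma)$ is Lebesgue-null: real analyticity on a connected open set containing $\Gamma$, the identity theorem, uniqueness of the transform, $\sigma\{\Phi(I_{t-1})\}=\sigma(I_{t-1})$, and the identification condition. The only difference is that you prove the uniqueness step explicitly, via the Jordan decomposition and an exponential tilt that reduces it to MGF uniqueness near the origin, whereas the paper simply cites uniqueness of the Laplace transform.
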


\begin{proof}
Fix $\alpha\in\mathcal T$ and write
\[
r_\alpha(I_{t-1})
:=
P\{Y_t\le m(I_{t-1},\theta(\alpha))\mid I_{t-1}\}-\alpha.
\]
If $\theta(\alpha)=\theta_0(\alpha)$, then $r_\alpha=0$ almost surely by the
conditional quantile restriction, and the forward implication follows.

For the converse, define the finite signed measure
\[
\mu_\alpha(A)
:=E\!\left[r_\alpha(I_{t-1})
\mathbf 1\{\Phi(I_{t-1})\in A\}\right]
\]
on $\mathbb R^{d_I}$. Its bilateral Laplace transform is
\[
L_\alpha(\gamma)
=\int \exp(\gamma^\top x)\,d\mu_\alpha(x)
=\eta(\alpha,\gamma;\theta).
\]
When $\Phi$ is bounded, this transform is real analytic on all of
$\mathbb R^{d_I}$. Under the identity transformation, the exponential-moment
condition in Assumption~\ref{ass:regularity}(i) makes it real analytic on a
connected open neighborhood containing $\Gamma$. If $L_\alpha$ vanished on the
nonempty open set $\operatorname{int}(\Gamma)$, the identity theorem for real
analytic functions would make it vanish on that connected neighborhood.
Uniqueness of the Laplace transform would then give $\mu_\alpha=0$, or
\[
E\{r_\alpha(I_{t-1})\mid\Phi(I_{t-1})\}=0\quad\text{a.s.}
\]
The map $\Phi$ is Borel and one-to-one. Because Euclidean spaces are standard
Borel spaces, its inverse on its image is Borel measurable, and hence
$\sigma\{\Phi(I_{t-1})\}=\sigma(I_{t-1})$. It follows that
$r_\alpha(I_{t-1})=0$ almost surely. The identification condition in
Assumption~\ref{ass:mixing}(ii) then yields
$\theta(\alpha)=\theta_0(\alpha)$.

Therefore, when the null fails at $\alpha$, $L_\alpha$ is a nonzero real
analytic function. The zero set of a nonzero real analytic function has
Lebesgue measure zero, which gives the stated almost-everywhere conclusion.
\end{proof}

\begin{lemma}[Strict crossing of Gaussian null quantiles]
\label{lem:gaussian-quantile-crossing}
Let $\Pi$ be a probability measure on $\mathcal T$, and let
$\mathbb B_\Pi$ be the closed subspace of $\ell^\infty(\mathcal T\times\Gamma)$
consisting of functions whose $\alpha$-sections are $\Pi$-measurable for every
$\gamma$. The integrated processes below are understood on this space.
Measurability is automatic for the finite-grid measure $\Pi_A$.
Let $Q$ be a tight centered Gaussian random element in $\mathbb B_\Pi$.
For fixed $\lambda\ge0$, put
\[
\phi_\lambda(f):=\sup_{\gamma\in\Gamma}
\left\{\int_{\mathcal T}f(\alpha,\gamma)^2d\Pi(\alpha)
-\lambda\|\gamma\|_1\right\},
\qquad T_\lambda:=\phi_\lambda(Q).
\]
If the distribution function $F_\lambda$ of $T_\lambda$ is continuous at
$c_\lambda:=q_p(T_\lambda)$, where $0<p<1$, then
\begin{equation}
F_\lambda(c_\lambda-\varepsilon)<p
<F_\lambda(c_\lambda+\varepsilon),\qquad \varepsilon>0.
\label{eq:null-quantile-crossing}
\end{equation}
This also holds with a maximum over a fixed finite direction grid.
\end{lemma}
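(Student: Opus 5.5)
The plan is to avoid any density or anti-concentration bound for $T_\lambda$. Instead I will combine two facts: the support of the law of $T_\lambda$ is an interval, and continuity of $F_\lambda$ at $c_\lambda$ forces $F_\lambda(c_\lambda)=p$. Together these give strict monotonicity of $F_\lambda$ on both sides of $c_\lambda$.

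\emph{Step 1 (continuity of the functional).} For $f,g\in\mathbb B_\Pi$, I will show
\[
|\phi_\lambda(f)-\phi_\lambda(g)|\le\sup_{\gamma\in\Gamma}\Big|\int_{\mathcal T}(f^2-g^2)\,d\Pi\Big|\le\|f-g\|_\infty\big(\|f\|_\infty+\|g\|_\infty\big).
\]
The first inequality holds because the penalty $-\lambda\|\gamma\|_1$ is common to both suprema. The second follows from $f^2-g^2=(f-g)(f+g)$ and $\Pi(\mathcal T)=1$. Hence $\phi_\lambda$ is continuous on $\mathbb B_\Pi$. The same bound holds verbatim when the supremum runs over a fixed finite grid of directions.

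\emph{Step 2 (the support of $T_\lambda$ is an interval).} Since $Q$ is tight, its law is concentrated on a separable closed subset. Because $Q$ is centered Gaussian, the topological support $S$ of its law is a closed linear subspace, namely the closure of its reproducing kernel Hilbert space; in particular $S$ is connected. By Step 1, $\phi_\lambda(S)$ is connected, so it is an interval. The support of the law of $T_\lambda$ is the closure of this image, hence also an interval, say $I$.

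I also need the following fact: every open set $O\subset\mathbb R$ that meets the interval $\phi_\lambda(S)$ satisfies $P(T_\lambda\in O)>0$. This holds because $\phi_\lambda^{-1}(O)$ is open and meets $S$, so it has positive Gaussian measure. I expect this support identification to be the main technical point. It needs the standard fact that a tight centered Gaussian measure on a Banach space has a linear support, together with the fact that $\mathbb B_\Pi$ is a closed subspace of $\ell^\infty(\mathcal T\times\Gamma)$.

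\emph{Step 3 (crossing).} By definition of the quantile, $F_\lambda(c_\lambda)\ge p$ and $F_\lambda(c_\lambda-)\le p$, and continuity at $c_\lambda$ then gives $F_\lambda(c_\lambda)=p$.

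For the right inequality: since $p<1$, we have $P(T_\lambda>c_\lambda)>0$, so $I$ contains points larger than $c_\lambda$. Because the law of $T_\lambda$ puts mass $p>0$ on $(-\infty,c_\lambda]$, the interval $I$ also contains points at most $c_\lambda$. Since $I$ is an interval, the open set $(c_\lambda,c_\lambda+\varepsilon)$ meets $\phi_\lambda(S)$, and Step 2 gives it positive probability. Therefore $F_\lambda(c_\lambda+\varepsilon)\ge p+P(c_\lambda<T_\lambda<c_\lambda+\varepsilon)>p$.

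For the left inequality the argument is symmetric. Since $F_\lambda(c_\lambda-)=p>0$, there is mass strictly below $c_\lambda$, and there is also mass above $c_\lambda$. Hence $(c_\lambda-\varepsilon,c_\lambda)$ meets $\phi_\lambda(S)$ and carries positive probability. Thus $F_\lambda(c_\lambda-\varepsilon)\le p-P(c_\lambda-\varepsilon<T_\lambda<c_\lambda)<p$.

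The finite-grid case needs no further change, since it only uses Step 1 with the maximum over the grid in place of the supremum over $\Gamma$.
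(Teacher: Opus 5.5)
Your proof is correct and follows the paper's route: you use the Lipschitz continuity of $\phi_\lambda$, then the linearity (hence connectedness) of the support $S$ of the tight Gaussian law, so that the support of $T_\lambda$ is the interval $\overline{\phi_\lambda(S)}$, and then $F_\lambda(c_\lambda)=p$ places $c_\lambda$ strictly inside that interval. The only difference is that you cite the standard fact that this support is the closure of the Cameron--Martin space, whereas the paper proves the linearity of $S$ directly from $(Q_1+Q_2)/\sqrt2\stackrel{d}{=}Q$, $-Q\stackrel{d}{=}Q$, and dyadic scaling.
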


\begin{proof}
Use the tight Borel version of $Q$. Tightness gives a countable collection of
compact sets whose union has probability one. Its closed linear span is a
separable Banach subspace $E$ carrying the law of $Q$. Let $S$ be its support
in $E$. For independent copies $Q_1,Q_2$,
$(Q_1+Q_2)/\sqrt2\stackrel d=Q$ and $-Q\stackrel d=Q$.
Positive probability of every neighborhood of $x,y\in S$, together with
independence, therefore gives $(x+y)/\sqrt2\in S$ and $-x\in S$.
Consequently $0\in S$, $x/\sqrt2,\sqrt2x\in S$, and $x+y\in S$.
Dyadic scaling and closedness then give every real multiple of $x$ in $S$.
Thus $S$ is a closed linear subspace and is connected.

The bound
\[
|\phi_\lambda(f)-\phi_\lambda(g)|
\le(\|f\|_\infty+\|g\|_\infty)\|f-g\|_\infty
\]
shows continuity. Since $Q\in S$ almost surely and every neighborhood of a
point in $S$ has positive probability, the support of $T_\lambda$ is
$\overline{\phi_\lambda(S)}$. It is an interval, being the closure of a
continuous image of a connected set. By the quantile definition and
continuity at $c_\lambda$, $F_\lambda(c_\lambda)=p$. Since $0<p<1$ and
$P(T_\lambda=c_\lambda)=0$, this quantile is in the interior of the support
interval. Every nonempty open subinterval of that support has positive
probability, which gives \eqref{eq:null-quantile-crossing}. The same
continuity bound proves the finite-grid case.
\end{proof}

\section*{S3. Main results}

Throughout Sections S3 and S4, \(\Pi\) denotes the fixed nonrandom probability measure on \(\mathcal T\) used to aggregate the quantile-indexed process in the main text. For continuous $\Pi$, all integrated process limits are taken in the measurable-section space $\mathbb B_\Pi$ defined in Lemma~\ref{lem:gaussian-quantile-crossing}. The finite-grid implementation is obtained by taking \(\Pi_A=A^{-1}\sum_{j=1}^A\delta_{\alpha_j}\) for \(\mathcal T_A=\{\alpha_1,\ldots,\alpha_A\}\).

\begin{theorem}[Asymptotic null distribution]
\label{thm:null}
Under Assumptions~\ref{ass:mixing}--\ref{ass:regularity} and the null hypothesis $H_0$, the following results hold:

\begin{enumerate}
    \item The empirical process converges weakly to a zero-mean Gaussian process:
    \[
    \sqrt{n} M_n(\alpha, \gamma) \rightsquigarrow \mathcal{M}(\alpha, \gamma) \quad \text{in } \ell^{\infty}(\mathcal{T} \times \Gamma).
    \]
    The limiting Gaussian process $\mathcal{M}(\alpha, \gamma)$ has the covariance kernel:
    \[
    \mathrm{Cov}\big(\mathcal{M}(\alpha_1, \gamma_1), \mathcal{M}(\alpha_2, \gamma_2)\big)
    = \big(\min(\alpha_1, \alpha_2) - \alpha_1 \alpha_2\big)
    E\!\left[\exp\!\big((\gamma_1 + \gamma_2)^{\top} I_{t-1}\big)\right].
    \]

    \item The variance estimator is uniformly consistent:
    \[
    \sup_{\alpha\in \mathcal{T}, \gamma\in \Gamma}
    \Big|
    s_n^2(\alpha, \gamma)
    - \sigma^2(\alpha, \gamma)
    \Big|
    \overset{p}{\longrightarrow} 0,
    \]
    where $\sigma^2(\alpha, \gamma) := \alpha(1-\alpha) E[\exp(2\gamma^{\top} I_{t-1})]$.

    \item The test statistic $T_{n,\Pi}(\lambda)$ converges in distribution to:
    \[
    T_{\Pi}(\lambda)
    :=
    \sup_{\gamma\in \Gamma}
    \Bigg[
    \int_{\mathcal T}
    \left( \frac{\mathcal{M}(\alpha, \gamma)}{\sigma(\alpha, \gamma)} \right)^2
    d\Pi(\alpha)
    - \lambda \|\gamma\|_1
    \Bigg].
    \]
\end{enumerate}
\end{theorem}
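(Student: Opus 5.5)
Part 1 follows from the mixing functional CLT, Lemma~\ref{lem:mixing-fclt}, applied to the class $\mathcal G_0=\{g_{\alpha,\gamma}\}$ of Lemma~\ref{lem:conditional-gaussian-multiplier}. Lemma~\ref{lem:VC} shows that $\mathcal G_0$ is permissible and VC-subgraph. Its envelope $F=\exp\{\bar\gamma\|\Phi(I_{t-1})\|_2\}$ lies in $L_q$ by Assumption~\ref{ass:regularity}(i), and the mixing rate is exactly Assumption~\ref{ass:mixing}(i). Under $H_0$, Assumption~\ref{ass:mixing}(iii) gives $Pg_{\alpha,\gamma}=0$, so the centered process equals $\sqrt n M_n$ and $\sqrt n M_n\rightsquigarrow\mathcal M$ in $\ell^\infty(\mathcal G_0)\cong\ell^\infty(\mathcal T\times\Gamma)$.

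For the kernel we use the MDS structure. When $k\neq0$, say $k>0$, the factor $g_{\alpha_1,\gamma_1}(Z_0)$ and the weight $\exp(\gamma_2^\top I_{k-1})$ are both $\mathcal F_{k-1}$-measurable. The centered indicator at time $k$ has zero $\mathcal F_{k-1}$-conditional mean, so every lagged covariance vanishes. For $k=0$, condition on $\mathcal F_{-1}$. The indicators at $\alpha_1$ and $\alpha_2$ are nested, because $m$ is nondecreasing in $\alpha$ by Assumption~\ref{ass:mixing}(ii), and each has conditional probability equal to its level. Hence the conditional covariance of the two centered indicators is $\min(\alpha_1,\alpha_2)-\alpha_1\alpha_2$. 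Multiplying by the weights and taking expectations gives the stated kernel.

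For Part 2, write $s_n^2-\sigma^2=\alpha(1-\alpha)\{n^{-1}\sum_t e^{2\gamma^\top I_{t-1}}-Ee^{2\gamma^\top I_{t-1}}\}$. Since $\alpha(1-\alpha)\le1/4$, it suffices to have a ULLN over $\gamma\in\Gamma$. This follows from Lemma~\ref{lem:uwlln}: the summand is continuous in $\gamma$, it is dominated by $F^2\in L_{q/2}\subset L_1$, and $\beta$-mixing implies ergodicity. Consequently $\inf\sigma^2>0$, because $\mathcal T\Subset(0,1)$ and $Ee^{2\gamma^\top I}\ge \exp(2\gamma^\top EI)>0$ uniformly on compact $\Gamma$ by Jensen. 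With bounded $\Phi$ this is immediate.

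For Part 3, apply the continuous mapping theorem to the joint convergence of $(\sqrt nM_n,s_n)$. Since $s_n\to\sigma$ in probability with a deterministic limit, Slutsky gives $\sqrt nM_n/s_n\rightsquigarrow Q=\mathcal M/\sigma$ in $\ell^\infty$. The functional $\phi_\lambda$ of Lemma~\ref{lem:gaussian-quantile-crossing} is locally Lipschitz in sup norm: $|\phi_\lambda(f)-\phi_\lambda(g)|\le(\|f\|_\infty+\|g\|_\infty)\|f-g\|_\infty$, and subtracting the same $\lambda\|\gamma\|_1$ does not affect the sup difference. The continuous mapping theorem then yields $T_{n,\Pi}(\lambda)\rightsquigarrow T_\Pi(\lambda)$.

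The main obstacle is measurability of the $\alpha$-integral for continuous $\Pi$. To handle it, work in $\mathbb B_\Pi$ and note that the sample paths of $Q_n$ are right-continuous step functions in $\alpha$ and hence Borel. The tight limit concentrates on $\rho$-uniformly continuous paths, which are also measurable, so the continuous mapping theorem applies on the closed subspace $\mathbb B_\Pi$. For $\Pi_A$ no such issue arises.
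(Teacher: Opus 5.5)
Your proposal is correct and takes essentially the same route as the paper's proof. The shared steps are the mixing FCLT applied to the VC class of Lemma~\ref{lem:VC} with the exponential-moment envelope, the MDS and nested-indicator argument for the kernel, Lemma~\ref{lem:uwlln} for $s_n^2$, and the continuous mapping theorem through the local Lipschitz bound on the integral--supremum functional; your Jensen lower bound for $\sigma^2$ and your measurability remarks on $\mathbb B_\Pi$ spell out points the paper only asserts.
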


\begin{theorem}[Multiplier bootstrap validity]
\label{thm:bootstrap}
Suppose Assumptions~\ref{ass:mixing}--\ref{ass:regularity} hold and $H_0:\theta_0(\alpha)=\bar{\theta}(\alpha)$ for all $\alpha\in\mathcal{T}$.
Let $\{\omega_t\}_{t=1}^n$ be i.i.d.\ normal multiplier weights independent of the data with $E[\omega_t]=0$ and $E[\omega_t^2]=1$.

For $(\alpha,\gamma)\in\mathcal{T}\times\Gamma$, define the bootstrap process and statistics as:
\[
M_{n*}(\alpha,\gamma)
:=
\frac{1}{n}\sum_{t=1}^n
\omega_t\,\exp(\gamma^\top I_{t-1})
\big[\mathbf 1\{Y_t\le m(I_{t-1},\bar\theta(\alpha))\}-\alpha\big],
\]
\[
s_{n*}^2(\alpha,\gamma)
:=
\frac{\alpha(1-\alpha)}{n}\sum_{t=1}^n \omega_t^2\,\exp(2\gamma^\top I_{t-1}),
\quad
Q_{n*}(\alpha,\gamma)
:=
\frac{\sqrt n\,|M_{n*}(\alpha,\gamma)|}{s_{n*}(\alpha,\gamma)}.
\]
The penalized bootstrap statistic is defined as:
\[
T_{n*,\Pi}(\lambda)
:=
\sup_{\gamma\in\Gamma}
\left[
\int_{\mathcal T} Q_{n*}^2(\alpha,\gamma)\,d\Pi(\alpha)
- \lambda\|\gamma\|_1
\right].
\]

Then, conditional on the data, the following hold in $\ell^\infty(\mathcal{T}\times\Gamma)$:

\begin{enumerate}
    \item Weak convergence of the bootstrap process:
    \[
    \sqrt n\,M_{n*}(\alpha,\gamma)\ \overset{*}{\rightsquigarrow}\ \mathcal M(\alpha,\gamma),
    \]
    where $\mathcal M$ is the same Gaussian process defined in Theorem \ref{thm:null}.

    \item Consistency of the bootstrap variance:
    \[
    \sup_{\alpha\in\mathcal{T},\ \gamma\in\Gamma}
    \Big|
    s_{n*}^2(\alpha,\gamma) - \sigma^2(\alpha, \gamma)
    \Big|
    \ \xrightarrow{p^*}\ 0.
    \]

    \item Validity of the bootstrap statistic distribution:
    \[
    T_{n*,\Pi}(\lambda)\ \overset{*}{\rightsquigarrow}
    \sup_{\gamma\in\Gamma}
    \Bigg[
    \int_{\mathcal T}
    \left( \frac{\mathcal{M}(\alpha, \gamma)}{\sigma(\alpha, \gamma)} \right)^2
    d\Pi(\alpha)
    - \lambda\|\gamma\|_1
    \Bigg].
    \]
\end{enumerate}
Here \(\xrightarrow{p^*}\) and \(\overset{*}{\rightsquigarrow}\) denote convergence in probability and weak convergence conditional on the data, respectively.
\end{theorem}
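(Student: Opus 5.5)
The plan follows the three parts of Theorem~\ref{thm:bootstrap} in order. Part~1 comes from Lemma~\ref{lem:conditional-gaussian-multiplier}, part~2 from a uniform law for $\omega_t^2$-weighted averages, and part~3 from a continuous-mapping argument for the functional $\phi_\lambda$ of Lemma~\ref{lem:gaussian-quantile-crossing} after studentization.

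For part~1, observe that $\sqrt n\,M_{n*}(\alpha,\gamma)=\mathbb G_n^* g_{\alpha,\gamma}$. Here $g_{\alpha,\gamma}\in\mathcal G_0$ is the class of Lemma~\ref{lem:conditional-gaussian-multiplier}, evaluated at $\bar\theta=\theta_0$ under $H_0$. Lemma~\ref{lem:conditional-gaussian-multiplier} then gives conditional weak convergence in $\ell^\infty(\mathcal G_0)$ to the limit whose covariance is the unconditional product-moment kernel $E[g_{\alpha_1,\gamma_1}g_{\alpha_2,\gamma_2}]$. We need this to equal the kernel of Theorem~\ref{thm:null}. First, the MDS property (\assMixing(iii)) removes the cross-lag terms in the Arcones--Yu kernel of Lemma~\ref{lem:mixing-fclt}. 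Second, conditioning on $\mathcal F_{t-1}$ gives
\[
E\big[(\mathbf 1\{Y_t\le m_{\alpha_1}\}-\alpha_1)(\mathbf 1\{Y_t\le m_{\alpha_2}\}-\alpha_2)\mid\mathcal F_{t-1}\big]=\min(\alpha_1,\alpha_2)-\alpha_1\alpha_2 .
\]
This identity uses the monotonicity of $m(\cdot,\theta_0(\alpha))$ in $\alpha$ (\assMixing(ii)). Hence the two kernels coincide. Finally, transfer from indexing by $\mathcal G_0$ to indexing by $(\alpha,\gamma)$. The map $(\alpha,\gamma)\mapsto g_{\alpha,\gamma}$ is $\rho$-continuous, by continuity of the conditional CDF and dominated convergence with the $L_q$ envelope, so this transfer is a composition with a continuous map.

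For part~2, write
\[
s_{n*}^2-\sigma^2=\alpha(1-\alpha)\Big[n^{-1}\sum_t(\omega_t^2-1)e^{2\gamma^\top I_{t-1}}\Big]+(s_n^2-\sigma^2).
\]
The second term is $o_p(1)$ uniformly by Theorem~\ref{thm:null}(2). The first term is a multiplier average with $E^*(\omega_t^2-1)=0$ and finite variance. I would handle it by a bracketing/truncation argument, as in the proof of Lemma~\ref{lem:conditional-gaussian-multiplier}: apply a conditional Chebyshev bound on a finite grid of $\gamma$ values, then use Lipschitz continuity of $\gamma\mapsto e^{2\gamma^\top I}$ with envelope $\|I\|F^2$. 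Together with $P_nF^2=O_p(1)$, this makes the first term $o_{p^*}(1)$ in probability. I expect this step to need the most care, because the envelope is only in $L_{q/2}$ with $q>2$, so the truncation level must be chosen to balance the tail against the grid. A simpler route is to bound $E^*\sup_\gamma$ through the Lipschitz modulus $n^{-1}\sum|\omega_t^2-1|\,\|I_{t-1}\|F^2(Z_t)$, which is $O_p(1)$ by the ergodic theorem.

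For part~3, note that $\inf\sigma^2>0$ because $\mathcal T\Subset(0,1)$ and $\Gamma$ is compact. By part~2, $\sqrt n M_{n*}/s_{n*}-\sqrt n M_{n*}/\sigma=o_{p^*}(1)$ uniformly, since $\sqrt nM_{n*}$ is conditionally tight by part~1. Hence $Q_{n*}/1=|\sqrt nM_{n*}/\sigma|+o_{p^*}(1)$ in $\ell^\infty$. The map $f\mapsto\phi_\lambda(f/\sigma)$ is continuous on $\mathbb B_\Pi$ by the Lipschitz bound in Lemma~\ref{lem:gaussian-quantile-crossing}. The conditional continuous mapping theorem for bootstrap weak convergence (bounded-Lipschitz metric formulation) therefore yields the stated limit. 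For continuous $\Pi$, measurability of the $\alpha$-sections of the bootstrap path is immediate, since each path is a finite sum of measurable functions of $\alpha$.
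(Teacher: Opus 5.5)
Your proposal is correct and follows essentially the paper's route: Lemma~\ref{lem:conditional-gaussian-multiplier} with the MDS identification of the covariance kernel for part~1, the same split of $s_{n*}^2-\sigma^2$ into $s_n^2-\sigma^2$ plus an $(\omega_t^2-1)$-weighted average for part~2, and the conditional continuous mapping theorem for part~3.

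The only real variation is how you control the $(\omega_t^2-1)$ average: you use a grid with a Lipschitz bound, whereas the paper truncates to a bounded VC class. If you apply conditional Chebyshev on the grid, note that the conditional variance $2n^{-2}\sum_t e^{4\gamma^\top I_{t-1}}$ is $o_p(1)$ without a fourth moment. It is bounded by $2\{n^{-1}\max_{t\le n}F^2(Z_t)\}\{n^{-1}\sum_t F^2(Z_t)\}$, and $\max_{t\le n}F^2(Z_t)=o_p(n)$ whenever $E F^2<\infty$.
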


\begin{corollary}[Fixed direction-grid validity]
\label{cor:fixed-direction-grid}
Let $\Gamma_G=\{\gamma_1,\ldots,\gamma_G\}\subset\Gamma$ be fixed, and define
$T_{n,\Pi,G}(\lambda)$ and $T_{n*,\Pi,G}(\lambda)$ by replacing the suprema over
$\Gamma$ in the sample and bootstrap statistics with maxima over $\Gamma_G$.
Under Theorems~\ref{thm:null} and~\ref{thm:bootstrap},
\[
T_{n,\Pi,G}(\lambda)\rightsquigarrow T_{\Pi,G}(\lambda),
\qquad
T_{n*,\Pi,G}(\lambda)\overset{*}{\rightsquigarrow}T_{\Pi,G}(\lambda),
\]
where
\[
T_{\Pi,G}(\lambda)
:=
\max_{\gamma\in\Gamma_G}
\left[
\int_{\mathcal T}
\left\{\frac{\mathcal M(\alpha,\gamma)}{\sigma(\alpha,\gamma)}\right\}^2
d\Pi(\alpha)-\lambda\|\gamma\|_1
\right].
\]
If the distribution of $T_{\Pi,G}(\lambda)$ is continuous at its
$(1-\tau)$-quantile, the ideal bootstrap test on a fixed grid has asymptotic size
$\tau$. The result also holds under
Theorems~\ref{thm:null-pre} and~\ref{thm:boot-pre}, provided the fixed grid omits
the degenerate centered direction and satisfies
Assumption~\ref{ass:pre-est-regularity}(iv).
\end{corollary}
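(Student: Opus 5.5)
The plan is to treat the fixed-grid statistic as a continuous functional of the already-established process limits. Put $\Gamma_G=\{\gamma_1,\ldots,\gamma_G\}$ and define, on $\mathbb B_\Pi$ (or on $\ell^\infty(\mathcal T\times\Gamma)$ when $\Pi=\Pi_A$),
\[
\phi_{\lambda,G}(f):=\max_{1\le g\le G}\left\{\int_{\mathcal T}f(\alpha,\gamma_g)^2\,d\Pi(\alpha)-\lambda\|\gamma_g\|_1\right\}.
\]
The sample statistic then equals $\phi_{\lambda,G}(\sqrt n M_n/s_n)$ and the bootstrap statistic equals $\phi_{\lambda,G}(\sqrt n M_{n*}/s_{n*})$. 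The same bound used in Lemma~\ref{lem:gaussian-quantile-crossing} holds for $\phi_{\lambda,G}$:
\[
|\phi_{\lambda,G}(f)-\phi_{\lambda,G}(g)|\le(\|f\|_\infty+\|g\|_\infty)\|f-g\|_\infty .
\]
This holds because a maximum over finitely many Lipschitz-on-bounded-sets maps keeps the same modulus. So $\phi_{\lambda,G}$ is continuous, and the grid needs no separate equicontinuity argument.

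\textbf{Step 1 (sample).} Take parts 1 and 2 of Theorem~\ref{thm:null}. Since $\sigma$ is bounded away from zero on $\mathcal T\times\Gamma$ (because $\mathcal T\Subset(0,1)$ and $E\exp(2\gamma^\top I_{t-1})>0$ is continuous on the compact $\Gamma$), Slutsky's lemma in $\ell^\infty$ gives $\sqrt n M_n/s_n\rightsquigarrow\mathcal M/\sigma$. The continuous mapping theorem applied with $\phi_{\lambda,G}$ then yields $T_{n,\Pi,G}(\lambda)\rightsquigarrow T_{\Pi,G}(\lambda)$.

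\textbf{Step 2 (bootstrap).} Parts 1 and 2 of Theorem~\ref{thm:bootstrap} give the conditional convergence $\sqrt n M_{n*}/s_{n*}\overset{*}{\rightsquigarrow}\mathcal M/\sigma$ in probability. This uses the conditional Slutsky argument: the ratio perturbation is $o_{p^*}(1)$ uniformly. The conditional continuous mapping theorem for bounded-Lipschitz metrics then transfers this to $\phi_{\lambda,G}$. I would phrase this step through $\sup_{h\in BL_1}|E^*h(\phi(\cdot))-Eh(\phi(\mathcal M/\sigma))|\to_p0$. The local Lipschitz bound above lets one handle it by truncating on $\{\|f\|_\infty\le K\}$, using tightness of the limit.

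\textbf{Step 3 (size).} The conditional quantile $c^*_{n,1-\tau}$ converges in probability to $c_{1-\tau}=q_{1-\tau}(T_{\Pi,G}(\lambda))$. This is where Lemma~\ref{lem:gaussian-quantile-crossing} enters: by its finite-grid case, continuity at $c_{1-\tau}$ implies strict crossing, $F(c-\varepsilon)<1-\tau<F(c+\varepsilon)$. Combining strict crossing with the conditional weak convergence from Step 2 gives the quantile convergence. Slutsky's lemma and continuity of $F$ at $c_{1-\tau}$ then give $P\{T_{n,\Pi,G}>c^*_{n,1-\tau}\}\to\tau$.

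\textbf{Pre-estimation case.} Repeat Steps 1--3 with Theorems~\ref{thm:null-pre} and~\ref{thm:boot-pre}. Excluding the degenerate direction and imposing Assumption~\ref{ass:pre-est-regularity}(iv) ensures $\inf\tilde\sigma^2>0$ on $\mathcal T\times\Gamma_G$. That keeps the ratio map continuous and lets the same argument go through. The main obstacle I expect is Step 3's quantile convergence. It is exactly where the strict-crossing lemma is needed, since continuity of $F$ at a single point does not by itself rule out a flat stretch of $F$ at level $1-\tau$.
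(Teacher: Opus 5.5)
Your proposal is correct and matches the paper's proof. The paper likewise composes the continuous restriction to $\mathcal T\times\Gamma_G$ with the finite-maximum functional, uses the same bound $(\|f\|_\infty+\|g\|_\infty)\|f-g\|_\infty$ to apply the ordinary and conditional continuous mapping theorems to the studentized limits, and obtains size from the finite-grid case of Lemma~\ref{lem:gaussian-quantile-crossing} together with the quantile-transfer argument of Corollary~\ref{cor:fixed-lambda-size}, repeating the argument for the pre-estimation theorems. Your conditional Slutsky and bounded-Lipschitz truncation step only spells out what the paper cites as the conditional continuous mapping theorem.
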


This grid-specific result does not imply continuum completeness.

\begin{corollary}[Fixed-penalty size control]
\label{cor:fixed-lambda-size}
Fix a deterministic $\lambda\ge0$, and let
$c_{n,1-\tau,\Pi}^*(\lambda)$ be the ideal conditional bootstrap
$(1-\tau)$-quantile. Under Theorem~\ref{thm:bootstrap}, if the distribution of
$T_\Pi(\lambda)$ is continuous at its $(1-\tau)$-quantile, then
\[
P\{T_{n,\Pi}(\lambda)>c_{n,1-\tau,\Pi}^*(\lambda)\}\to\tau.
\]
The same holds for the plug-in test under Theorem~\ref{thm:boot-pre}.
\end{corollary}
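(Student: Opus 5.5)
The plan is the standard bootstrap-consistency-to-size argument, conditional on the data, via the Lévy metric and the quantile crossing of Lemma~\ref{lem:gaussian-quantile-crossing}. Write $T:=T_\Pi(\lambda)$ with distribution function $F$, put $c:=q_{1-\tau}(T)$, and let $\widehat F_n^*$ be the conditional distribution function of $T_{n*,\Pi}(\lambda)$ given $\mathcal Z_n$.

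First, Theorem~\ref{thm:bootstrap}(3) gives $T_{n*,\Pi}(\lambda)\rightsquigarrow^* T$ in probability. Because the Lévy metric metrizes weak convergence on $\mathbb R$, this means $d_L(\widehat F_n^*,F)\xrightarrow{p}0$. Fix $\varepsilon>0$. The limit is $T=\phi_\lambda(Q)$ with $Q=\mathcal M/\sigma$ a tight centered Gaussian element. Continuity of $F$ at $c$ and Lemma~\ref{lem:gaussian-quantile-crossing} then give
\[
F(c-\varepsilon)<1-\tau<F(c+\varepsilon).
\]
Choose $\delta>0$ smaller than both gaps and smaller than $\varepsilon$. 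On the event $d_L(\widehat F_n^*,F)<\delta$, whose probability tends to one, we have
\[
\widehat F_n^*(c-2\varepsilon)\le F(c-\varepsilon)+\delta<1-\tau
\quad\text{and}\quad
\widehat F_n^*(c+2\varepsilon)\ge F(c+\varepsilon)-\delta>1-\tau.
\]
These two inequalities force $c_{n,1-\tau,\Pi}^*(\lambda)\in[c-2\varepsilon,c+2\varepsilon]$. Since $\varepsilon$ is arbitrary, $c_{n,1-\tau,\Pi}^*(\lambda)\xrightarrow{p}c$. The strict crossing matters here: continuity at $c$ alone would not stop the bootstrap quantile from drifting across a flat stretch of $F$. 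I expect this to be the main obstacle, and the lemma is exactly what removes it.

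Second, Theorem~\ref{thm:null}(3) gives $T_{n,\Pi}(\lambda)\rightsquigarrow T$. Slutsky's lemma then yields $T_{n,\Pi}(\lambda)-c_{n,1-\tau,\Pi}^*(\lambda)\rightsquigarrow T-c$. The set $\{x>0\}$ has boundary $\{0\}$, and $P(T=c)=0$ by continuity at $c$, so the portmanteau theorem gives
\[
P\{T_{n,\Pi}(\lambda)>c_{n,1-\tau,\Pi}^*(\lambda)\}\to P(T>c)=1-F(c).
\]
Continuity at the quantile gives $F(c)=1-\tau$, so the limit equals $\tau$.

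For the plug-in statement, the same two steps go through verbatim with the following replacements: Theorem~\ref{thm:null-pre} and Theorem~\ref{thm:boot-pre} in place of the two limits above, and $\widetilde{\mathcal M}/\tilde\sigma$ as the Gaussian element. The only point to check is that $\widetilde{\mathcal M}/\tilde\sigma$ is again a tight centered Gaussian element, so that Lemma~\ref{lem:gaussian-quantile-crossing} still applies. This holds because $\tilde\sigma$ is bounded away from zero by Assumption~\ref{ass:pre-est-regularity}(iv).
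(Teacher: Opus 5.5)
Your proposal is correct and follows the paper's proof. You obtain strict crossing from Lemma~\ref{lem:gaussian-quantile-crossing} for $\mathcal M/\sigma$ (or $\widetilde{\mathcal M}/\tilde\sigma$), then consistency of the ideal bootstrap quantile from conditional weak convergence, and then the rejection limit $\tau$ from Slutsky and continuity of $F$ at $c$; the only difference is cosmetic, since you use the L\'evy metric where the paper evaluates the conditional distribution functions at continuity points on either side of $c$.
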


\begin{corollary}[Validity for adaptive penalty selection]
\label{cor:adaptive-lambda}
Let \(\Lambda=[0,\bar\lambda]\) be compact. Suppose the bootstrap approximation
in Theorem~\ref{thm:bootstrap} is uniform over $\Lambda$. Let
$c_{n,1-\tau,\Pi}^*(\lambda)$ be the ideal critical value or an approximation
with uniform $o_p(1)$ simulation error, and let $c_{1-\tau,\Pi}(\lambda)$ be the
null-limit quantile. If
\[
    \widehat\lambda\xrightarrow{p}\lambda_0
    \qquad\text{for some deterministic }\lambda_0\in\Lambda,
\]
the null limit is continuous at $c_{1-\tau,\Pi}(\lambda_0)$ and
$c_{1-\tau,\Pi}(\lambda)$ is continuous at $\lambda_0$, then
\[
P\{T_{n,\Pi}(\widehat\lambda)>c_{n,1-\tau,\Pi}^*(\widehat\lambda)\}
\to\tau.
\]
The same holds for $\widehat T_{n,\Pi}$ under Theorem~\ref{thm:boot-pre}.
\end{corollary}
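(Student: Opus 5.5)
We treat this as a post-selection bootstrap argument with a random penalty. Write \(T^0(\lambda):=T_\Pi(\lambda)\) for the null limit and \(c(\lambda):=c_{1-\tau,\Pi}(\lambda)\). The goal is to replace \(\widehat\lambda\) by \(\lambda_0\) in both the statistic and the critical value, up to \(o_p(1)\) errors, and then apply the fixed-penalty result, Corollary~\ref{cor:fixed-lambda-size}, at \(\lambda_0\).

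\textbf{Step 1 (the statistic).} The penalty enters linearly. For any \(\lambda,\lambda'\in\Lambda\), the elementary bound \(|\sup_\gamma a(\gamma)-\sup_\gamma b(\gamma)|\le\sup_\gamma|a-b|\) gives
\[
|T_{n,\Pi}(\lambda)-T_{n,\Pi}(\lambda')|\le|\lambda-\lambda'|\sup_{\gamma\in\Gamma}\|\gamma\|_1 .
\]
Since \(\Gamma\) is compact, \(\widehat\lambda\xrightarrow{p}\lambda_0\) yields \(T_{n,\Pi}(\widehat\lambda)=T_{n,\Pi}(\lambda_0)+o_p(1)\) deterministically. The same Lipschitz bound holds pathwise for the bootstrap statistics \(T_{n*,\Pi}(\lambda)\) and for the limits \(T^0(\lambda)\).

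\textbf{Step 2 (the critical value).} We show \(c^*_{n,1-\tau,\Pi}(\widehat\lambda)\xrightarrow{p}c(\lambda_0)\) by the decomposition
\[
|c^*_n(\widehat\lambda)-c(\lambda_0)|\le\sup_{\lambda\in\Lambda}|c^*_n(\lambda)-c(\lambda)|+|c(\widehat\lambda)-c(\lambda_0)|.
\]
The second term is \(o_p(1)\) by continuity of \(c(\cdot)\) at \(\lambda_0\) and the continuous mapping theorem.

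For the first term we need uniform quantile consistency. Uniformity of the bootstrap approximation over \(\Lambda\), understood as uniform convergence in bounded-Lipschitz distance, combined with the Lipschitz-in-\(\lambda\) property, reduces the problem to a finite \(\delta\)-grid of penalties. At each grid point the pointwise convergence of conditional quantiles requires the null CDF to strictly cross level \(1-\tau\) at \(c(\lambda)\). Lemma~\ref{lem:gaussian-quantile-crossing} supplies this crossing wherever the limit CDF is continuous. Monotonicity of quantiles in \(\lambda\) (each statistic is nonincreasing in \(\lambda\)) then controls the points between grid nodes, together with the \(\delta\)-Lipschitz bound on \(c(\cdot)\) and \(c_n^*(\cdot)\) inherited from Step 1. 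Any additional Monte Carlo error is \(o_p(1)\) uniformly in \(\lambda\) by assumption.

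\textbf{Main obstacle.} Continuity of the limit distribution is assumed only at \(\lambda_0\), so the uniform quantile claim may fail away from \(\lambda_0\). I will therefore localize. For \(\varepsilon>0\), continuity of \(c(\cdot)\) at \(\lambda_0\) and the Lipschitz bound give a neighborhood \(N\) of \(\lambda_0\) such that, for all \(\lambda\in N\),
\[
|c(\lambda)-c(\lambda_0)|<\varepsilon,\qquad \|T^0(\lambda)-T^0(\lambda_0)\|<\varepsilon .
\]
On the event \(\{\widehat\lambda\in N\}\), whose probability tends to one, the bootstrap quantile at \(\widehat\lambda\) is sandwiched by monotonicity between the bootstrap quantiles at the endpoints of \(N\). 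Each of these is within \(\varepsilon+o_p(1)\) of quantiles of \(T_{n*,\Pi}(\lambda_0)\pm C\varepsilon\). By Theorem~\ref{thm:bootstrap} and Lemma~\ref{lem:gaussian-quantile-crossing} at \(\lambda_0\), these lie within \(O(\varepsilon)+o_p(1)\) of \(c(\lambda_0)\), which settles Step 2.

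\textbf{Step 3 (conclusion).} By Theorem~\ref{thm:null} and Slutsky,
\[
T_{n,\Pi}(\widehat\lambda)-c^*_n(\widehat\lambda)\rightsquigarrow T^0(\lambda_0)-c(\lambda_0).
\]
Because the limit law is continuous at \(c(\lambda_0)\), the portmanteau theorem gives rejection probability \(P\{T^0(\lambda_0)>c(\lambda_0)\}=\tau\).

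For the plug-in statistic \(\widehat T_{n,\Pi}\), the argument is identical. We use Theorems~\ref{thm:null-pre_main} and \ref{thm:boot-pre_main}, and \(\Gamma\) is bounded away from zero but still compact, so the Lipschitz bound in \(\lambda\) is unchanged.
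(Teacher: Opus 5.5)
Your proposal is correct and follows essentially the paper's proof. In both, the pathwise $C_\Gamma$-Lipschitz bound in $\lambda$ for the sample, bootstrap, and limit statistics moves $T_{n,\Pi}(\widehat\lambda)$ and the bootstrap quantile to $\lambda_0$, quantile consistency at the fixed $\lambda_0$ comes from Lemma~\ref{lem:gaussian-quantile-crossing} plus conditional weak convergence, and the fixed-penalty argument then finishes. The paper just skips your uniform-over-$\Lambda$ detour and the monotone sandwich: the Lipschitz bound holds simultaneously for all $\lambda$, so it gives $|c^*_{n,1-\tau,\Pi}(\widehat\lambda)-c^*_{n,1-\tau,\Pi}(\lambda_0)|\le C_\Gamma|\widehat\lambda-\lambda_0|+o_p(1)$ directly.
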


\begin{theorem}[Consistency under fixed alternatives]
\label{thm:consistency}
Suppose Assumptions~\ref{ass:mixing}--\ref{ass:regularity} hold. If
\[
\sup_{\gamma\in\Gamma}
\int_{\mathcal T}
\left(\frac{\eta(\alpha,\gamma;\bar\theta)}{\sigma(\alpha,\gamma)}\right)^2
\,d\Pi(\alpha)>0,
\]
where
\[
\eta(\alpha,\gamma;\bar\theta)
:=E\!\left[\exp(\gamma^\top I_{t-1})
\big(\mathbf 1\{Y_t\le m(I_{t-1},\bar\theta(\alpha))\}-\alpha\big)\right],
\]
then, for any fixed \(\lambda\ge0\),
\[
T_{n,\Pi}(\lambda) \xrightarrow{P} +\infty.
\]
\end{theorem}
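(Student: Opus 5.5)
The plan is to prove Theorem~\ref{thm:consistency} by working with the rescaled statistic $n^{-1}T_{n,\Pi}(\lambda)$ and showing that it converges in probability to the positive constant
\[
\kappa:=\sup_{\gamma\in\Gamma}\int_{\mathcal T}\{\eta(\alpha,\gamma;\bar\theta)/\sigma(\alpha,\gamma)\}^2d\Pi(\alpha).
\]
Since $n^{-1}T_{n,\Pi}(\lambda)=\sup_{\gamma}[\int M_n^2/s_n^2\,d\Pi-\lambda\|\gamma\|_1/n]$ and $\sup_{\gamma\in\Gamma}\lambda\|\gamma\|_1/n\to0$ by compactness of $\Gamma$, it suffices to show
\[
\sup_{\alpha\in\mathcal T,\gamma\in\Gamma}\left|M_n^2(\alpha,\gamma)/s_n^2(\alpha,\gamma)-\eta^2(\alpha,\gamma;\bar\theta)/\sigma^2(\alpha,\gamma)\right|=o_p(1).
\]
The sup--integral functional is Lipschitz in sup norm, since $\Pi$ is a probability measure and the penalty term is common to both objectives. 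Given this, $T_{n,\Pi}(\lambda)=n\{\kappa+o_p(1)\}$, so $P\{T_{n,\Pi}(\lambda)>K\}\to1$ for every $K$.

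\textbf{Step 1 (uniform law for $M_n$).} The process $M_n(\alpha,\gamma)$ is the sample mean of the class $\mathcal G$ in Lemma~\ref{lem:VC}, evaluated along $\bar\theta(\cdot)$. I would apply the truncation-plus-bracketing uniform law used in the proof of Lemma~\ref{lem:conditional-gaussian-multiplier}. That argument uses only stationarity, ergodicity (implied by $\beta$-mixing in Assumption~\ref{ass:mixing}(i)), the VC property, and the $L_q$ envelope $F$ from Assumption~\ref{ass:regularity}(i). It does not use the null, so it applies equally under the alternative and gives
\[
\sup_{\alpha,\gamma}|M_n(\alpha,\gamma)-\eta(\alpha,\gamma;\bar\theta)|=o_p(1).
\]
Here $\eta$ is bounded because $|\eta|\le E F<\infty$.

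\textbf{Step 2 (uniform law for $s_n^2$).} The map $\gamma\mapsto\exp(2\gamma^\top I_{t-1})$ is continuous, and it is dominated by $F^2$, which is integrable since $q>2$. Lemma~\ref{lem:uwlln} on compact $\Gamma$, multiplied by the bounded factor $\alpha(1-\alpha)$, therefore yields
\[
\sup_{\alpha,\gamma}|s_n^2-\sigma^2|=o_p(1).
\]
This is exactly part~2 of Theorem~\ref{thm:null}; its proof does not use $H_0$.

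\textbf{Step 3 (ratio and conclusion).} I expect the main point of care to be keeping $\sigma^2$ bounded away from zero. By Jensen's inequality, $E\exp(2\gamma^\top I_{t-1})\ge\exp(2\gamma^\top E\,I_{t-1})$, and this lower bound is itself bounded below uniformly on compact $\Gamma$. Since $\mathcal T\Subset(0,1)$, $\alpha(1-\alpha)$ is also bounded below, so $\inf_{\alpha,\gamma}\sigma^2\ge c>0$. On the event $\inf s_n^2\ge c/2$, whose probability tends to one, the inequality $|a^2/b-x^2/y|\le|a^2-x^2|/b+x^2|b-y|/(by)$ together with Steps~1--2 gives the displayed uniform convergence of the ratio. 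The continuous-mapping step stated at the start then yields $n^{-1}T_{n,\Pi}(\lambda)\xrightarrow{p}\kappa>0$, which completes the argument.

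If $\Phi$ is the bounded injection rather than the identity, the same Jensen bound applies with $E\Phi(I_{t-1})$ in place of $E\,I_{t-1}$. Alternatively, in that case the lower bound is immediate because the weights themselves are bounded below.
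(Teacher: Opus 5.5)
Your proposal is correct and follows essentially the same route as the paper. You rescale by $n^{-1}$, establish uniform convergence of $M_n$ to $\eta$ and of $s_n^2$ to $\sigma^2$, and pass to the limit through the Lipschitz sup--integral functional while the penalty $\lambda\|\gamma\|_1/n$ vanishes uniformly. The differences are minor: the paper gets the numerator's uniform law from the Arcones--Yu functional CLT (Lemma~\ref{lem:mixing-fclt}), which even yields an $O_p(n^{-1/2})$ rate, whereas you use the truncation uniform law, and your Jensen argument makes explicit the lower bound on $\sigma$ that the paper only asserts.
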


\begin{theorem}[Oracle uniform consistency and rate of $\widehat f_h$]
\label{thm:consistency_density}
Suppose Assumptions~\ref{ass:mixing}--\ref{ass:regularity} and
\ref{ass:kernel}(a)--(c) hold, and suppose $H_0$ holds. Construct
$\widehat f_h(I_{t-1},\bar\theta;\alpha)$ along the null-imposed curve $\bar\theta$
using the simulated indices specified in Assumption~\ref{ass:regularity}(iv)(4).
Assume the Monte Carlo sample size satisfies $J \asymp n$.

Let the bandwidth bounds be defined in terms of $n$:
\[
a_n := c_a\left(\frac{\log n}{n}\right)^{1/5}, \qquad
b_n := c_b\left(\frac{\log n}{n}\right)^{1/5},
\]
for constants $0 < c_a \le c_b < \infty$. Assume the data-driven bandwidth $h_J$ satisfies $P(a_n \le h_J \le b_n) \to 1$.
Along $J=J_n$, identify the notation in
Assumption~\ref{ass:kernel} by setting $a_{J_n}:=a_n$ and
$b_{J_n}:=b_n$.

Let $\Delta_{f,n}$ denote the maximal uniform error:
\[
\Delta_{f,n} := \sup_{a_n \le h \le b_n} \sup_{\alpha \in \mathcal{T}} \max_{1 \le t \le n}
\left| \widehat f_h(I_{t-1},\bar\theta;\alpha) - f_{Y_t|I_{t-1}}\big(m(I_{t-1}, \theta_0(\alpha)) \mid I_{t-1}\big) \right|.
\]
Then the estimator satisfies the uniform convergence rate
\[
\Delta_{f,n} = O_p\left( \left(\frac{\log n}{n}\right)^{2/5} \right) = o_p(1).
\]
\end{theorem}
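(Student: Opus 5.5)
The plan is to obtain the theorem as a specialization of Lemma~\ref{lem:unif_rate}, taking the reference curve $\vartheta^r=\bar\theta$ and choosing $r_J$ to match the stated bandwidth.

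\emph{Step 1: the reference-curve hypotheses.} Under $H_0$, Assumption~\ref{ass:regularity}(iv)(4) states that $q_t^r(u)=m(I_{t-1},\bar\theta(u))$ is strictly increasing and $C^3$ on $\mathcal A$. Its derivative $\partial_u q_t^r$ is bounded above and bounded away from zero. The first two derivatives of $\partial_z\{(q_t^r)^{-1}(z)\}$ are uniformly bounded. These are exactly the hypotheses of Lemma~\ref{lem:unif_rate}. The same assumption gives $s_t^r(\alpha)=f_{Y_t\mid I_{t-1}}(m(I_{t-1},\bar\theta(\alpha)))$. Since $\bar\theta=\theta_0$ on $\mathcal T$ under $H_0$, this target equals $f_{Y_t\mid I_{t-1}}(m(I_{t-1},\theta_0(\alpha)))$, which is the target in $\Delta_{f,n}$.

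\emph{Step 2: choice of $r_J$ and verification of \eqref{eq:D}.} Take $J\asymp n$, $a_J=a_n\asymp(\log n/n)^{1/5}$ and $b_J=b_n$. Then $\log J+\log n+\log(1/a_J)\le C\log n$, so I set $r_J:=C_0'\log n$ with $C_0'$ large enough that the first part of \eqref{eq:D} holds. For the second part,
\[
\frac{r_J}{Ja_J}\asymp\frac{\log n}{n\,(\log n/n)^{1/5}}=\Big(\frac{\log n}{n}\Big)^{4/5}\to0.
\]
It follows that $u_J=\sqrt{r_J/(Ja_J)}\asymp(\log n/n)^{2/5}$.

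The bandwidth conditions in Assumption~\ref{ass:kernel}(b)--(c) must also be checked. We have $b_J\to0$. Moreover $Ja_J\asymp n^{4/5}(\log n)^{1/5}$, which dominates both $\log J$ and $\log n$.

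\emph{Step 3: conclusion.} Lemma~\ref{lem:unif_rate}, applied uniformly over $h\in[a_n,b_n]$, gives the bound $O_p(b_n^2+u_J)$. Here $b_n^2\asymp(\log n/n)^{2/5}$, and we have just shown $u_J\asymp(\log n/n)^{2/5}$. The bias and stochastic terms are therefore balanced at this rate, which yields $\Delta_{f,n}=O_p((\log n/n)^{2/5})=o_p(1)$. Because $P(a_n\le h_J\le b_n)\to1$, the bound also covers the data-driven bandwidth $h_J$.

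\emph{Main obstacle.} The delicate point is checking that the constant $C_0$ required in \eqref{eq:D} is attainable. It has to absorb the union bound over $n$ time points and over the covering $N_J\asymp J/(r_Ja_J^3)$. With $J\asymp n$ and polynomial $a_J$, $\log N_J=O(\log n)$, so a sufficiently large $C_0'$ suffices. I would also confirm that the simulation draws $U_j$ are independent of the data, so that the Bernstein step in Lemma~\ref{lem:unif_rate} holds conditionally on $\{I_{t-1}\}$.
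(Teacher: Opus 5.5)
Your proposal is correct and follows the paper's proof essentially step for step. You specialize Lemma~\ref{lem:unif_rate} to the oracle curve $\vartheta^r=\bar\theta$, verify the bandwidth and simulation-size conditions with $J\asymp n$ and $a_n\asymp(\log n/n)^{1/5}$, take $r_J\asymp\log n$, and balance $b_n^2$ against $u_J$ at the rate $(\log n/n)^{2/5}$; your choice $r_J=C_0'\log n$ is equivalent to the paper's $r_J=C_0\{\log J+\log n+\log(1/a_n)\}$, since the latter is of order $\log n$.
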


Theorem~\ref{thm:consistency_density} is the direct-insertion result used by
the known-parameter procedure. The pre-estimation procedure uses the same
smoother after replacing the known curve by the restricted estimated curve.
Its validity therefore also
requires the bound for replacing the true curve with its estimate established in the proof
of Theorem~\ref{thm:null-pre}. It does not follow by inserting a random curve
directly into the oracle theorem.

\begin{theorem}[Multiplier bootstrap validity under local alternatives]
\label{thm:MB-local}
Suppose Assumptions~\ref{ass:mixing}--\ref{ass:kernel} and
\ref{ass:nuisance-smooth}--\ref{ass:local} hold. Let the bandwidth
satisfy the conditions in Theorem~\ref{thm:consistency_density}. Fix any
admissible candidate direction $B$. The data may be generated under the null
or under any actual local sequence $P_{n,B_0}$ covered by
Assumption~\ref{ass:local}. The implemented statistic remains evaluated at the
null-imposed value \(\theta_0(\alpha)\). Let \(M_{n*}(\alpha,\gamma;B)\),
\(s_{n*,B}^2(\alpha,\gamma)\), and \(T_{n*,B,\Pi}(\lambda)\) be defined as in the
main text, and let \(d_B(\alpha,\gamma)\) denote the ordinary local drift derived
in Lemma~\ref{lem:local-drift-expansions}. Then, conditional on the data, the
following assertions hold in
\(\ell^\infty(\mathcal T\times\Gamma)\):
\begin{enumerate}
\item The shifted bootstrap process captures the noncentral limit:
\[
\sqrt n\,M_{n*}(\alpha,\gamma;B)
\rightsquigarrow^*
\mathcal M(\alpha,\gamma)+d_B(\alpha,\gamma),
\]
where \(\mathcal M\) is the zero-mean Gaussian process in
Theorem~\ref{thm:null}.

\item The bootstrap variance estimator is consistent:
\[
\sup_{\alpha,\gamma}
\big|s_{n*,B}^2(\alpha,\gamma)-\sigma^2(\alpha,\gamma)\big|
\xrightarrow{p^*}0.
\]

\item The bootstrap statistic converges to the noncentral limit distribution:
\[
T_{n*,B,\Pi}(\lambda)
\rightsquigarrow^*
\sup_{\gamma\in\Gamma}
\Bigg[
\int_{\mathcal T}
\left(
\frac{\mathcal M(\alpha,\gamma)+d_B(\alpha,\gamma)}
{\sigma(\alpha,\gamma)}
\right)^2
 d\Pi(\alpha)
-
\lambda\|\gamma\|_1
\Bigg].
\]
\end{enumerate}
\end{theorem}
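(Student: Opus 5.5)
The proof splits the shifted bootstrap process into a conditionally centered multiplier part and a deterministic drift part:
\[
\sqrt n\,M_{n*}(\alpha,\gamma;B)=\mathbb G_n^*(\alpha,\gamma)+\widehat d_{n,B}(\alpha,\gamma),
\]
where
\[
\mathbb G_n^*:=n^{-1/2}\sum_t\omega_t Z_{nt}(\alpha,\gamma),
\]
\[
\widehat d_{n,B}:=n^{-1}\sum_t \exp(\gamma^\top I_{t-1})\widehat f_h(I_{t-1},\bar\theta(\alpha))\nabla_\theta m(I_{t-1},\bar\theta(\alpha))^\top B(\alpha).
\]
Part 1 then follows from two facts: $\mathbb G_n^*\rightsquigarrow^*\mathcal M$ in probability, and $\sup_{\alpha,\gamma}|\widehat d_{n,B}-d_B|\to0$ in probability. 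A deterministic uniformly convergent shift preserves conditional weak convergence, so these two facts suffice.

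\textbf{Multiplier term.} Conditional on the data, $\mathbb G_n^*$ is centered Gaussian with covariance $n^{-1}\sum_t Z_{nt}(\alpha_1,\gamma_1)Z_{nt}(\alpha_2,\gamma_2)$. I would repeat the proof of Lemma~\ref{lem:conditional-gaussian-multiplier}. The VC entropy bound of Lemma~\ref{lem:VC} and the Gaussian entropy inequality hold data-by-data, so conditional equicontinuity carries over unchanged. It remains to show that the sample covariance kernel and semimetric converge uniformly to the null kernel of $\mathcal M$ when the data come from $P_{n,B_0}$.

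Under $H_0$ this is the ULLN step in that lemma. Under $P_{n,B_0}$ I would use contiguity, Assumption~\ref{ass:local}(i): the event that the sup-distance between the sample kernel and the null kernel exceeds $\varepsilon$ has vanishing probability under the null, hence also under $P_{n,B_0}$. The only subtlety is that the sample kernel must be written as a measurable function of the data. Alternatively, Assumption~\ref{ass:local}(iii) directly says the covariance kernels and studentizer limits differ from the null ones by $o(1)$.

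\textbf{Drift term.} I would compare $\widehat d_{n,B}$ with the oracle average that uses $f_{t\alpha}$ in place of $\widehat f_h$. The difference is bounded by
\[
\sup_{t,\alpha}|\widehat f_h-f_{t\alpha}|\cdot \|B\|_\infty\cdot n^{-1}\sum_t \mathcal E_t^2 .
\]
By Theorem~\ref{thm:consistency_density}, transferred to $P_{n,B_0}$ by contiguity, and the envelope moments of Assumption~\ref{ass:nuisance-smooth}(iii), this is $o_p(1)$. The oracle average converges uniformly to $d_B$ by Lemma~\ref{lem:uwlln}: the summand is continuous in $(\alpha,\gamma)$ because of continuity of $B$, $\nabla_\theta m$, the density and $\theta_0(\cdot)$, and it is dominated by an integrable envelope. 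That convergence holds under the null and transfers by contiguity. The $\sup_t$ in the density theorem is uniform over the data because the smoothing randomness comes only from the simulated $U_j$; the data enter only through $q_t^0$, whose regularity is assumed almost surely.

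\textbf{Part 2.} We have
\[
s^2_{n*,B}-s^2_n=\alpha(1-\alpha)\,n^{-1}\sum_t(\omega_t^2-1)e^{2\gamma^\top I_{t-1}} .
\]
Its conditional second moment is bounded by $2n^{-2}\sum_t \sup_\gamma e^{4\gamma^\top I_{t-1}}$, which is $o_p(1)$ given the envelope moments. Uniformity in $\gamma$ comes from a Lipschitz bound in $\gamma$ via the envelope, applied over a finite net. Theorem~\ref{thm:null}(2), transferred by contiguity, gives $s_n^2\to\sigma^2$. Part 3 then follows from the extended continuous mapping theorem. I would use the Lipschitz bound on $\phi_\lambda$ from the proof of Lemma~\ref{lem:gaussian-quantile-crossing}, together with $\inf\sigma^2>0$, which follows from compactness of $\mathcal T\Subset(0,1)$.

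I expect the main obstacle to be the step that transfers the bootstrap's conditional-in-probability statements from the null to $P_{n,B_0}$. The cleanest route is to express each required $o_p(1)$ statement as a data-measurable event and invoke contiguity, Assumption~\ref{ass:local}(i). The key point to check is that the density rate is a sup over $t\le n$ that holds uniformly over the data, so that contiguity need not be applied to the simulation draws.
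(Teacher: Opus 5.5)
Your proposal is correct and follows the paper's proof essentially step for step. You use the same split of $\sqrt n\,M_{n*}(\alpha,\gamma;B)$ into a conditionally centered multiplier part, handled by the argument of Theorem~\ref{thm:bootstrap} and Lemma~\ref{lem:conditional-gaussian-multiplier}, and a plug-in drift, compared with the oracle-density average and then a uniform law of large numbers, followed by the studentizer argument and the continuous mapping theorem. The differences are minor: you make the contiguity transfer of the null-law $o_p(1)$ statements explicit, where the paper appeals to Assumption~\ref{ass:local}(iii) and the $O(n^{-1/2})$ size of $E_{n,B_0}Z_{nt}$; and you bound $n^{-1}\sum_t(\omega_t^2-1)e^{2\gamma^\top I_{t-1}}$ by Chebyshev plus a finite net, using fourth moments from Assumption~\ref{ass:nuisance-smooth}(iii), rather than by truncation.
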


\noindent Define
\[
Q(\alpha,\gamma):=\frac{\mathcal M(\alpha,\gamma)}{\sigma(\alpha,\gamma)},
\qquad
R_B(\alpha,\gamma):=\frac{d_B(\alpha,\gamma)}{\sigma(\alpha,\gamma)}.
\]
The local-alternative limit statistic is
\[
T_{\Pi}(\lambda;B)
:=
\sup_{\gamma\in\Gamma}
\Bigg[
\int_{\mathcal T}\big(Q(\alpha,\gamma)+R_B(\alpha,\gamma)\big)^2d\Pi(\alpha)
-
\lambda\|\gamma\|_1
\Bigg].
\]

\begin{theorem}[Effect of penalization on local separation]
\label{thm:power-enhance}
For simplicity, suppose \(B(\alpha)\equiv B_0\). Define
\[
J_{\Pi}(\gamma;B)
:=
\int_{\mathcal T}\big(Q(\alpha,\gamma)+R_B(\alpha,\gamma)\big)^2d\Pi(\alpha).
\]
Assume that for both \(B=0\) and some \(B_0\neq0\), there exists a unique
maximizer \(\tilde\gamma_{\Pi}(B)\) of \(J_{\Pi}(\gamma;B)\) over \(\Gamma\). If
\[
\|\tilde\gamma_{\Pi}(0)\|_1-
\|\tilde\gamma_{\Pi}(B_0)\|_1>0
\]
holds almost surely, then, for almost every realization of the limit experiment
there exists a $\delta=\delta(Q,B_0)>0$ depending on the realization such that, for
every $\lambda\in(0,\delta)$,
\[
T_{\Pi}(\lambda;B_0)-T_{\Pi}(\lambda;0)
>
T_{\Pi}(0;B_0)-T_{\Pi}(0;0).
\]
\end{theorem}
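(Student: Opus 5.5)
The plan is to work pathwise on a fixed realization of the limit experiment and to treat each limit statistic as a value function in $\lambda$. For $B\in\{0,B_0\}$ put
\[
V_B(\lambda):=T_\Pi(\lambda;B)=\sup_{\gamma\in\Gamma}\big[J_\Pi(\gamma;B)-\lambda\|\gamma\|_1\big],\qquad D(\lambda):=V_{B_0}(\lambda)-V_0(\lambda).
\]
The claim is that $D(\lambda)>D(0)$ for all small $\lambda>0$. I will prove this by computing the right derivative of each $V_B$ at $\lambda=0$ (a Danskin/envelope argument) and showing that $D'(0+)=\|\tilde\gamma_\Pi(0)\|_1-\|\tilde\gamma_\Pi(B_0)\|_1$. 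This is strictly positive by hypothesis, so the conclusion follows directly.

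\emph{Step 1 (regularity of the path).} I restrict to a probability-one event on which the following hold. First, $\gamma\mapsto J_\Pi(\gamma;B)$ is continuous on the compact set $\Gamma$. I get this from the tight Borel version of $\mathcal M$ in Theorem~\ref{thm:null} and Lemma~\ref{lem:gaussian-quantile-crossing}, which has a.s.\ uniformly continuous paths with respect to its intrinsic semimetric. That semimetric is continuous in $\gamma$ under Assumption~S2(i), $\sigma$ is bounded away from zero and continuous, and $R_{B_0}$ is a bounded continuous deterministic function. Second, the maximizers $\tilde\gamma_\Pi(0)$ and $\tilde\gamma_\Pi(B_0)$ are unique. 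Third, the strict norm inequality holds. On this event every supremum defining $V_B(\lambda)$ is attained, by continuity and compactness.

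\emph{Step 2 (first-order expansion of $V_B$).} Fix $B$ and write $\tilde\gamma:=\tilde\gamma_\Pi(B)$. Evaluating the objective at $\tilde\gamma$ gives the lower bound
\[
V_B(\lambda)\ge V_B(0)-\lambda\|\tilde\gamma\|_1 .
\]
For the upper bound, let $\gamma_\lambda$ be a maximizer at penalty $\lambda$. Then
\[
V_B(\lambda)=J_\Pi(\gamma_\lambda;B)-\lambda\|\gamma_\lambda\|_1\le V_B(0)-\lambda\|\gamma_\lambda\|_1 .
\]
Next I show $\gamma_\lambda\to\tilde\gamma$ as $\lambda\downarrow0$. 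Combining the two bounds gives $J_\Pi(\gamma_\lambda;B)\ge V_B(0)-\lambda\sup_\Gamma\|\gamma\|_1\to V_B(0)$. By compactness, any limit point $\gamma^\dagger$ of $\gamma_\lambda$ satisfies $J_\Pi(\gamma^\dagger;B)=V_B(0)$ by continuity, so uniqueness forces $\gamma^\dagger=\tilde\gamma$. Hence $\|\gamma_\lambda\|_1\to\|\tilde\gamma\|_1$, and
\[
V_B(\lambda)=V_B(0)-\lambda\|\tilde\gamma_\Pi(B)\|_1+o(\lambda),
\]
where the $o(\lambda)$ term depends on the realization. (Alternatively, $V_B$ is convex in $\lambda$ as a supremum of affine functions, and Danskin's theorem gives the same right derivative.)

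\emph{Step 3 (conclusion).} Subtracting the two expansions yields
\[
D(\lambda)-D(0)=\lambda\big(\|\tilde\gamma_\Pi(0)\|_1-\|\tilde\gamma_\Pi(B_0)\|_1\big)+o(\lambda).
\]
The bracket is a strictly positive, realization-dependent constant $\kappa$. Choose $\delta(Q,B_0)>0$ so that $|o(\lambda)|<\kappa\lambda/2$ on $(0,\delta)$. Then $D(\lambda)>D(0)$ for every $\lambda\in(0,\delta)$, which is the asserted inequality.

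The main obstacle is Step 1 together with the convergence $\gamma_\lambda\to\tilde\gamma$ in Step 2. Everything else is elementary, but the argument needs a continuous version of the integrated process in $\gamma$ (for continuous $\Pi$, inside the space $\mathbb B_\Pi$) so that the argmax is attained and the usual argmax-continuity argument applies. I would first try to get this from dominated convergence applied to $\int (Q+R_B)^2\,d\Pi$, using the a.s.\ uniform continuity of the tight Gaussian path on $\mathcal T\times\Gamma$. On a finite direction grid this issue disappears, since the maximizer $\gamma_\lambda$ is then eventually equal to $\tilde\gamma$.
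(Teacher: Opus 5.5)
Your proposal is correct and follows the paper's proof. Both arguments treat $T_\Pi(\lambda;B)$ pathwise as a value function in $\lambda$, use uniqueness of the maximizer to get the right derivative $-\|\tilde\gamma_\Pi(B)\|_1$ at $\lambda=0$ (Danskin), and conclude from strict positivity of $D'(0+)$. The only difference is that you prove the Danskin step directly through the sandwich bounds and argmax continuity, and you justify continuity of $\gamma\mapsto J_\Pi(\gamma;B)$, which the paper simply asserts.
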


For the next result, let $\Lambda=[0,\bar\lambda]$ with $0\in\Lambda$ and let
$\mathcal B=\{B_1,\ldots,B_J\}$ be a finite, nonempty collection of local
directions. Define
\[
c_{1-\tau,\Pi}(\lambda)
:=q_{1-\tau}\{T_\Pi(\lambda;0)\},
\qquad
\mathcal R(\lambda,B,\tau)
:=P\{T_\Pi(\lambda;B)>c_{1-\tau,\Pi}(\lambda)\},
\]
and $W(\lambda):=\min_{B\in\mathcal B}\mathcal R(\lambda,B,\tau)$.
Let $c_{n,1-\tau,\Pi}^*(\lambda)$ be the null multiplier-bootstrap critical
value, let $\widehat{\mathcal R}_n(\lambda,B,\tau)$ be the corresponding
estimator of the rejection frequency from the shifted bootstrap, and define
\[
\widehat W_n(\lambda)
:=\min_{B\in\mathcal B}\widehat{\mathcal R}_n(\lambda,B,\tau).
\]
For continuous $\Lambda$, let $\widehat\lambda\in\Lambda$ be a measurable
approximate maximizer satisfying
\begin{equation}
\widehat W_n(\widehat\lambda)
\ge\sup_{\ell\in\Lambda}\widehat W_n(\ell)-\eta_n,
\qquad 0\le\eta_n=o_p(1).
\label{eq:approximate-penalty-selector}
\end{equation}
An exact measurable maximizer, when it exists, corresponds to $\eta_n=0$.

\begin{theorem}[Maximin local power of the adaptive selector]
\label{thm:maximin-no-loss}
Suppose the conclusions of Theorems~\ref{thm:bootstrap} and
\ref{thm:MB-local} hold for every fixed $\lambda\in\Lambda$ and every
$B\in\mathcal B$, with stochastic equicontinuity of the processes
holding uniformly over this finite collection. Suppose also that
\begin{equation}
\lim_{\varepsilon\downarrow0}
\max_{B\in\mathcal B\cup\{0\}}
\sup_{\lambda\in\Lambda}
P\!\left(
\left|T_\Pi(\lambda;B)-c_{1-\tau,\Pi}(\lambda)\right|
\le\varepsilon
\right)=0,
\label{eq:uniform-anti-concentration}
\end{equation}
that the Monte Carlo errors in the bootstrap critical values and rejection
frequencies are $o_p(1)$ uniformly in $\lambda$, and that $W$ is continuous with
a unique maximizer $\lambda^*$. Use the selector in
\eqref{eq:approximate-penalty-selector}. Then
\[
\sup_{\lambda\in\Lambda,\,B\in\mathcal B}
\left|
\widehat{\mathcal R}_n(\lambda,B,\tau)
-\mathcal R(\lambda,B,\tau)
\right|\xrightarrow{p}0,
\qquad
\widehat\lambda\xrightarrow{p}\lambda^*.
\]
Moreover,
\[
\min_{B\in\mathcal B}\mathcal R(\lambda^*,B,\tau)
=\max_{\lambda\in\Lambda}
\min_{B\in\mathcal B}\mathcal R(\lambda,B,\tau)
\ge
\min_{B\in\mathcal B}\mathcal R(0,B,\tau).
\]
For every $B\in\mathcal B$, under the corresponding contiguous local sequence,
\[
P_{n,B}\!\left\{
T_{n,\Pi}(\widehat\lambda)>
c_{n,1-\tau,\Pi}^*(\widehat\lambda)
\right\}
\longrightarrow
\mathcal R(\lambda^*,B,\tau).
\]
If $\mathcal B=\{B^*\}$, this gives the pointwise conclusion
$\mathcal R(\lambda^*,B^*,\tau)\ge
\mathcal R(0,B^*,\tau)$. For larger $\mathcal B$, the guarantee is maximin and
need not hold separately for every direction.
\end{theorem}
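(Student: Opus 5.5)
The proof has four steps. First, uniform (in $\lambda$) convergence of the bootstrap critical values and rejection frequencies. Second, an argmax argument for $\widehat\lambda$. Third, the elementary maximin inequality, which uses $0\in\Lambda$. Fourth, the local rejection probability at the random penalty, obtained by combining contiguity with the second step.

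\emph{Step 1 (uniformity in $\lambda$).} Write the sample, null-bootstrap and shifted-bootstrap statistics as $\phi_\lambda$ applied to the corresponding studentized processes, where $\phi_\lambda(f)=\sup_\gamma\{\int f^2d\Pi-\lambda\|\gamma\|_1\}$ as in Lemma~\ref{lem:gaussian-quantile-crossing}. Two facts about $\phi_\lambda$ do the work. Since $\phi_\lambda$ is a supremum of functions differing by $(\lambda-\lambda')\|\gamma\|_1$, we have $|\phi_\lambda(f)-\phi_{\lambda'}(f)|\le|\lambda-\lambda'|\sup_\Gamma\|\gamma\|_1$. The Lipschitz-type bound in that lemma holds uniformly in $\lambda$. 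Therefore joint (conditional) weak convergence of the processes from Theorems~\ref{thm:bootstrap} and~\ref{thm:MB-local}, uniform over the finite $\mathcal B$, gives conditional weak convergence of $\lambda\mapsto T_{n*,\Pi}(\lambda)$ and $\lambda\mapsto T_{n*,B,\Pi}(\lambda)$ in $C(\Lambda)$ to $T_\Pi(\cdot;0)$ and $T_\Pi(\cdot;B)$. By Lemma~\ref{lem:gaussian-quantile-crossing} the limit quantiles strictly cross. Combined with \eqref{eq:uniform-anti-concentration}, a standard quantile-inversion argument yields $\sup_\lambda|c^*_{n,1-\tau,\Pi}(\lambda)-c_{1-\tau,\Pi}(\lambda)|=o_p(1)$, which also shows that $c_{1-\tau,\Pi}$ is continuous. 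Then I bound
\[
|\widehat{\mathcal R}_n-\mathcal R|\le \sup_\lambda\bigl|P^*\{T_{n*,B,\Pi}(\lambda)>c(\lambda)\pm\varepsilon\}-P\{T_\Pi(\lambda;B)>c(\lambda)\pm\varepsilon\}\bigr|+\sup_{\lambda}P(|T_\Pi(\lambda;B)-c(\lambda)|\le\varepsilon)+o_p(1).
\]
The first term is handled by uniform conditional convergence in $C(\Lambda)$ together with a finite $\lambda$-grid plus equicontinuity. The second term vanishes as $\varepsilon\downarrow0$ by \eqref{eq:uniform-anti-concentration}. The Monte Carlo errors are $o_p(1)$ by assumption. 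This gives the first display. I expect Step 1 to be the main obstacle: the delicate part is that conditional convergence is needed uniformly over the continuum $\Lambda$ in probability, which requires stochastic equicontinuity in $\lambda$ of the bootstrap distribution functions. I would first try the Lipschitz-in-$\lambda$ bound above plus the anti-concentration to discretize $\Lambda$.

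\emph{Step 2 (argmax).} Since $\mathcal B$ is finite, $|\widehat W_n-W|\le\max_B\sup_\lambda|\widehat{\mathcal R}_n-\mathcal R|=o_p(1)$ uniformly on $\Lambda$. The function $W$ is continuous on a compact set with a unique maximizer $\lambda^*$, so it is well separated: $\sup_{|\lambda-\lambda^*|\ge\delta}W<W(\lambda^*)$. The approximate maximizer property \eqref{eq:approximate-penalty-selector} then gives $W(\widehat\lambda)\ge W(\lambda^*)-o_p(1)$, and hence $\widehat\lambda\to_p\lambda^*$ by the usual consistency argument for extremum estimators.

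\emph{Step 3 (maximin inequality).} The displayed maximin inequality is immediate because $0\in\Lambda$: $W(\lambda^*)=\max_\Lambda W\ge W(0)$.

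\emph{Step 4 (local rejection probability).} Under $P_{n,B}$, the convergence in \eqref{eq:limit_local} holds jointly over $\lambda$ in $C(\Lambda)$ by the same continuity argument as in Step 1. By contiguity (Assumption~\ref{ass:local}), the $o_{p}(1)$ statements of Steps 1--2 transfer from $P_{n,0}$ to $P_{n,B}$, so $\widehat\lambda\to\lambda^*$ and $c^*_{n}(\widehat\lambda)\to c_{1-\tau,\Pi}(\lambda^*)$ under $P_{n,B}$. The pair $(T_{n,\Pi}(\cdot),\widehat\lambda)$ converges jointly to $(T_\Pi(\cdot;B),\lambda^*)$, and evaluation is continuous. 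Hence $T_{n,\Pi}(\widehat\lambda)-c^*_n(\widehat\lambda)\rightsquigarrow T_\Pi(\lambda^*;B)-c_{1-\tau,\Pi}(\lambda^*)$. The limit has no atom at zero by \eqref{eq:uniform-anti-concentration}, so the Portmanteau theorem gives convergence of the rejection probability to $\mathcal R(\lambda^*,B,\tau)$. The singleton case is Step 3 with $W=\mathcal R(\cdot,B^*,\tau)$.
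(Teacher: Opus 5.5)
Your proposal is correct and follows essentially the same route as the paper: the pathwise bound $|\phi_\lambda(f)-\phi_{\lambda'}(f)|\le C_\Gamma|\lambda-\lambda'|$, with $C_\Gamma=\sup_{\gamma\in\Gamma}\|\gamma\|_1$, together with a finite net of $\Lambda$ upgrades the fixed-$\lambda$ bootstrap results to uniform ones. From there, anti-concentration and Lemma~\ref{lem:gaussian-quantile-crossing} give uniform consistency of the critical values and power estimates, the well-separated argmax with $0\in\Lambda$ gives $\widehat\lambda\to_p\lambda^*$ and $W(\lambda^*)\ge W(0)$, and contiguity plus atomlessness at zero gives the local rejection limit.

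The only difference is in how uniform critical-value consistency is obtained. The paper notes that the ideal bootstrap quantiles and the limit quantiles inherit the same $C_\Gamma$-Lipschitz bound in $\lambda$. Hence pointwise consistency on the net already yields $\sup_\lambda|c^*_{n,1-\tau,\Pi}(\lambda)-c_{1-\tau,\Pi}(\lambda)|=o_p(1)$, which is slightly more economical than the uniform quantile-inversion (uniform strict crossing) step you sketch.
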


\begin{remark}[Penalty selection on a fixed grid]
\label{rem:grid-no-loss}
Let $\Lambda_G$ be any fixed finite grid containing zero, and select the
smallest maximizer of $\widehat W_n$ over this grid. Then
\[
\max_{\lambda\in\Lambda_G}W(\lambda)\ge W(0).
\]
Uniform consistency of the estimated power criterion follows over the finite
grid when the bootstrap quantiles and rejection probabilities converge at each
grid point, as established in the proof of Theorem~\ref{thm:maximin-no-loss},
and the Monte Carlo errors vanish. A unique grid maximizer then implies
selector consistency and the same inequality. No shrinking mesh is required.
\end{remark}

\begin{theorem}[Pre-estimation null limit]
\label{thm:null-pre}
Under Assumptions \ref{ass:mixing}--\ref{ass:nuisance-smooth}, including the
rate in
Assumption~\ref{ass:kernel}(d), and the tested null on $\mathcal T$, with
$\Gamma$ interpreted as the nondegenerate
centered-weight index set in Assumption~\ref{ass:pre-est-regularity}(iv), the
following results hold:
\begin{enumerate}
    \item The plug-in empirical process converges weakly:
    \[
        \sqrt{n} \widehat M_n(\alpha, \gamma) \rightsquigarrow \widetilde{\mathcal{M}}(\alpha, \gamma) \quad \text{in } \ell^\infty(\mathcal{T}\times\Gamma),
    \]
    where $\widetilde{\mathcal{M}}(\alpha, \gamma) := \mathcal{M}^c(\alpha, \gamma) + \mathcal{M}_{\mathrm{pre}}(\alpha, \gamma)$ is a tight, zero-mean Gaussian process. Here $\mathcal{M}^c$ denotes the Gaussian limit of the centered-weight null process
\[
M_n^c(\alpha,\gamma)
:=
\frac1n\sum_{t=1}^n
w_t(\gamma)
\big[\mathbf 1\{Y_t\le m_{t\alpha}(\theta_{10},\theta_{20})\}-\alpha\big],
\]
while $\mathcal{M}_{\mathrm{pre}}$ is the Gaussian process induced by the pre-estimation error.

    \item The corrected variance estimator is uniformly consistent:
    \[
        \sup_{\alpha,\gamma} \big|\, \widehat s_n^2(\alpha, \gamma) - \tilde{\sigma}^2(\alpha, \gamma) \,\big| \xrightarrow{p} 0,
    \]
    where
    $\tilde{\sigma}^2(\alpha,\gamma)
    :=E[\Psi_t(\alpha,\gamma)^2]
    =\mathrm{Var}\{\widetilde{\mathcal M}(\alpha,\gamma)\}$ under
    Assumption~\ref{ass:pre-est-regularity}(iii).

    \item The test statistic converges in distribution:
    \[
        \widehat T_{n,\Pi}(\lambda) \rightsquigarrow
        \sup_{\gamma \in \Gamma}
        \Bigg[
        \int_{\mathcal T}
        \left( \frac{\widetilde{\mathcal{M}}(\alpha, \gamma)}{\tilde{\sigma}(\alpha, \gamma)} \right)^2
        d\Pi(\alpha)
        - \lambda\|\gamma\|_1
        \Bigg].
    \]
\end{enumerate}
\end{theorem}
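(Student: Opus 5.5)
The proof has three parts. First, an expansion that splits the plug-in process into an oracle centered-weight process plus a linear nuisance term. Second, a joint functional CLT for those two pieces. Third, uniform consistency of the corrected variance estimator, after which the statistic follows by continuous mapping.

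\textbf{Step 1: the plug-in expansion.} I would write
\[
\sqrt n\,\widehat M_n(\alpha,\gamma)=\frac{1}{\sqrt n}\sum_t w_{t,n}(\gamma)\big[\mathbf 1\{Y_t\le m_{t\alpha}(\bar\theta_1,\hat\theta_2)\}-\alpha\big]
\]
and make two replacements.

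First, replace $w_{t,n}$ by $w_t$. The difference is $\{\bar e_n(\gamma)-E e^{\gamma^\top I}\}\cdot n^{-1/2}\sum_t[\mathbf 1\{\cdot\}-\alpha]$. The first factor is $o_p(1)$ uniformly by Lemma~\ref{lem:uwlln}. The second is $O_p(1)$ uniformly in $(\alpha,\hat\theta_2)$. I would get this by applying Lemma~\ref{lem:mixing-fclt} to the localized VC class from Assumption~\ref{ass:pre-est-regularity}(ii), plus the mean-value bound $\sqrt n\,E[\cdot]=O_p(1)$ coming from the bounded density.

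Second, replace $\hat\theta_2$ by $\theta_{20}$ inside the indicator. Stochastic equicontinuity of the localized class, together with $\sup_\alpha\|\hat\theta_2-\theta_{20}\|=O_p(n^{-1/2})$ from the linear representation, reduces the difference to $\sqrt n$ times the mean increment of the weighted indicator. A conditional-CDF Taylor expansion converts that increment into $A_2(\alpha,\gamma)^\top\sqrt n(\hat\theta_2-\theta_{20})+o_p(1)$. This uses Assumption~\ref{ass:regularity}(iii)--(iv), the uniform continuity of $\nabla m$ in Assumption~\ref{ass:nuisance-smooth}, and the envelope moments.

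The outcome is \eqref{eq:Mhat-expansion}. Substituting the influence-function representation then gives
\[
\sqrt n\,\widehat M_n=n^{-1/2}\sum_t\Psi_t(\alpha,\gamma)+o_p(1)
\]
uniformly over $\mathcal T\times\Gamma$.

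\textbf{Step 2: joint functional CLT.} The stacked class of centered raw scores and influence functions is jointly VC with an $L_q$ envelope, by Assumption~\ref{ass:pre-est-regularity}(ii). Lemma~\ref{lem:mixing-fclt} therefore gives joint convergence to $(\mathcal M^c,\mathcal G_{\theta_2})$. The map $(f,g)\mapsto f+A_2^\top g$ is continuous, because $A_2$ is bounded and continuous on the compact index set. Continuous mapping yields $\widetilde{\mathcal M}$. By Assumption~\ref{ass:pre-est-regularity}(iii) all cross-lag covariances vanish, so the covariance kernel reduces to $E[\Psi_t\Psi_t]$; this proves part 1.

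\textbf{Step 3: consistency of $\widehat s_n^2$ (the main obstacle).} I would show
\[
\sup_{t,\alpha,\gamma}\big|\widehat\Psi_{t,n}-\Psi_t\big|\cdot(\text{envelope})\to0
\]
in the averaged $L_2$ sense, and then apply Lemma~\ref{lem:uwlln} to $\Psi_t^2$. The difficult component is $\widehat f$ evaluated along the \emph{estimated} curve. I would decompose its error into two pieces:

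\begin{itemize}
\item The oracle error along $(\bar\theta_1,\theta_{20})$, which is $o_p(1)$ by Lemma~\ref{lem:unif_rate}.
\item The curve-replacement error. Differentiating the kernel, this is bounded by $C h^{-2}\sup_u|m_{tu}(\hat\theta_2)-m_{tu}(\theta_{20})|\lesssim a_J^{-2}n^{-1/2}\mathcal E_t$. It vanishes by Assumption~\ref{ass:kernel}(d), after handling $\max_t\mathcal E_t=o_p(n^{1/r})$ with $r>4$.
\end{itemize}

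This gives consistency of $\widehat L_\alpha$ and $\widehat A_{2,n}$, and hence of $\widehat l_{t,\alpha}$, since $L_\alpha$ is invertible uniformly by Assumption~\ref{ass:pre-est-regularity}(iv).

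The indicator discrepancy $\mathbf 1\{Y_t\le m_{t\alpha}(\hat\theta_2)\}-\mathbf 1\{Y_t\le m_{t\alpha}(\theta_{20})\}$ is not small pointwise. However, its average squared value weighted by the envelope is $O_p(n^{-1/2})$, by the same localized ULLN and the bounded density. Cauchy--Schwarz then controls the cross terms, which proves part 2.

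\textbf{Step 4: the statistic.} By Assumption~\ref{ass:pre-est-regularity}(iv), $\tilde\sigma^2$ is bounded away from zero. Hence $\widehat Q_n^2\rightsquigarrow(\widetilde{\mathcal M}/\tilde\sigma)^2$ in $\ell^\infty$. The functional $\phi_\lambda$ is locally Lipschitz, by the bound in Lemma~\ref{lem:gaussian-quantile-crossing}, and is defined on $\mathbb B_\Pi$. The continuous mapping theorem therefore gives part 3.
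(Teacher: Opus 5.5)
Your route is the paper's own. It uses the same sample-centering product bound; your $o_p(1)\times O_p(1)$ split is the paper's $O_p(1)\times O_p(n^{-1/2})$ with the $\sqrt n$ moved. It also shares stochastic equicontinuity plus a Taylor expansion of the smooth population moment to produce $A_2^\top\sqrt n(\hat\theta_2-\theta_{20})$, and the stacked functional CLT from Assumption~S4(ii) followed by the bounded linear map $(f,g)\mapsto f+A_2^\top g$. The remaining ingredients also match: the oracle/curve-replacement split of the density error with the $h^{-2}$ kernel-derivative bound, localization of the indicator discrepancy, and continuous mapping.

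\textbf{The gap.} The curve-replacement part of the density error does not go through as written. You control it uniformly in $t$ via $\max_{t\le n}\mathcal E_t=o_p(n^{1/r})$, which gives a bound of order $o_p(n^{-1/2+1/r}a_{J_n}^{-2})$. Assumption~S3(d) only gives $n^{-1/2}a_{J_n}^{-2}\to0$, with no spare polynomial rate, so this bound need not vanish. Even at the canonical bandwidth $a_{J_n}\asymp(\log n/n)^{1/5}$, where $n^{-1/2}a_{J_n}^{-2}\asymp n^{-1/10}(\log n)^{-2/5}$, you would need $r\ge10$, whereas Assumption~S5(iii) gives only $r>4$.

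\textbf{The fix, which is what the paper does.} No maximum over $t$ is ever needed. The density enters $\widehat\Psi_{t,n}$ only through the sample averages $\widehat L_\alpha$ and $\widehat A_{2,n}(\alpha,\gamma)$. Write $\widehat f_{t\alpha}(\vartheta_2)$ for the smoother evaluated along $(\bar\theta_1,\vartheta_2)$. Let $G_t\lesssim\mathcal E_t$ be the envelope from your kernel-derivative bound, and let $H_t$ envelope the factors multiplying the density in those two averages. Then, on the event $\{a_{J_n}\le h_{J_n}\}$, whose probability tends to one,
\[
\sup_{\alpha,\gamma}\frac1n\sum_{t=1}^n\bigl|\widehat f_{t\alpha}(\hat\theta_2)-\widehat f_{t\alpha}(\theta_{20})\bigr|H_t
\le C a_{J_n}^{-2}\|\hat\theta_2-\theta_{20}\|_{\infty,\mathcal A}\,\frac1n\sum_{t=1}^n G_tH_t
=O_p\bigl(n^{-1/2}a_{J_n}^{-2}\bigr).
\]
Here $E[G_tH_t]\lesssim E[\mathcal E_t^3]<\infty$, and this is exactly the rate that Assumption~S3(d) removes. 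Combined with the oracle bound from Lemma~\ref{lem:unif_rate}, this gives consistency of $\widehat L_\alpha$ and $\widehat A_{2,n}$ in the weighted-average sense, which is all the variance step requires.

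A minor point: a localized ULLN gives the weighted indicator-discrepancy average as $o_p(1)$, not $O_p(n^{-1/2})$. Since $o_p(1)$ is all you need, that claim can simply be weakened.
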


\begin{theorem}[Bootstrap validity with pre-estimation]
\label{thm:boot-pre}
\leavevmode\par\noindent
Suppose Assumptions \ref{ass:mixing}--\ref{ass:nuisance-smooth} hold,
including the rate in
Assumption~\ref{ass:kernel}(d), and the tested null on $\mathcal T$, with
$\Gamma$ interpreted as the nondegenerate
centered-weight index set in Assumption~\ref{ass:pre-est-regularity}(iv).
Then, conditional on the data and in $\ell^\infty(\mathcal T\times\Gamma)$:
\begin{enumerate}

\item The bootstrap process mimics the limiting distribution of the plug-in process:
\[
\sqrt n\,\widehat M_{n*}(\alpha,\gamma)
\ \overset{*}{\rightsquigarrow}
\widetilde{\mathcal M}(\alpha,\gamma),
\]
where $\widetilde{\mathcal M}(\alpha,\gamma)$ is the same limiting Gaussian process as in Theorem~\ref{thm:null-pre}.

\item The bootstrap variance estimator consistently estimates the total asymptotic variance:
\[
\sup_{\alpha,\gamma}
\Big|
\widehat s_{n*}^2(\alpha,\gamma)
- \tilde{\sigma}^2(\alpha, \gamma)
\Big|
\ \xrightarrow{p^*}\ 0.
\]

\item The bootstrap statistic captures the limiting distribution of the original test:
\[
\widehat T_{n*,\Pi}(\lambda)
\ \overset{*}{\rightsquigarrow}
\sup_{\gamma \in \Gamma}
\Bigg[
\int_{\mathcal T}
\left( \frac{\widetilde{\mathcal M}(\alpha,\gamma)}{\tilde{\sigma}(\alpha, \gamma)} \right)^2
d\Pi(\alpha)
-\lambda\|\gamma\|_1
\Bigg].
\]

\end{enumerate}
\end{theorem}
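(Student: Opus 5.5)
Conditional on the data, the plan is to write the multiplier process $\sqrt n\,\widehat M_{n*}(\alpha,\gamma)=n^{-1/2}\sum_t\omega_t\widehat\Psi_{t,n}(\alpha,\gamma)$ and compare it with the oracle multiplier process $\mathbb G_n^{*}\Psi:=n^{-1/2}\sum_t\omega_t\Psi_t(\alpha,\gamma)$ built from the population corrected scores. The proof then splits into two steps.

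\textbf{Step (a): the oracle multiplier process.} We need $\mathbb G_n^*\Psi\rightsquigarrow^*\widetilde{\mathcal M}$ in $\ell^\infty(\mathcal T\times\Gamma)$. This follows from Lemma~\ref{lem:conditional-gaussian-multiplier}, applied to the stacked class of centered raw scores and influence-function coordinates. Assumption~\ref{ass:pre-est-regularity}(ii) makes this class jointly permissible and VC with an $L_q$ envelope. Since $A_2$ is bounded and continuous, $\Psi$ is a fixed linear combination of these coordinates, so the class stays VC-type.

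Conditionally on the data, the covariance of $\mathbb G_n^*\Psi$ is the sample second-moment kernel $P_n\Psi(\alpha_1,\gamma_1)\Psi(\alpha_2,\gamma_2)$. The product-class uniform law of large numbers from that lemma sends it to $E[\Psi_t\Psi_t]$. By the serial uncorrelatedness in Assumption~\ref{ass:pre-est-regularity}(iii), this limit is exactly the covariance kernel of $\widetilde{\mathcal M}$ given in Theorem~\ref{thm:null-pre}. Conditional equicontinuity then follows from the Gaussian entropy bound as in that lemma.

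\textbf{Step (b): the estimation error is negligible.} We need
\[
\sup_{\alpha,\gamma}\Big|n^{-1/2}\sum_t\omega_t\{\widehat\Psi_{t,n}-\Psi_t\}\Big|=o_{p^*}(1)\quad\text{in probability.}
\]
Conditionally, this is a centered Gaussian process, so by Borell--TIS/Dudley its supremum is controlled by the entropy integral under the random semimetric. That semimetric is bounded by $\sup_{\alpha,\gamma}P_n(\widehat\Psi_{t,n}-\Psi_t)^2$, and it suffices to show this quantity is $o_p(1)$. To do so, decompose $\widehat\Psi_{t,n}-\Psi_t$ into four pieces:
\begin{itemize}
\item[(i)] the indicator change from $\theta_{20}$ to $\hat\theta_2$, whose squared mean is $O_p(n^{-1/2})$ by the bounded density, the $\sqrt n$-rate (via the linear expansion on $\mathcal A$), and VC localization;
\item[(ii)] $w_{t,n}-w_t=E e^{\gamma^\top I}-\bar e_n(\gamma)$, which is uniformly $o_p(1)$ by Lemma~\ref{lem:uwlln};
\item[(iii)] $\widehat l_{t,\alpha}-l_{t,\alpha}$, which is driven by $\widehat L_\alpha^{-1}-L_\alpha^{-1}$;
\item[(iv)] $\widehat A_{2,n}-A_2$.
\end{itemize}
Pieces (iii) and (iv) both require uniform consistency of the density estimator evaluated along the estimated curve. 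For this we use Lemma~\ref{lem:unif_rate} along the oracle curve $(\bar\theta_1,\theta_{20})$. To that we add a curve-replacement bound: the kernel derivative gives a Lipschitz error of order $h^{-2}\sup|m_{tu}(\hat\theta_2)-m_{tu}(\theta_{20})|=O_p(n^{-1/2}a_{J_n}^{-2})=o_p(1)$, by Assumption~\ref{ass:kernel}(d) together with the moment bound on $g$ in Assumption~\ref{ass:nuisance-smooth}(iii). The invertibility bound in Assumption~\ref{ass:pre-est-regularity}(iv) then transfers this to $\widehat L_\alpha^{-1}$. Products of these errors with envelopes having $r>4$ moments keep all cross terms $o_p(1)$ by Hölder. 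Combining (a) and (b) gives part~1.

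\textbf{Part 2.} Write $\widehat s^2_{n*}=P_n\widehat\Psi_{t,n}^2+n^{-1}\sum_t(\omega_t^2-1)\widehat\Psi_{t,n}^2$. The first term converges uniformly to $\tilde\sigma^2$ by Theorem~\ref{thm:null-pre}(2). The second term has conditional variance at most $2n^{-2}\sum_t\widehat\Psi_{t,n}^4$, which is $o_p(1)$ pointwise since the envelope has four moments. Uniformity comes from a bracketing/chaining argument on the VC class, using Lipschitz continuity in $(\alpha,\gamma)$ of the smooth parts, or simply from an $L^2$ maximal inequality for the Gaussian-chaos sum.

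\textbf{Part 3.} Since $\inf\tilde\sigma^2>0$, the map $(f,s)\mapsto\sup_\gamma[\int f^2/s^2\,d\Pi-\lambda\|\gamma\|_1]$ is continuous on $\ell^\infty$ at limits where $s$ is bounded away from zero, by the Lipschitz bound in Lemma~\ref{lem:gaussian-quantile-crossing}. The conditional continuous mapping theorem for bootstrap (bounded-Lipschitz metric) then yields the claim.

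The main obstacle is piece (iii)--(iv) of Step (b): the density estimator evaluated at a random, data-dependent curve. This cannot be handled by plugging into the oracle theorem directly; it requires the explicit Lipschitz-in-curve replacement bound, uniform over $t\le n$ and $\alpha$.
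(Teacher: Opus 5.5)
Your overall route matches the paper's proof. You replace the feasible scores by the oracle ones through a conditional Gaussian entropy bound with shrinking empirical $L_2$ radius (Lemma~\ref{lem:pre-multiplier-replacement}). You get the oracle multiplier limit by rerunning Lemma~\ref{lem:conditional-gaussian-multiplier} on the corrected-score class. You split $\widehat s_{n*}^2$ into $\widehat s_n^2$ plus an $(\omega_t^2-1)$-weighted remainder, and you finish by continuous mapping.

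\textbf{The failing step.} The step that fails as written is the one you single out as the crux: a curve-replacement bound for the smoother that is uniform over $t\le n$. On $\{a_{J_n}\le h_{J_n}\le b_{J_n}\}$ the kernel-derivative argument gives
\[
\sup_{\alpha\in\mathcal T}\bigl|\widehat f_{t\alpha}(\widehat\theta_2)-\widehat f_{t\alpha}(\theta_{20})\bigr|
\le C\,a_{J_n}^{-2}\,\|\widehat\theta_2-\theta_{20}\|_{\infty,\mathcal A}\,G_t ,
\]
where $G_t$ is a gradient envelope for $m$. This envelope is unbounded in general; in the linear examples it contains the regressors. Assumption~\ref{ass:nuisance-smooth}(iii) only gives $E\mathcal E_t^r<\infty$ for some $r>4$. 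Hence in general only $\max_{t\le n}G_t=o_p(n^{1/r})$ is available, and the maximal bound is only $o_p(n^{-1/2+1/r}a_{J_n}^{-2})$. Assumption~\ref{ass:kernel}(d) does not control the extra factor. For the canonical $a_{J_n}\asymp(\log n/n)^{1/5}$, the rate is $n^{1/r-1/10}(\log n)^{-2/5}$, which diverges when $r<10$. So your claimed $O_p(n^{-1/2}a_{J_n}^{-2})$ rate with a maximum over $t$ does not follow from the maintained assumptions.

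\textbf{The fix.} You never need the maximum. The smoother enters $\widehat\Psi_{t,n}$ only through the sample averages $\widehat L_\alpha$ and $\widehat A_{2,n}(\alpha,\gamma)$, so it suffices to show
\[
\sup_{\alpha,\gamma}\frac1n\sum_{t=1}^n\bigl|\widehat f_{t\alpha}(\widehat\theta_2)-\widehat f_{t\alpha}(\theta_{20})\bigr|H_t
\le C\,a_{J_n}^{-2}\|\widehat\theta_2-\theta_{20}\|_{\infty,\mathcal A}\,\frac1n\sum_{t=1}^nG_tH_t
=O_p\bigl(n^{-1/2}a_{J_n}^{-2}\bigr).
\]
Here $H_t$ envelopes the factors multiplying the density, and $E[G_tH_t]<\infty$ because both are bounded by powers of $\mathcal E_t$. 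Then add the oracle-curve error from Lemma~\ref{lem:unif_rate}. Its maximum over $t$ is legitimate, because the oracle curve's derivative bounds hold uniformly in $t$. This weighted-average form is exactly what the paper uses, and the paper explicitly avoids any maximum-over-$t$ plug-in density claim.

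\textbf{Two smaller points.} First, $\widehat l_{t,\alpha}-l_{t,\alpha}$ is not driven only by $\widehat L_\alpha^{-1}-L_\alpha^{-1}$. You also need $\sup_\alpha P_n\|\widehat g_{t\alpha}-g^0_{t\alpha}\|^2\to0$, and you need to control the indicator change weighted by $\|g^0_{t\alpha}\|^2$, which uses the same localization as your piece (i). Second, in Part~2 your fourth-moment route is legitimate, since $r>4$. It differs from the paper's truncation argument, which needs only uniform integrability of $\widehat F_t^2$. However, the uniformity must come from a finite $L_2(P_n)$-net of the squared corrected-score class, using the covering bound in Assumption~\ref{ass:pre-est-regularity}(ii), with the net error handled by Cauchy--Schwarz. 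It cannot come from Lipschitz continuity in $(\alpha,\gamma)$, which fails for the indicator.
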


\begin{theorem}[Multiplier bootstrap validity under local alternatives with pre-estimation]
\label{thm:boot-local-pre}
Suppose Assumptions \ref{ass:mixing}--\ref{ass:local} hold, including the
rate in
Assumption~\ref{ass:kernel}(d). Fix any admissible
candidate direction $B$. The data may be generated under the null or under any
actual local sequence $P_{n,B_0}$ covered by Assumption~\ref{ass:local}, while
the implemented statistic and the restricted nuisance estimator are constructed
under the null-imposed restriction as required by
Assumptions~\ref{ass:pre-est-regularity} and \ref{ass:local}. With $\Gamma$
interpreted as the nondegenerate centered-weight index set in
Assumption~\ref{ass:pre-est-regularity}(iv), let
\(\widetilde d_B(\alpha,\gamma)\) denote the pre-estimation drift derived in
Lemma~\ref{lem:local-drift-expansions}. Then, conditional on the data and
in \(\ell^\infty(\mathcal T\times\Gamma)\):
\begin{enumerate}
\item The shifted bootstrap process captures the correct noncentral limit:
\[
\sqrt n\,\widehat M_{n*,B}(\alpha,\gamma)
\overset{*}{\rightsquigarrow}
\widetilde{\mathcal M}(\alpha,\gamma)+\widetilde d_B(\alpha,\gamma),
\]
where \(\widetilde{\mathcal M}\) is the limiting Gaussian process from
Theorem~\ref{thm:null-pre}.

\item The bootstrap variance estimator consistently estimates the total
asymptotic variance:
\[
\sup_{\alpha,\gamma}
\big|\widehat s_{n*,B}^2(\alpha,\gamma)-\tilde\sigma^2(\alpha,\gamma)\big|
\xrightarrow{p^*}0.
\]

\item The bootstrap statistic captures the noncentral limit distribution:
\[
\widehat T_{n*,B,\Pi}(\lambda)
\overset{*}{\rightsquigarrow}
\sup_{\gamma\in\Gamma}
\Bigg[
\int_{\mathcal T}
\left(
\frac{\widetilde{\mathcal M}(\alpha,\gamma)+\widetilde d_B(\alpha,\gamma)}
{\tilde\sigma(\alpha,\gamma)}
\right)^2d\Pi(\alpha)
-
\lambda\|\gamma\|_1
\Bigg].
\]
\end{enumerate}
\end{theorem}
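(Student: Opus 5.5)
The plan is to mirror the proof of Theorem~\ref{thm:MB-local} and to reuse the decomposition already obtained in Theorem~\ref{thm:boot-pre}. Write
\[
\sqrt n\,\widehat M_{n*,B}(\alpha,\gamma)
=\frac1{\sqrt n}\sum_{t=1}^n\omega_t\widehat\Psi_{t,n}(\alpha,\gamma)
+\widehat d_{\mathrm{pre},n,B}(\alpha,\gamma).
\]
The multiplier term does not involve $B$. Part~1 therefore reduces to two facts. First, this term converges conditionally to $\widetilde{\mathcal M}$ under $P_{n,B_0}$ as well as under $H_0$. Second, $\sup_{\alpha,\gamma}|\widehat d_{\mathrm{pre},n,B}-\widetilde d_B|=o_p(1)$ under $P_{n,B_0}$. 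Conditional Slutsky then gives the noncentral limit.

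For the multiplier term, conditionally on the data it is a centered Gaussian process with covariance $n^{-1}\sum_t\widehat\Psi_{t,n}(\alpha_1,\gamma_1)\widehat\Psi_{t,n}(\alpha_2,\gamma_2)$. By Lemma~\ref{lem:conditional-gaussian-multiplier}, applied to the corrected-score class (VC with $L_q$ envelope by Assumption~\ref{ass:pre-est-regularity}(ii)), it suffices to show that this empirical covariance converges uniformly to $E[\Psi_t\Psi_t]$. Under $H_0$ this is exactly Theorem~\ref{thm:boot-pre}. Under $P_{n,B_0}$, I will use contiguity (Assumption~\ref{ass:local}(i)): an $o_p(1)$ statement under the null transfers to $o_{P_{n,B_0}}(1)$. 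The replacement of $\widehat\Psi_{t,n}$ by $\Psi_t$ is one such statement. It involves $\widehat\theta_{2,0n}-\theta_{20}=O_p(n^{-1/2})$, which now includes the bias $H_\alpha B_0/\sqrt n$ from \eqref{eq:delta_2n_full} but is still $O_p(n^{-1/2})$. It also uses the density error, controlled by Theorem~\ref{thm:consistency_density} together with the generated-curve bound of Theorem~\ref{thm:null-pre} under Assumption~\ref{ass:kernel}(d). Assumption~\ref{ass:local}(iii) says the covariance limits change only by $o(1)$. Conditional equicontinuity follows from the Gaussian entropy bound exactly as in Lemma~\ref{lem:conditional-gaussian-multiplier}.

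For the drift, $\widehat d_{\mathrm{pre},n,B}$ is built from $\widehat A_{1,n}$, $\widehat A_{2,n}$, $\widehat L_\alpha$ and $\widehat D_{\alpha,n}$. Each is a sample average of $\widehat f\cdot w_{t,n}\cdot(\text{gradient})$. I will bound each deviation from its population counterpart in three pieces. The density error is uniformly $o_p(1)$ times an envelope with finite moments (Assumption~\ref{ass:nuisance-smooth}(iii)). The gradient evaluated at $\widehat\theta_2$ rather than $\theta_{20}$ is handled by uniform continuity of $\nabla_\theta m$, Assumption~\ref{ass:nuisance-smooth}(i). The remaining averages converge by the ULLN of Lemma~\ref{lem:uwlln}. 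These statements hold under the null and transfer by contiguity. Since $\inf_\alpha s_{\min}(L_\alpha)>0$, matrix inversion is continuous, so $\widehat H_{\alpha,n}\to H_\alpha$ uniformly. Boundedness of $B$ then gives uniform convergence of $\widehat d_{\mathrm{pre},n,B}$ to $\widetilde d_B$ of Lemma~\ref{lem:local-drift-expansions}.

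Part~2 follows by expanding the square in \eqref{eq:pre-local-bootstrap-s}. The result is $n^{-1}\sum_t\omega_t^2\widehat\Psi_{t,n}^2$, plus a cross term $2n^{-3/2}\widehat d\sum_t\omega_t\widehat\Psi_{t,n}=O_{p^*}(n^{-1/2})$, plus $\widehat d^2/n$. The first piece is handled as in Theorem~\ref{thm:boot-pre}(2) with the contiguity transfer above, and the other two vanish uniformly. Part~3 then follows as in Theorem~\ref{thm:null}(3). Since $\inf\tilde\sigma^2>0$, the ratio is a continuous map. The functional $\phi_\lambda$ of Lemma~\ref{lem:gaussian-quantile-crossing} is locally Lipschitz, so the conditional continuous mapping theorem applies.

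The main obstacle is making the density and Jacobian estimators uniformly consistent under the local sequence with the generated curve $\widehat\theta_{2,0n}$. Contiguity transfers only statements that are already proven under the null, so I would first state the generated-curve density bound from Theorem~\ref{thm:null-pre} as a null-sequence $o_p(1)$ event, and only then transfer it to $P_{n,B_0}$.
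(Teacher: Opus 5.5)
Your proposal is correct and follows essentially the same route as the paper: the same split into the multiplier term and the plug-in drift, null-law control of both pieces transferred to $P_{n,B_0}$ by contiguity, consistency of $\widehat H_{\alpha,n}$ via uniform nonsingularity of $L_\alpha$, the three-term expansion of the shifted second moment, and the continuous mapping theorem. One minor correction: for the oracle density component, cite Lemma~\ref{lem:unif_rate} applied to the restricted curve $(\bar\theta_1,\theta_{20})$, with its target identified by Assumption~\ref{ass:pre-est-regularity}, as in the proof of Theorem~\ref{thm:null-pre}, rather than Theorem~\ref{thm:consistency_density}, which concerns the known curve $\bar\theta$.
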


\section*{S4. Proof of main theorems}

\begin{proof}[Proof of Theorem \ref{thm:null}]

    Let $\mathcal{G}_0 = \{ (i,y) \mapsto \exp(\gamma^{\top}i) [ \mathbf{1}\{y \le m(i,\theta_0(\alpha))\} - \alpha ] : \gamma \in \Gamma, \alpha \in \mathcal{T} \}$. 
    By Assumption \ref{ass:mixing}(i), the mixing rate condition of Lemma \ref{lem:mixing-fclt}(i) is satisfied. By Lemma \ref{lem:VC}, the class $\mathcal{G}_0$ is VC-subgraph. Its natural envelope is $F(i,y) = \exp(\sup_{\gamma \in \Gamma} \|\gamma\|_2 \cdot \|i\|_2)$, and by the exponential moment condition in Assumption \ref{ass:regularity}(i), $P^* F^q \le E[\exp(q\sup_{\gamma \in \Gamma} \|\gamma\|_2 \cdot \|I_{t-1}\|_2)] < \infty$, so condition (ii) of Lemma \ref{lem:mixing-fclt} holds. Therefore,
    \[
        \sqrt{n} M_n \rightsquigarrow \mathcal{M} \quad \text{in} \quad \ell^{\infty}(\mathcal{T}\times\Gamma),
    \]
    where $\mathcal{M}$ is a tight, zero-mean Gaussian process.

    The covariance kernel is determined by the long-run variance. The
    martingale difference property in Assumption~\ref{ass:mixing}(iii) removes
    every cross-lag covariance. At lag zero, monotonicity in
    Assumption~\ref{ass:mixing}(ii) gives the covariance of the two indicator
    scores. Hence the kernel simplifies to
    \[
        \mathrm{Cov}(\mathcal{M}(\alpha_1, \gamma_1), \mathcal{M}(\alpha_2, \gamma_2)) = (\min(\alpha_1, \alpha_2) - \alpha_1\alpha_2) E\left[\exp((\gamma_1+\gamma_2)^{\top}I_{t-1})\right].
    \]

    For the normalization term, we apply Lemma \ref{lem:uwlln} with the weight function $a(I_{t-1}, \gamma) = \alpha(1-\alpha)\exp(2\gamma^{\top} I_{t-1})$. Since $\Gamma$ is compact and the transformed covariates guarantee the existence of finite exponential moments $E[\exp(2\gamma^{\top} I_{t-1})] < \infty$ by Assumption \ref{ass:regularity}(i), the uniform convergence holds:
    \[
        \sup_{\alpha\in\mathcal{T}, \gamma\in\Gamma} \left|s_n^2(\alpha, \gamma) - \alpha(1-\alpha)E[\exp(2\gamma^{\top} I_{t-1})]\right| \overset{p}{\longrightarrow} 0.
    \]

    Finally, the aggregation functional is continuous under the uniform metric. To see this, for $f,g\in\mathbb B_\Pi$ define
    \[
        \phi_\Pi(f)
        :=
        \sup_{\gamma\in\Gamma}
        \left[
            \int_{\mathcal T} f^2(\alpha,\gamma)\,d\Pi(\alpha)
            -\lambda\|\gamma\|_1
        \right].
    \]
    Then
    \[
        |\phi_\Pi(f)-\phi_\Pi(g)|
        \le
        \big(\|f\|_\infty+\|g\|_\infty\big)\|f-g\|_\infty.
    \]
    Since $\sigma(\alpha,\gamma)$ is uniformly bounded away from zero, the map
    $(\sqrt n M_n,s_n)\mapsto T_{n,\Pi}(\lambda)$ is continuous. The conclusion follows from the continuous mapping theorem.
\end{proof}

\begin{proof}[Proof of Theorem \ref{thm:bootstrap}]
Let $\mathcal{Z}_n = \{(Y_t, I_{t-1})\}_{t=1}^n$ denote the observed data sequence. We establish the validity of the multiplier bootstrap by verifying two conditions conditional on the data $\mathcal{Z}_n$: (i) convergence of finite-dimensional distributions, and (ii) asymptotic tightness. Lemma~\ref{lem:conditional-gaussian-multiplier} supplies the precise conditional Gaussian multiplier interface, including the uniform sample-semimetric convergence needed for tightness. The i.i.d. multipliers reproduce the contemporaneous covariance of the scores. This is sufficient because the predictability and MDS conditions in Assumption~\ref{ass:mixing} make the cross-lag covariance terms vanish in the long-run covariance.

\paragraph{Step 1: Conditional convergence of finite-dimensional distributions}
Fix any finite collection of points $(\alpha_1, \gamma_1), \dots, (\alpha_k, \gamma_k)$ in $\mathcal{T} \times \Gamma$. Consider the vector of the bootstrap empirical process $Z_{n}^* = ( \sqrt{n} M_{n*}(\alpha_1, \gamma_1), \dots, \sqrt{n} M_{n*}(\alpha_k, \gamma_k) )^\top$.
Conditional on the data $\mathcal{Z}_n$, the terms involving the data are fixed constants, and the randomness arises solely from the multipliers $\{\omega_t\}_{t=1}^n$. Since $\omega_t$ are i.i.d. $N(0,1)$ and independent of $\mathcal{Z}_n$, the vector $Z_{n}^*$ is, conditionally, a sum of independent zero-mean Gaussian random vectors.

The conditional covariance matrix is given by the sample second moment matrix:
\[
\hat{\Sigma}_n = \frac{1}{n} \sum_{t=1}^n \Psi_t \Psi_t^\top,
\]
where $\Psi_t$ is the vector with elements $\psi_t(\alpha_j, \gamma_j) = \exp(\gamma_j^\top I_{t-1}) [\mathbf{1}\{Y_t \le m(I_{t-1}, \bar{\theta}(\alpha_j))\} - \alpha_j]$.
Since the original data process is strictly stationary and ergodic (Assumption \ref{ass:mixing}), and the summands have finite second moments (Assumption \ref{ass:regularity}), the Ergodic Theorem implies that $\hat{\Sigma}_n$ converges almost surely to the population covariance matrix $\Sigma = E[\Psi_t \Psi_t^\top]$.
Because the conditional distribution is exactly Gaussian with a covariance matrix converging to the deterministic limit $\Sigma$, the finite-dimensional distributions of $\sqrt{n} M_{n*}$ converge weakly to those of the Gaussian process $\mathcal{M}$ in probability (denoted as $\rightsquigarrow_p$ in \citet{hansen1996}).

\paragraph{Step 2: Asymptotic tightness}
To establish conditional tightness, we use the properties of the function
class $\mathcal G$. Lemma~\ref{lem:VC} shows that $\mathcal G$ is a
VC-subgraph class.
According to the empirical process theory for multiplier bootstraps (see \citet[Ch. 2.9]{vdvw1996}), conditional asymptotic tightness requires two conditions: 
\begin{enumerate}
    \item \textbf{Uniform Entropy Integral:} The class $\mathcal{G}$ must satisfy the uniform entropy integral condition. This is satisfied because $\mathcal{G}$ is a VC-subgraph class with a square-integrable envelope (Assumption \ref{ass:regularity}), which implies the covering numbers grow polynomially.
    \item \textbf{Convergence of Sample Semimetric:} The sample semimetric
    \[
    \rho_n(f,g)=\left\{n^{-1}\sum_{t=1}^n(f(Z_t)-g(Z_t))^2\right\}^{1/2}
    \]
    must converge uniformly to the population metric
    $\rho(f,g)=\{E(f-g)^2\}^{1/2}$. The required stochastic equicontinuity and empirical-process convergence under $\beta$-mixing follow from results for VC-subgraph classes of stationary mixing sequences. See, for example, \citep{arconesyu1994}. Together with the uniform law of large numbers in Lemma~\ref{lem:uwlln}, this yields the required uniform convergence of the sample semimetric.
\end{enumerate}
Lemma~\ref{lem:conditional-gaussian-multiplier} verifies both requirements for
this class. Consequently, the bootstrap process $\sqrt{n} M_{n*}$ is
asymptotically tight conditional on the data with probability approaching one.
Combining Steps 1 and 2, we obtain
$\sqrt{n} M_{n*} \rightsquigarrow^* \mathcal{M}$ in
$\ell^\infty(\mathcal{T} \times \Gamma)$.

\paragraph{Step 3: Uniform consistency of the variance estimator}
For the variance estimator $s_{n*}^2(\alpha, \gamma) = \frac{\alpha(1-\alpha)}{n} \sum_{t=1}^n \omega_t^2 \exp(2\gamma^\top I_{t-1})$, write
\[
\begin{aligned}
 s_{n*}^2(\alpha,\gamma)-\sigma^2(\alpha,\gamma)
={}&\alpha(1-\alpha)
\left\{\frac1n\sum_{t=1}^n\exp(2\gamma^\top I_{t-1})
-E\exp(2\gamma^\top I_{t-1})\right\} \\
&+\alpha(1-\alpha)\frac1n\sum_{t=1}^n
(\omega_t^2-1)\exp(2\gamma^\top I_{t-1}).
\end{aligned}
\]
The first term is $o_p(1)$ uniformly by Lemma~\ref{lem:uwlln}. For the second
term, let
$F_2(I)=\sup_{\gamma\in\Gamma}\exp(2\gamma^\top I)$, which is integrable by
Assumption~\ref{ass:regularity}(i). For fixed $K$, truncate the exponential-
weight class on $\{F_2\le K\}$. The resulting class is bounded and VC-type, so
its conditional mean-zero average with multipliers $\omega_t^2-1$ is
$o_p^*(1)$ uniformly in $\gamma$. The omitted tail is bounded in conditional
expectation by
\[
E^*|\omega_t^2-1|\,
\frac1n\sum_{t=1}^nF_2(I_{t-1})\mathbf 1\{F_2(I_{t-1})>K\}.
\]
The empirical tail average converges to its population counterpart and then
vanishes as $K\to\infty$ by uniform integrability. Thus the second term is
$o_p^*(1)$ uniformly without requiring a fourth moment of $F_2$. Hence
\[
\sup_{\alpha, \gamma} \left| s_{n*}^2(\alpha, \gamma) - \sigma^2(\alpha, \gamma) \right| \xrightarrow{p^*} 0.
\]
Finally, on $\mathbb B_\Pi$ the map $(\sqrt{n}M_{n*}, s_{n*}) \mapsto T_{n*,\Pi}$ is continuous with respect to the uniform metric whenever $\sigma^2$ is bounded away from zero. The Continuous Mapping Theorem therefore implies that the bootstrap statistic $T_{n*,\Pi}$ converges weakly to the limit distribution of $T_{n,\Pi}$ conditional on the data.
\end{proof}

\begin{proof}[Proof of Corollary \ref{cor:fixed-direction-grid}]
The restriction map from
$\ell^\infty(\mathcal T\times\Gamma)$ to
$\ell^\infty(\mathcal T\times\Gamma_G)$ is continuous. Moreover, for fixed
$G$ the map
\[
\Phi_G(f)
:=
\max_{\gamma\in\Gamma_G}
\left\{\int_{\mathcal T}f(\alpha,\gamma)^2d\Pi(\alpha)
-\lambda\|\gamma\|_1\right\}
\]
is continuous in the uniform norm. Indeed, on any uniformly bounded set,
\[
|\Phi_G(f)-\Phi_G(g)|
\le
(\|f\|_\infty+\|g\|_\infty)\|f-g\|_\infty.
\]
The ordinary and conditional continuous mapping theorems applied to the
studentized process limits in Theorems~\ref{thm:null} and~\ref{thm:bootstrap}
therefore give the two asserted convergences. Lemma~\ref{lem:gaussian-quantile-crossing}
gives strict crossing at the continuous limit quantile. The quantile-transfer
argument in the proof of Corollary~\ref{cor:fixed-lambda-size} then yields
asymptotic size $\tau$. Applying
the same restriction and continuous maps to Theorems~\ref{thm:null-pre}
and~\ref{thm:boot-pre} proves the pre-estimation statement. The final
completeness qualification follows because a fixed finite set of directions
need not distinguish every nonzero conditional moment.
\end{proof}

\begin{proof}[Proof of Corollary \ref{cor:fixed-lambda-size}]
Let $c_{1-\tau,\Pi}(\lambda)$ be the $(1-\tau)$-quantile of
$T_\Pi(\lambda)$. The studentized null process $Q=\mathcal M/\sigma$
is tight and centered Gaussian because $\sigma$ is deterministic and
uniformly bounded away from zero. Lemma~\ref{lem:gaussian-quantile-crossing}
therefore gives strict crossing at this continuous quantile. Conditional
weak convergence in Theorem~\ref{thm:bootstrap}, applied at continuity
points on either side of the quantile, now gives
\[
c_{n,1-\tau,\Pi}^*(\lambda)
\xrightarrow{p}c_{1-\tau,\Pi}(\lambda).
\]
Jointly with $T_{n,\Pi}(\lambda)\rightsquigarrow T_\Pi(\lambda)$, this yields
the stated limit for the rejection probability. Without continuity at the critical
value, distributional bootstrap consistency alone does not imply an exact
limit for the rejection probability. The same argument applies to the tight centered
Gaussian limit $\widetilde{\mathcal M}/\widetilde\sigma$ of the plug-in process.
\end{proof}

\begin{proof}[Proof of Corollary \ref{cor:adaptive-lambda}]
Let
\[
    C_\Gamma:=\sup_{\gamma\in\Gamma}\|\gamma\|_1<\infty,
\]
which is finite because \(\Gamma\) is compact. For any two penalty values
\(\lambda,\lambda'\in\Lambda\), the value-function form of the statistic gives the deterministic Lipschitz bound
\[
    |T_{n,\Pi}(\lambda)-T_{n,\Pi}(\lambda')|
    \le C_\Gamma |\lambda-\lambda'|.
\]
The same bound holds for the bootstrap statistic \(T_{n*,\Pi}(\lambda)\), for the limit statistic, and for the corresponding plug-in versions because the penalty enters only through the term \(-\lambda\|\gamma\|_1\). Hence evaluating the statistic at \(\widehat\lambda\) or at \(\lambda_0\) changes it by \(o_p(1)\):
\[
    T_{n,\Pi}(\widehat\lambda)-T_{n,\Pi}(\lambda_0)=o_p(1).
\]
At the fixed value $\lambda_0$, the strict crossing from
Lemma~\ref{lem:gaussian-quantile-crossing} and the conditional weak convergence
imply, as in Corollary~\ref{cor:fixed-lambda-size},
\[
    c_{n,1-\tau,\Pi}^*(\lambda_0)-c_{1-\tau,\Pi}(\lambda_0)=o_p(1).
\]
More directly, the pathwise Lipschitz bound for the bootstrap statistics implies
the same bound for their ideal conditional quantiles. Including the assumed
uniform $o_p(1)$ simulation error when an approximation is used,
\[
    \left|c_{n,1-\tau,\Pi}^*(\widehat\lambda)
    -c_{n,1-\tau,\Pi}^*(\lambda_0)\right|
    \le C_\Gamma|\widehat\lambda-\lambda_0|+o_p(1)=o_p(1).
\]
Therefore the rejection event based on \(\widehat\lambda\) is asymptotically equivalent to the rejection event based on the fixed value \(\lambda_0\). The fixed-\(\lambda_0\) bootstrap validity from Theorem~\ref{thm:bootstrap} implies
\[
P\{T_{n,\Pi}(\widehat\lambda)>c_{n,1-\tau,\Pi}^*(\widehat\lambda)\}
\to\tau.
\]
The continuity condition is essential for the stated exact limit. The plug-in
statistic is handled identically, replacing Theorem~\ref{thm:bootstrap} by
Theorem~\ref{thm:boot-pre}.
\end{proof}

\begin{proof}[Proof of Theorem \ref{thm:consistency}]
To establish consistency, analyze the scaled statistic. By definition,
\begin{align*}
    \frac{1}{n}T_{n,\Pi}(\lambda)
    &=
    \sup_{\gamma\in\Gamma}
    \left[
        \int_{\mathcal T}
        \left(\frac{M_n(\alpha,\gamma)}{s_n(\alpha,\gamma)}\right)^2
        d\Pi(\alpha)
        -\frac{\lambda}{n}\|\gamma\|_1
    \right].
\end{align*}
Under the fixed alternative $\bar\theta$, consider
\[
\mathcal G_{\bar\theta}
:=
\left\{
e^{\gamma^\top i}
\big[\mathbf 1\{y\le m(i,\bar\theta(\alpha))\}-\alpha\big]
:(\alpha,\gamma)\in\mathcal T\times\Gamma
\right\}.
\]
By Lemma~\ref{lem:VC}, this is a permissible VC-subgraph class with an
$L_q$ envelope. Assumption~\ref{ass:mixing}(i) and
Lemma~\ref{lem:mixing-fclt} therefore imply
$\sup_{g\in\mathcal G_{\bar\theta}}|(P_n-P)g|=O_p(n^{-1/2})$.
The same argument, or Lemma~\ref{lem:uwlln} for the smooth variance class,
gives the uniform consistency of the studentizer. Hence
\[
    \sup_{\alpha\in\mathcal T,\gamma\in\Gamma}
    \left|M_n(\alpha,\gamma)-\eta(\alpha,\gamma;\bar\theta)\right|
    \xrightarrow{p}0,
    \qquad
    \sup_{\alpha\in\mathcal T,\gamma\in\Gamma}
    \left|s_n^2(\alpha,\gamma)-\sigma^2(\alpha,\gamma)\right|
    \xrightarrow{p}0.
\]
Because \(\sigma(\alpha,\gamma)\) is uniformly bounded away from zero and
\(\Gamma\) is compact, the penalty term \(\lambda\|\gamma\|_1/n\) vanishes uniformly. The continuity of the integral-supremum functional then yields
\[
    \frac{1}{n}T_{n,\Pi}(\lambda)
    \xrightarrow{p}
    \mathcal T_{\infty,\Pi}(\bar\theta)
    :=
    \sup_{\gamma\in\Gamma}
    \int_{\mathcal T}
    \left(\frac{\eta(\alpha,\gamma;\bar\theta)}{\sigma(\alpha,\gamma)}\right)^2
    d\Pi(\alpha).
\]
By the hypothesis of the theorem, \(\mathcal T_{\infty,\Pi}(\bar\theta)>0\). Therefore,
\[
    T_{n,\Pi}(\lambda)
    =
    n\left(\frac{1}{n}T_{n,\Pi}(\lambda)\right)
    \xrightarrow{p}+\infty.
\]

The condition is the integral analogue of the usual Bierens-type separation condition. In particular, if \(\Pi\) has full support on \(\mathcal T\), the map \(\alpha\mapsto\eta(\alpha,\gamma;\bar\theta)/\sigma(\alpha,\gamma)\) is continuous, and the fixed alternative generates a nonzero weighted moment at some point \(\alpha'\), then the discrepancy persists on a neighborhood of \(\alpha'\) and hence has positive \(\Pi\)-measure. This completes the proof.
\end{proof}

\begin{proof}[Proof of Theorem \ref{thm:consistency_density}]
Apply Lemma~\ref{lem:unif_rate} to the deterministic oracle reference curve
$\vartheta^r=\bar\theta$. Assumption~\ref{ass:regularity}(iv)(4) makes this
curve the true conditional quantile throughout the stated neighborhood and
places $\mathcal T$ in the interior of the pilot support $\mathcal A$. The
positive buffer removes the kernel boundary term uniformly in the target
index, while the supremum in the conclusion remains over $\mathcal T$.
Assumption \ref{ass:regularity}(iv) bounds the conditional density away from
zero on $\mathcal{T}$, so the quantile derivative is uniformly bounded above.
Together with the kernel boundedness and the deterministic lower bandwidth
bound in Assumption~\ref{ass:kernel}(b), this gives the bounded envelope used
in Step~2 of Lemma~\ref{lem:unif_rate}. No additional truncation is needed.
We therefore verify the remaining conditions of that lemma directly.

\paragraph{Step 1: Verifying bandwidth conditions}
We need to ensure that the chosen bandwidth interval satisfies the growth
conditions in Assumption \ref{ass:kernel}, specifically those linking $J$
and $n$.
Given $J \asymp n$, we can substitute $J$ with $n$ in the order checks.
The chosen bandwidth is $a_n \asymp (\frac{\log n}{n})^{1/5}$.
\begin{itemize}
    \item \textbf{Decay:} Since
    $b_n=c_b(\log n/n)^{1/5}$, we have $b_n\to0$.
    \item \textbf{Growth condition (Assumption \ref{ass:kernel}(b)):} We require $J a_n / \log J \to \infty$.
    \[
    \frac{n a_n}{\log n} \asymp \frac{n^{4/5} (\log n)^{1/5}}{\log n} = \frac{n^{4/5}}{(\log n)^{4/5}} \to \infty.
    \]
    This holds.
    \item \textbf{Simulation condition (Assumption \ref{ass:kernel}(c)):} We require $J a_n / \log n \to \infty$.
    Since we assumed $J \asymp n$, this is identical to the condition above and holds.
\end{itemize}
Thus, the bandwidth choice is valid, and we can invoke Lemma \ref{lem:unif_rate}.

\paragraph{Step 2: Applying the lemma}
We set
\[
r_J=C_0\{\log J+\log n+\log(1/a_n)\},
\]
where $C_0$ is the constant required by condition~\eqref{eq:D}. Since
$J\asymp n$ and $a_n\asymp(\log n/n)^{1/5}$, we have
$r_J\asymp\log n$ and $r_J/(Ja_n)\to0$, so all parts of
condition~\eqref{eq:D} hold.
Lemma \ref{lem:unif_rate} gives:
\[
\Delta_{f,n} = O_p\big(b_n^2 + u_J\big), \quad \text{where } u_J = \sqrt{\frac{r_J}{J a_n}}.
\]

\paragraph{Step 3: Balancing the rates}
We evaluate the order of each term:
\begin{itemize}
    \item \textbf{Bias Term:}
    \[
    b_n^2 \asymp a_n^2 \asymp \left(\frac{\log n}{n}\right)^{2/5}.
    \]
    \item \textbf{Stochastic Term:}
    Substitute $r_J \asymp \log n$, $J \asymp n$, and $a_n \asymp n^{-1/5}(\log n)^{1/5}$:
    \[
    u_J = \sqrt{\frac{\log n}{n \cdot n^{-1/5}(\log n)^{1/5}}}
    = \sqrt{\frac{(\log n)^{4/5}}{n^{4/5}}}
    = \frac{(\log n)^{2/5}}{n^{2/5}}
    = \left(\frac{\log n}{n}\right)^{2/5}.
    \]
\end{itemize}

Since the bias and stochastic terms have the same order, the total error is
\[
\Delta_{f,n} = O_p\left( \left(\frac{\log n}{n}\right)^{2/5} \right).
\]
As $n \to \infty$, this term vanishes, proving uniform consistency.
The bandwidth also satisfies the rate restriction required by the plug-in
expansion:
\[
n^{-1/2}a_{J_n}^{-2}
\asymp
n^{-1/10}(\log n)^{-2/5}\longrightarrow0.
\]
Thus the canonical bandwidth interval in
Theorem~\ref{thm:consistency_density} satisfies the rate in
Assumption~\ref{ass:kernel}(d). This rate is
not used for the oracle density-rate conclusion itself.
\end{proof}

\begin{proof}[Proof of Theorem \ref{thm:MB-local}]
Fix the actual data-generating direction $B_0$, where $B_0=0$ denotes the
null, and let $B$ be the candidate direction inserted in the shifted
bootstrap. First consider the null-evaluated sample statistic under
$P_{n,B_0}$.
Let
\[
Z_{nt}(\alpha,\gamma)
:=
\exp(\gamma^\top I_{t-1})
\big[\mathbf 1\{Y_t\le m(I_{t-1},\theta_0(\alpha))\}-\alpha\big].
\]
Under $P_{n,B_0}$, the null-evaluated statistic satisfies
\[
\sqrt n M_{n,0}(\alpha,\gamma;B_0)
=
\frac1{\sqrt n}\sum_{t=1}^n
\{Z_{nt}(\alpha,\gamma)-E_{n,B_0}Z_{nt}(\alpha,\gamma)\}
+
\frac1{\sqrt n}\sum_{t=1}^n E_{n,B_0}Z_{nt}(\alpha,\gamma).
\]
The centered first term has the same Gaussian limit $\mathcal M$ as under the
null by Assumption~\ref{ass:local}(iii). By
Lemma~\ref{lem:local-drift-expansions}, the second term satisfies, uniformly in
$(\alpha,\gamma)$,
\[
\frac1{\sqrt n}\sum_{t=1}^n E_{n,B_0}Z_{nt}(\alpha,\gamma)
=
d_{B_0}(\alpha,\gamma)+o(1).
\]
Thus the actual sample statistic has limit $\mathcal M+d_{B_0}$. The shifted
bootstrap below instead inserts $B$, because its purpose is to estimate the
power criterion at that candidate direction.

Now decompose the shifted bootstrap process after multiplying by $\sqrt n$:
\[
\sqrt n M_{n*}(\alpha,\gamma;B)
=
Z_{n*}(\alpha,\gamma)+\widehat d_{n,B}(\alpha,\gamma),
\]
where
\[
Z_{n*}(\alpha,\gamma)
:=
\frac1{\sqrt n}\sum_{t=1}^n
\omega_t Z_{nt}(\alpha,\gamma)
\]
and
\[
\widehat d_{n,B}(\alpha,\gamma)
:=
\frac1n\sum_{t=1}^n
\exp(\gamma^\top I_{t-1})
\widehat f_h(I_{t-1},\bar\theta;\alpha)
\nabla_\theta m(I_{t-1},\theta_0(\alpha))^\top B(\alpha).
\]
The multiplier component is conditionally centered because
$E^*(\omega_t)=0$. Hence it mimics the centered empirical process above, not the
uncentered local mean. The effect of centering by $E_{n,B_0}Z_{nt}$ is
asymptotically negligible for the multiplier process, since
$E_{n,B_0}Z_{nt}=O(n^{-1/2})$ uniformly under the local sequence. The same
conditional multiplier empirical-process
argument used in Theorem~\ref{thm:bootstrap} therefore yields
\[
Z_{n*}\rightsquigarrow^*\mathcal M
\qquad\text{in }\ell^\infty(\mathcal T\times\Gamma).
\]

It remains to show that $\widehat d_{n,B}$ consistently estimates the local drift. Define
\[
f_{t\alpha}
:=f_{Y_t|I_{t-1}}\!\left(
m(I_{t-1},\theta_0(\alpha))\mid I_{t-1}\right),
\]
and introduce the infeasible drift estimator
\[
\widetilde D_n(\alpha,\gamma)
:=
\frac1n\sum_{t=1}^n
\exp(\gamma^\top I_{t-1})
 f_{t\alpha}
\nabla_\theta m(I_{t-1},\theta_0(\alpha))^\top B(\alpha).
\]
By the uniform consistency of $\widehat f_h(I_{t-1},\bar\theta;\alpha)$ and the integrable envelope implied
by Assumptions~\ref{ass:regularity}(iv) and
\ref{ass:nuisance-smooth}(ii)--(iii),
\[
\sup_{\alpha,\gamma}|\widehat d_{n,B}(\alpha,\gamma)-\widetilde D_n(\alpha,\gamma)|=o_p(1).
\]
A uniform law of large numbers for the corresponding VC-type smooth class then
gives
\[
\sup_{\alpha,\gamma}|\widetilde D_n(\alpha,\gamma)-d_B(\alpha,\gamma)|=o_p(1).
\]
Therefore $\widehat d_{n,B}\to d_B$ uniformly in probability. Combining the two components,
\[
\sqrt n M_{n*}(\alpha,\gamma;B)
\rightsquigarrow^*
\mathcal M(\alpha,\gamma)+d_B(\alpha,\gamma).
\]

The variance estimator is the same first-order studentizer as under the null.
The deterministic local shift is of order $n^{-1/2}$ at the summand level and
therefore does not alter the leading variance. The proof of
Theorem~\ref{thm:bootstrap} gives
\[
\sup_{\alpha,\gamma}|s_{n*,B}^2(\alpha,\gamma)-\sigma^2(\alpha,\gamma)|
\xrightarrow{p^*}0.
\]
The result for $T_{n*,B,\Pi}$ follows from the continuous mapping theorem applied
to the integral-supremum functional.
\end{proof}

\begin{proof}[Proof of Theorem \ref{thm:power-enhance}]
For each $B$, write the penalized limit value as
\[
T_\Pi(\lambda;B)
:=
\sup_{\gamma\in\Gamma}
\{J_\Pi(\gamma;B)-\lambda\|\gamma\|_1\}.
\]
Consider the pathwise separation between the alternative $B_0$ and the null
$B=0$:
\[
\Delta_\Pi(\lambda)
:=
T_\Pi(\lambda;B_0)-T_\Pi(\lambda;0).
\]
The objective is continuous in $\gamma$, the parameter space $\Gamma$ is compact,
and the objective is linear in $\lambda$. By the assumed uniqueness of the
maximizer, Danskin's theorem gives the right derivative at the origin:
\[
\frac{d^+}{d\lambda}T_\Pi(0;B)
=
-\|\tilde\gamma_\Pi(B)\|_1.
\]
Consequently,
\[
\Delta_\Pi'{}^+(0)
=
\|\tilde\gamma_\Pi(0)\|_1
-
\|\tilde\gamma_\Pi(B_0)\|_1.
\]
The structural condition makes this derivative strictly positive almost surely.
Fix a realization in this probability-one event and write the derivative as
$c=c(Q,B_0)>0$. By the definition of the right derivative,
\[
\lim_{\lambda\downarrow0}
\frac{\Delta_\Pi(\lambda)-\Delta_\Pi(0)}{\lambda}=c.
\]
Consequently, there exists $\delta=\delta(Q,B_0)>0$ depending on the
realization such that, for every $0<\lambda<\delta$,
\[
\frac{\Delta_\Pi(\lambda)-\Delta_\Pi(0)}{\lambda}>\frac{c}{2}>0.
\]
Therefore, for every $\lambda\in(0,\delta)$,
\[
\Delta_\Pi(\lambda)>\Delta_\Pi(0),
\]
which is the claimed pathwise effect on local separation. The
result does not compare critical values corresponding to different penalties or
their rejection probabilities.
\end{proof}

\begin{proof}[Proof of Theorem \ref{thm:maximin-no-loss}]
Write
\[
C_\Gamma:=\sup_{\gamma\in\Gamma}\|\gamma\|_1<\infty.
\]
For any real-valued objective $J(\gamma)$ on $\Gamma$ and any
$\lambda,\lambda'\in\Lambda$,
\begin{equation}
\left|
\sup_{\gamma\in\Gamma}\{J(\gamma)-\lambda\|\gamma\|_1\}
-
\sup_{\gamma\in\Gamma}\{J(\gamma)-\lambda'\|\gamma\|_1\}
\right|
\le C_\Gamma|\lambda-\lambda'|.
\label{eq:lambda-lipschitz}
\end{equation}
This deterministic bound applies to the sample statistic, every null and shifted
bootstrap statistic, and $T_\Pi(\lambda;B)$. Hence all these paths are uniformly
Lipschitz in $\lambda$ with the same constant.

We next upgrade the fixed-$\lambda$ bootstrap convergence to uniform convergence.
If $C_\Gamma=0$, every penalty path is constant and a single point suffices.
Otherwise, for $\varepsilon>0$, choose a finite $\varepsilon/C_\Gamma$-net
$\Lambda_\varepsilon$ of the compact interval $\Lambda$. At the finitely many
net points, the asserted null and shifted-bootstrap convergences hold jointly.
The maximum conditional bounded-Lipschitz distance over the net therefore
vanishes in probability.
For any $\lambda\in\Lambda$, select
$\lambda^\circ\in\Lambda_\varepsilon$ with
$|\lambda-\lambda^\circ|\le\varepsilon/C_\Gamma$. Applying
\eqref{eq:lambda-lipschitz} to the sample, bootstrap, and limit statistics makes
each interpolation error at most $\varepsilon$. Letting first $n\to\infty$ and
then $\varepsilon\downarrow0$ yields the null and shifted-bootstrap convergence
uniformly over $\lambda\in\Lambda$ and the finite set
$\mathcal B\cup\{0\}$.

The same Lipschitz inequality implies
\[
|c_{1-\tau,\Pi}(\lambda)-c_{1-\tau,\Pi}(\lambda')|
\le C_\Gamma|\lambda-\lambda'|.
\]
Indeed, the two limit statistics differ almost surely by at most the right-hand
side, and their quantiles inherit that bound. The ideal conditional bootstrap
quantiles, denoted by $c^{*,\mathrm{id}}_{n,1-\tau,\Pi}(\lambda)$, satisfy the
same Lipschitz bound. The $B=0$ case of
\eqref{eq:uniform-anti-concentration} gives continuity at every null quantile.
Lemma~\ref{lem:gaussian-quantile-crossing} supplies strict crossing.
Conditional weak convergence therefore gives quantile consistency at each
point of a fixed finite $\delta$-net $\Lambda_\delta$ of $\Lambda$. Moreover,
\[
\begin{aligned}
&\sup_{\lambda\in\Lambda}
|c^{*,\mathrm{id}}_{n,1-\tau,\Pi}(\lambda)-c_{1-\tau,\Pi}(\lambda)|\\
&\quad\le
\max_{\lambda_j\in\Lambda_\delta}
|c^{*,\mathrm{id}}_{n,1-\tau,\Pi}(\lambda_j)-c_{1-\tau,\Pi}(\lambda_j)|
+2C_\Gamma\delta.
\end{aligned}
\]
Letting first $n\to\infty$ and then $\delta\downarrow0$, and adding the
assumed uniform Monte Carlo quantile error, gives
\begin{equation}
\sup_{\lambda\in\Lambda}
\left|c_{n,1-\tau,\Pi}^*(\lambda)
-c_{1-\tau,\Pi}(\lambda)\right|=o_p(1).
\label{eq:uniform-critical-value}
\end{equation}
Here and below the displayed rate includes the assumed uniform Monte Carlo
quantile error.

Let $\mathcal R_n^*(\lambda,B,\tau)$ denote the ideal conditional shifted-bootstrap
rejection probability computed with the null bootstrap critical value. For any
$\eta>0$, uniform convergence of the shifted statistic and
\eqref{eq:uniform-critical-value} bound the error in rejection probability by a term
that vanishes with $n$, plus
\[
\max_{B\in\mathcal B}
\sup_{\lambda\in\Lambda}
P\!\left(
|T_\Pi(\lambda;B)-c_{1-\tau,\Pi}(\lambda)|\le\eta
\right).
\]
The latter vanishes as $\eta\downarrow0$ by
\eqref{eq:uniform-anti-concentration}. Adding the assumed uniform simulation
error in the empirical rejection frequencies gives
\begin{equation}
e_n:=\sup_{\lambda\in\Lambda,\,B\in\mathcal B}
|\widehat{\mathcal R}_n(\lambda,B,\tau)
-\mathcal R(\lambda,B,\tau)|=o_p(1).
\label{eq:uniform-power-error}
\end{equation}
Because $\mathcal B$ is finite, the same bound applies to
$\widehat W_n(\lambda)=\min_B\widehat{\mathcal R}_n(\lambda,B,\tau)$:
$\sup_\lambda|\widehat W_n(\lambda)-W(\lambda)|\le e_n$.

By definition of $\widehat\lambda$ and because $0\in\Lambda$,
\begin{equation}
W(\widehat\lambda)
\ge \widehat W_n(\widehat\lambda)-e_n
\ge \widehat W_n(0)-e_n-\eta_n
\ge W(0)-2e_n-\eta_n.
\label{eq:no-loss-estimated}
\end{equation}
Comparing instead with the population maximizer gives
\[
W(\widehat\lambda)
\ge\widehat W_n(\widehat\lambda)-e_n
\ge\widehat W_n(\lambda^*)-e_n-\eta_n
\ge W(\lambda^*)-2e_n-\eta_n.
\]
Compactness of $\Lambda$, continuity of $W$, and uniqueness of $\lambda^*$
imply that
\[
W(\lambda^*)-
\sup_{\substack{\lambda\in\Lambda\\|\lambda-\lambda^*|\ge\varepsilon}}
W(\lambda)>0
\]
for every $\varepsilon>0$ with a nonempty comparison set. Since
$2e_n+\eta_n=o_p(1)$, the preceding inequality gives
$\widehat\lambda\to_p\lambda^*$. Since $\lambda^*$ maximizes $W$ and
$0\in\Lambda$,
\[
W(\lambda^*)=\max_{\lambda\in\Lambda}W(\lambda)\ge W(0),
\]
which is the asserted inequality for maximin local power. The singleton conclusion is
the special case in which the minimum contains only $B^*$.

Finally fix $B\in\mathcal B$. Under the contiguous local sequence, every
$o_p(1)$ statement above remains $o_{P_{n,B}}(1)$, and Assumption~\ref{ass:local}
gives the same first-order centered Gaussian process with drift $d_B$. The
finite-net argument therefore yields joint convergence of the local statistic
over $\Lambda$, while \eqref{eq:uniform-critical-value} remains valid. Since
$\widehat\lambda\to_{P_{n,B}}\lambda^*$, stochastic equicontinuity and the
continuous-mapping theorem give
\[
T_{n,\Pi}(\widehat\lambda)
-c_{n,1-\tau,\Pi}^*(\widehat\lambda)
\rightsquigarrow
T_\Pi(\lambda^*;B)-c_{1-\tau,\Pi}(\lambda^*).
\]
The anti-concentration condition makes the limiting variable atomless at zero,
so rejection probabilities converge to
$\mathcal R(\lambda^*,B,\tau)$. This completes the proof.
\end{proof}

\begin{proof}[Proof of Theorem \ref{thm:null-pre}]
The proof proceeds in three steps: establishing the weak convergence of the plug-in process, proving the consistency of the corrected variance estimator, and deriving the limit of the test statistic.

\paragraph{Step 1: Weak convergence of the plug-in empirical process}
We analyze the asymptotic behavior of the standardized process $\sqrt{n} \widehat M_n(\alpha, \gamma)$.
We first verify explicitly that sample centering has no first-order effect. Set
\[
e_t(\gamma):=\exp(\gamma^\top I_{t-1}),
\qquad
\mu_e(\gamma):=Ee_t(\gamma),
\]
and define
\[
\widehat r_{t\alpha}
:=
\mathbf 1\{Y_t\le
m_{t\alpha}(\bar\theta_1,\widehat\theta_2)\}-\alpha,
\qquad
r^0_{t\alpha}
:=
\mathbf 1\{Y_t\le
m_{t\alpha}(\bar\theta_1,\theta_{20})\}-\alpha.
\]
Because
$w_{t,n}(\gamma)-w_t(\gamma)
=-[\bar e_n(\gamma)-\mu_e(\gamma)]$ does not depend on $t$,
\begin{align}
&\sqrt n\,\mathbb P_n
\left[(w_{t,n}(\gamma)-w_t(\gamma))\widehat r_{t\alpha}\right]
\notag\\
&\qquad=
-\sqrt n\{\bar e_n(\gamma)-\mu_e(\gamma)\}
\,\mathbb P_n\widehat r_{t\alpha}.
\label{eq:sample-centering-product}
\end{align}
The mixing empirical-process bound for the smooth finite-dimensional
exponential-weight class gives
\[
\sup_{\gamma\in\Gamma}
\sqrt n\,|\bar e_n(\gamma)-\mu_e(\gamma)|=O_p(1).
\]
For the second factor, write
\[
\mathbb P_n\widehat r_{t\alpha}
=
\mathbb P_n r^0_{t\alpha}
+P(\widehat r_{t\alpha}-r^0_{t\alpha})
+(\mathbb P_n-P)(\widehat r_{t\alpha}-r^0_{t\alpha}).
\]
Uniformly in $\alpha\in\mathcal T$, the first term is
$O_p(n^{-1/2})$ by the quantile-score functional central limit theorem. The
second is $O_p(n^{-1/2})$ by the bounded conditional density, the smooth
population expansion, and
$\sup_\alpha\|\widehat\theta_2(\alpha)-\theta_{20}(\alpha)\|
=O_p(n^{-1/2})$. The localized stochastic equicontinuity condition in
Assumption~\ref{ass:pre-est-regularity}(ii) makes the third term
$o_p(n^{-1/2})$. Hence
\[
\sup_{\alpha\in\mathcal T}
|\mathbb P_n\widehat r_{t\alpha}|=O_p(n^{-1/2}).
\]
Equation~\eqref{eq:sample-centering-product} is therefore $o_p(1)$ uniformly
over $(\alpha,\gamma)$. Thus it suffices to analyze the process defined with
population-centered weights:
\[
\widetilde M_n(\alpha, \gamma, \theta_2) := \frac{1}{n}\sum_{t=1}^n w_{t}(\gamma) \Big[ \mathbf{1}\{Y_t\le m(I_{t-1}, \bar{\theta}_1, \theta_2)\} - \alpha \Big].
\]
We decompose the process around the true nuisance parameter $\theta_{20}(\alpha)$:
\[
\sqrt{n} \widetilde M_n(\hat{\theta}_2) = \sqrt{n} \widetilde M_n(\theta_{20}) + \sqrt{n} \Big( \widetilde M_n(\hat{\theta}_2) - \widetilde M_n(\theta_{20}) \Big).
\]

\begin{enumerate}
    \item \textbf{The Leading Term:} Let $\mathbb{G}_{n,1}(\alpha, \gamma) := \sqrt{n} \widetilde M_n(\alpha, \gamma, \theta_{20})$. This is exactly the centered-weight leading process $\sqrt n M_n^c(\alpha,\gamma)$ up to the negligible replacement of sample-centered weights by population-centered weights. The relevant function class is $\mathcal{G}_0^c = \{ (y, i) \mapsto w_\gamma(i)[\mathbf{1}\{y \le m(i, \bar{\theta}_1, \theta_{20})\} - \alpha] \}$. The centered raw-score part of Assumption~\ref{ass:pre-est-regularity}(ii), together with Assumption~\ref{ass:mixing} and Lemma~\ref{lem:mixing-fclt}, therefore gives weak convergence of $\mathbb G_{n,1}$ to the Gaussian process $\mathcal{M}^c(\alpha, \gamma)$ in $\ell^\infty(\mathcal{T} \times \Gamma)$.

    \item \textbf{The Estimation Effect:}
    Define the population moment function $\mu(\alpha, \gamma; \theta_2) = E[w_t(\gamma)(\mathbf{1}\{Y_t \le m(\cdot, \theta_2)\} - \alpha)]$.
    Although the indicator function $\mathbf{1}\{\cdot\}$ is non-differentiable, the population moment $\mu$ is smooth with respect to $\theta_2$ under Assumption \ref{ass:nuisance-smooth} (which imposes smoothness on the conditional density and the quantile function).

    The empirical process $\nu_n(\theta_2) = \sqrt{n}\big(\widetilde M_n(\cdot, \theta_2) - \mu(\cdot; \theta_2)\big)$ is stochastically equicontinuous by Lemma \ref{lem:mixing-fclt}. Given the uniform consistency $\sup_{\alpha \in \mathcal{T}} \|\hat{\theta}_2(\alpha) - \theta_{20}(\alpha)\| = o_p(1)$, this stochastic equicontinuity directly implies that the estimation error in the stochastic part is uniformly negligible:
    \[
    \sup_{\alpha \in \mathcal{T}, \gamma \in \Gamma} \left| \sqrt{n} \Big( \widetilde M_n(\hat{\theta}_2) - \mu(\hat{\theta}_2) \Big) - \sqrt{n} \Big( \widetilde M_n(\theta_{20}) - \mu(\theta_{20}) \Big) \right| = o_p(1).
    \]
    To see this, write the identity:
    \[
    \sqrt{n} \Big( \widetilde M_n(\hat{\theta}_2) - \widetilde M_n(\theta_{20}) \Big)
    = \sqrt{n} \Big( \mu(\hat{\theta}_2) - \mu(\theta_{20}) \Big)
    + \Big( \nu_n(\hat{\theta}_2) - \nu_n(\theta_{20}) \Big),
    \]
    where $\nu_n(\theta_2) := \sqrt{n}\big(\widetilde M_n(\cdot, \theta_2) - \mu(\cdot; \theta_2)\big)$ is the centered empirical process. Since $\nu_n$ is stochastically equicontinuous by Lemma \ref{lem:mixing-fclt} and $\sup_{\alpha \in \mathcal{T}} \|\hat{\theta}_2(\alpha) - \theta_{20}(\alpha)\| = o_p(1)$, the second term satisfies $\sup_{\alpha, \gamma} |\nu_n(\hat{\theta}_2) - \nu_n(\theta_{20})| = o_p(1)$.
    Now, applying a standard Taylor expansion to the \textit{smooth function} $\mu(\cdot)$ around $\theta_{20}$, we obtain:
    \[
    \sqrt{n} \Big( \mu(\hat{\theta}_2) - \mu(\theta_{20}) \Big) = \nabla_{\theta_2} \mu(\alpha, \gamma; \theta_{20})^\top \sqrt{n}(\hat{\theta}_2 - \theta_{20}) + o_p(1).
    \]
    By applying the Leibniz integral rule (justified by Assumption \ref{ass:nuisance-smooth}) to differentiate the expectation $\mu$, the gradient is identified as:
    \[
    A_2(\alpha, \gamma) := \nabla_{\theta_2} \mu(\alpha, \gamma; \theta_{20}) = E\Big[ w_t(\gamma) f_{Y_t|I_{t-1}}(m_{t\alpha}) \nabla_{\theta_2} m_{t\alpha} \Big].
    \]
    Using the influence function representation for $\hat{\theta}_2$ (Assumption \ref{ass:pre-est-regularity}), the second term becomes:
    \[
    \mathbb{G}_{n,2}(\alpha, \gamma) := A_2(\alpha, \gamma)^\top \frac{1}{\sqrt{n}}\sum_{t=1}^n l_{t,\alpha} + o_p(1).
    \]
    This term converges weakly to a Gaussian process $\mathcal{M}_{\mathrm{pre}}(\alpha, \gamma)$ determined by the limiting distribution of the estimator $\hat{\theta}_2$.

    \item \textbf{Joint Convergence:}
    Combining (i) and (ii), we have
    $\sqrt n\widehat M_n=\mathbb G_{n,1}+\mathbb G_{n,2}+o_p(1)$.
    Joint convergence does not follow from the two marginal limits alone. Apply
    Lemma~\ref{lem:mixing-fclt} to the stacked class in
    Assumption~\ref{ass:pre-est-regularity}(ii). By construction, this class
    satisfies the measurability, entropy, envelope, and mixing-rate conditions of
    that lemma. The resulting stacked functional central
    limit theorem, followed by the continuous linear map
    $(f,g)\mapsto f+A_2^\top g$, yields
    \[
    (\mathbb G_{n,1},\mathbb G_{n,2})
    \rightsquigarrow
    (\mathcal M^c,\mathcal M_{\mathrm{pre}})
    \]
    jointly. Therefore
    $\sqrt n\widehat M_n(\alpha,\gamma)\rightsquigarrow
    \widetilde{\mathcal M}(\alpha,\gamma)
    :=\mathcal M^c(\alpha,\gamma)+\mathcal M_{\mathrm{pre}}(\alpha,\gamma)$.
\end{enumerate}

\paragraph{Step 2: Consistency of the corrected variance estimator}
The corrected variance estimator is defined as the sample second moment of the \textit{estimated corrected scores}:
\[
\widehat s_n^2(\alpha, \gamma) = \frac{1}{n} \sum_{t=1}^n \widehat{\Psi}_t(\alpha, \gamma)^2,
\]
where the estimated corrected score is given by:
\[
\widehat{\Psi}_t(\alpha, \gamma) := w_{t,n}(\gamma) \big[ \mathbf{1}\{Y_t\le m_{t\alpha}(\bar{\theta}_1, \hat{\theta}_2)\} - \alpha \big] + \widehat l_{t,\alpha}^\top \widehat A_{2,n}(\alpha,\gamma).
\]
Here \(\widehat A_{2,n}(\alpha,\gamma)\) denotes the sample analogue of
\(A_2(\alpha,\gamma)\) defined in the main text.
Define the \textit{population corrected score} as $\Psi_t(\alpha, \gamma) = \psi_{1t}(\alpha, \gamma) + A_2(\alpha, \gamma)^\top l_{t,\alpha}$. Under the null hypothesis $H_0$, the original score $\psi_{1t}$ has zero mean by the MDS property. Assumption~\ref{ass:pre-est-regularity}(i) separately imposes $E[l_{t,\alpha}]=0$ on the nuisance influence function. Consequently, the total corrected score has zero unconditional expectation: $E[\Psi_t(\alpha, \gamma)] = 0$.
By the corrected-score serial-orthogonality condition in
Assumption~\ref{ass:pre-est-regularity}(iii), the long-run covariance reduces
to its contemporaneous term. Since $E[\Psi_t(\alpha,\gamma)]=0$, this term is
the second moment:
\[
\tilde{\sigma}^2(\alpha, \gamma) = \text{Var}(\Psi_t(\alpha, \gamma)) = E[\Psi_t(\alpha, \gamma)^2].
\]

Consistency is established in two substeps:
\begin{enumerate}
    \item \textbf{Plug-in error:} We control the effect of replacing the unknown parameters and density by their estimators.
    For any nuisance curve $\vartheta_2$, let
    \[
    \widehat f_{t\alpha}(\vartheta_2)
    :=
    \widehat f_{h_{J_n}}(I_{t-1},(\bar\theta_1,\vartheta_2);\alpha),
    \]
    where the simulated indices $\{U_j\}$ are the same as in
    Lemma~\ref{lem:unif_rate}. The feasible and oracle smoothers are
    $\widehat f_{t\alpha}(\widehat\theta_2)$ and
    $\widehat f_{t\alpha}(\theta_{20})$, respectively, with target
    \[
    f_{t\alpha}:=
    f_{Y_t\mid I_{t-1}}\!\left(m_{t\alpha}(\bar\theta_1,\theta_{20})\mid I_{t-1}\right).
    \]
    What is needed below is average consistency after multiplication by the
    envelopes entering $\widehat L_\alpha$ and
    $\widehat A_{2,n}(\alpha,\gamma)$, rather than a maximum-over-$t$ bound for
    the plug-in density itself. Let $H_t$ be an envelope for those matrix-valued
    factors. The mean-value theorem, boundedness of $K'$, and the derivative
    bounds for $m$ give an envelope $G_t$ such that
    \[
    \sup_{\alpha\in\mathcal T}
    \left|
    \widehat f_{h_{J_n}}(I_{t-1},(\bar\theta_1,\widehat\theta_2);\alpha)
    -\widehat f_{h_{J_n}}(I_{t-1},(\bar\theta_1,\theta_{20});\alpha)
    \right|
    \le
    C h_{J_n}^{-2}
    \|\widehat\theta_2-\theta_{20}\|_{\infty,\mathcal A}G_t.
    \]
    Both $G_t$ and $H_t$ are bounded by finite powers of the joint envelope
    $\mathcal E_t$ in Assumption~\ref{ass:nuisance-smooth}(iii). In particular,
    $E[G_tH_t]<\infty$. Let
    $\mathcal H_n:=\{a_{J_n}\le h_{J_n}\le b_{J_n}\}$. By
    Assumption~\ref{ass:kernel}(b), $P(\mathcal H_n)\to1$, and on
    $\mathcal H_n$ we have $h_{J_n}^{-2}\le a_{J_n}^{-2}$.
    Consequently, the uniform expansion in
    Assumption~\ref{ass:pre-est-regularity} and a uniform law of large numbers
    imply, on $\mathcal H_n$,
    \begin{align*}
    &\sup_{\alpha,\gamma}
    \frac1n\sum_{t=1}^n
    \left|
    \widehat f_{h_{J_n}}(I_{t-1},(\bar\theta_1,\widehat\theta_2);\alpha)
    -\widehat f_{h_{J_n}}(I_{t-1},(\bar\theta_1,\theta_{20});\alpha)
    \right|H_t \\
    &\qquad\le
    C a_{J_n}^{-2}\|\widehat\theta_2-\theta_{20}\|_{\infty,\mathcal A}
    \frac1n\sum_{t=1}^nG_tH_t
    =O_p\!\left(n^{-1/2}a_{J_n}^{-2}\right)=o_p(1).
    \end{align*}
    Since $P(\mathcal H_n^c)\to0$, the same $o_p(1)$ conclusion holds
    unconditionally. This is the only step in the present proof that uses
    Assumption~\ref{ass:kernel}(d).

    By Assumption~\ref{ass:pre-est-regularity}, the oracle restricted curve
    $\bar q_{I_{t-1}}$ is the true conditional quantile throughout
    the maintained neighborhood. Lemma~\ref{lem:unif_rate}, applied to this
    curve, therefore targets $f_{t\alpha}$ directly.

    For the remaining oracle-curve density component, set
    \[
        r_{J_n}:=C_0\{\log J_n+\log n+\log(1/a_{J_n})\},
    \]
    with $C_0$ as in Lemma~\ref{lem:unif_rate}. Because
    $J_n a_{J_n}\to\infty$, eventually $a_{J_n}>1/J_n$ and hence
    $\log(1/a_{J_n})=O(\log J_n)$. Assumption~\ref{ass:kernel}(b)--(c)
    therefore implies $r_{J_n}/(J_n a_{J_n})\to0$.
    Lemma~\ref{lem:unif_rate} gives
    \[
    \sup_{\alpha\in\mathcal T}\max_{t\le n}
    \left|
    \widehat f_{h_{J_n}}(I_{t-1},(\bar\theta_1,\theta_{20});\alpha)
    -f_{t\alpha}
    \right|
    =O_p\!\left(
    b_{J_n}^2+\sqrt{\frac{r_{J_n}}{J_n a_{J_n}}}
    \right)=o_p(1).
    \]
    Consequently,
    \[
    \sup_{\alpha,\gamma}\frac1n\sum_{t=1}^n
    \left|
    \widehat f_{h_{J_n}}(I_{t-1},(\bar\theta_1,\theta_{20});\alpha)
    -f_{t\alpha}
    \right|H_t=o_p(1),
    \]
    by the uniform law of large numbers for $H_t$. Combining the generated-
    parameter and oracle-curve components yields the explicit bound
    \[
    O_p\!\left\{
    n^{-1/2}a_{J_n}^{-2}
    +b_{J_n}^2
    +\sqrt{\frac{r_{J_n}}{J_n a_{J_n}}}
    \right\}=o_p(1).
    \]
    Thus the density factors are consistent in precisely the weighted-average
    sense required for $\widehat L_\alpha$ and
    $\widehat A_{2,n}(\alpha,\gamma)$. No stronger maximum-over-$t$ plug-in
    density claim is used.

    These bounds give the uniform consistency of the sample Jacobian. Indeed, with
    \[
        \hat g_{t\alpha}:=g_{t\alpha}(\bar\theta_1,\widehat\theta_2),
        \qquad
        g^0_{t\alpha}:=g_{t\alpha}(\bar\theta_1,\theta_{20}),
    \]
    we have
    \[
        \widehat L_\alpha
        =\frac1n\sum_{t=1}^n \widehat f_{t\alpha}(\widehat\theta_2)
        \hat g_{t\alpha}\hat g_{t\alpha}^{\top},
        \qquad
        L_\alpha=E[f_{t\alpha}g^0_{t\alpha}(g^0_{t\alpha})^\top].
    \]
    Hence
    \[
    \begin{aligned}
    \sup_{\alpha\in\mathcal T}\|\widehat L_\alpha-L_\alpha\|
    \le{}&
    \sup_\alpha\left\|
    \frac1n\sum_{t=1}^n
    \left(
    \widehat f_{t\alpha}(\widehat\theta_2)
    \hat g_{t\alpha}\hat g_{t\alpha}^{\top}
    -
    f_{t\alpha}g^0_{t\alpha}(g^0_{t\alpha})^\top
    \right)
    \right\| \\
    &+
    \sup_\alpha\left\|
    \frac1n\sum_{t=1}^n f_{t\alpha}g^0_{t\alpha}(g^0_{t\alpha})^\top
    -E[f_{t\alpha}g^0_{t\alpha}(g^0_{t\alpha})^\top]
    \right\|.
    \end{aligned}
    \]
    The first term is $o_p(1)$ by the weighted-average density consistency just
    established, the consistency of $\widehat\theta_2$, and the continuity and
    moment bounds for $g_{t\alpha}$. The second term is
    $o_p(1)$ by the uniform law of large numbers for the corresponding smooth
    VC-type class. Therefore
    \[
        \sup_{\alpha\in\mathcal T}\|\widehat L_\alpha-L_\alpha\|=o_p(1).
    \]
    Since $\inf_{\alpha\in\mathcal T}\lambda_{\min}(L_\alpha)>0$ by
    Assumption~\ref{ass:pre-est-regularity}(iv), it follows that
    \[
        \sup_{\alpha\in\mathcal T}\|\widehat L_\alpha^{-1}-L_\alpha^{-1}\|=o_p(1).
    \]
    Also, $w_{t,n}\approx w_t$ uniformly by the uniform law of large numbers for
    the exponential weights. Applying the Uniform Weak Law of Large Numbers
    (UWLLN) to the difference of the squared terms, we obtain:
    \[
    \sup_{\alpha, \gamma} \left| \widehat s_n^2(\alpha, \gamma) - \frac{1}{n} \sum_{t=1}^n \Psi_t(\alpha, \gamma)^2 \right| \xrightarrow{p} 0.
    \]
    
    \item \textbf{Convergence to Limit:} The class of functions $\{ \Psi_t(\alpha, \gamma)^2 : \alpha \in \mathcal{T}, \gamma \in \Gamma \}$ is a Glivenko-Cantelli class (constructed from VC-subgraphs and smooth functions). By the Ergodic Theorem (or UWLLN for mixing processes), the sample second moment converges uniformly to the population moment:
    \[
    \sup_{\alpha, \gamma} \left| \frac{1}{n} \sum_{t=1}^n \Psi_t(\alpha, \gamma)^2 - \tilde{\sigma}^2(\alpha, \gamma) \right| \xrightarrow{p} 0.
    \]
\end{enumerate}
Combining these, we obtain uniform consistency: $\sup_{\alpha, \gamma} | \widehat s_n^2(\alpha, \gamma) - \tilde{\sigma}^2(\alpha, \gamma) | \xrightarrow{p} 0$.

\paragraph{Step 3: Convergence of the test statistic}
On $\mathbb B_\Pi$, the penalized statistic $\widehat T_{n,\Pi}$ is a continuous functional of the process $\sqrt{n}\widehat M_n$ and the scaling function $\widehat s_n$.
Given the weak convergence $\sqrt{n}\widehat M_n \rightsquigarrow \widetilde{\mathcal{M}}$ (Step 1) and the uniform consistency $\widehat s_n \xrightarrow{p} \tilde{\sigma}$ (Step 2), and noting that $\tilde{\sigma}(\alpha, \gamma)$ is bounded away from zero by the nondegeneracy condition in Assumption~\ref{ass:pre-est-regularity}(iv), the Continuous Mapping Theorem yields:
\[
\widehat T_{n,\Pi}(\lambda) \rightsquigarrow \sup_{\gamma \in \Gamma} \Bigg[ \int_{\mathcal T} \left( \frac{\widetilde{\mathcal{M}}(\alpha, \gamma)}{\tilde{\sigma}(\alpha, \gamma)} \right)^2 d\Pi(\alpha) - \lambda\|\gamma\|_1 \Bigg].
\]
\end{proof}

\begin{lemma}[Uniform feasible--oracle multiplier replacement]
\label{lem:pre-multiplier-replacement}
Under Assumptions~\ref{ass:mixing}--\ref{ass:nuisance-smooth} and $H_0$ on
$\mathcal T$, let $\widehat\Psi_t(\alpha,\gamma)$ and
$\Psi_t(\alpha,\gamma)$ denote the estimated and population corrected scores
used in Theorem~\ref{thm:boot-pre}. Then
\[
\sup_{\alpha,\gamma}
\frac1n\sum_{t=1}^n
\{\widehat\Psi_t(\alpha,\gamma)-\Psi_t(\alpha,\gamma)\}^2
=o_p(1),
\]
and, for i.i.d. standard Gaussian multipliers independent of the data,
\[
\sup_{\alpha,\gamma}
\left|
\frac1{\sqrt n}\sum_{t=1}^n\omega_t
\{\widehat\Psi_t(\alpha,\gamma)-\Psi_t(\alpha,\gamma)\}
\right|
=o_p^*(1).
\]
\end{lemma}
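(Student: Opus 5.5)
The plan is to write the difference of corrected scores as a sum of a few explicit pieces, show that each piece has vanishing empirical second moment uniformly in $(\alpha,\gamma)$, and then turn that second-moment bound into the multiplier statement with a conditional Gaussian maximal inequality.

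\textbf{Decomposition.} Using $r^0_{t\alpha}$, $\widehat r_{t\alpha}$, $e_t$ and $\mu_e$ as in the proof of Theorem~\ref{thm:null-pre}, write
\[
\widehat\Psi_t-\Psi_t
=-\{\bar e_n(\gamma)-\mu_e(\gamma)\}\widehat r_{t\alpha}
+w_t(\gamma)(\widehat r_{t\alpha}-r^0_{t\alpha})
+(\widehat l_{t,\alpha}-l_{t,\alpha})^\top\widehat A_{2,n}
+l_{t,\alpha}^\top(\widehat A_{2,n}-A_2).
\]
It then suffices to bound the empirical second moment of each of the four terms uniformly in $(\alpha,\gamma)$.

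\textbf{Bounding the four terms.}
\begin{enumerate}
\item For the first term, $\sup_\gamma|\bar e_n-\mu_e|=O_p(n^{-1/2})$ and $|\widehat r_{t\alpha}|\le1$. Its contribution is therefore $O_p(n^{-1})$.
\item For the second term, $|\widehat r_{t\alpha}-r^0_{t\alpha}|$ is an indicator of $Y_t$ lying between the two fitted quantiles. The population part is bounded by
\[
\sup f\cdot E[w_t^2\|g_{t\alpha}\|]\,\sup_\alpha\|\widehat\theta_2-\theta_{20}\|=O_p(n^{-1/2}),
\]
using Assumption~\ref{ass:regularity}(iv) and a mean-value expansion of $m$. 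The centered part is $o_p(1)$ by a uniform law of large numbers over the localized score class of Assumption~\ref{ass:pre-est-regularity}(ii), with $\theta_2$ ranging over a shrinking neighborhood of $\theta_{20}$. This class is VC-type with an $L_q$ envelope, so it is Glivenko--Cantelli under mixing, by truncation as in Lemma~\ref{lem:conditional-gaussian-multiplier}.
\item For the fourth term, $\sup_{\alpha,\gamma}\|\widehat A_{2,n}-A_2\|=o_p(1)$ follows exactly as for $\widehat L_\alpha$ in Step~2 of the proof of Theorem~\ref{thm:null-pre}. That argument uses the weighted-average density consistency, including the generated-parameter bound $O_p(n^{-1/2}a_{J_n}^{-2})$ under Assumption~\ref{ass:kernel}(d), together with $w_{t,n}\approx w_t$ and a ULLN. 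Combined with $n^{-1}\sum_t\|l_{t,\alpha}\|^2=O_p(1)$ uniformly, this term is $o_p(1)$.
\item For the third term, write
\[
\widehat l_{t,\alpha}-l_{t,\alpha}=-(\widehat L_\alpha^{-1}-L_\alpha^{-1})\hat g_{t\alpha}\widehat r_{t\alpha}-L_\alpha^{-1}\{(\hat g_{t\alpha}-g^0_{t\alpha})\widehat r_{t\alpha}+g^0_{t\alpha}(\widehat r_{t\alpha}-r^0_{t\alpha})\}.
\]
The first piece is handled by uniform consistency of $\widehat L_\alpha^{-1}$. The second uses uniform continuity of $\nabla_\theta m$ (Assumption~\ref{ass:nuisance-smooth}(i)) and $\sqrt n$-consistency. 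The third is treated as in item~2, with $\|g\|^2$ replacing $w_t^2$. Boundedness of $\widehat A_{2,n}$ then finishes this term.
\end{enumerate}
Throughout, cross products are controlled by Cauchy--Schwarz and the moment bound $E\mathcal E_t^r<\infty$ with $r>4$, which gives the first claim.

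\textbf{Multiplier claim.} Conditionally on the data, $X_n^*(\alpha,\gamma):=n^{-1/2}\sum_t\omega_t(\widehat\Psi_t-\Psi_t)$ is a centered Gaussian process. Its canonical semimetric is the $L_2(\mathbb P_n)$ distance of the differences, and its maximal standard deviation $\sigma_n$ is $o_p(1)$ by the first claim. I will apply Dudley's entropy bound \citep[Theorem~2.2.4]{vdvw1996} conditionally:
\[
E^*\sup|X_n^*|\lesssim E^*|X_n^*(\alpha_0,\gamma_0)|+\int_0^{\sigma_n}\sqrt{\log N(\varepsilon,\mathcal D_n,L_2(\mathbb P_n))}\,d\varepsilon .
\]
Here $\mathcal D_n$ is the class of differences indexed by $(\alpha,\gamma)$ with $\widehat\theta_2$ and the estimated matrices held fixed. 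The needed input is a polynomial covering bound $N\le(A\|F\|_{\mathbb P_n,2}/\varepsilon)^v$ with $\|F\|_{\mathbb P_n,2}=O_p(1)$. Granting that bound, the integral is $O_p(\sigma_n\sqrt{\log(1/\sigma_n)})=o_p(1)$, and Markov's inequality gives $o_p^*(1)$.

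\textbf{Main obstacle.} The hard step is the uniform entropy bound, because $\mathcal D_n$ depends on the data through $\widehat\theta_2$, $\widehat L_\alpha$, $\widehat f$ and $\widehat A_{2,n}$. I would first use the polynomial covering bound that Assumption~\ref{ass:pre-est-regularity}(ii) grants for the localized score, gradient and corrected-score classes with $\theta_2$ in a fixed neighborhood. This covers the indicator-times-weight parts uniformly over nuisance values. For the remaining estimated finite-dimensional factors, $\widehat L_\alpha^{-1}$ and $\widehat A_{2,n}(\alpha,\gamma)$, I would show they are Lipschitz in $(\alpha,\gamma)$ with $O_p(1)$ constants, using smoothness of $m$ and $K$ and compactness of $\mathcal T\times\Gamma$. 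Products of VC-type classes with Lipschitz-parametrized finite-dimensional coefficients stay VC-type, so the polynomial bound would follow. If the Lipschitz control in $\alpha$ of the smoothed density fails at the $a_{J_n}^{-2}$ scale, the fallback is to use it only on a grid of mesh $o(a_{J_n}^2)$ and absorb the $\log(1/a_{J_n})=O(\log n)$ growth in the entropy integral against $\sigma_n$. That absorption requires a rate for $\sigma_n$ (a power of $n$, for example), which the first claim as stated, only $o_p(1)$, does not provide. The items would have to be sharpened to give $\sigma_n^2=O_p(n^{-\kappa})$ for some $\kappa>0$, which seems plausible from the explicit rates above.
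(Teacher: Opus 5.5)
Your argument follows the paper's proof closely. It uses the same split of $\widehat\Psi_t-\Psi_t$ into a sample-centering piece, an indicator-change piece and two correction pieces. It uses the same three-term decomposition of $\widehat l_{t,\alpha}-l_{t,\alpha}$, and the same conditional Gaussian entropy bound with $Q=P_n$ and radius $\sigma_n=o_p(1)$.

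One point in item~2 needs care. Write $r_{t\alpha}(\vartheta):=\mathbf 1\{Y_t\le m(I_{t-1},\bar\theta_1(\alpha),\vartheta)\}-\alpha$. Your ``population part'' must be read as the deterministic map $\vartheta\mapsto E[w_t^2(\gamma)\,|r_{t\alpha}(\vartheta)-r_{t\alpha}(\theta_{20}(\alpha))|]$ evaluated at $\vartheta=\widehat\theta_2(\alpha)$. It is not an expectation with $\widehat\theta_2$ held fixed. This reading is legitimate because your law of large numbers is uniform in $\vartheta$. It plays the role of the paper's deterministic radius $\delta_n$ and event $\mathcal E_n$, which the paper introduces because $\widehat\theta_2$ is not predictable.

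The step you leave open, the entropy bound for $\mathcal D_n$, closes without Lipschitz control in $(\alpha,\gamma)$ and without a rate for $\sigma_n$. For fixed $(\alpha,\gamma)$, the quantities $\bar e_n(\gamma)$, $\widehat L_\alpha$ and $\widehat A_{2,n}(\alpha,\gamma)$ are averages over $s$ and do not vary with $t$. The smoothed density enters $\widehat\Psi_t$ only through them. Since $\widehat L_\alpha$ is symmetric,
\[
\widehat\Psi_t(\alpha,\gamma)=\{\exp(\gamma^\top I_{t-1})-c_1\}\,r_{t\alpha}(\vartheta)-r_{t\alpha}(\vartheta)\,\nabla_{\theta_2}m(I_{t-1},\bar\theta_1(\alpha),\vartheta)^\top c_2,
\]
with $\vartheta=\widehat\theta_2(\alpha)$, $c_1=\bar e_n(\gamma)$ and $c_2=\widehat L_\alpha^{-1}\widehat A_{2,n}(\alpha,\gamma)$.

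Now take the event, of probability tending to one by the uniform consistency results you already establish, on which two things hold:
\begin{itemize}
\item[(a)] $\widehat\theta_2(\alpha)$ lies in the fixed neighborhood of Assumption~\ref{ass:pre-est-regularity}(ii) for every $\alpha$;
\item[(b)] $|c_1|,\|c_2\|\le K$ uniformly.
\end{itemize}
On this event, $\mathcal D_n$ is a subset of a fixed class. That class is obtained by letting $(\alpha,\gamma,\vartheta,c_1,c_2)$ range freely over $\mathcal T\times\Gamma$, that neighborhood and the $K$-balls, and then subtracting the population class $\{\Psi_t(\alpha,\gamma)\}$.

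This superset has a polynomial uniform covering bound. That follows from Assumption~\ref{ass:pre-est-regularity}(ii) together with stability of VC-type classes under products and bounded finite-dimensional linear combinations. Its envelope $F$ satisfies $\|F\|_{P_n,2}=O_p(1)$ because $E\mathcal E_t^r<\infty$. Covering numbers of a subset are bounded by those of the superset at half the radius. This is what the paper's appeal to Assumption~\ref{ass:pre-est-regularity}(ii) ``on events with probability approaching one'' amounts to.

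The entropy integral is then $O_p(\sigma_n\sqrt{\log(C/\sigma_n)})$, which vanishes under $\sigma_n=o_p(1)$ alone. So you can drop the grid fallback and the proposed sharpening to $\sigma_n^2=O_p(n^{-\kappa})$. The $a_{J_n}^{-2}$ behavior of $\widehat f$ in $\alpha$ never enters the entropy calculation.
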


\begin{proof}[Proof of Lemma \ref{lem:pre-multiplier-replacement}]
Let $\mathcal{Z}_n$ denote the observed data. We first establish the uniform
empirical-$L_2$ bound and then apply a localized Gaussian multiplier inequality.
Recall that the bootstrap process with pre-estimation is defined as $\sqrt{n} \widehat M_{n*} = n^{-1/2} \sum_{t=1}^n \omega_t \widehat{\Psi}_t$, where $\widehat{\Psi}_t$ is the estimated corrected score inside the braces in \eqref{eq:Mhat-star-def}. Define the infeasible counterpart using population scores as $Z_{n}^* = n^{-1/2} \sum_{t=1}^n \omega_t \Psi_t$.
Consider the difference $\Delta_{n}^* = \sqrt{n} \widehat M_{n*} - Z_{n}^*$. Conditional on the data $\mathcal{Z}_n$, $\Delta_{n}^*$ has mean zero and variance:
\[
\text{Var}^*(\Delta_{n}^*) = \frac{1}{n} \sum_{t=1}^n \big( \widehat{\Psi}_t(\alpha, \gamma) - \Psi_t(\alpha, \gamma) \big)^2.
\]
We use the inequality $(a+b+c)^2 \le 3(a^2+b^2+c^2)$ to bound the summand. Let
$R_{t\alpha}:=\mathbf 1\{Y_t\le m_{t\alpha}(\bar\theta_1,\hat\theta_2)\}-\alpha$. The squared difference is bounded by the sum of three terms:
\begin{align*}
(\widehat\Psi_t-\Psi_t)^2
\le 3\Big[{}
&(w_{t,n}-w_t)^2R_{t\alpha}^2\\
&+w_t^2\big(
\mathbf 1\{Y_t\le m_{t\alpha}(\bar\theta_1,\hat\theta_2)\}
-\mathbf 1\{Y_t\le m_{t\alpha}(\bar\theta_1,\theta_{20})\}
\big)^2\\
&+(\widehat l_{t,\alpha}^\top\widehat A_{2,n}
-l_{t,\alpha}^\top A_2)^2
\Big].
\end{align*}
We analyze the convergence of the sample mean of each term:

\begin{enumerate}
    \item \textbf{Weights Term:} Since
    \[
    w_{t,n}(\gamma)-w_t(\gamma)
    =-
    \left\{
    \frac1n\sum_{s=1}^n\exp(\gamma^\top I_{s-1})
    -E[\exp(\gamma^\top I_{t-1})]
    \right\},
    \]
    the difference is independent of $t$ and is the negative of the empirical mean error of the exponential weight. A uniform law of large numbers for the finite-dimensional class $\{\exp(\gamma^\top I):\gamma\in\Gamma\}$ yields $\sup_{\gamma\in\Gamma}|w_{t,n}(\gamma)-w_t(\gamma)|=o_p(1)$. Since $R_{t\alpha}$ is uniformly bounded, the sample mean of the first term converges to zero in probability.

\item \textbf{Indicator Function Term:} Let
    $D_t := \mathbf{1}\{Y_t \le m_{t\alpha}(\bar\theta_1,\hat{\theta}_2)\} - \mathbf{1}\{Y_t \le m_{t\alpha}(\bar\theta_1,\theta_{20})\}$.
    Since $\hat{\theta}_2$ depends on the entire sample, it is not
    $\mathcal{F}_{t-1}$-measurable, and one cannot directly condition on $\hat{\theta}_2$
    in the inner expectation. We therefore use a localization argument.

    By the $\sqrt{n}$-consistency of $\hat{\theta}_2$ established in
    Assumption~\ref{ass:pre-est-regularity}, there exists a deterministic sequence
    $\delta_n \downarrow 0$ with $\sqrt{n}\,\delta_n \to \infty$ (e.g.,
    $\delta_n = n^{-1/2} \log n$) such that
    \[
        P\!\left( \mathcal{E}_n \right) \to 1, \qquad
        \mathcal{E}_n := \Big\{ \sup_{\alpha \in \mathcal{T}}
        \|\hat{\theta}_2(\alpha) - \theta_{20}(\alpha)\| \le \delta_n \Big\}.
    \]
    On the event $\mathcal{E}_n$, $D_t^2 \in \{0, 1\}$ is dominated by the indicator of
    a shrinking neighborhood:
    \[
        D_t^2 \;\le\; \mathbf{1}\Big\{ Y_t \in \big[ m_{t\alpha}(\bar\theta_1, \theta_{20}) - r_t(\delta_n),\;
        m_{t\alpha}(\bar\theta_1, \theta_{20}) + r_t(\delta_n) \big] \Big\},
    \]
    where
    $r_t(\delta_n) := \sup_{\theta_2 \in \mathcal{N}_n} | m(I_{t-1}, \theta_2)
    - m(I_{t-1}, \theta_{20}) |$ and
    $\mathcal{N}_n := \{ \theta_2 : \|\theta_2 - \theta_{20}\| \le \delta_n \}$.
    By the smoothness of $m$ in Assumption~\ref{ass:nuisance-smooth} and a Mean Value
    Theorem argument applied to the deterministic mapping $\theta_2 \mapsto m(I_{t-1}, \theta_2)$,
    \[
        r_t(\delta_n) \;\le\; \sup_{\theta_2 \in \mathcal{N}_n}
        \|\nabla_{\theta_2} m(I_{t-1}, \theta_2)\| \cdot \delta_n
        \;=:\; G(I_{t-1}) \cdot \delta_n,
    \]
    where $G(I_{t-1})$ is measurable with respect to $I_{t-1}$ and integrable
    under Assumption~\ref{ass:nuisance-smooth}(ii).

    Multiplying the pathwise neighborhood bound by
    $\mathbf 1_{\mathcal E_n}$ and conditioning on $I_{t-1}$ gives
    \begin{align*}
    E\!\left[D_t^2\mathbf 1_{\mathcal E_n}\mid I_{t-1}\right]
    \le{}&F_{Y_t|I_{t-1}}\!\left(
    m_{t\alpha}(\bar\theta_1,\theta_{20})+G(I_{t-1})\delta_n
    \mid I_{t-1}\right)\\
    &-F_{Y_t|I_{t-1}}\!\left(
    m_{t\alpha}(\bar\theta_1,\theta_{20})-G(I_{t-1})\delta_n
    \mid I_{t-1}\right).
    \end{align*}
    The event $\mathcal E_n$ need not be measurable with respect to
    $I_{t-1}$, because the inequality holds pathwise before conditioning.
    By the uniform boundedness of the conditional density
    (Assumption~\ref{ass:regularity}(iv)), the right-hand side is at most
    $2\, \sup_y f_{Y_t|I_{t-1}}(y) \cdot G(I_{t-1}) \delta_n
    \le C \, G(I_{t-1})\, \delta_n$ for some constant $C < \infty$.
    To obtain the required uniform result, include the squared exponential
    weight and let
    \[
    H_{n,\alpha,\gamma}(Z_t)
    :=w_t^2(\gamma)\mathbf 1\!\left\{
    |Y_t-m_{t\alpha}(\bar\theta_1,\theta_{20})|
    \le G(I_{t-1})\delta_n\right\}.
    \]
    Conditional-density boundedness and the product-envelope condition imply
    \[
    \sup_{\alpha,\gamma}PH_{n,\alpha,\gamma}
    \le C\delta_n E\!\left[
    \sup_{\gamma\in\Gamma}w_t^2(\gamma)G(I_{t-1})
    \right]=O(\delta_n).
    \]
    The classes $\{H_{n,\alpha,\gamma}\}$ are contained in the localized
    VC-type class specified in Assumption~\ref{ass:pre-est-regularity}(ii), with
    an integrable envelope independent of $n$. The truncation-and-finite-net
    uniform law used in Lemma~\ref{lem:conditional-gaussian-multiplier}
    therefore gives
    \[
    \sup_{\alpha,\gamma}|P_nH_{n,\alpha,\gamma}
    -PH_{n,\alpha,\gamma}|=o_p(1).
    \]
    On $\mathcal E_n$, the weighted indicator-difference term is bounded by
    $H_{n,\alpha,\gamma}$. Since $P(\mathcal E_n^c)\to0$, it follows that
    \[
    \sup_{\alpha,\gamma}\frac1n\sum_{t=1}^n
    w_t^2(\gamma)D_t^2=o_p(1).
    \]

    \item \textbf{Correction Term:} We decompose the difference as:
    \[
    \widehat{l}_{t,\alpha}^\top \widehat A_{2,n}(\alpha,\gamma) - l_{t,\alpha}^\top A_2(\alpha,\gamma)
    = (\widehat{l}_{t,\alpha} - l_{t,\alpha})^\top \widehat A_{2,n}
    + l_{t,\alpha}^\top (\widehat A_{2,n} - A_2).
    \]
    For the first summand, we use mean-square consistency of the estimated
    influence function rather than pointwise uniform convergence. Write
    \[
    \widehat r_{t\alpha}
    :=\mathbf 1\{Y_t\le
    m_{t\alpha}(\bar\theta_1,\widehat\theta_2)\}-\alpha,
    \qquad
    r^0_{t\alpha}
    :=\mathbf 1\{Y_t\le
    m_{t\alpha}(\bar\theta_1,\theta_{20})\}-\alpha,
    \]
    \[
    \widehat g_{t\alpha}
    :=g_{t\alpha}(\bar\theta_1,\widehat\theta_2),
    \qquad
    g^0_{t\alpha}
    :=g_{t\alpha}(\bar\theta_1,\theta_{20}).
    \]
    The definitions of $\widehat l_{t,\alpha}$ and $l_{t,\alpha}$ give the
    exact decomposition
    \begin{align*}
    \widehat l_{t,\alpha}-l_{t,\alpha}
    ={}&-(\widehat L_\alpha^{-1}-L_\alpha^{-1})
    \widehat g_{t\alpha}\widehat r_{t\alpha}\\
    &-L_\alpha^{-1}(\widehat g_{t\alpha}-g^0_{t\alpha})
    \widehat r_{t\alpha}
    -L_\alpha^{-1}g^0_{t\alpha}
    (\widehat r_{t\alpha}-r^0_{t\alpha}).
    \end{align*}
    Since the inverse Jacobians are uniformly bounded with probability
    approaching one and the residuals are bounded by one, this implies
    \begin{align*}
    \sup_\alpha P_n\|\widehat l_{t,\alpha}-l_{t,\alpha}\|^2
    \le C\Big[{}
    &\sup_\alpha\|\widehat L_\alpha^{-1}-L_\alpha^{-1}\|^2
      \sup_\alpha P_n\|\widehat g_{t\alpha}\|^2\\
    &+\sup_\alpha P_n
      \|\widehat g_{t\alpha}-g^0_{t\alpha}\|^2\\
    &+\sup_\alpha P_n\!\left[
      \|g^0_{t\alpha}\|^2
      (\widehat r_{t\alpha}-r^0_{t\alpha})^2\right]
    \Big].
    \end{align*}
    The first line is $o_p(1)$ by uniform inverse-Jacobian consistency and the
    gradient moment bound. The second is $o_p(1)$ by uniform consistency of
    $\widehat\theta_2$, continuity of $g$, and the uniform law of large
    numbers. For the third, the indicator can change only on the same shrinking
    neighborhood used in the preceding indicator-function argument. Replacing
    the squared exponential-weight envelope there by
    $\sup_\alpha\|g^0_{t\alpha}\|^2$ and applying the product-envelope moment
    condition gives $o_p(1)$ uniformly in $\alpha$. Consequently,
    \[
    \sup_\alpha \frac1n\sum_{t=1}^n\|\widehat l_{t,\alpha}-l_{t,\alpha}\|^2=o_p(1).
    \]
    Since $\widehat A_{2,n}=O_p(1)$ uniformly in $(\alpha,\gamma)$, it follows that
    \[
    \sup_{\alpha,\gamma}\frac1n\sum_t
    [ (\widehat l_{t,\alpha}-l_{t,\alpha})^\top\widehat A_{2,n}(\alpha,
    \gamma) ]^2=o_p(1).
    \]
    
    For the second summand, the sensitivity estimator satisfies
    $\sup_{\alpha,\gamma} \|\widehat A_{2,n}(\alpha,\gamma)-A_2(\alpha,\gamma)\|
    \xrightarrow{p} 0$ by the weighted-average density consistency established
    above, the consistency of $\hat{\theta}_2$, and the
    Uniform Law of Large Numbers applied to the continuous integrand under
    Assumptions \ref{ass:regularity} and \ref{ass:nuisance-smooth}. Combined with
    $\frac{1}{n}\sum_t \|l_{t,\alpha}\|^2 = O_p(1)$ (by the Ergodic Theorem and the
    moment condition in Assumption \ref{ass:pre-est-regularity}(i)), the Cauchy--Schwarz
    inequality yields $\frac{1}{n}\sum_t [l_{t,\alpha}^\top(\widehat A_{2,n}-A_2)]^2
    = o_p(1)$ uniformly in $(\alpha,\gamma)$.

    Combining the two parts via $\frac{1}{n}\sum_t (a_t + b_t)^2 \le
    \frac{2}{n}\sum_t (a_t^2 + b_t^2)$, the sample mean of the squared correction
    term converges to zero in probability uniformly.
\end{enumerate}

Consequently, the empirical $L_2$ radius of the difference class converges to
zero uniformly over $(\alpha,\gamma)$. Write this radius as
\[
r_n^2:=\sup_{\alpha,\gamma}P_n
\{\widehat\Psi(\alpha,\gamma)-\Psi(\alpha,\gamma)\}^2=o_p(1).
\]
On events with probability approaching one, the localized covering-number
condition in Assumption~\ref{ass:pre-est-regularity}(ii) gives, for every
finitely discrete $Q$,
\[
N\!\left(\varepsilon\|\widehat F_n\|_{Q,2},
\widehat{\mathcal D}_n,L_2(Q)\right)
\le(A/\varepsilon)^v,
\]
where
$\widehat{\mathcal D}_n
=\{\widehat\Psi(\alpha,\gamma)-\Psi(\alpha,\gamma)\}$ and
$P_n\widehat F_n^2=O_p(1)$. Conditional on the data, the multiplier process is
Gaussian. The entropy inequality in
\citet[Theorem~2.2.4]{vdvw1996}, applied with $Q=P_n$, therefore yields
\[
E^*\sup_{\alpha,\gamma}|\Delta_n^*(\alpha,\gamma)|
\le
C\int_0^{r_n}
\sqrt{\log\!\left(\frac{A\|\widehat F_n\|_{P_n,2}}
{\varepsilon}\right)}\,d\varepsilon+o_p(1)
=o_p(1).
\]
The final equality follows because $r_n=o_p(1)$,
$\|\widehat F_n\|_{P_n,2}=O_p(1)$, and
$r\sqrt{\log(C/r)}\to0$. Markov's inequality conditionally on the data gives
\[
\sup_{\alpha,\gamma}|\Delta_n^*(\alpha,\gamma)|=o_p^*(1).
\]
Thus $\sqrt n\widehat M_{n*}$ is asymptotically equivalent to $Z_n^*$ in
$\ell^\infty(\mathcal T\times\Gamma)$.
This proves both conclusions of the lemma.
\end{proof}

\begin{proof}[Proof of Theorem \ref{thm:boot-pre}]
By Lemma~\ref{lem:pre-multiplier-replacement}, the feasible bootstrap process
is uniformly equivalent to
$Z_n^*=n^{-1/2}\sum_{t=1}^n\omega_t\Psi_t$. It remains to establish the
conditional limit of this oracle process and consistency of the bootstrap
studentizer.

\paragraph{Step 1: Conditional convergence of finite-dimensional distributions}
We now analyze $Z_{n}^* = n^{-1/2} \sum_{t=1}^n \omega_t \Psi_t$. Fix a finite collection of points $(\alpha_j, \gamma_j)$. Conditional on $\mathcal{Z}_n$, the vector of the process follows a multivariate normal distribution with zero mean and conditional covariance matrix $\hat{\Sigma}_n$ with entries $\frac{1}{n} \sum_{t=1}^n \Psi_t(\alpha_j, \gamma_j) \Psi_t(\alpha_l, \gamma_l)$.
Under Assumption \ref{ass:mixing} (Ergodicity) and Assumption \ref{ass:pre-est-regularity}, the sequence $\Psi_t$ is strictly stationary and ergodic with finite second moments. By the Ergodic Theorem, $\hat{\Sigma}_n$ converges almost surely to the population covariance matrix $E[\Psi_t \Psi_t^\top]$. Thus, the finite-dimensional distributions of the bootstrap process converge weakly to those of the Gaussian process $\widetilde{\mathcal{M}}$ defined in Theorem \ref{thm:null-pre}.

\paragraph{Step 2: Asymptotic tightness}
The process $Z_{n}^*$ is a multiplier bootstrap process indexed by the function class $\mathcal{F}_{\Psi} = \{ \psi_{1t}(\alpha, \gamma) + A_2(\alpha, \gamma)^\top l_{t,\alpha} \}$. The first component is VC-subgraph by Lemma~\ref{lem:VC}, while the component from the influence function is manageable by Assumption~\ref{ass:pre-est-regularity}(ii). Multiplication by the smooth bounded sensitivity $A_2(\alpha,\gamma)$ preserves manageability.
More explicitly, repeat the proof of
Lemma~\ref{lem:conditional-gaussian-multiplier} with $\mathcal G_0$ replaced by
$\mathcal F_\Psi$. Assumption~\ref{ass:pre-est-regularity}(ii) supplies a
polynomial uniform covering bound and an $L_q$ envelope. Hence the squared-
difference class has an $L_{q/2}$ envelope, its sample semimetric converges
uniformly by the same truncation-and-finite-net argument, and the conditional
Gaussian entropy integral tends to zero on shrinking population-semimetric
balls. This verifies conditional asymptotic tightness rather than invoking it
only pointwise.
Combining Lemma~\ref{lem:pre-multiplier-replacement} with Steps 1 and 2, we
conclude that
$\sqrt{n} \widehat M_{n*} \rightsquigarrow^* \widetilde{\mathcal{M}}$.

\paragraph{Step 3: Consistency of bootstrap variance estimator}
The bootstrap variance estimator is $\widehat s_{n*}^2 = \frac{1}{n} \sum_{t=1}^n \omega_t^2 \widehat{\Psi}_t^2$. We decompose this as:
\[
\widehat s_{n*}^2 = \frac{1}{n} \sum_{t=1}^n \widehat{\Psi}_t^2 + \frac{1}{n} \sum_{t=1}^n (\omega_t^2 - 1) \widehat{\Psi}_t^2.
\]
The first term is the original variance estimator, which converges to
$\tilde{\sigma}^2$ as shown in Theorem~\ref{thm:null-pre}. For the second term,
no fourth moment is required. Let $\widehat F_t$ be an envelope for the estimated
corrected-score class. The maintained $q>2$ moment conditions and the
mean-square plug-in bounds imply uniform integrability of
$\{\widehat F_t^2\}$ and, for some $\delta>0$,
$n^{-1}\sum_t\widehat F_t^{2+\delta}=O_p(1)$. For fixed $K$, the truncated
squared-score class
\[
\left\{\widehat\Psi_t^2(\alpha,\gamma)
\mathbf 1(\widehat F_t\le K):
(\alpha,\gamma)\in\mathcal T\times\Gamma\right\}
\]
is bounded and manageable, so the conditional multiplier law of large numbers
applied to $\omega_t^2-1$ makes its centered average $o_p^*(1)$ uniformly over
$(\alpha,\gamma)$. The omitted tail is bounded by
\[
\frac1n\sum_{t=1}^n(\omega_t^2+1)\widehat F_t^2
\mathbf 1(\widehat F_t>K).
\]
Its conditional expectation is twice the corresponding empirical tail average,
which converges to zero in probability as $K\to\infty$ by uniform integrability.
Letting first $n\to\infty$ and then $K\to\infty$ gives
\[
    \sup_{\alpha,\gamma}\left|
    \frac1n\sum_{t=1}^n(\omega_t^2-1)\widehat\Psi_t^2(\alpha,
    \gamma)
    \right|=o_p^*(1).
\]
Thus, $\sup_{\alpha, \gamma} | \widehat s_{n*}^2(\alpha, \gamma) - \tilde{\sigma}^2(\alpha, \gamma) | \xrightarrow{p^*} 0$.

Finally, by the continuous mapping theorem applied to the integral-supremum functional, the distribution of the bootstrap penalized statistic $\widehat T_{n*,\Pi}$ converges to the same limit as $\widehat T_{n,\Pi}$.
\end{proof}

\begin{proof}[Proof of Theorem \ref{thm:boot-local-pre}]
The proof parallels the non-pre-estimation case, with the corrected score
replacing the raw weighted indicator. Fix the actual data-generating direction
$B_0$, with $B_0=0$ denoting the null, and let $B$ be the candidate
direction inserted in the bootstrap. Let $\widehat\Psi_{t,n}(\alpha,\gamma)$
denote the estimated corrected score inside the braces in the definition of
$\widehat M_{n*,B}$. Multiplying by $\sqrt n$ gives
\[
\sqrt n\,\widehat M_{n*,B}(\alpha,\gamma)
=
Z_n^*(\alpha,\gamma)+\widehat d_{n,B}(\alpha,\gamma),
\]
where
\[
Z_n^*(\alpha,\gamma)
:=
\frac1{\sqrt n}\sum_{t=1}^n
\omega_t\widehat\Psi_{t,n}(\alpha,\gamma).
\]
Because $E^*(\omega_t)=0$, the multiplier component is conditionally centered and
therefore targets the zero-mean corrected Gaussian fluctuation. The deterministic
component targets the local drift.

First, under the local sequence the data law is contiguous to the null law.
The null-reference smoothness and nonsingularity conditions in
Assumption~\ref{ass:pre-est-regularity} are fixed. Its uniform influence-
function expansion gives
$\|\widehat\theta_2-\theta_{20}\|_{\infty,\mathcal A}=O_p(n^{-1/2})$ under the
null, and contiguity transfers this tight root-$n$ rate to each local law. The
local alternative, test process, and drift remain indexed by $\mathcal T$. The
plug-in corrected scores are close to their population counterparts in the
uniform mean-square sense by
Assumptions~\ref{ass:pre-est-regularity}--\ref{ass:nuisance-smooth} and the
weighted-average density bounds above. Lemma~\ref{lem:pre-multiplier-replacement}
gives the corresponding null-law conditional multiplier replacement. By
contiguity, this $o_p^*(1)$ replacement remains valid under each local law
$P_{n,B_0}$. The oracle
conditional multiplier argument from Theorem~\ref{thm:boot-pre} therefore yields
\[
Z_n^*(\alpha,\gamma)
\overset{*}{\rightsquigarrow}
\widetilde{\mathcal M}(\alpha,\gamma)
\qquad\text{in }\ell^\infty(\mathcal T\times\Gamma).
\]
Second, the plug-in drift estimator satisfies
\[
\sup_{\alpha,\gamma}
\big|
\widehat d_{n,B}(\alpha,\gamma)-\widetilde d_B(\alpha,\gamma)
\big|
\xrightarrow{p}0.
\]
Here are the details of this plug-in step. Set
\[
\Delta_{A_j,n}:=\sup_{\alpha,\gamma}
\|\widehat A_{j,n}(\alpha,\gamma)-A_j(\alpha,\gamma)\|,
\quad j=1,2,
\]
together with
\[
\Delta_{D,n}:=\sup_{\alpha}
\|\widehat D_{\alpha,n}-D_\alpha\|,
\qquad
\Delta_{L,n}:=\sup_{\alpha}
\|\widehat L_\alpha-L_\alpha\|.
\] Insert between each sample average and its population target (i) the
oracle density evaluated at the true nuisance curve and (ii) the corresponding
population integrand. The four resulting errors arise from evaluating the density
at the estimated curve, oracle smoothing, replacing the true parameter with its
estimate in the gradient or score, and the centered sample average. With
\[
r_{J_n}=C_0\{\log J_n+\log n+\log(1/a_{J_n})\},
\]
after multiplication by the relevant empirical envelope averages, which are
$O_p(1)$, the first two are bounded uniformly by
\[
O_p\!\left\{
n^{-1/2}a_{J_n}^{-2}+b_{J_n}^2
+\sqrt{\frac{r_{J_n}}{J_na_{J_n}}}
\right\}=o_p(1)
\]
by the density bounds in the proof of Theorem~\ref{thm:null-pre}. Uniform
continuity, the root-$n$ nuisance rate, and the stated moment envelopes make the
third error $o_p(1)$. The VC-type uniform law of large numbers makes the fourth
error $o_p(1)$, including the replacement of $w_{t,n}$ by $w_t$. Contiguity
transfers these null-law bounds to $P_{n,B_0}$. Consequently,
\[
\max\{\Delta_{A_1,n},\Delta_{A_2,n},\Delta_{D,n},\Delta_{L,n}\}=o_p(1).
\]
Uniform nonsingularity of $L_\alpha$ then gives
\[
\sup_\alpha\|\widehat H_{\alpha,n}-H_\alpha\|
\le
\sup_\alpha\|\widehat L_\alpha^{-1}-L_\alpha^{-1}\|
\sup_\alpha\|\widehat D_{\alpha,n}\|
+\sup_\alpha\|L_\alpha^{-1}\|\Delta_{D,n}
=o_p(1).
\]
Finally, boundedness of $B$ yields
\[
\begin{aligned}
&\sup_{\alpha,\gamma}
|\widehat d_{n,B}-\widetilde d_B| \\
&\quad\le
\|B\|_\infty\!\left[
\Delta_{A_1,n}
+\Delta_{A_2,n}\sup_\alpha\|\widehat H_{\alpha,n}\|
+\sup_{\alpha,\gamma}\|A_2(\alpha,\gamma)\|
 \sup_\alpha\|\widehat H_{\alpha,n}-H_\alpha\|
\right]
=o_p(1),
\end{aligned}
\]
which proves the displayed consistency.
Combining the two components gives
\[
\sqrt n\,\widehat M_{n*,B}(\alpha,\gamma)
\overset{*}{\rightsquigarrow}
\widetilde{\mathcal M}(\alpha,\gamma)+\widetilde d_B(\alpha,\gamma).
\]

It remains to verify the variance estimator. Expanding the shifted second moment,
\[
\widehat s_{n*,B}^2
=
\frac1n\sum_{t=1}^n
\left(\omega_t\widehat\Psi_{t,n}+\frac{\widehat d_{n,B}}{\sqrt n}\right)^2
=I_n+II_n+III_n,
\]
where
\[
I_n=\frac1n\sum_{t=1}^n\omega_t^2\widehat\Psi_{t,n}^2,
\quad
II_n=\frac{2\widehat d_{n,B}}{\sqrt n}
\left(\frac1n\sum_{t=1}^n\omega_t\widehat\Psi_{t,n}\right),
\quad
III_n=\frac{\widehat d_{n,B}^2}{n}.
\]
Uniformly in $(\alpha,\gamma)$, $I_n\to\tilde\sigma^2(\alpha,\gamma)$ in
$p^*$ by Theorem~\ref{thm:boot-pre}. Moreover,
$\widehat d_{n,B}=O_p(1)$ and
$n^{-1}\sum_t\omega_t\widehat\Psi_{t,n}=O_p^*(n^{-1/2})$, so
$II_n=o_p^*(1)$, while $III_n=o_p(1)$. Hence
\[
\sup_{\alpha,\gamma}
|\widehat s_{n*,B}^2(\alpha,\gamma)-\tilde\sigma^2(\alpha,\gamma)|
\xrightarrow{p^*}0.
\]
The convergence of $\widehat T_{n*,B,\Pi}$ follows by applying the continuous
mapping theorem to the integral-supremum functional.
\end{proof}

\section*{S5. Additional Monte Carlo Simulations}

To complement the main text, this section reports additional Monte Carlo
simulations for one nonlinear data-generating process under two testing
scenarios: a test of the full null quantile curve without pre-estimation and a
test of the nonlinear structural parameter $\rho$ with nuisance
pre-estimation.

\subsection*{S5.1. Nonlinear DGP and Simulation Setup}
We consider a nonlinear AR(2) process with a signed power transformation:
\begin{equation}
    Y_t = \mu_0 + \mu_1 \mathrm{sgn}(Y_{t-1})|Y_{t-1}|^\rho + \mu_2 Y_{t-2} + \sigma \varepsilon_t, \quad t = 1, \dots, n,
\end{equation}
where $\varepsilon_t \sim \mathcal{N}(0, 1)$. The DGP parameter vector is
\(\psi=(\mu_0,\mu_1,\rho,\mu_2,\sigma)^\top\), with null value
\(\psi_0=(0.5,0.4,0.8,-0.2,1.0)^\top\). Its conditional quantile curve is
\[
m(I_{t-1},\theta_0(\alpha))
=\beta_0(\alpha)
+\mu_1\operatorname{sgn}(Y_{t-1})|Y_{t-1}|^\rho
+\mu_2Y_{t-2},
\qquad
\beta_0(\alpha)=\mu_0+\sigma\Phi^{-1}(\alpha),
\]
with four quantile-model coordinates
\(\theta_0(\alpha)=(\beta_0(\alpha),\mu_1,\rho,\mu_2)^\top\). Sample lengths
are shown in the corresponding figures and tables. The common repetition
counts and quantile grids are given in Section~\ref{sec:simulation}.

We investigate two distinct testing scenarios to highlight the role of pre-estimation:
\begin{enumerate}
    \item \textbf{Without pre-estimation (full-curve test):} The null-imposed
    four-coordinate quantile curve is treated as fixed. In this Gaussian DGP,
    equality of the full curve over the quantile range is equivalent to the
    joint null \(H_0:\psi=\psi_0\). The five panels perturb the five DGP
    primitives one at a time. They do not represent a five-coordinate
    parameter of the quantile regression.
    \item \textbf{With pre-estimation (test of $\rho$):} Testing only \(H_0:\rho=0.8\), while estimating the three
    quantile-specific coefficients on
    \((1,\mathrm{sgn}(Y_{t-1})|Y_{t-1}|^{0.8},Y_{t-2})\).
    The intercept absorbs \(\mu_0+\sigma\Phi^{-1}(\alpha)\). The parameters \(\mu_0\) and
    \(\sigma\) are not separately estimated at each quantile.
\end{enumerate}
For the auxiliary density calculation, the known quantile curve is evaluated
directly at the simulated auxiliary indices in runs without pre-estimation.
In runs with pre-estimation, the curve is estimated on a fixed 100-point grid
on \([0.15,0.85]\) and then linearly interpolated or extrapolated at the
auxiliary indices for implementation.

\subsection*{S5.2. Empirical Power: The Role of Pre-estimation}

Figures \ref{fig:power_no_pre2} and \ref{fig:power_pre2} illustrate the empirical power 
curves for the two scenarios. 

The power curve for the structural parameter $\rho$ differs across the two
scenarios. As shown in Figure~\ref{fig:power_no_pre2}, when the full null quantile
curve is tested without pre-estimation and $\rho$ is perturbed, the unpenalized test
has slightly higher empirical rejection frequencies than the adaptive penalized
test. The penalty acts only on the weighting direction \(\gamma\). This
comparison alone does not identify the mechanism behind
the difference in rejection frequencies.

Figure~\ref{fig:power_pre2} shows that, once the structural parameter $\rho$ is isolated
through pre-estimation of the other parameters, the adaptive statistic has higher
empirical rejection frequencies than the unpenalized statistic in the displayed region.

\begin{figure}[htbp]
    \centering
    \includegraphics[width=\textwidth]{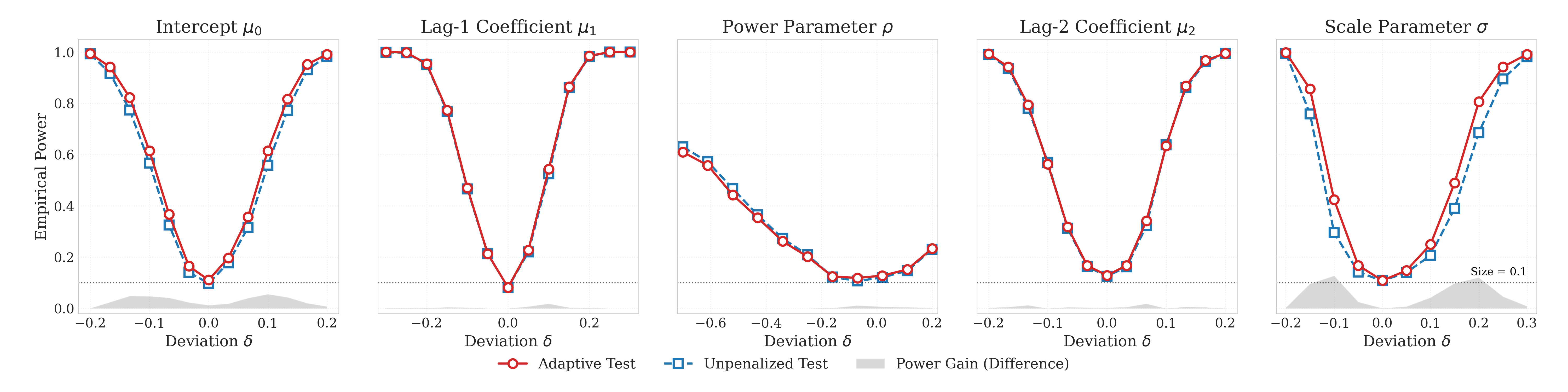}
    \caption{Power curves without pre-estimation (testing the full null quantile
    curve, \(n=500\)). The panels perturb the five DGP primitives one at a time.
    For the power parameter $\rho$, the unpenalized test has slightly higher
    empirical rejection frequencies.}
    \label{fig:power_no_pre2}
\end{figure}

\begin{figure}[htbp]
    \centering
    \includegraphics[width=0.6\textwidth]{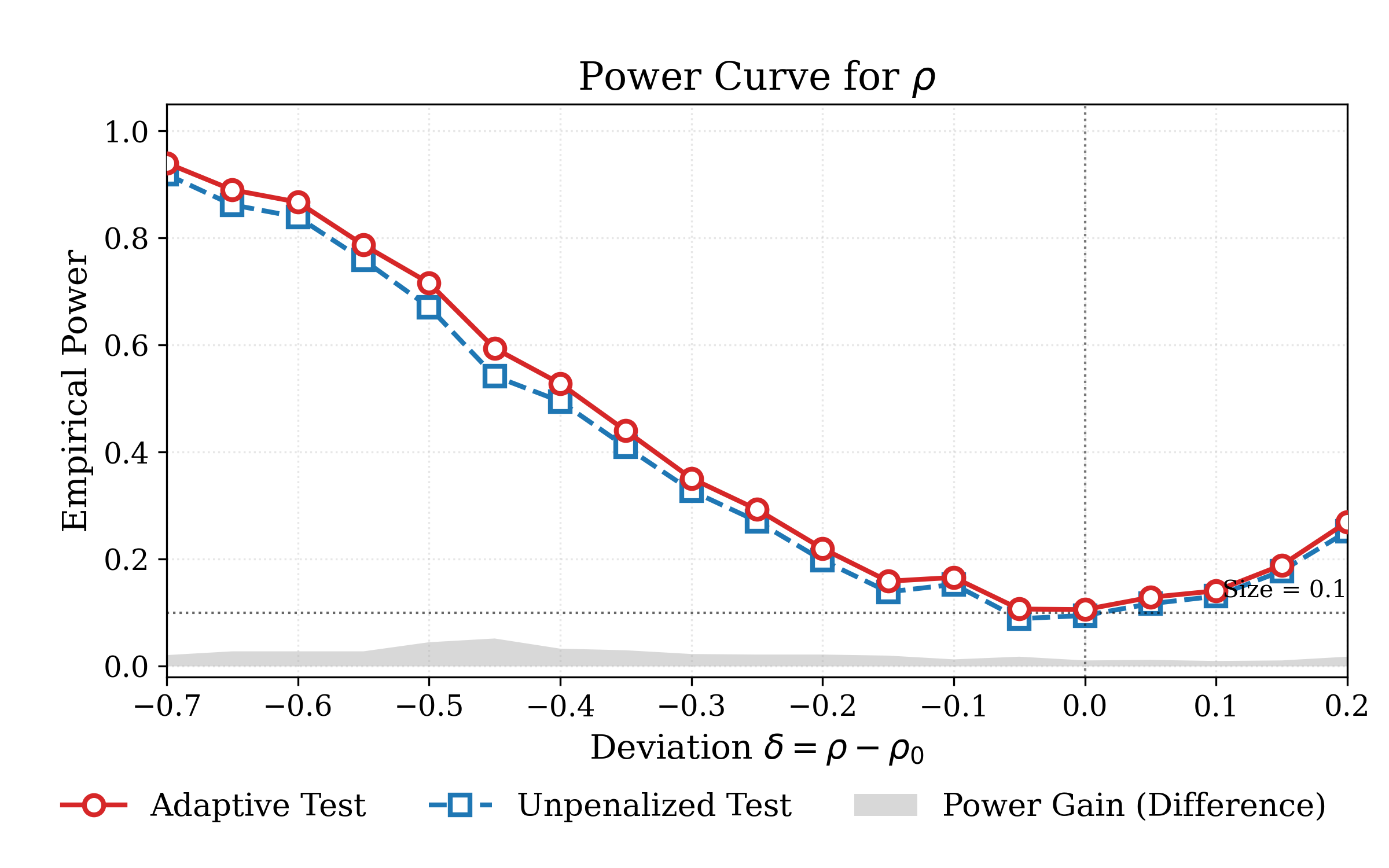}
    \caption{Power curve with pre-estimation (testing only $\rho$, \(n=1000\)). 
    The adaptive statistic has higher empirical rejection frequencies than the unpenalized statistic in the displayed region.}
    \label{fig:power_pre2}
\end{figure}

\subsection*{S5.3. Empirical Size Control}

Table~\ref{tab:size_comparison2} reports the empirical size of the tests under the
null hypothesis for nominal sizes \(\tau\in\{0.05,0.10\}\), using 1,000 Monte
Carlo replications and 1,000 multiplier-bootstrap draws per replication in
both panels.

Panel A's null rejection frequencies are near the nominal levels in this
design. Panel B shows substantial over-rejection in the pre-estimation
implementation at small sample lengths: the proposed statistic rejects with
frequency 0.351 at nominal level 0.10 for \(n=200\). The corresponding
frequencies fall to 0.109 and 0.052 at nominal levels 0.10 and 0.05 for
\(n=1000\). The other two implemented functionals show larger distortions.
\begin{table}[htbp]
    \centering
    \caption{Empirical Size Comparison under Nonlinear AR(2) DGP}
    \label{tab:size_comparison2}
    \resizebox{\textwidth}{!}{
    \begin{tabular}{l c ccc c ccc}
        \toprule
        \multirow{2}{*}{$n$} & \multirow{2}{*}{Nominal $\tau$} & \multicolumn{3}{c}{Adaptive Penalization (Adapt)} & & \multicolumn{3}{c}{Unpenalized Benchmark (Unpenal)} \\
        \cmidrule{3-5} \cmidrule{7-9}
        & & \textbf{$T_{n,\Pi}$ (Ours)} & $T_n^{KS\text{-}KS}$ & $T_{n,\Pi}^{KS\text{-}CvM}$ & & \textbf{$T_{n,\Pi}$ (Ours)} & $T_n^{KS\text{-}KS}$ & $T_{n,\Pi}^{KS\text{-}CvM}$ \\
        \midrule
        \multicolumn{9}{l}{\textbf{Panel A: Without Pre-estimation (Full Null Quantile Curve)}} \\
        \midrule
        200  & 0.10 & 0.106 & 0.085 & 0.102 & & 0.102 & 0.084 & 0.099 \\
             & 0.05 & 0.048 & 0.036 & 0.050 & & 0.051 & 0.040 & 0.053 \\
        500  & 0.10 & 0.114 & 0.096 & 0.104 & & 0.106 & 0.094 & 0.100 \\
             & 0.05 & 0.058 & 0.046 & 0.057 & & 0.060 & 0.046 & 0.052 \\
        1000 & 0.10 & 0.109 & 0.100 & 0.095 & & 0.109 & 0.086 & 0.096 \\
             & 0.05 & 0.048 & 0.048 & 0.048 & & 0.050 & 0.047 & 0.046 \\
        \midrule
        \multicolumn{9}{l}{\textbf{Panel B: With Pre-estimation (Test of $\rho$)}} \\
        \midrule
        200  & 0.10 & \textbf{0.351} & 0.881 & 0.737 & & \textbf{0.295} & 0.818 & 0.618 \\
             & 0.05 & \textbf{0.187} & 0.773 & 0.540 & & \textbf{0.153} & 0.696 & 0.420 \\
        500  & 0.10 & \textbf{0.151} & 0.449 & 0.311 & & \textbf{0.131} & 0.369 & 0.234 \\
             & 0.05 & \textbf{0.074} & 0.319 & 0.176 & & \textbf{0.069} & 0.255 & 0.129 \\
        1000 & 0.10 & \textbf{0.109} & 0.299 & 0.188 & & \textbf{0.100} & 0.245 & 0.159 \\
             & 0.05 & \textbf{0.052} & 0.187 & 0.107 & & \textbf{0.053} & 0.137 & 0.080 \\
        \bottomrule
    \end{tabular}
    }
\end{table}

\clearpage


\clearpage
\phantomsection
\addcontentsline{toc}{section}{References}
\bibliographystyle{apalike}
\bibliography{references}

\end{document}